\numberwithin{equation}{section}
\newtheorem{theorem}{Theorem}
\newtheorem{corollary}{Corollary}
\newtheorem{definition}{Definition}
\newtheorem{example}{Numerical simulation}
\newtheorem{lemma}{Lemma}
\newtheorem{proposition}{Proposition}
\newtheorem{remark}{Remark}
\newtheorem{assum}{Assumption}
\newcommand{\rr}{\mathbb{R}}
\newcommand{\s}{\mathcal{S}}
\newcommand{\ee}{\mathcal{E}}
\newcommand{\summ}{\sum\limits}
\newcommand{\ma}{\max\limits}
\DeclareMathOperator*{\argmax}{arg\,max}
\begin{document}
\title{(Non-Monotonic) Effects of Productivity and Credit Constraints on Equilibrium Aggregate Production in General Equilibrium Models with Heterogeneous Producers\footnote{The author would like to thank Cuong Le Van and Phu Nguyen-Van for helpful comments. Thanks also to participants at ISER Seminar (Osaka University), EWET 2024, PET 2023, SAET 2023 for insightful discussions and suggestions.}
}

\author{Ngoc-Sang PHAM\footnote{{\it Emails}: pns.pham@gmail.com, npham@em-normandie.fr. Phone:  +33 2 50 32 04 08. Address:  EM Normandie (campus Caen), 9 Rue Claude Bloch, 14000 Caen, France.}\\ \vspace*{0.5cm}EM Normandie Business School, M\'etis Lab (France)}
\date{\today}
\maketitle

\begin{abstract}
We show that, in a market economy, the aggregate production level depends not only on the aggregate variables but also on the distribution of individual characteristics (e.g., productivity, credit limit, ...). We prove that, due to financial frictions, the equilibrium aggregate production may be non-monotonic in both individual productivity and credit limit. We provide conditions (based on exogenous parameters) under which this phenomenon happens. By consequence, improving productivity or relaxing credit limit of firms may not necessarily be beneficial to economic development. 
\\

\noindent {\it JEL Classifications:} D2, D5, E44, G10, O4.\\
{\it Keywords:} Productivity shock, financial shock, credit constraint, heterogeneity, productivity dispersion, distributional effects, efficiency, general equilibrium.

\end{abstract}


\section{Introduction} 
We investigate two basic questions in economics: what are the impacts of (individual and aggregate) productivity and financial changes on the aggregate output? 

Looking back to the literature, on the one hand, the productivity is widely viewed as one of the most important determinants of economic growth. In economics textbooks and classical papers \citep{solow57, romer86, romer90}, an increase of productivity generates a positive effect on the aggregate output and economic growth. On the other hand, one can expect that relaxing credit limits would have positive impact on the aggregate output as argued by several papers (for example,  \cite{kt13} (section VI. C), \cite{mx14} (section II.B), \cite{moll14} (Proposition 1), and \cite{cchst17}). 

We provide a novel view: whether a rise of productivity or credit limit generates a positive (or negative) effect on the aggregate output depends on the distribution of productivity, the size of these rises and  the level of financial imperfection. In order to explore our insights, we build general equilibrium models with credit constraints and heterogeneous producers (having their own productivity) and provide conditions under which the equilibrium aggregate production is decreasing (or increasing) in the producers' productivity and credit limit.


Let us first explain the role of productivity in a static framework. We prove that when the productivity of all agents increases,  this change improves the aggregate production if either (1) the productivity growth rates are the same or (2) there is no financial friction. 
However, the more interesting and realistic case is when the productivity of producers increases at different rates and there is a credit constraint (these two styled facts are documented by several  studies).\footnote{See, for instance, \cite{Syverson11}, \cite{acg15}, 
\cite{Kehrig15}, \cite{bbdf16}, \cite{decker18}, \cite{bbc17}, \cite{bcl21}, \cite{lw21}, \cite{gb22}.} In this case, we argue that the aggregate production may decrease. This may happen if the TFP of less productive agents increases faster than that of more productive agents. Indeed, in such a case, less productive agents absorb more capital and produces more. Since the aggregate capital supply is limited, other producers (who are more productive) get less capital (because of market imperfections) and so they produce less. By consequence, the net effect may be negative. This happens if (1) the TFP of less productive agents is far from that of more productive producers, i.e., the productivity dispersion is high,\footnote{\label{footnote1}\cite{acg15} use a harmonised cross-country dataset, based on underlying data from the
OECD-ORBIS database \citep{gal13}, to analyze the characteristics of firms that operate at the global productivity frontier and their relationship with other firms in the economy. \cite{acg15} document growing productivity dispersion for several developed countries over the 2000s. \cite{bcl21} present empirical evidence showing an increase in productivity dispersion between French firms during the period 1991-2016, with a growing
productivity gap between frontier and laggard firms. See \cite{gklw24} for an excellent review on the slowdown in productivity growth.} (2) the productivity rise  is quite small, (3) the credit constraint is tight. 

Regarding the role of credit constraints, we argue that, while a homogeneous rise of credit limit improves the aggregate output, an asymmetric rise   of credit limits can reduce the output.  The intuition behind this result is similar to that in the case of productivity effects we have mentioned above: If credit limits of less productive agents increase faster that those of more productive ones, less productive agents get more capital and more productive agents get less capital, hence the aggregate output may decrease.  It should be noticed that although the aggregate output is not necessarily monotonic in credit limits of producers, it does not exceed that in the frictionless economy which is in line with the existing literature.

In the second part of our paper, we investigate our above questions in infinite-horizon models \`a la Ramsey. Before doing this, we prove the existence of intertemporal equilibrium. To do so, we adopt the following approach:\footnote{See \cite*{bblvs15} and  \cite*{lvp16} among others.} (1) we prove the existence of equilibrium for each $T-$ truncated economy $\ee^T$; (2) we show that this sequence of equilibria converges for the product topology to an equilibrium of our original economy. 

We show that the non-monotonic effect of productivity and credit limit on the aggregate output cannot appear at the steady state. The reason is that the steady state interest rate only depends on the rate of time preferences of agents. Therefore, we focus on the global dynamics of intertemporal equilibrium.  Technically, this task is far from trivial and very few papers do this.\footnote{The existing literature focuses on the balanced-growth path, recursive equilibrium or provides  analyses around the steady-state equilibrium. See \cite{lvp16} for intertemporal equilibrium in a model with heterogeneous households and a representative producer.} However, we manage to obtain several insights. First, our findings suggest that a permanent increase of productivity of less productive agents improves the aggregate output in the long run. However, when this productivity rise is quite small and credit constraints are tight, the aggregate output may decrease in the short-run and then increase from some period on.

Second, we look at the effects of credit limits. Recall that in the static model, an increase in the most productive agent's credit limit is always beneficial for the aggregate output. However, along intertemporal equilibrium, we show that an increase of the credit limit of the most productive producer may reduce the output at every period. The intuition behind is that when her(his) credit limit goes up, the equilibrium interest rate increases, and hence, her(his) repayment also increases. This in turn reduces her(his) net worth in the next period. By consequence, her(his) saving and hence the production decrease.  The economic mechanism can be summarized by the following schema:
\begin{align}
&\text{Credit limit } \uparrow \quad  \Rightarrow  \text{ Interest rate  } \uparrow \quad  \Rightarrow  \text{ Agent's net worth  } \downarrow   \quad \Rightarrow \notag\\
 & \Rightarrow  \text{ Saving  } \downarrow \quad\Rightarrow  \text{ Production  } \downarrow   \quad  \Rightarrow  \cdots
\end{align}
As in the static model, this mechanism can happen because the credit limit of the most productive agent remains low and the productivity dispersion is high.

Third, we show how the equilibrium interest rate and the outcomes of intertemporal equilibrium (in particular in the long run) depend on the distribution of initial endowments, credit limit and productivity as well as of the discount factors. Recall that in a standard Ramsey model with one representative producer, the most patient household owns the entire capital of the economy after some finite time - this is the so-called {\it Ramsey conjecture} - and the equilibrium interest rate in the long run depends only on the rate of preference time of the most patient agent \citep{beckermitra12, bdm14, bbd15}. In our models with many potential producers, along the intertemporal equilibrium, in particular in the long run, there may be several producers sharing the aggregate capital. We point out that whether an agent holds the capital depends on the distribution of discount factor, credit limit, productivity and initial capital. Precisely, the capital holding of a producer is increasing in each of these parameters. 

\subsubsection*{Link to the literature} Our article is  related to a growing literature on general equilibrium models with heterogeneous producers and financial frictions.\footnote{The reader is referred to \citet{mat07}, \citet{quadrini11},  \citet{bes13}  for more complete reviews on the  macroeconomic effects of financial frictions and to \citet{bks15} for the relationship between entrepreneurship and financial frictions.}
 Let us mention some of them.\footnote{While we focus on producer heterogeneity, there is a growing literature studying the roles of household heterogeneity in macroeconomics (the reader is refereed to \citet{kv18} for an excellent review on this topic).} 
\citet{mx14} consider a two-sector model with a collateral constraint that requires the debt of producer does not exceed a fraction of its capital stock. They focus on balanced growth equilibrium to study the role of collateral constraint in determining TFP. Their parameterizations consistent with the data imply fairly small losses from misallocation, but potentially sizable losses from inefficiently low levels of entry and technology adoption. 
\cite{kt13} develop a dynamic stochastic general equilibrium with a representative household and heterogeneous firms facing a borrowing constraint (slightly different from ours) and focus on recursive equilibrium. They find that a negative shock to borrowing conditions can generate a large and persistent recession through disruptions to the distribution of capital. 
 \cite{bs13} develop a model with individual-specific technologies and collateral constraints to investigate the role of the misallocation and reallocation of resources in macroeconomic transitions.  \cite{bs13} find that collateral constraints have a large impact along the transition to the steady state.  \cite{moll14} studies the effect of collateral constraints on capital misallocation and aggregate productivity in a general equilibrium with a continuum of heterogeneous firms and financial frictions (modeled by a collateral constraint). 
 Proposition 1 in \cite{moll14} shows that the aggregate TFP  is increasing in the leverage ratio which is the common across firms.\footnote{In both \cite{bs13}, \cite{moll14}, the collateral constraint, which is slightly different from ours, states that the capital of a firm does not exceed a leverage ratio of its financial wealth.}  

Our paper differs from this literature in two points. First, the credit limit is individualized in our model while all credit parameters in the above studies are common across producers. Second, we argue that this credit heterogeneity plays an important role in the distribution of capital and of income as well as in the aggregate output. Indeed, we prove that the aggregate output and the aggregate TFP in our model may not be monotonic functions of the credit limits which are different across agents; they may display an inverted-U form.\footnote{Our finding is related to \citet{aetal18}. They consider a model of firm dynamics and innovation with entry, exit, and credit constraints, based on \cite{kk04}, \citet{aetal15}.  They assume that intermediate firms (monopolist) cannot invest more than $\mu$ times their current market value in innovation. They argue that the credit access may harm productivity growth because it allows less efficient incumbent firms to remain longer on the market,  which discourages entry of new and potentially more efficient innovators.} However, we show that, if agents have the same credit limit, the aggregate output and the aggregate TFP are increasing functions of this common credit limit; this finding is consistent with the above literature. 

Our paper is related to \cite{bf20} who build a general equilibrium model where productivity and wedge are exogenous parameters to study how the impact of (productivity and wedge) shocks can be decomposed into a pure technology effect and an allocative efficiency effect. There are  some differences between \cite{bf20} and the present paper.  First,  \cite{bf20} model frictions by wedge  while we model frictions by a credit constraint and the credit limit is our exogenous parameter. Second, \cite{bf20} provide a quantitative analysis by applying their approach to the firm-level markups in the U.S. but they do not provide conditions (based on exogenous parameters) under which the aggregate output is increasing or decreasing in productivity and friction level (wedge in their framework). Although we do not provide quantitative applications of our results, we show several conditions (based on exogenous parameters) under which the aggregate output is increasing or decreasing in productivity and friction level (credit limit in our framework). We also run some simulations and extend our analyses in infinite-horizon models while \cite{bf20} do not do this. 

Our paper also concerns the literature on the welfare effects of financial constraints. \cite{jp94,jp99} consider overlapping generations models with liquidity constraints and households living for three periods and argue that liquidity constraints may increase or decrease welfares.  The central point in \cite{jp94,jp99} is that liquidity constraints have two opposite effects on welfare: "they force the consumption of young below the unconstrained level but raise their permanent income by fostering capital accumulation".  
\cite{homs11} considers a general equilibrium with heterogeneous households (whose borrowings are bounded by an exogenous limit) and a representative firm. He argues that the borrowing limit has a negative on the welfare of borrower if its {\it quantity effect} dominates its {\it price effect}.  As in \cite{jp94,jp99}, the mechanism of \cite{homs11} relies on the role of supply of credit to households who need to smooth their consumption. By contrast, our mechanism focuses on credit to firms who need credit to finance their productive investment. Moreover, \cite{homs11} considers exogenous borrowing limits while we focus on credit constraints and our model has endogenous borrowing limits.

\cite{cchst17} build a dynamic general equilibrium model with heterogeneous firms and collateral constraints. They focus on the steady state and provide estimates suggesting that lifting financial frictions (modeled by collateral constraints) would increase aggregate welfare by 9.4\%  and aggregate output by 11\%. Our paper differs from \cite{cchst17} in two aspects. First, although we also find that the aggregate output in the frictionless economy is higher than that in the economy with financial frictions, it is not a monotonic function of the degree of financial friction. Second, both individual and social welfares may not be monotonic in the degree of financial friction. Interestingly, lifting credit constraint may decrease the welfare of some agents.

Last but not least, our paper contributes to the debate concerning the slowdown in aggregate productivity growth that has been documented by several studies such as \cite{acg15}, \cite{bcl21}, \cite{gklw24}; see Footnote \ref{footnote1}. 
 Our above analyses suggest that the interplay between credit constraints, high heterogeneity of productivity, asymmetry of productivity and financial shocks  may generate a slowdown in aggregate productivity growth. We argue that the aggregate productivity growth rate may be far from that of most productive firms. It may be even lower than the smallest productivity growth rate of firms. Our approach, which is different from those in the literature, is based on the general equilibrium theory with financial frictions and heterogeneous producers.


The rest of our article is organized as follows. Section \ref{motivating} presents a motivating example with two agents while Section \ref{simple} present a two-period general equilibrium framework with many producers to study the effects of productivity and credit limits. Section \ref{extensions} explores our analyses  in infinite-horizon general equilibrium models \`a la Ramsey. Section \ref{conclu} concludes. Formal proofs are gathered in the appendices.

\section{A motivating example}
\label{motivating}

In this section, we consider a deterministic two-period economy with a two agents $i=1,2$.   There is a single good (num\'eraire) which can be consumed or used to produce. Each agent $i$ has exogenous initial wealth ($S_i$ units of good) at the initial date.  To keep the model as simple as possible, we assume that agents just maximize their consumption in the second period and we focus on the output in this period. 

Agents have two ways for investing. On the one hand, agent $i$ can buy $k_i$ units of physical capital at the initial date to produce $F_i(k_i)$ units of good at the second date, where $F_i$ is the production function. Assume that $F_i(k)=A_ik$, $\forall k\geq 0$, with $0<A_1<A_2$. 

On the other hand, she can invest in a financial asset with real return $R$ which is endogenous. Denote $b_i$ the asset holding of agent $i$. She can also borrow and then pay back $Rb_i$ in the next period. However, there is a borrowing constraint. The maximization problem of agent i can be described as follows:
 \begin{subequations}
\begin{align}
(P_i):\quad & \pi_i = \ma_{k_i, b_i} {[F_i(k_i) - Rb_i]}\\
\label{concave1}\text{subject to: }& 0\leq k_i \leq S_i + b_i \text{ (budget constraint)}\\
\label{concave2g}&  Rb_i \leq \gamma_iF_i(k_i)  \text{ (borrowing constraint)}
\end{align}
\end{subequations}
where \textcolor{blue}{$\gamma_i\in (0,1)$} is an exogenous parameter.  Borrowing constraint (\ref{concave2g}) means that the repayment does not exceed the market value of the borrower's project.\footnote{Here, we follow \citet{k98} by assuming that  the debtor is required to put her project as collateral in order to borrow: If she does not repay, the creditor can seize the collateral. Due to the lack of commitment (or just because the debtor is not willing to help the creditor take the whole value of the debtor's project), the creditor can only obtain a fraction $\gamma_i$ of the total value of the project. Anticipating the possibility of default, the creditor limits the amount of credit so that the debt repayment will not exceed a fraction $\gamma_i$ of the debtor's project value.}\footnote{\citet{mat07} (Section 2) considers a model with heterogeneous agents, which corresponds to our model with $k_i=1$, $S_i=w, b_i=1-w$. However, different from our setup, investment projects in \citet{mat07} are non-divisible.} This is similar to the collateral constraint (4) in \citet{k98} or the so-called {\it earnings-based constraint} in \cite{lm21}.\footnote{
Some authors \citep{bs13,moll14} set $k_i\leq \theta w_i$, where $w_i\geq 0$ is the agent $i$'s wealth and interpret that $\theta$ measures the degree of credit frictions (credit markets are perfect if $\theta=\infty$ while $\theta=1$ corresponds to financial autarky, where all capital must be self-financed by entrepreneurs). In our framework, $S_i$ plays a similar role of wealth $w_i$ in \citet{bs13}, \citet{moll14}. Another way to introduce credit constraint is to set that $b_i\leq \theta k_i$. This corresponds to constraint (3) in \citet{mx14}. Other authors \citep{kocherlakota92, homs11} consider exogenous borrowing limits by imposing a short sales constraint: $b_i\leq B$ for any $i$.  Under these three settings, the asset holding $b_i$ is bounded from above by an upper bound which does not depend on the interest rate $R
$. \cite{cgv09} present a two-period general equilibrium model with uncertainty, num\'eraie assets,  and participation constraints described by functions of agent's choices and prices. \cite{cgv09} prove the existence of equilibrium and study indeterminacy but do not provide comparative statics.} The better the commitment, the higher value of $\gamma_i$, the larger the set of feasible allocations of the agent $i$. \citet{k98} interprets $\gamma_i$ as the collateral value of investment. In our paper, we call $\gamma_i$ the {\it credit limit} of agent $i$. 

The following table from the \cite{es}'s panel datasets suggests that borrowing and collateral constraints matter for the development of firms.

\begin{figure}[h!]
\centering
\includegraphics[width=13.5cm,height=4.5cm]{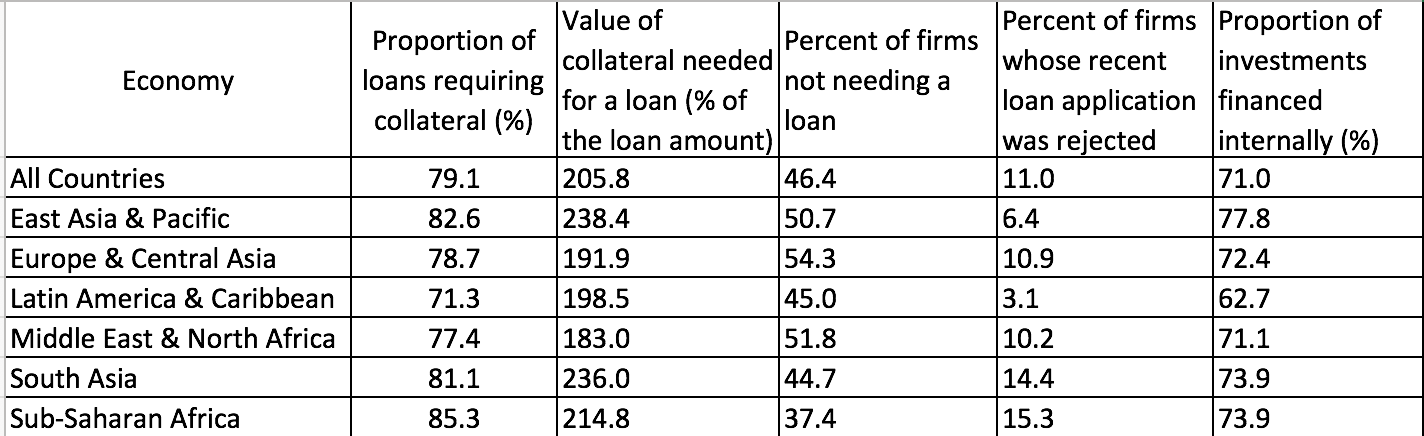}
  \label{ES}
\end{figure}

An economy $\mathcal{E}$ with credit constraints is characterized by a list of fundamentals
$$\mathcal{E} \equiv (A_i, f_i, \gamma_i,S_i)_{i=1,2}.$$
\begin{definition}\label{defi}
A list $(R, (k_i,b_i)_i)$ is an equilibrium if (1) for each $i$, given $R$, the allocation $(b_i, k_i)$ is a solution of the problem $(P_i)$, and (2) financial market clears $\sum_ib_i = 0$.
\end{definition}


In our example with linear production function, we can explicitly compute the equilibrium interest rate and aggregate output (see Theorem \ref{cate-r} in Appendix \ref{append-linear}):
\begin{lemma}In the above economy with 2 agents and linear production function, the equilibrium interest rate and aggregate output are determined by
 \begin{align}
\label{y} Y=& \begin{cases}
A_2 (S_1 + S_2) &\mbox{if }   A_1< \gamma_2A_2 \frac{S_1+ S_2} {S_1}\\
A_1S_1 + A_2S_2\dfrac{A_1(1 - \gamma_2)}{A_1 - \gamma_2A_2} &\mbox{if } A_1\geq \gamma_2A_2 \frac{S_1+ S_2} {S_1}
\end{cases}\\
R=&\begin{cases}A_2  &\text{ if } S_1 \leq \frac{\gamma_2}{1- \gamma_2}S_2\\
\frac{\gamma_2A_2(S_1+S_2)}{S_t} & \text{ if } \frac{\gamma_2}{1- \gamma_2}S_2 < S_1 <  \frac{\gamma_2A_2}{A_1- \gamma_2A_2}S_2  \\
A_1 &\text{ if }  S_1 \geq  \frac{\gamma_2A_2}{A_1- \gamma_2A_2}S_2, \text{ or, equivalently, } A_1\geq \gamma_2A_2 \frac{S_1+ S_2} {S_1}  \\
\end{cases}
\end{align}
\end{lemma}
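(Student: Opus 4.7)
The plan is to reduce the two-agent equilibrium to a short case analysis indexed by the value of the equilibrium interest rate $R$. Since each $F_i$ is linear with $A_i>0$, the budget constraint $k_i\le S_i+b_i$ binds at any optimum of $(P_i)$; substituting $k_i=S_i+b_i$ reduces the objective to $A_iS_i+(A_i-R)b_i$, which is linear in $b_i$. Hence the agent's desired $b_i$ is pinned down by the sign of $A_i-R$: if $R>A_i$ she lends in full ($b_i=-S_i$, $k_i=0$); if $\gamma_i A_i<R<A_i$ her borrowing constraint (\ref{concave2g}) binds, giving $b_i=\gamma_i A_i S_i/(R-\gamma_i A_i)$ and $k_i=RS_i/(R-\gamma_i A_i)$; if $R=A_i$ she is indifferent in $b_i$ over her feasible range; and if $R\le\gamma_i A_i$ her problem is unbounded.

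Because $A_1<A_2$, market clearing $b_1+b_2=0$ forces $R\ge A_1$ (otherwise both strictly prefer to borrow). I would then distinguish three sub-cases. First, $R=A_2$: agent 1 lends in full, agent 2 absorbs $b_2=S_1$, and her slack borrowing constraint requires $A_2S_1\le\gamma_2 A_2(S_1+S_2)$, i.e.\ $S_1\le\gamma_2 S_2/(1-\gamma_2)$. Second, $A_1<R<A_2$: agent 1 still lends in full, agent 2's constraint binds, and market clearing yields $R=\gamma_2 A_2(S_1+S_2)/S_1$; imposing $A_1<R<A_2$ translates into $\gamma_2 S_2/(1-\gamma_2)<S_1<\gamma_2 A_2 S_2/(A_1-\gamma_2 A_2)$. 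Third, $R=A_1$: agent 2's binding constraint gives $b_2=\gamma_2 A_2 S_2/(A_1-\gamma_2 A_2)$, and agent 1 retains $S_1-b_2\ge 0$ for production, which requires $S_1\ge\gamma_2 A_2 S_2/(A_1-\gamma_2 A_2)$, equivalently $A_1\ge\gamma_2 A_2(S_1+S_2)/S_1$.

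The aggregate output then follows directly from market clearing $k_1+k_2=S_1+S_2$. In the first two sub-cases, $k_1=0$ and $k_2=S_1+S_2$, so $Y=A_2(S_1+S_2)$; their union is exactly $A_1<\gamma_2 A_2(S_1+S_2)/S_1$, the first branch of (\ref{y}). In the third sub-case, $Y=A_1(S_1-b_2)+A_2(S_2+b_2)=A_1S_1+A_2S_2+(A_2-A_1)b_2$, and an elementary rearrangement using $b_2=\gamma_2 A_2 S_2/(A_1-\gamma_2 A_2)$ collapses the last two terms into $A_2S_2\cdot A_1(1-\gamma_2)/(A_1-\gamma_2 A_2)$, matching the second branch of (\ref{y}).

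The main obstacle is bookkeeping rather than conceptual: translating the threshold conditions on $R$ into the equivalent thresholds on $S_1$ (or $A_1$) stated in the lemma, and checking that the boundaries between sub-cases glue consistently so the map from primitives to equilibrium is single-valued. Existence on each sub-region is automatic by construction, and uniqueness follows from the strict monotonicity of each agent's desired $b_i$ in $R$ on her relevant range.
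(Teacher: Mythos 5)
Your proposal is correct and follows essentially the same route as the paper: you derive the agent's demand correspondence from the linearity of the objective in $b_i$ (this is the paper's Lemma \ref{lem1}) and then run the case analysis on whether $R=A_2$, $R\in(A_1,A_2)$, or $R=A_1$, which is exactly the paper's regimes $\mathcal{A}_2$, $\mathcal{R}_1$, $\mathcal{A}_1$ from Theorem \ref{cate-r} specialized to $m=2$. The only difference is that the paper proves the general $m$-agent characterization first and reads the two-agent lemma off it, while you argue directly; your computations and threshold translations all check out (and your $R=\gamma_2A_2(S_1+S_2)/S_1$ confirms that the $S_t$ in the lemma's second branch is a typo for $S_1$).
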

This allows us to fully investigate the effects of productivity changes. First, we look at the individual level.
\begin{proposition}[effects of individual productivity changes]\label{motiprop1}
\begin{enumerate}
\item The aggregate output is always  increasing in $A_2$ - the productivity of the most productive agent.
\item  When $A_1< \gamma_2A_2 \frac{S_1+ S_2} {S_1}$, the aggregate output does not depend on $A_1$. When $\gamma_2A_2 \frac{S_1+ S_2} {S_1}<A_1$, we have $\frac{\partial Y}{\partial A_1}=S_1-\frac{(1-\gamma_2)\gamma_2A_2^2S_2}{(A_1-\gamma_2A_2)^2}$, and, by consequence,
\begin{align}\label{pre1}
\frac{\partial Y}{\partial A_1}\geq 0 \Leftrightarrow \frac{S_1}{S_2}\big(\frac{A_1}{A_2}-\gamma_2\big)^2 \geq (1-\gamma_2)\gamma_2
\end{align}\end{enumerate}
\end{proposition}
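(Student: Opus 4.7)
The plan is to invoke directly the closed-form expression for $Y$ given in the preceding lemma and perform case-by-case differentiation. The lemma splits the parameter space into two regions separated by the threshold $A_1 = \gamma_2 A_2 \frac{S_1+S_2}{S_1}$, and on each region $Y$ is an explicit smooth function of $(A_1, A_2)$, so the analysis reduces to checking signs of partial derivatives. A preliminary observation is that whenever we are in the second region, the inequality $A_1 \geq \gamma_2 A_2 \frac{S_1+S_2}{S_1} > \gamma_2 A_2$ guarantees $A_1 - \gamma_2 A_2 > 0$, so all expressions below are well defined.

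For part 1 (monotonicity in $A_2$), in the region $A_1 < \gamma_2 A_2 \frac{S_1+S_2}{S_1}$ we have $Y = A_2(S_1+S_2)$, which is trivially strictly increasing in $A_2$. In the region $A_1 \geq \gamma_2 A_2 \frac{S_1+S_2}{S_1}$, I would differentiate $Y = A_1 S_1 + A_2 S_2 \frac{A_1(1-\gamma_2)}{A_1-\gamma_2 A_2}$ with respect to $A_2$; the quotient rule yields
\[
\frac{\partial Y}{\partial A_2} = \frac{S_2 A_1^2 (1-\gamma_2)}{(A_1-\gamma_2 A_2)^2},
\]
which is strictly positive since $0 < \gamma_2 < 1$. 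For a clean statement across the whole domain, I would also verify continuity at the threshold (a short substitution confirms that both formulas take the value $A_2(S_1+S_2)/1$ there, using $A_1 S_1 = \gamma_2 A_2(S_1+S_2)$ at the boundary), so monotonicity in $A_2$ holds globally.

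For part 2, in the region $A_1 < \gamma_2 A_2 \frac{S_1+S_2}{S_1}$ we have $Y = A_2(S_1+S_2)$, which has no $A_1$ dependence. In the region $A_1 > \gamma_2 A_2 \frac{S_1+S_2}{S_1}$, I would compute
\[
\frac{\partial}{\partial A_1}\Bigl[\frac{A_1}{A_1-\gamma_2 A_2}\Bigr] = \frac{-\gamma_2 A_2}{(A_1-\gamma_2 A_2)^2},
\]
so that
\[
\frac{\partial Y}{\partial A_1} = S_1 - \frac{(1-\gamma_2)\gamma_2 A_2^2 S_2}{(A_1-\gamma_2 A_2)^2},
\]
which is exactly the expression stated. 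The equivalence
\[
\frac{\partial Y}{\partial A_1}\geq 0 \iff \frac{S_1}{S_2}\Bigl(\frac{A_1}{A_2}-\gamma_2\Bigr)^{\!2} \geq (1-\gamma_2)\gamma_2
\]
then follows by clearing the denominator (positive by the region condition) and dividing by $A_2^2$.

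There is essentially no conceptual obstacle: once the piecewise formula for $Y$ is in hand, the proof is routine calculus. The only mild care required is (a) keeping track of the region boundary so that the denominator $A_1 - \gamma_2 A_2$ remains positive, and (b) noting that one should interpret $\partial Y/\partial A_1$ only in the interior of each region since $Y$ has a kink at the threshold; strict monotonicity in $A_2$ however extends globally by continuity.
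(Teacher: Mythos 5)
Your proposal is correct and follows exactly the route the paper intends: Proposition~\ref{motiprop1} is a direct consequence of the closed-form expression for $Y$ in the preceding lemma, and your case-by-case differentiation (including the sign check $\frac{\partial Y}{\partial A_2}=\frac{S_2A_1^2(1-\gamma_2)}{(A_1-\gamma_2A_2)^2}>0$ in the second regime and the continuity check at the threshold) is the whole argument. No gaps.
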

So, the aggregate output displays an U-shape as a function of the least productive agent's credit limit. It is increasing in $A_1$ if the productivity ratio $A_1/A_2$ is higher than a threshold (or, equivalently, the productivity gap $A_2/A_1$ is lower than a threshold). Figure \ref{effect-A_1} illustrates an example. In this numerical simulation, we set $S_1=1, S_2=0.7$, $A_2=1, \gamma_2=0.2$, and let $A_1$ vary from $\gamma_2A_2 \frac{S_1+ S_2} {S_1}=0.34$ to $A_2=2$. Then the output, as a function of $A_1$,  is decreasing on the interval $(0.34, 0.54]$ and then increasing in the interval $(0.54,1)$. \begin{figure}[h!]
\centering
   \includegraphics[width=8cm,height=6cm]{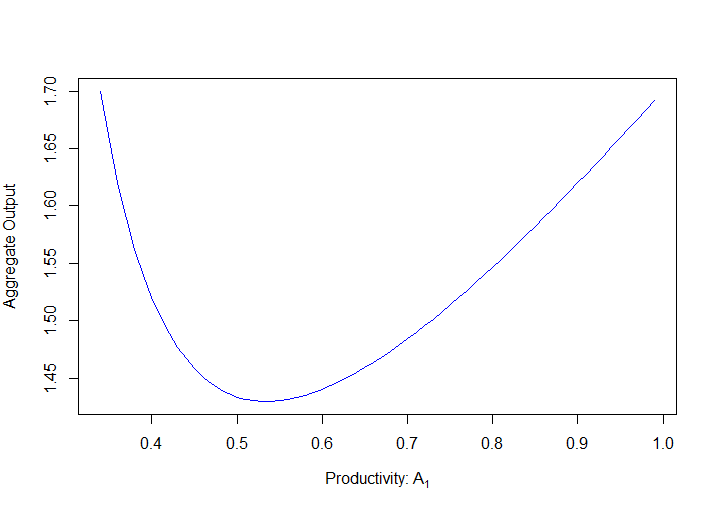}
   \caption{Non-monotonic effect of the agent 1's productivity.}
  \label{effect-A_1}
\end{figure}

We now let both productivities $A_1$ and $A_2$ vary. 
\begin{proposition}[effects of productivity changes]\label{asymmetry}
Consider a two-agent economy having linear technologies  $F_i(k)=A_ik$ $\forall i=1,2$ with $\gamma_2<A_1<A_2$, and borrowing constraints: $Rb_i\leq \gamma_iA_ik_i$.   



Assume that there is a productivity shock that changes the productivity of agents from $(A_1,A_2)$ to $(A_1',A_2')$. Assume that $A'_2>A_1'$. Assume that the credit constraint of  agent $2$ is low so that $\gamma_2<   \frac{ A_1}{ A_2} \frac{S_1}{S_1+ S_2}$ and $\gamma_2<  \frac{ A'_1}{A'_2} \frac{S_1}{S_1+ S_2}$. Then, the output change is
\begin{align}
Y(A_1',A_2')-Y(A_1,A_2)=(A_1'-A_1)S_1+A_2S_2(1-\gamma_2)\frac{A_1A_2'-A_1'A_2}{(A_1-\gamma_2A_2)(A_1'-\gamma_2A_2')}
\end{align}
(1) We have that: 
\begin{align}\label{A2fast}
\text{If } \frac{A_2'}{A_2}\geq \frac{A_1'}{A_1}\geq 1, \text{ then }Y(A_1',A_2')\geq Y(A_1,A_2)
\end{align}
(2) Assume that
\begin{align}\label{dispersion}
S_2A_2(1 - \gamma_2) \dfrac{\gamma_2A_2}{(A_1 - \gamma_2  A_2)^2}-S_1>0, \text{ i.e., } \frac{S_1}{S_2}\big(\frac{A_1}{A_2}-\gamma_2\big)^2
<(1-\gamma_2)\gamma_2
\end{align}


Then, there is a neighborhood $\mathcal{B}$ of $(A_1,A_2)$ such that
\begin{subequations}
\begin{align}\label{cs}
&\frac{Y(A_1',A_2')-Y(A_1,A_2)}{A_1'-A_1}<
0\\
\label{rate}&\forall (A_1',A_2')\in \mathcal{B} \text{ satisfying }
\frac{\frac{A'_2}{A_2}-1}{\frac{A'_1}{A_1}-1}<\frac{\gamma_2A_2}{A_1}-\frac{S_1(A_1-\gamma_2A_2)^2}{S_2A_1A_2(1-\gamma_2)}\text{ and }  A'_1\not=A_1.
\end{align}
\end{subequations}

\end{proposition}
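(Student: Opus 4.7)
My strategy is to place both points $(A_1,A_2)$ and $(A_1',A_2')$ in the second regime of formula (\ref{y}), then derive the output difference in closed form, deduce part (1) from a scaling-plus-monotonicity argument, and deduce part (2) from a first-order analysis around $(A_1,A_2)$.

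Step 1 (reduction). The hypotheses $\gamma_2<\frac{A_1}{A_2}\frac{S_1}{S_1+S_2}$ and $\gamma_2<\frac{A_1'}{A_2'}\frac{S_1}{S_1+S_2}$ are exactly the strict versions of $A_1\geq\gamma_2 A_2(S_1+S_2)/S_1$ and $A_1'\geq\gamma_2 A_2'(S_1+S_2)/S_1$, so by the Lemma both equilibrium outputs are given by
\[
Y(a,b)\;=\;a S_1+\frac{S_2(1-\gamma_2)\,ab}{a-\gamma_2 b}.
\]
Subtracting the two instances and reducing to the common denominator $(A_1-\gamma_2 A_2)(A_1'-\gamma_2 A_2')$ yields the displayed formula for $Y(A_1',A_2')-Y(A_1,A_2)$ after routine algebra.

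Step 2 (part (1)). Two observations suffice. First, the closed form above is positively homogeneous of degree one in $(a,b)$. Second, a direct differentiation gives $\partial Y/\partial b=S_2(1-\gamma_2)a^2/(a-\gamma_2 b)^2>0$, so $Y$ strictly increases in the second coordinate. Writing $\lambda:=A_1'/A_1\geq 1$ and factoring the change as $(A_1,A_2)\to(A_1',\lambda A_2)\to(A_1',A_2')$, homogeneity gives $Y(A_1',\lambda A_2)=\lambda Y(A_1,A_2)\geq Y(A_1,A_2)$, and hypothesis (\ref{A2fast}) forces $A_2'\geq\lambda A_2$, so the second step weakly raises $Y$. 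One verifies along the way that both intermediate parameter pairs still satisfy the second-regime inequality, so formula (\ref{y}) applies throughout.

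Step 3 (part (2)). Compute
\[
\frac{\partial Y}{\partial A_1}\;=\;S_1-\frac{S_2(1-\gamma_2)\gamma_2 A_2^2}{(A_1-\gamma_2 A_2)^2},
\]
so condition (\ref{dispersion}) is precisely $\partial Y/\partial A_1<0$ at $(A_1,A_2)$. Since $Y$ is $C^1$ in the second regime, Taylor expansion as $(A_1',A_2')\to(A_1,A_2)$ gives
\[
\frac{Y(A_1',A_2')-Y(A_1,A_2)}{A_1'-A_1}\;=\;\frac{\partial Y}{\partial A_1}+\frac{\partial Y}{\partial A_2}\cdot\frac{A_2'-A_2}{A_1'-A_1}+o(1).
\]
Because $\partial Y/\partial A_2>0$, the leading term is strictly negative exactly when
\[
\frac{A_2'-A_2}{A_1'-A_1}\;<\;-\frac{\partial Y/\partial A_1}{\partial Y/\partial A_2}\;=\;\frac{\gamma_2 A_2^2}{A_1^2}-\frac{S_1(A_1-\gamma_2 A_2)^2}{S_2(1-\gamma_2)A_1^2},
\]
and multiplying this inequality by $A_2/A_1$ rewrites it as the bound in (\ref{rate}) (since $(A_2'/A_2-1)/(A_1'/A_1-1)=(A_1/A_2)\cdot(A_2'-A_2)/(A_1'-A_1)$). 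Continuity of both partial derivatives then yields a neighborhood $\mathcal{B}$ of $(A_1,A_2)$ on which the leading term dominates the $o(1)$ remainder, delivering (\ref{cs}).

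The main bookkeeping obstacle is the final algebraic identification, in Step 3, between (\ref{rate}) and the slope bound $-\partial_1 Y/\partial_2 Y$; everything else reduces to partial differentiation and the homogeneity of the reduced-form output formula.
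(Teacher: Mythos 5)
Your overall route --- place both parameter points in the second branch of formula (\ref{y}), work with the closed form $Y(a,b)=aS_1+\frac{(1-\gamma_2)S_2\,ab}{a-\gamma_2 b}$, and differentiate --- is the paper's route, and your Step 2 (degree-one homogeneity of $Y$ plus $\partial Y/\partial b>0$, with the check that the intermediate points stay in the same regime) is a clean and correct way to get part (1). The first genuine gap is in Step 1: the ``routine algebra'' you invoke does not produce the displayed identity. Subtracting the two instances of the closed form gives
\[
Y(A_1',A_2')-Y(A_1,A_2)=(A_1'-A_1)S_1+S_2(1-\gamma_2)\,\frac{A_1A_1'(A_2'-A_2)-\gamma_2A_2A_2'(A_1'-A_1)}{(A_1-\gamma_2A_2)(A_1'-\gamma_2A_2')},
\]
whose numerator is not $A_2(A_1A_2'-A_1'A_2)$. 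You can see the mismatch by taking $A_2'=A_2$ and letting $A_1'\to A_1$: the stated identity would give $\partial Y/\partial A_1=S_1-\frac{(1-\gamma_2)A_2^2S_2}{(A_1-\gamma_2A_2)^2}$, missing the factor $\gamma_2$ that appears both in Proposition \ref{motiprop1} and in your own Step 3 (or test a proportional shock $(A_1',A_2')=(2A_1,2A_2)$, where the stated numerator vanishes but homogeneity forces $Y'-Y=Y>0$ beyond the $S_1$ term). Since you never carry out the subtraction, you certify an identity that is false as written; a correct write-up must derive the corrected expression and note that parts (1)--(2) and the threshold in (\ref{rate}) are consistent with it, not with the displayed one.

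The second gap is the last step of Step 3. Your identification of (\ref{rate}) with the slope bound $\frac{A_2'-A_2}{A_1'-A_1}<-\partial_1Y/\partial_2Y$ is correct, but (\ref{rate}) is an \emph{open} condition: it makes the first-order term $\partial_1Y+\partial_2Y\cdot\frac{A_2'-A_2}{A_1'-A_1}$ negative, not negative uniformly. As the slope ratio approaches the threshold from below, this term tends to $0$, so on any fixed neighborhood $\mathcal{B}$ it does not dominate the $o(1)$ remainder; ``continuity of the partial derivatives'' therefore does not deliver (\ref{cs}) for \emph{all} admissible $(A_1',A_2')\in\mathcal{B}$. You need either to bound the slope ratio strictly away from the threshold --- this is exactly how the paper's proof of Corollary \ref{propcsg1} handles the analogous point, by choosing the increments so that the leading term is below a fixed $\epsilon<0$ --- or, better in this fully explicit linear case, to dispense with Taylor altogether and use the exact difference quotient
\[
\frac{Y(A_1',A_2')-Y(A_1,A_2)}{A_1'-A_1}=S_1+S_2(1-\gamma_2)\,\frac{A_1A_1'\frac{A_2'-A_2}{A_1'-A_1}-\gamma_2A_2A_2'}{(A_1-\gamma_2A_2)(A_1'-\gamma_2A_2')},
\]
which requires no remainder estimate and makes transparent exactly how close to the critical slope the strict inequality survives.
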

\begin{proof}
See Appendix \ref{proof3.1}.
\end{proof}
Condition (\ref{A2fast}) says that the aggregate output increases if the productivity of both producers increases and the productivity of the most productive agent increases faster than that of the less productive one. 

Let us now focus on point 2 of Proposition \ref{asymmetry}. Here, condition (\ref{dispersion}) plays a very important role. It is satisfied if the ratio $\frac{A_1}{A_2}$ is low in the sense that $\frac{A_1}{A_2}<\gamma_2+\big(\frac{\gamma_2(1-\gamma_2)S_2}{S_1}\big)^{0.5}$. This can be interpreted as a {\it high productivity dispersion.} Under this condition, we see that $\frac{\gamma_2A_2}{A_1}-\frac{S_1(A_1-\gamma_2A_2)^2}{S_2A_1A_2(1-\gamma_2)}\in (0,1)$.  According to conditions (\ref{dispersion}) and (\ref{cs}), under a positive shock that improves the TFP of all agents, the aggregate output may decrease: 
$$Y(A_1', A_2')<Y(A_1, A_2), \forall A_1'>A_1, A'_2>A_2, (A'_1,A'_2)\in \mathcal{B} \text{ satisfying (\ref{rate})}.$$ 

Let us explain the economic intuition behind this result. Assume that the productivity dispersion is high and let us consider a small positive shock (both the TFP of both agents increases). If the productivity of the less productive agent increases faster than that of the most productive agent (i.e., $\frac{A_2'}{A_2}$  is low with respect to $\frac{A_1'}{A_1}$, see condition (\ref{rate})), the first agent absorbs more physical capital and the most productive agent gets less capital (i.e., $k_2(A_1', A_2')<k_2(A_1, A_2)$). By consequence, the aggregate output may decrease.

\section{A two-period model with many agents}\label{simple}
We now extend the two-period model in Section \ref{motivating} by allowing for a finite number $(m)$ of heterogeneous agents and general production functions $F_i: \rr_+\to \rr_+$.\footnote{We can interpret the one-factor production function $F_i$ as a reduced form for a setting with other factors of production. Indeed, suppose that the producer has a two-factor production function, say capital and labor, $G_i(k,N)$. For a given level of capital $k_i$, the firm chooses labor quantity $N_i$ to maximize its profit $ \max_{N_i\geq 0} {[G_i(k_i,N_i) -wN_i]}$. The first order condition writes $\frac{\partial G_i}{\partial N}(k_i,N_i)=w$. This implies that $N_i=N_i(k_i,w)$. So, $G_i(k_i,N_i)=G_i(k_i,N_i(k_i,w))$. We now define  $F_i(k_i)\equiv G_i(k_i,N_i(k_i,w))$.}

We require standard assumptions on the production function.
\begin{assum}\label{assumption1} The production function $F_i$ is concave, strictly increasing, $F_i(0)=0$. The credit limit $\gamma_i$ belongs the interval $(0,1)$ for any $i$.
\end{assum}

We define the notion of equilibrium as in Definition  \ref{defi}. Under the above assumption, we can prove the equilibrium existence.

\begin{proposition}\label{existence}
Under Assumption \ref{assumption1}, there exists an equilibrium. 
\end{proposition}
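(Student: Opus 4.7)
The plan is to prove existence via Kakutani's fixed-point theorem applied to a generalized Walrasian t\^atonnement map; the only endogenous price is the interest rate $R$, and market clearing reduces to $\sum_i b_i = 0$, so the argument is essentially one-dimensional in prices. First I would derive a priori bounds on allocations: summing the budget constraints (\ref{concave1}) and using $\sum_i b_i = 0$ forces $\sum_i k_i \leq S := \sum_i S_i$, and, combined with the credit constraint (\ref{concave2g}), this confines $(k_i, b_i)$ to a compact convex subset of $\rr^{2m}$ as soon as $R$ is bounded away from zero. I would then restrict $R$ to a compact interval $[R_{\min}, R_{\max}]$, with $R_{\max}$ chosen large enough that aggregate excess demand at $R_{\max}$ is strictly negative (each agent prefers to lend rather than borrow, since $F_i$ is concave with finite supergradient at any $k>0$) and $R_{\min}>0$ small enough that aggregate excess demand at $R_{\min}$ is strictly positive (each agent strictly wants to borrow, because $\gamma_i\in(0,1)$ and $F_i$ is strictly increasing).

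For each fixed $R \in [R_{\min}, R_{\max}]$, the individual feasible set defined by (\ref{concave1}) and (\ref{concave2g}) is nonempty (take $k_i = b_i = 0$), compact, and continuous in $R$ in the sense of Berge, while the objective $F_i(k_i) - R b_i$ is jointly continuous and concave in $(k_i, b_i)$. Berge's maximum theorem then yields a best-response correspondence $B_i(R)$ that is nonempty, compact- and convex-valued, and upper hemi-continuous. Define the joint correspondence $\Phi(R, (k_i, b_i)_i) = \bigl(\argmax_{R' \in [R_{\min}, R_{\max}]} R' \sum_i b_i,\ (B_i(R))_i\bigr)$ on the compact convex product space; it has closed graph and nonempty convex values, so Kakutani's theorem delivers a fixed point $(R^*, (k_i^*, b_i^*)_i)$.

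It remains to verify that the fixed point is a genuine equilibrium and that the artificial truncations are slack. Each $(k_i^*, b_i^*)$ solves $(P_i)$ at $R^*$ by construction. Market clearing $\sum_i b_i^* = 0$ follows from the boundary sign arguments built into $\Phi$: if $\sum_i b_i^* > 0$, the price-adjustment step forces $R^* = R_{\max}$, but by the choice of $R_{\max}$ we would then have $\sum_i b_i^* < 0$, a contradiction; symmetrically if $\sum_i b_i^* < 0$. The aggregate bound $\sum_i k_i^* \leq S$ then confirms the truncation on actions is non-binding. The main obstacle I anticipate is handling production functions that are merely concave and not strictly concave (as in the linear motivating example), where individual demand for $k_i$ can be formally unbounded at small $R$; the resolution is that in such a regime the aggregate resource constraint $\sum_i k_i \leq S$, rather than individual optimality alone, pins down capital demand, which is exactly what the truncate-then-Kakutani argument exploits.
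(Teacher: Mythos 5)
Your argument is correct in outline, but it takes a genuinely different route from the paper's. The paper's proof of Proposition \ref{existence} is constructive: it first solves each agent's problem explicitly as a function of $R$ (Lemma \ref{lem1} for linear technologies, Lemma \ref{indi-general} for strictly concave ones), shows that the resulting aggregate capital demand is decreasing in $R$ on the relevant range, and pins down the equilibrium interest rate as the unique solution of a one-dimensional market-clearing equation, regime by regime (Theorems \ref{cate-r} and \ref{general1}). That route buys uniqueness and the closed-form expressions for $R$ and $Y$ on which all the later comparative statics rest, but it leans on extra structure (linearity, or strict concavity with Inada conditions plus Assumption \ref{Hi}). Your truncate-then-Kakutani argument with a price player maximizing $R'\sum_i b_i$ is the abstract Gale--Nikaido--Debreu alternative: it covers every $F_i$ satisfying Assumption \ref{assumption1} directly --- no differentiability, strict concavity, or Inada conditions --- but yields no uniqueness and no characterization of $R$. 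Notably, the paper deploys exactly your device (an added price-setting agent plus Kakutani on truncated action sets) for the truncated infinite-horizon economies in Appendix \ref{existencequilibrium}, so you are in effect porting that machinery back to the two-period model.

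Two details to make explicit in a full write-up. The bound $\sum_i k_i\leq S$ is not available a priori --- it uses $\sum_i b_i=0$, which only holds at the fixed point --- so compactness must come entirely from the artificial truncation, with $\sum_i k_i\leq S$ invoked only afterwards to show the truncation is slack and hence, by the standard convexity argument, that the truncated optimum solves the untruncated problem $(P_i)$ at $R^*$. Likewise, the sign condition at $R_{\min}$ must be stated for the \emph{truncated} best responses, since for linear $F_i$ the untruncated problem has no solution when $R\leq \gamma_iA_i$ (point 1 of Lemma \ref{lem1}); the truncated demand at $R_{\min}$ is finite but still exceeds supply, which is all the price-player step requires. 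You anticipate both points, so neither is a gap, but that is where the remaining work lies.
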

\begin{proof}
See Appendix.
\end{proof}

Given an equilibrium $(R, (k_i,b_i)_i)$, the aggregate output is $Y=\sum_iF_i(k_i)$. This depends on the forms of functions $(F_i)$, the initial wealths $(S_i)$, and the credit limits $(\gamma_i)$. 

Note that in an economy with perfect financial market, the aggregate production is simply determined by 
\begin{align}\label{y-1}
Y^{perfect}\equiv \max_{(k_i)\geq 0}\sum_iF_i(k_i) \quad 
\text{subject to}: \quad \sum_ik_i \leq S\equiv \sum_iS_i.
\end{align}
$Y^{perfect}$ is increasing in $A_i$, $\forall i$, and in $S$. 

In equilibrium, we have $\sum_ik=S$. So, we have that $Y\leq Y^{perfect}$.  This is consistent with a number of studies on the macroeconomic effects of financial constraints  \citep{bs13,kt14,mx14,moll14,cchst17}. 

However, an interesting open issue is whether the aggregate output is increasing or decreasing in agents' productivity $A_i$ and credit limit $\gamma_i$. In the following sections, we will investigate how the aggregate output changes when productivities $(A_i)$ and credit limits $(\gamma_i)$ vary.

\subsection{Effects of productivity changes}
\label{productivity}
We study conditions the aggregate production is increasing or decreasing when productivity changes take place. 
 Since we are interested in the effect of productivity changes, we assume that the production functions take the following form: 
\begin{align}F_i(k)=A_if_i(k),
\end{align} where the parameter $A_i>0$ represents the productivity of agent $i$ while $f_i$ is the original production function.

Assume that the TFP of agents depends on an exogenous variable $x\in \rr$ in the sense that $A_i=A_i(x)$ where $A_i$ is a differentiable function of $x$. Since we focus on positive changes, we assume that $A_i'(x)>0$, $\forall i$.

 We wonder how the aggregate output changes when $x$ varies. Note that the equilibrium physical capital, denoted by $k_i(x)$,  depends on $A_i(x)$ and the equilibrium interest rate $R$ which in turn depend on all productivities $A_1(x), \ldots, A_m(x)$. Assume that we have the differentiability. So, we can compute
\begin{align} \label{decompo-x}
k_i'(x)&=\underbrace{\frac{\partial k_i}{\partial R}}_\text{$<0$}\underbrace{\frac{\partial R}{\partial x}}_\text{$>0$}+ \underbrace{\frac{\partial k_i}{\partial A_i}}_\text{$>0$} \underbrace{\frac{\partial A_i}{\partial x}}_\text{$>0$}, \quad 
\frac{\partial R}{\partial x}=\sum_j\frac{\partial R}{\partial A_j}\frac{\partial A_j}{\partial x}
\end{align}

Notice that $ \frac{\partial k_i}{\partial A_i}\geq 0,  \frac{\partial k_i}{\partial R}\leq 0,  \frac{\partial A_i}{\partial x}\geq 0, 
\frac{\partial R}{\partial x} \geq 0$.  
By consequence, we can expect that $k_i'(x)$ may have any sign. However, we have $\sum_ik_i'(x)=0$ because $\sum_ik_i=S$ in equilibrium.

We now look at the aggregate output:
\begin{align} 
Y(x)&=\sum_{i}A_i(x)f_i\big(k_i(x)\big)=\sum_{i}A_i(x)f_i\Big(k_i\big(A_i(x),R(A_1(x),\ldots,A_m(x))\big)\Big)\notag\\
\label{dx}\frac{\partial Y}{\partial x}=&\sum_iA_i'(x)f_i\big(k_i(x)\big)+\sum_{i}A_i(x)f_i'(k_i(x))k_i'(x).
\end{align}

 By using (\ref{decompo-x}) and the fact that $\sum_ik_i'(x)=0$, we obtain two decompositions. 
\begin{proposition}[effects of productivity changes - general decompositions]
\label{prop2}
  Consider an equilibrium and assume that the equilibrium outcomes are differentiable functions. We have
\begin{subequations}\begin{align} 
\label{d1}\frac{\partial Y}{\partial x}=&\underbrace{\sum_iA_i'(x)f_i\big(k_i(x)\big)+\sum_{i: k_i'(x)\geq 0}A_i(x)f_i'(k_i(x))k'_i(x)}_\text{Added production of some agents}\notag\\
&+\underbrace{ \sum_{i: k_i'(x)<0}A_i(x)f_i'(k_i(x))k'_i(x)}_\text{Production losses of other agents}\\
\label{d2} \frac{\partial Y}{\partial x}=&\sum_i\underbrace{A_i'(x)f_i\big(k_i(x)\big)+\sum_iA_i(x)f_i'(k_i(x))\underbrace{\frac{\partial k_i}{\partial A_i}}_\text{$>0$} \underbrace{\frac{\partial A_i}{\partial x}}_\text{$>0$} }_\text{Quantity effect} +\underbrace{\sum_iA_i(x)f_i'(k_i(x))\underbrace{\frac{\partial k_i}{\partial R}}_\text{$<0$}\underbrace{\frac{\partial R}{\partial x}}_\text{$>0$}}_\text{Price effect}.
\end{align}
\end{subequations}

\end{proposition}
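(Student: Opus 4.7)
The proposition is really a chain-rule bookkeeping result: both decompositions follow from equation~(\ref{dx}) combined with the expansion of $k_i'(x)$ in~(\ref{decompo-x}), which the authors have already displayed just above the statement. So the plan is simply to justify~(\ref{dx}) once, then perform two separate algebraic regroupings of the same right-hand side.

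The first step is to start from the total derivative
\[
\frac{\partial Y}{\partial x}=\sum_iA_i'(x)f_i\bigl(k_i(x)\bigr)+\sum_{i}A_i(x)f_i'\bigl(k_i(x)\bigr)k_i'(x),
\]
which follows by differentiating $Y(x)=\sum_i A_i(x) f_i(k_i(x))$ term by term, using the standing differentiability assumption on the equilibrium map $x\mapsto(R(x),k_1(x),\dots,k_m(x))$. (Here I would briefly note that the first-order conditions for $(P_i)$ give $k_i$ as a continuous function of $(R,A_i)$, so when the composition with $A_i(\cdot)$ is differentiable we get the claimed formulas.)

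For decomposition~(\ref{d1}), I partition the index set $\{1,\dots,m\}$ into $\{i:k_i'(x)\ge 0\}$ and $\{i:k_i'(x)<0\}$ and split the second sum accordingly; the first sum $\sum_i A_i'(x) f_i(k_i)$ is kept with the nonnegative-$k_i'$ block because, by the assumption $A_i'(x)>0$ and by $f_i\ge 0$, those terms all represent ``added production''. The only thing to remark is that under market clearing $\sum_i k_i(x)=S$ differentiating yields $\sum_i k_i'(x)=0$, which guarantees that the ``losses'' block is non-trivial whenever the ``gains'' block is, but this is a side observation and not needed for the identity itself.

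For decomposition~(\ref{d2}), I substitute the already-derived expression
\[
k_i'(x)=\frac{\partial k_i}{\partial R}\frac{\partial R}{\partial x}+\frac{\partial k_i}{\partial A_i}\frac{\partial A_i}{\partial x}
\]
into the second sum of~(\ref{dx}) and distribute, which produces two sums: one containing the factor $\frac{\partial A_i}{\partial x}$ (the ``quantity'' channel, which I then group with $\sum_i A_i'(x)f_i(k_i)$), and one containing the common factor $\frac{\partial R}{\partial x}$ (the ``price'' channel). The sign annotations in the statement follow from Assumption~\ref{assumption1} (concavity and monotonicity of $F_i$, so $\partial k_i/\partial R\le 0$ and $\partial k_i/\partial A_i\ge 0$) together with $A_i'(x)>0$; these are recorded but play no role in the algebraic identity itself.

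The only genuine obstacle is justifying the differentiability of $R(x)$ and $k_i(x)$; I would handle this by invoking the implicit function theorem applied to the market-clearing equation $\sum_i k_i(R,A_i(x))=S$ on the region where the relevant constraints in~$(P_i)$ are either strictly active or strictly slack, exactly the region where the proposition's hypothesis ``the equilibrium outcomes are differentiable'' is meant to apply. Beyond that, everything reduces to rearranging~(\ref{dx}), so no further estimates are required.
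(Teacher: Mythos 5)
Your proposal is correct and matches the paper's own (implicit) argument: the paper derives the proposition directly from the chain-rule expansion of $Y(x)=\sum_i A_i(x)f_i(k_i(x))$ together with the decomposition of $k_i'(x)$ displayed just before the statement, which is exactly your two regroupings. Your additional remarks on differentiability via the implicit function theorem go slightly beyond what the paper records, but they are consistent with its standing hypothesis.
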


Proposition \ref{prop2} provides different interpretations of the effects of  productivity changes and helps us understand how the aggregate output may be increasing or decreasing in the exogenous change $x$. Look at (\ref{d1}). When $x$ increases, it generates a direct and positive effect on the productivity of agents, which is represented by the terms $\sum_iA_i'(x)f_i\big(k_i(x)\big)>0$. However, since the capital supply is fixed, we have $\sum_ik_i'(x)=0$. So, some agents get more input (i.e., $k_i'(x)\geq 0$) and produce more. However, others get less (i.e., $k_i'(x)<0$) and produce less. Therefore, the aggregate production can increase or decrease. The second decomposition (\ref{d2}) shows us the quantity and price effects. Indeed, the equilibrium physical capital $k_i$ is increasing in the productivity $A_i(x)$ which contribute to the quantity effect. However, it is  decreasing in the interest rate $R$; see (\ref{decompo-x}).  Since the interest rate is increasing in $x$, agents pay higher cost when borrowing, which generates the price effect.

Proposition \ref{prop2} leads to the following result showing the effect of individual productivity change.
\begin{corollary}[effect of individual productivity changes]
\label{prop2-cor}  Consider an equilibrium with $k_j>0$. Let only $A_j$ vary and assume that the equilibrium outcomes are differentiable functions. We have
\begin{align} 
\label{yaj}\frac{\partial Y}{\partial A_j} =&\underbrace{f_j(k_j)}_\text{Productivity effect}+ \underbrace{\sum_{i\not=j}\big(A_jf_j^\prime(k_j)-A_if_i^\prime(k_i)\big)\underbrace{\frac{-\partial k_i}{\partial R}}_\text{$\geq 0$}\underbrace{\frac{\partial R}{\partial A_j}}_\text{$\geq 0$}}_\text{Allocation effect}.
\end{align}

By consequence, ${\partial Y}/{\partial A_j}\geq 0$, $\forall j\in\mathcal{I}$, where $\mathcal{I}=\argmax_{i>n} \{A_if_i'(k_i)\}$. The aggregate output increases in $A_i$ if the producer $i$ has the highest total marginal factor productivity.

\end{corollary}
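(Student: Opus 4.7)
My plan is to derive the stated identity by specializing decomposition (\ref{d2}) of Proposition \ref{prop2} to the case in which only the productivity $A_j$ varies, and then using the market-clearing identity $\sum_i k_i = S$ to re-express $\partial k_j/\partial A_j$ in terms of the cross-derivatives $\partial k_i/\partial R$. The sign statement in the ``by consequence'' part will then follow by inspecting the signs of the three factors appearing in each summand.

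Concretely, I would take $x=A_j$, so that $\partial A_i/\partial x = 0$ for $i\neq j$ and $\partial A_j/\partial x = 1$. Decomposition (\ref{d2}) then collapses to
\begin{align*}
\frac{\partial Y}{\partial A_j} = f_j(k_j) + A_j f_j'(k_j)\frac{\partial k_j}{\partial A_j} + \sum_{i} A_i f_i'(k_i)\frac{\partial k_i}{\partial R}\frac{\partial R}{\partial A_j}.
\end{align*}
Next, since for $i\neq j$ the equilibrium quantity $k_i$ depends on $A_j$ only through the interest rate, $k_i'(A_j)=\frac{\partial k_i}{\partial R}\frac{\partial R}{\partial A_j}$, while for $i=j$ we have $k_j'(A_j)=\frac{\partial k_j}{\partial A_j}+\frac{\partial k_j}{\partial R}\frac{\partial R}{\partial A_j}$. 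Differentiating the market-clearing condition $\sum_i k_i = S$ in $A_j$ yields $\sum_i k_i'(A_j)=0$, hence
\begin{align*}
\frac{\partial k_j}{\partial A_j} = -\sum_{i} \frac{\partial k_i}{\partial R}\frac{\partial R}{\partial A_j}.
\end{align*}

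Substituting this into the first display and combining the two sums gives
\begin{align*}
\frac{\partial Y}{\partial A_j} = f_j(k_j) + \sum_{i} \big(A_i f_i'(k_i) - A_j f_j'(k_j)\big) \frac{\partial k_i}{\partial R}\frac{\partial R}{\partial A_j}.
\end{align*}
The $i=j$ term is identically zero, so the sum can be restricted to $i\neq j$, and rewriting the bracket as $-(A_jf_j'(k_j)-A_if_i'(k_i))$ and absorbing the minus sign into $-\partial k_i/\partial R$ gives exactly formula (\ref{yaj}).

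For the final consequence, I would simply observe that $f_j(k_j)\geq 0$, $-\partial k_i/\partial R\geq 0$ (capital demand is non-increasing in $R$), and $\partial R/\partial A_j\geq 0$ (a positive productivity shock raises capital demand and thus the equilibrium interest rate), already noted in~(\ref{decompo-x}). Therefore whenever $A_j f_j'(k_j)\geq A_i f_i'(k_i)$ for all $i\neq j$, every summand in (\ref{yaj}) is non-negative, which gives $\partial Y/\partial A_j\geq 0$ for every $j$ realizing the maximum of the total marginal factor productivity across producers. The only non-routine point is the sign claim $\partial R/\partial A_j\geq 0$; since the paper already invokes it in Proposition \ref{prop2}, I would cite the same justification (the aggregate capital supply is fixed and a rise in $A_j$ strictly shifts agent $j$'s capital demand upward, so market-clearing forces $R$ upward as well). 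No other step is delicate, the work is in bookkeeping the chain rule across the implicit map $A_j\mapsto (R,k_1,\dots,k_m)$.
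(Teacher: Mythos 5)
Your derivation is correct and matches what the paper intends: Corollary \ref{prop2-cor} is obtained exactly by specializing the decomposition of Proposition \ref{prop2} to $x=A_j$, differentiating the market-clearing identity $\sum_i k_i=S$ to eliminate $\partial k_j/\partial A_j$, and noting that the $i=j$ term vanishes --- the same bookkeeping the paper carries out explicitly in the footnote for the analogous credit-limit formula (\ref{decom-difference}). The sign argument for the ``by consequence'' part is likewise the intended one, resting on $-\partial k_i/\partial R\geq 0$ and $\partial R/\partial A_j\geq 0$ as already asserted in (\ref{decompo-x}).
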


Conditions in Proposition \ref{prop2} and Corollary \ref{prop2-cor} are based on endogenous variables. We can go further by providing conditions based on exogenous parameters, shows the role of credit limit on the effect of productivity change. 

Firstly, we consider linear production functions.  The following result is a generalization of Proposition \ref{motiprop1}. 
\begin{proposition}[effects of productivity changes - linear technology]
\label{tfp-rn} 
Assume linear production functions  $F_i(k)=A_ik$ $\forall i, \forall k$, where $A_1<\cdots <A_m$. 
\begin{enumerate}
\item We have $Y\leq Y^{perfect}\equiv A_m\sum_iS_i$. Moreover, 
$Y=Y^*$ if and only if 
$f_mA_m\geq A_{m-1}(1-\frac{S_m}{S})$. 

\item 
Assume that $A_n>\max_i(\gamma_iA_i)$ and $\sum_{i=n+1}^m\frac{A_nS_i}{A_n-\gamma_iA_i}\leq S\leq \sum_{i=n}^m\frac{A_nS_i}{A_n-\gamma_iA_i}$.  Then, the equilibrium interest rate equals $A_n$\footnote{In Appendix \ref{proof3.1}, we present also the case where $R\in (A_{n-1},A_n)$ for some $n\in \{1,\ldots,m\}$. In such a case, the output is increasing in the productivity of any producer.} and the aggregate output equals $=A_n\sum_{i=1}^nS_i+\sum_{i=n+1}^m\dfrac{A_n(1-\gamma_i)}{A_n-\gamma_iA_i}A_iS_i.$

We also have that $ \frac{\partial Y}{\partial A_j}>0, \text{ $\forall $} j> n$, and  
\begin{align}
  \frac{\partial Y}{\partial A_n}=\sum_{i=1}^nS_i-\sum_{i=n+1}^m\frac{(1-\gamma_i)\gamma_iS_i}{(\frac{A_n}{A_i}-\gamma_i)^2} \label{yn-an}
  \end{align}
  \end{enumerate}
\end{proposition}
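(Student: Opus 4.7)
My proof plan proceeds in two parts, corresponding to the two claims of the proposition.

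\textbf{Plan for Part 1.} The upper bound $Y\leq Y^{perfect}=A_m\sum_iS_i$ follows from a short market-clearing argument. Since each agent either invests in capital (if $A_i>R$) or lends (if $A_i<R$), with lending earning a strictly positive return, one may assume without loss of generality that the budget constraint binds for every agent, so $k_i=S_i+b_i$. Summing over $i$ and using asset market clearing $\sum_ib_i=0$ gives $\sum_ik_i=S$. Then
\begin{align*}
Y=\sum_iA_ik_i\leq A_m\sum_ik_i=A_mS=Y^{perfect}.
\end{align*}
Equality requires $k_i=0$ for every $i$ with $A_i<A_m$, so that $k_m=S$. To characterize when this allocation is consistent with equilibrium, I look for an interest rate $R$ making it optimal. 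Agents $i<m$ opt out iff $R\geq A_{m-1}$; agent $m$ absorbs all capital iff her borrowing constraint $Rb_m\leq \gamma_mA_mk_m$ with $b_m=S-S_m$ holds, i.e.\ $R\leq \gamma_mA_mS/(S-S_m)$. The existence of such an $R$ is equivalent to $\gamma_mA_m\geq A_{m-1}(1-S_m/S)$, which (interpreting the ``$f_mA_m$'' in the statement as $\gamma_mA_m$) gives the desired if-and-only-if.

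\textbf{Plan for Part 2.} First, I characterize each agent's optimal response given a candidate interest rate $R$. Since $(P_i)$ is linear in $(k_i,b_i)$, the solution depends only on the sign of $A_i-R$ and on whether $R>\gamma_iA_i$: if $A_i<R$, the agent lends fully, $k_i=0$, $b_i=-S_i$; if $A_i>R$ and $R>\gamma_iA_i$, both constraints bind, giving $k_i=RS_i/(R-\gamma_iA_i)$ and $b_i=\gamma_iA_iS_i/(R-\gamma_iA_i)$; if $A_i=R$, the agent is indifferent on any feasible pair. I then \emph{guess} $R=A_n$ and verify it. The hypothesis $A_n>\max_i\gamma_iA_i$ ensures the demand formulas are well defined for $i>n$, so $k_i=A_nS_i/(A_n-\gamma_iA_i)$ for those agents; agents $i<n$ lend fully; agent $n$ is indifferent, and her capital is pinned down by residual market clearing $k_n=S-\sum_{i>n}A_nS_i/(A_n-\gamma_iA_i)$. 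The two-sided bound assumed on $S$ is exactly what guarantees $0\leq k_n\leq A_nS_n/(A_n-\gamma_nA_n)=S_n/(1-\gamma_n)$, i.e.\ that $k_n$ is nonnegative and agent $n$'s own borrowing constraint is respected. This verifies existence of the equilibrium with $R=A_n$.

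\textbf{Output and derivatives.} Substituting the equilibrium allocations into $Y=\sum_iA_ik_i$, the contribution from $i<n$ vanishes, and
\begin{align*}
Y=A_nk_n+\sum_{i>n}\frac{A_iA_nS_i}{A_n-\gamma_iA_i}.
\end{align*}
Plugging in $k_n=\sum_{i\leq n}S_i-\sum_{i>n}\gamma_iA_iS_i/(A_n-\gamma_iA_i)$ and grouping the $i>n$ terms using $A_i-\gamma_iA_i=(1-\gamma_i)A_i$ yields the stated closed form. For the derivatives, I differentiate the closed form directly: for $j>n$, only the $j$-th summand depends on $A_j$, and the identity $\partial_{A_j}[A_j/(A_n-\gamma_jA_j)]=A_n/(A_n-\gamma_jA_j)^2>0$ gives $\partial Y/\partial A_j>0$; for $j=n$, both the prefactor $A_n$ in the first sum and the ratio $A_n/(A_n-\gamma_iA_i)$ in the second sum depend on $A_n$, and using $\partial_{A_n}[A_n/(A_n-\gamma_iA_i)]=-\gamma_iA_i/(A_n-\gamma_iA_i)^2$ produces the stated formula for $\partial Y/\partial A_n$.

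\textbf{Main obstacle.} The delicate point is handling the indifference of agent $n$ at $R=A_n$: the candidate equilibrium is only consistent if $k_n$ lies within $[0,S_n/(1-\gamma_n)]$, and this is precisely the content of the two-sided hypothesis on $S$. Apart from that, the arithmetic is routine; care is needed to ensure that when aggregating I am not double-counting agent $n$'s contribution to either the capital sum or the asset-market sum.
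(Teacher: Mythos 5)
Your proposal is correct and reaches the same conclusions, but it organizes the argument somewhat differently from the paper. For Part 1 the paper verifies $Y\leq A_mS$ regime by regime using the full equilibrium taxonomy of Theorem \ref{cate-r} (regimes $\mathcal{A}_n$ and $\mathcal{R}_n$) and then identifies which regimes attain equality; you instead get the inequality in one line from $\sum_ik_i=S$ and characterize equality by asking when the corner allocation $k_m=S$ is supportable by some interest rate $R\in[A_{m-1},\,\gamma_mA_mS/(S-S_m)]$. This is cleaner and your reading of the statement's $f_m$ as $\gamma_m$ is the intended one (the appendix proof itself writes $f_i$ for the credit limits). One small gap: for the ``if'' direction you construct \emph{an} equilibrium with $Y=A_mS$, but to conclude that \emph{the} equilibrium output equals $A_mS$ you need uniqueness of the equilibrium interest rate, which is exactly what Theorem \ref{cate-r} provides; you should either invoke it or note that the opt-out/borrowing-constraint inequalities you derive are forced in any equilibrium attaining $Y=A_mS$. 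For Part 2 your guess-and-verify of $R=A_n$, with agent $n$ indifferent and her capital pinned down by residual market clearing inside $[0,S_n/(1-\gamma_n)]$, is precisely the content of the paper's Lemma \ref{lem-an} combined with Lemma \ref{lem1}, and your algebra for the output formula and the derivatives (\ref{yn-an}) checks out.
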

We can see clearly that $\frac{\partial Y}{\partial A_n}$ may have any sign. Since $\frac{\partial Y}{\partial A_n}$  is increasing in $A_n$, it can be negative when $A_n$ is low and positive when $A_n$ is high. This is consistent with our insights mentioned in Section \ref{motivating}.

Secondly, we investigate the case of strictly concave production function. We require standard assumptions.
\begin{assum}\label{assum-concave}For any $i$,
 the function $f_i$ is strictly increasing, strictly concave, twice continuously differentiable, $f_i(0)=0$, $f_i(\infty)=\infty$, $f_i'(0)=\infty$, $f_i'(\infty)=0$.
\end{assum}
\begin{assum}\label{Hi} For any $i$, the function $\frac{kf_i'(k)}{f_i(k)}$ is increasing in $k$.
\end{assum}
\begin{definition}\label{knb}  Given $R,\gamma_i,A_i,S_i$, denote $k^n_i=k^n_i(R/A_i)$ the unique solution to the equation  $A_if_i'(k)=R$ and $k^b_i=k^b_i(\frac{R}{\gamma_iA_i},S_i)$ the unique solution to $R(k-S_i)=\gamma_iA_if_i(k)$.
  \end{definition}
Under Assumption \ref{assum-concave}, $k^n_i$ and $k^b_i$ are uniquely defined. Observe that $k^n_i$ (resp., $k^b_i$) represents the optimal physical capital of agent $i$ when her borrowing constraint is not binding (resp., binding).

The following result explores conditions under which the equilibrium aggregate output increases or decreases in agents' productivity.
\begin{proposition}[effects of productivity changes - strictly concave technology]
\label{A1effect-general-manyagents}  Consider the case of strictly concave technology and let Assumptions \ref{assum-concave} and \ref{Hi} be satisfied. 
\begin{enumerate}
\item  The equilibrium outcomes coincide to those in the economy without frictions, (and hence, the equilibrium aggregate output is increasing in each individual productivity $A_i$) if one of the two following conditions 
\begin{enumerate}
\item[(a)] The credit limit of any agent  is high, in the sense that $\gamma_i> \lim_{x\rightarrow \infty}\frac{xf_i'(x)}{f_i(x)}, \forall i$.
\item[(b)] $\gamma_i< \lim_{x\rightarrow \infty}\frac{xf_i'(x)}{f_i(x)}, \forall i$, $R_1<R_2<\cdots <R_m$, and  $S<\sum_{i=1}^mk^n_i(R_m/A_i)$, where  $R_i$ is  the unique value satisfying 
\begin{align}\label{Ridef}
 R_i \frac{k^n_i(R_i/A_i)-S_i}{A_if_i(k^n_i(R_i/A_i))}= \gamma_i.
\end{align}
\end{enumerate}

\item Assume now that  $\gamma_i<\lim_{k \to \infty}\frac{kf_i'(k)}{f_i(k)}$, $\forall i$, and $R_2<R_3<\cdots < R_m$.  We look at the role of $A_1$.
\begin{enumerate}
\item There exists $\bar{A}_1>0$ such that the equilibrium output $Y$ is increasing in $A_1$ on the interval $(\bar{A}_1,\infty)$. 
\item Consider the case when $A_1$ is small. Denote 
\begin{align*}
D_2&=k^n_2(\frac{R_2}{A_2})+\sum_{i=3}^mk^b_i(\frac{R_2}{\gamma_iA_i},S_i), \quad D_3=\sum_{i=2}^3k^n_i(\frac{R_3}{A_i})+\sum_{i=3}^mk^b_i(\frac{R_3}{\gamma_iA_i},S_i), \ldots\\
D_m&=\sum_{i=2}^mk^n_i(\frac{R_m}{A_i})
\end{align*}
Since $R_2<R_3<\cdots < R_m$, we have $D_2>D_3>\cdots > D_m>0$.
\begin{enumerate}
\item If $S<D_m$, then the output is increasing in $A_1$ when $A_1$ is small enough.  
\item Assume that 
\begin{subequations}
\begin{align}
&D_n>S>D_{n+1}\\
\label{g1}
&\gamma_i \frac{f_i(k)}{kf_i'(k)}<\frac{S_1}{S_1+\sum_{t\geq n+1}S_t}, \forall i=n+1,\ldots,m, \forall k\in(0,S)\\
&\lim_{x\to +\infty}\frac{x}{f_1^{\prime \prime}(x)}<0
\end{align}
\end{subequations}
Then, for any $A_1$ small enough, we have that $\textcolor{blue}{
\frac{\partial Y}{\partial A_1}<0.}$
\end{enumerate}
\end{enumerate}
\end{enumerate}
\end{proposition}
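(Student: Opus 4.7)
The plan is to characterize the equilibrium through a market-clearing condition parameterized by the interest rate $R$ and then analyze comparative statics with respect to $A_1$. Given $R$, each agent $i$ chooses either the unconstrained optimum $k^n_i(R/A_i)$ or the constrained optimum $k^b_i(R/(\gamma_iA_i),S_i)$; the constraint is slack for $R \geq R_i$ (defined by (\ref{Ridef})) and binding for $R < R_i$. Equilibrium is then determined by $\sum_i k_i(R) = S$.

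For Part 1(a), Assumption \ref{Hi} says $kf_i'(k)/f_i(k)$ is increasing in $k$, so its supremum equals $\lim_{x\to\infty}xf_i'(x)/f_i(x) < \gamma_i$. At the unconstrained optimum $A_if_i'(k^n_i)=R$, the credit constraint reduces to $(k^n_i-S_i)f_i'(k^n_i)/f_i(k^n_i) \leq \gamma_i$, which holds strictly for every $R > 0$; hence the equilibrium coincides with the frictionless one, and $Y$ is trivially increasing in each $A_i$. For Part 1(b), the ordering $R_1 < \cdots < R_m$ identifies $R_m$ as the highest threshold, and the capital-supply condition $S < \sum_i k^n_i(R_m/A_i)$ forces market clearing at some $R > R_m$ where all constraints are slack; the same monotonicity conclusion follows.

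For Part 2(a), I apply Corollary \ref{prop2-cor}: when $A_1$ is sufficiently large, agent $1$ has the highest total marginal productivity (her constraint binds at small $k_1$ but $A_1 f_1'(k_1) > R \geq A_i f_i'(k_i)$ for $i\geq 2$), so $1 \in \mathcal{I}$ and hence $\partial Y/\partial A_1 \geq 0$. For Part 2(b), I examine the limit $A_1 \to 0^+$, where agent $1$ essentially drops out ($k_1 \to 0$) and agents $2, \ldots, m$ absorb the entire supply $S$. The quantity $D_n$ equals the aggregate demand of agents $2,\ldots,m$ when $R = R_n$, with agents $2,\ldots,n$ at their unconstrained optima and agents $n+1,\ldots,m$ at their binding optima. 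When $S < D_m$, the equilibrium rate exceeds $R_m$ and no other agent is constrained; implicit differentiation of market clearing yields a small $|\partial R/\partial A_1|$ so the direct productivity term dominates, giving $\partial Y/\partial A_1 > 0$. When $D_n > S > D_{n+1}$, the equilibrium rate lies in $(R_n, R_{n+1})$ and agents $n+1,\ldots,m$ are constrained; the productivity gain $f_1(k_1) + A_1 f_1'(k_1)\,\partial k_1/\partial A_1$ vanishes as $A_1 \to 0$, whereas the price effect $\sum_{i\geq 2} A_i f_i'(k_i)(\partial k_i/\partial R)(\partial R/\partial A_1)$ stays bounded away from zero and is negative, yielding $\partial Y/\partial A_1 < 0$.

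The main obstacle is the asymptotic analysis for Part 2(b)(ii). I would implicitly differentiate $\sum_i k_i(A_1,R(A_1)) = S$ to solve for $\partial R/\partial A_1$ in terms of the agents' capital elasticities, and then show that this derivative stays bounded below by a positive constant as $A_1 \to 0$. Condition (\ref{g1}) on the elasticities of the constrained agents ensures that the negative price-effect contribution strictly exceeds any positive contribution elsewhere in the decomposition of $\partial Y/\partial A_1$. The curvature condition $\lim_{x\to\infty} x/f_1''(x) < 0$ controls the rate at which $k_1$ and the production $A_1 f_1(k_1)$ vanish as $A_1 \to 0$ (through the first-order condition $A_1 f_1'(k_1) = R$, which sends $f_1'(k_1) \to \infty$), guaranteeing that the direct productivity effect is asymptotically dominated by the price effect and that $\partial Y/\partial A_1$ is strictly negative in a neighborhood of $A_1 = 0$.
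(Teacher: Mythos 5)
Your Parts 1 and 2(b)(i) follow the paper's own route (Lemma \ref{gammaalpha} for 1(a), Theorem \ref{general1} for 1(b) and for the frictionless regime when $S<D_m$), and Part 2(a) reaches the right conclusion, but with a sign slip: the first-order conditions give $A_if_i'(k_i)\geq R$ for every producer, never $R\geq A_if_i'(k_i)$. What actually puts agent $1$ at the top for large $A_1$ is that the equilibrium interest rate diverges as $A_1\to\infty$ (the paper bounds $R\geq\min\big(A_1f_1'(S),\,\gamma_1A_1f_1(S_1)/(S-S_1)\big)$), which forces the constraints of agents $i\geq2$ to become slack, so $A_if_i'(k_i)=R\leq A_1f_1'(k_1)$ and Corollary \ref{prop2-cor} applies. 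You need that divergence step; it is not automatic.

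The genuine gap is in 2(b)(ii). Your claim that the quantity effect $f_1(k_1)+A_1f_1'(k_1)\,\partial k_1/\partial A_1$ vanishes while the price effect stays bounded away from zero is not correct under the stated hypotheses: $A_1f_1'(k_1)=R\to R(0)>0$ and, holding $R$ fixed, $\partial k_1/\partial A_1=-\frac{1}{A_1}\cdot\frac{R}{A_1}(k^n_1)'(R/A_1)$, so the very curvature condition $\lim_{x\to+\infty}x/f_1''(x)<0$ that you invoke makes the quantity effect diverge to $+\infty$ and (through the market-clearing identity) makes $\partial R/\partial A_1$, hence the price effect, diverge to $-\infty$. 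The two pieces must first be netted: using $\sum_ik_i'(A_1)=0$ and $A_1f_1'(k_1)=R$ one obtains $\frac{\partial Y}{\partial A_1}=f_1(k_1)+\sum_{i\neq1}\big(R-A_if_i'(k_i)\big)\big(-\frac{\partial k_i}{\partial R}\big)\frac{\partial R}{\partial A_1}$, in which every summand of the second term is nonpositive and the divergences have cancelled. The decisive step, which your proposal never supplies, is to exhibit a constrained agent $j\in\{n+1,\dots,m\}$ with $A_jf_j'(k_j(0))>R(0)$ \emph{strictly} in the limit $A_1\to0$: the paper gets this by combining the binding constraint $\gamma_jA_jf_j(k_j(0))=R(0)(k_j(0)-S_j)$ with the market-clearing inequality $\sum_{i\geq n+1}k_i(0)\geq S_1+\sum_{i\geq n+1}S_i$ and the elasticity bound (\ref{g1}), which by aggregation forces $\sigma k_j(0)<k_j(0)-S_j$ for some $j$ and hence $R(0)<A_jf_j'(k_j(0))$. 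Without that step all marginal products could equal $R$ in the limit, the allocation term would vanish along with $f_1(k_1)$, and the sign of $\partial Y/\partial A_1$ would be undetermined; saying that condition (\ref{g1}) ``ensures the price effect dominates'' is the conclusion, not an argument for it.
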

\begin{proof}
See Appendix \ref{proof3.1}.
\end{proof}

Proposition \ref{A1effect-general-manyagents} explores the role of two important factors: credit limits $(\gamma_i)$ and productivity $A_1$.\footnote{In Appendix \ref{additional}, we provide more detailed analyses for the case of two agents with strictly concave technologies.}

Look at firstly on part 1 of Proposition \ref{A1effect-general-manyagents}. Condition $\gamma_i> \lim_{x\rightarrow \infty}\frac{xf_i'(x)}{f_i(x)}$ ensures that agent $i$'s borrowing constraint is not binding (see Lemma \ref{gammaalpha}  in Appendix \ref{general1-proof}). By consequence, the equilibrium coincides to that in the economy without frictions. Therefore, the output is increasing in each productivity.

Under condition 1.(b) of Proposition \ref{A1effect-general-manyagents}, Theorem \ref{general1} in Appendix \ref{proof3.1} implies that  the equilibrium coincides to that in the economy without frictions (this is similar to part 1 of Proposition \ref{tfp-rn}). Our proof is based on the key result: Agent $i$'s borrowing constraint is binding if and only if $R\leq R_i$ (see Lemma \ref{indi-general} in  Appendix \ref{34-proof}). 

Observe that $\sum_{i=1}^mk^n_i(R_m/A_i)$ is increasing in $\gamma_m$  
because $R_i/A_i$ does not depend on $A_i$ and $k^n_i(R/A_i)$ is decreasing in $R/A_i$. So, condition $S<\sum_{i=1}^mk^n_i(R_m/A_i)$ is more likely to be satisfied if the credit limit $\gamma_m$ of the agent $m$ who needs the credit the most (in the sense that $R_m>R_i$, $\forall i$) is quite high, then the credit constraints of this agent and of all other ones are not binding. 
 
 To better understand point 1.b, we  
 look at the case where $F_i(k)=A_ik^{\alpha}$, $\forall i,\forall k$, with $\alpha >\gamma_i$ $\forall i$. We can compute that $R_m=\alpha A_mS_m^{\alpha-1}(1-\frac{\gamma_m}{\alpha})^{1-\alpha}$,\footnote{See Remark \ref{cdtr} in  Appendix \ref{34-proof}.} 
 and hence 
\begin{align*}
\sum_{i=1}^mk^n_i(\frac{R_m}{A_i})=\sum_{i=2}^m(\frac{\alpha A_i}{R_m})^{\frac{1}{1-\alpha}}=\sum_{i=1}^m(\frac{ A_i}{A_m S_m^{\alpha-1}(1-\frac{\gamma_m}{\alpha})^{1-\alpha}})^{\frac{1}{1-\alpha}}=\sum_{i=1}^m(\frac{ A_i}{A_m})^{\frac{1}{1-\alpha}}\frac{S_m}{1-\frac{\gamma_m}{\alpha}}.
\end{align*}
So, we get that: $$S<\sum_{i=1}^mk^n_i(R_m/A_i) \Leftrightarrow \sum_{i=1}^mS_i<\sum_{i=1}^m(\frac{ A_i}{A_m})^{\frac{1}{1-\alpha}}\frac{S_m}{1-\frac{\gamma_m}{\alpha}}.$$
This can be satisfied if $\gamma_m$ is high in the sense that it is closed to $\alpha$.\footnote{For instance, we can take $\gamma_i=\gamma<\alpha$, $S_i=s$, $\forall i$, and $A_1<\cdots <A_m$. Then $R_1<\cdots <R_m$. Moreover, $S<\sum_{i=1}^mk^n_i(R_m/A_i)$ becomes $m(1-\frac{\gamma_m}{\alpha})<\sum_{i=1}^m(\frac{ A_i}{A_m})^{\frac{1}{1-\alpha}}$, which is satisfied if $\gamma$ is closed to $\alpha$.}

We now explain part 2 of Proposition \ref{A1effect-general-manyagents}. According to point 2.a, when the productivity $A_1$ is high, a positive productivity change is good for the aggregate output. The intuition behind is that when $A_1$ is high enough, the marginal productivity $A_1f_1'(k_1)$ of this agent is the highest total marginal factor productivity, and hence, decomposition (\ref{yaj}) ensures that $\frac{\partial Y}{\partial A_1}>0$.

Regarding point $2.b.i$ of Proposition \ref{A1effect-general-manyagents}, condition $S<D_m$ is non-empty and it can be satisfied with a large class of parameter.\footnote{Indeed, let $F_i(k)=A_ik^{\alpha}$, $\forall i,\forall k$, with $\alpha >\gamma_i$. We have $R_m=\alpha A_mS_m^{\alpha-1}(1-\frac{\gamma_m}{\alpha})^{1-\alpha}$, and hence $D_m=\sum_{i=2}^m(\frac{ A_i}{A_m})^{\frac{1}{1-\alpha}}\frac{S_m}{1-\frac{\gamma_m}{\alpha}}.$
When $S_i=s,\gamma_i=\gamma$, $\forall i$, and $A_2<\cdots <A_m$, then we have $R_2<\cdots<R_m$.  Condition $S<D_m$ is equivalent to $m(1-\frac{\gamma}{\alpha})<\sum_{i=2}^m(\frac{ A_i}{A_m})^{\frac{1}{1-\alpha}}$ which can be satisfied.
} Observe that $D_m$ is increasing in $A_2,\ldots, A_{m-1}$ but decreasing in $A_m$ because $R_i/A_i$ does not depend on $A_i$ and $k^n_i(R/A_i)$ is decreasing in $R/A_i$. Moreover, $D_m$ is increasing in agent $m'$s credit limit $\gamma_m$. In other words, condition $S<D_m$ is likely to be satisfied if $\gamma_m$ is quite high. In such a case,  point 2.a ensures that, when $A_1$ is small enough, the credit constraints of all agents are not binding and hence the aggregate output is increasing in $A_i$, $\forall i\geq 1$.

Let us now look at point 2.b.ii. Condition $D_n>S>D_{n+1}$ ensures that when $A_1$ is low enough, the credit constraint of any agent $i\geq n+1$ is binding while that of any agent $i\leq n$ is not.   Condition (\ref{g1}) means that agents whose credit constraints are binding have a very low credit limit. In such a case,  the aggregate output may be decreasing in productivity $A_1$ when $A_1$ is small enough. The fact that $A_1$ is very small ensures that the productivity dispersion is high. This is consistent with condition (\ref{pre1}) in our motivating example.

\subsubsection{Homogeneous versus heterogeneous productivity changes}

When the TFP of all producers changes at the same rate, we have the following result.
\begin{corollary}[homogeneous productivity changes]\label{TFP-homo}Consider an equilibrium. Assume that an exogenous change makes the individual TFP vary from $A_i$ to $ A_i(x)=x A_i$, $\forall i$, where $x>0$. Then, for this new economy, there is an equilibrium where $Y(x)=xY$, i.e., the aggregate output changes at the same rate.

\end{corollary}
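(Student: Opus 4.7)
The plan is to exhibit a candidate equilibrium in the perturbed economy by scaling only the interest rate, and then verify that all equilibrium conditions continue to hold. Specifically, given an equilibrium $(R,(k_i,b_i)_i)$ of the original economy, I would propose the tuple $(R(x),(k_i(x),b_i(x))_i) = (xR,(k_i,b_i)_i)$ as an equilibrium of the scaled economy in which each productivity has been multiplied by $x>0$.

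First I would check market clearing, which is immediate since $\sum_i b_i(x) = \sum_i b_i = 0$. Next I would show that, for each $i$, the allocation $(k_i,b_i)$ solves the rescaled problem $(P_i)$ with production function $\widetilde{F}_i = xA_if_i$ and interest rate $xR$. The feasible set is exactly the same as in the original problem: the budget constraint $0\le k_i\le S_i+b_i$ is unaffected by $x$, and the borrowing constraint becomes
\begin{equation*}
(xR)\,b_i \;\le\; \gamma_i\,\widetilde{F}_i(k_i) \;=\; \gamma_i\, x A_i f_i(k_i),
\end{equation*}
which (dividing by $x>0$) is precisely the original borrowing constraint $Rb_i\le \gamma_i F_i(k_i)$. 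Moreover the new objective at any feasible $(k,b)$ is
\begin{equation*}
\widetilde{F}_i(k)-(xR)b \;=\; x\bigl[A_if_i(k)-Rb\bigr],
\end{equation*}
i.e.\ exactly $x$ times the original objective. Since $x>0$, the argmax is unchanged, and so $(k_i,b_i)$ is optimal in the scaled problem with price $xR$.

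Having verified both individual optimality and market clearing, $(xR,(k_i,b_i)_i)$ is an equilibrium of the scaled economy by Definition~\ref{defi}. The aggregate output is
\begin{equation*}
Y(x) \;=\; \sum_i \widetilde{F}_i(k_i) \;=\; \sum_i x A_i f_i(k_i) \;=\; x\sum_i F_i(k_i) \;=\; xY,
\end{equation*}
as claimed. There is essentially no ``hard part'' here; the only mild subtlety is being careful to note that we are not asserting uniqueness of equilibrium (the statement says ``there is an equilibrium''), so this constructive verification suffices without appealing to any uniqueness result. The key observation is the joint homogeneity of the objective and the borrowing constraint in $(A_i,R)$, which makes scaling both by the same factor $x$ leave the feasible set and the argmax invariant.
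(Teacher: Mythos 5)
Your proposal is correct and follows exactly the same route as the paper's proof: keep the allocations $(k_i,b_i)$, scale the interest rate to $xR$, and observe that the borrowing constraint and objective both scale by $x$ so individual optimality and market clearing are preserved. You merely spell out the verification that the paper leaves as ``we can check,'' so there is nothing to add.
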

\begin{proof}
Denote $(R,(k_i,b_i))$ an equilibrium for the economy $\mathcal{E} \equiv (A_i, f_i, \gamma_i,S_i)_{i=1,\ldots,m}$ with borrowing constraints: $Rb_i\leq \gamma_iA_if_i(k_i)$. We can check that $(R(x),(k_i,b_i))$, where $R(x)\equiv xR$, is an equilibrium for the new economy $\mathcal{E}(x) \equiv (A_i(x), f_i, \gamma_i,S_i)_{i=1,\ldots,m}.$ In equilibrium, the new aggregate output is $Y(x)=\sum_iA_i(x)f_i(k_i)=xY.$ 
\end{proof}
Next, we consider the case where productivity changes are not proportional. In such a case, we argue that positive productivity changes may reduce the aggregate output. Indeed, by using Taylor's theorem and Proposition \ref{A1effect-general-manyagents}, we obtain the following result.
\begin{corollary}[asymmetric productivity changes]
\label{propcsg1}Consider an economy which satisfies conditions in case 2.(b) in Proposition \ref{A1effect-general-manyagents}, and $A_1>0$ small enough. Then, there exist $g\in (0,1)$ and a neighborhood $\mathcal{G}$ of $(A_1,\ldots, A_m)$  such that
\begin{align}
\label{csg1}&\frac{Y(A_1',\ldots, A_m')-Y(A_1,\ldots, A_m)}{A_1'-A_1}<0,  \\
&\forall (A_1',\ldots, A_m')\in \mathcal{G} \text{ satisfying } |\frac{A_i'-A_i}{A_1'-A_1}|<g, \forall j.\notag
\end{align}
\end{corollary}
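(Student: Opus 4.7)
The plan is to derive (\ref{csg1}) from a first-order Taylor expansion of $Y$ at $(A_1,\ldots,A_m)$, combined with the strict inequality $\frac{\partial Y}{\partial A_1}<0$ that part 2.(b).ii of Proposition \ref{A1effect-general-manyagents} provides when $A_1>0$ is sufficiently small. Under the hypotheses of case 2.(b), the binding/non-binding status of each borrowing constraint is preserved on an open neighborhood of $(A_1,\ldots,A_m)$, so the equilibrium outcomes $R$ and $(k_i,b_i)$, and hence $Y$, depend on the productivity vector in a $C^1$ way on that neighborhood. This lets me write
\begin{align*}
Y(A_1',\ldots,A_m')-Y(A_1,\ldots,A_m)=\sum_{i=1}^m\frac{\partial Y}{\partial A_i}(A_i'-A_i)+r(A'-A),
\end{align*}
with $r(A'-A)=o(\|A'-A\|)$ as $A'\to A$.

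Dividing by $A_1'-A_1\neq 0$, the difference quotient appearing in (\ref{csg1}) becomes
\begin{align*}
\frac{Y(A')-Y(A)}{A_1'-A_1}=\frac{\partial Y}{\partial A_1}+\sum_{i=2}^m\frac{\partial Y}{\partial A_i}\cdot\frac{A_i'-A_i}{A_1'-A_1}+\frac{r(A'-A)}{A_1'-A_1}.
\end{align*}
Set $\delta:=-\frac{\partial Y}{\partial A_1}>0$ and let $M:=\max_{i\geq 2}\bigl|\frac{\partial Y}{\partial A_i}\bigr|$, which is finite by the $C^1$ regularity just invoked. Picking $g\in(0,1)$ so small that $(m-1)gM<\delta/2$ forces the middle sum to be bounded in absolute value by $\delta/2$ on any $A'$ satisfying $|A_i'-A_i|\leq g|A_1'-A_1|$ for every $i\geq 2$. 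The elementary estimate $\|A'-A\|\leq|A_1'-A_1|\sqrt{1+(m-1)g^2}$ then yields $|r(A'-A)|/|A_1'-A_1|\to 0$ as $A'\to A$, so shrinking a neighborhood $\mathcal{G}$ of $(A_1,\ldots,A_m)$ makes the last term stay below $\delta/4$ throughout $\mathcal{G}$. Combining the three bounds gives a difference quotient at most $-\delta+\delta/2+\delta/4=-\delta/4<0$, which is (\ref{csg1}).

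The main obstacle I anticipate is the $C^1$ regularity claim at the start: one must check that smoothness of the equilibrium map persists on an open neighborhood of $(A_1,\ldots,A_m)$, not merely at the point itself. Under the regime identified in case 2.(b) (agents $1,\ldots,n$ unconstrained, agents $n+1,\ldots,m$ binding), the equilibrium is characterized by the first-order conditions $A_if_i'(k_i)=R$ for $i\leq n$, the collateral equalities $R(k_i-S_i)=\gamma_iA_if_i(k_i)$ for $i\geq n+1$, and market clearing $\sum_i k_i=S$; a standard implicit function theorem argument — with invertibility of the associated Jacobian following from strict concavity in Assumption \ref{assum-concave} and the monotonicity in Assumption \ref{Hi} — then delivers the required smooth dependence on $(A_1,\ldots,A_m)$ in an open neighborhood, and the comparative statics estimate above goes through.
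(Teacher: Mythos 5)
Your proof follows essentially the same route as the paper's: a first-order Taylor expansion of $Y$ at $(A_1,\ldots,A_m)$, division by $A_1'-A_1$, and a choice of $g$ small enough that the strictly negative term $\frac{\partial Y}{\partial A_1}$ from case 2.(b) of Proposition \ref{A1effect-general-manyagents} dominates the contributions of the other partial derivatives and the remainder. Your explicit quantitative bounds and your closing discussion of why the equilibrium map is $C^1$ on a neighborhood (via the implicit function theorem applied to the first-order conditions, binding collateral equalities, and market clearing) make the argument more self-contained than the paper's, which simply invokes Taylor's theorem and the differentiability assumed throughout Section \ref{productivity}.
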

\begin{proof}
Denote $A\equiv (A_1,\ldots,A_m)$ and $A'\equiv (A'_1,\ldots,A'_m)$. 
By Taylor's theorem, we have
\begin{align*}
Y(A')-Y(A)=\frac{\partial Y(A)}{\partial{A_1}}(A'_1-A_1)+\sum_{i\geq 2}\frac{\partial Y(A)}{\partial{A_i}}(A'_i-A_i)+\sum_{i}h_i(A,A')(A'_i-A_i)
 \end{align*}
 where $\lim_{A'\to A}h_i(A,A')=0$. 

We can choose $\epsilon<0, g<1$ and $(A'_i)$ such that $|\frac{A_i'-A_i}{A_1'-A_1}|<g$ for any $i\not=1$  and $\frac{\partial Y(A)}{\partial{A_1}}+\sum_{i\geq 2}\frac{\partial Y(A)}{\partial{A_i}}\frac{A'_i-A_i}{A'_1-A_1}<\epsilon <0$. In this case, we get (\ref{csg1}).
\end{proof}

There are two key points that ensure (\ref{csg1}). The first condition is $\frac{\partial Y(A)}{\partial{A_1}}<0$, i.e., the output is decreasing in $A_1$ in a neighborhood of $(A_1,\ldots, A_m)$; notice that this may happen only if $A_1$ is small enough. Of course, we have $\frac{Y(A_1',\ldots, A_m')-Y(A_1,\ldots, A_m)}{A_1'-A_1}>0$ if $\frac{\partial Y(A)}{\partial{A_i}}>0$, $\forall i$. 
The second condition is $|\frac{A_i'-A_i}{A_1'-A_1}|<g$, i.e., the productivity does not change at the same rate and that the productivity of the less productive agent (agent 1) increases faster than that of the most productive agents. This implies that agent $1$ absorbs more capital than other ones.

\subsection{Effects of credit limits}
\label{effect-fi}





In this section, we investigate the effects of credit limits $(\gamma_t)$ on the aggregate production, which help us to understand better the relationship between finance and economic growth. A meaningful question is whether financial development has positive effects on the economic growth.  In our model,  relaxing credit limit (i.e., increasing $\gamma_i$) can be interpreted as reduction of financial friction or improvement of the financial sector.

Assume that the credit limit of all agents depends on an exogenous variable $x\in \rr$ in the sense that $\gamma_i=\gamma_i(x)$ where $\gamma_i$ is a differentiable function of $x$ and $\gamma_i'(x)>0$.  

We wonder how the aggregate output changes when $x$ varies. The equilibrium physical capital of agent $i$, which depends on $x$, is denoted by $k_i(x)$. We write $k_i(x)=k_i\big(\gamma_i(x),R(\gamma_1(x),\ldots,\gamma_m(x))\big)$, where $R=R(\gamma_1(x),\ldots,\gamma_m(x))$ is the equilibrium interest rate which depends on the credit limit $(\gamma_i(x))_i$. We can write the aggregate output as follows:
\begin{align} 
Y(x)&=\sum_{i}F_i\big(k_i(x)\big)=\sum_{i}F_i\Big(k_i\big(\gamma_i(x),R(\gamma_1(x),\ldots,\gamma_m(x))\big)\Big).
\end{align}

Assume the differentiability, we have 
\begin{align} 
k_i'(x)&=\frac{\partial k_i}{\partial \gamma_i}\frac{\partial \gamma_i}{\partial x}+\frac{\partial k_i}{\partial R}\frac{\partial R}{\partial x}, \quad 
\frac{\partial R}{\partial x}=\sum_j\frac{\partial R}{\partial \gamma_j}\frac{\partial \gamma_j}{\partial x}
\end{align}

Recall that $
 \frac{\partial k_i}{\partial \gamma_i}\geq 0,  \frac{\partial k_i}{\partial R}\leq 0,  \frac{\partial \gamma_i}{\partial x}\geq 0, 
\frac{\partial R}{\partial x} \geq 0$ because $\frac{\partial R}{\partial x}=\sum_j\frac{\partial R}{\partial \gamma_j}\frac{\partial \gamma_j}{\partial x}$ and $\frac{\partial R}{\partial \gamma_j}\geq 0$, $\forall j$. So, we see that $k_i'(x)$ may have any sign. However, we know $\sum_ik_i'(x)=0$ because $\sum_ik_i=S$ in equilibrium. By consequence, we obtain two decompositions which help us to understand why the aggregate output may be increasing or decreasing in the exogenous change $x$. 

\begin{proposition}[effects of credit changes]\label{creditdecompo}
Consider an equilibrium. 

\begin{enumerate}
\item The equilibrium outcomes do not depend on credit limits $\gamma_i(x)$ of agents whose borrowing constraints are not binding.   
\item 
For any agent $j$ whose borrowing constraint is binding, let $x$ vary and assume that the equilibrium outcomes are differentiable functions. Then, we have decompositions:
\begin{align} 
\label{df1}\frac{\partial Y}{\partial x}=&\underbrace{\sum_{i: k_i'(x)\geq 0}F_i'(k_i(x))k'_i(x)}_\text{Added production of agent $j$} +\underbrace{ \sum_{i: k_i'(x)<0}F_i'(k_i(x))k'_i(x)}_\text{Production losses of other agents}\\
\label{df2} =&\underbrace{\sum_iF_i'(k_i(x))\underbrace{\frac{\partial k_i}{\partial \gamma_i}}_\text{$>0$} \underbrace{\frac{\partial \gamma_i}{\partial x}}_\text{$>0$} }_\text{Quantity effect} +\underbrace{\sum_iF_i'(k_i(x))\underbrace{\frac{\partial k_i}{\partial R}}_\text{$<0$}\underbrace{\frac{\partial R}{\partial x}}_\text{$>0$}}_\text{Price effect}
\end{align}

\item Consider a particular case where only $A_j$ varies (other being fixed). We have that
\begin{align}
 \label{decom-difference}
\frac{\partial Y}{\partial \gamma_j}
=&\underbrace{\frac{\partial R}{\partial \gamma_j}}_\text{$\geq 0$} \sum_{i\not=j}\big(F_i'(k_j)-F_i'(k_i)\big)\underbrace{\frac{-\partial k_i}{\partial R}}_\text{$\geq 0$}
\end{align}
\end{enumerate}

\end{proposition}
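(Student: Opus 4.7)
The plan is to handle the three parts by (i) a KKT argument for Part 1, (ii) the chain rule applied to $Y(x)=\sum_iF_i(k_i(x))$ for Part 2, and (iii) combining the chain rule with capital-market clearing for Part 3. For Part 1, I would write the KKT conditions for the producer problem $(P_i)$ and observe that $\gamma_i$ enters the optimality conditions only through the borrowing constraint $Rb_i\leq\gamma_iF_i(k_i)$. If at the equilibrium this constraint is slack, complementary slackness forces the corresponding multiplier to vanish, so the first-order conditions in $(k_i,b_i)$ reduce to conditions involving only $R$ and the primitives, but not $\gamma_i$; by continuity a sufficiently small change in $\gamma_i$ keeps the constraint slack and leaves $(k_i,b_i)$ unchanged. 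Since no other agent's problem directly involves $\gamma_i$ and $R$ is pinned down by financial market clearing, the entire equilibrium is locally independent of $\gamma_i$.

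For Part 2, I differentiate $Y(x)=\sum_iF_i(k_i(x))$ where $k_i(x)=k_i\bigl(\gamma_i(x),R(\gamma_1(x),\ldots,\gamma_m(x))\bigr)$. Formula (\ref{df1}) is then the trivial split of $\sum_iF_i'(k_i(x))k_i'(x)$ according to the sign of $k_i'(x)$, and (\ref{df2}) follows by substituting $k_i'(x)=\frac{\partial k_i}{\partial \gamma_i}\gamma_i'(x)+\frac{\partial k_i}{\partial R}\frac{\partial R}{\partial x}$ and grouping the terms carrying $\gamma_i'(x)$ (the quantity effect) apart from those carrying $\partial R/\partial x$ (the price effect). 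The signs of the partials are those already discussed just before the statement.

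For Part 3, where only $\gamma_j$ varies, I start from
\begin{align*}
\frac{\partial Y}{\partial \gamma_j}=F_j'(k_j)\Bigl(\frac{\partial k_j}{\partial \gamma_j}+\frac{\partial k_j}{\partial R}\frac{\partial R}{\partial \gamma_j}\Bigr)+\sum_{i\neq j}F_i'(k_i)\frac{\partial k_i}{\partial R}\frac{\partial R}{\partial \gamma_j}.
\end{align*}
The key step is to eliminate the direct sensitivity $\partial k_j/\partial \gamma_j$ via the capital-market clearing $\sum_ik_i=S$: differentiating in $\gamma_j$, and noting that only agent $j$'s capital has a direct dependence on $\gamma_j$, gives $\frac{\partial k_j}{\partial \gamma_j}=-\sum_i\frac{\partial k_i}{\partial R}\frac{\partial R}{\partial \gamma_j}$. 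Substituting back and collecting, the $i=j$ contributions cancel and the remaining sum rearranges into $\frac{\partial R}{\partial \gamma_j}\sum_{i\neq j}\bigl(F_j'(k_j)-F_i'(k_i)\bigr)\bigl(-\frac{\partial k_i}{\partial R}\bigr)$, which is (\ref{decom-difference}).

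The step I expect to be most delicate is the tacit differentiability assumption. The equilibrium map from $(\gamma_1,\ldots,\gamma_m)$ to $(R,(k_i,b_i))$ is not a priori smooth: it kinks whenever the set of binding borrowing constraints changes, a phenomenon that Part 1 itself highlights. A rigorous justification would apply the implicit function theorem to the system of first-order conditions, the active-constraint equalities for the binding agents, and the market clearing equation at a regular point where the binding set is locally constant; once differentiability is secured in such a neighborhood, the remainder of the argument is elementary algebra.
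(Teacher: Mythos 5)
Your proposal is correct and follows essentially the same route as the paper: parts 1 and 2 are the direct chain-rule computation, and part 3 is obtained exactly as in the paper's footnote by differentiating the market-clearing identity $\sum_ik_i=S$ to eliminate $\partial k_j/\partial\gamma_j$ and regrouping. Note that your final expression carries $F_j'(k_j)-F_i'(k_i)$, which is the intended statement (the paper's printed $F_i'(k_j)$ is evidently a typo), and your closing remark on securing differentiability via the implicit function theorem on a neighborhood where the binding set is constant is a useful precision that the paper simply assumes away.
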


While we directly get (\ref{df1}) and (\ref{df2}) by taking the derivative of $x$ with respect to $\gamma_j$, condition (\ref{decom-difference}) is a  consequence of (\ref{df1}) and the fact that $\sum_ik_i=S$.\footnote{Indeed, notice that $k_i$ depends on $R$ and $\gamma_i$, taking the derivative of both sides of $\sum_ik_i=S$ with respect to $\gamma_j$, we have 
 $\Big(\sum_{i=1}^m\frac{\partial k_i}{\partial R}\Big)\frac{\partial R}{\partial \gamma_j}+\frac{\partial k_j}{\partial \gamma_j}=0$, which imply that $\frac{\partial k_j}{\partial R}\frac{\partial R}{\partial \gamma_j}+\frac{\partial k_j}{\partial \gamma_j}=-\sum_{i\not=j}\frac{\partial k_i}{\partial R}\frac{\partial R}{\partial \gamma_j}\geq 0.$ Combining this with the equation $Y=\sum_iF_i(k_i)$, we get (\ref{decom-difference}).} Proposition \ref{creditdecompo}  has a similar insight as in Proposition \ref{prop2} and Corollary \ref{prop2-cor}. This directly leads to the following result.
\begin{corollary}\label{creditco}
Denote $\mathcal{I}_n=\argmax_{i} \{F_i'(k_i)\}$. Thus, we have that
${\partial Y}/{\partial \gamma_j}\geq 0$ $\forall j\in\mathcal{I}_n $, i.e.,  the aggregate output is increasing in the credit limit of agents having the highest marginal productivity.

\end{corollary}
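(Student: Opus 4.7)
The proof is essentially a direct unpacking of decomposition (\ref{decom-difference}) in Proposition \ref{creditdecompo}, so my plan is brief. I would first dispose of the trivial case: if agent $j$'s borrowing constraint is slack at the given equilibrium, then by part 1 of Proposition \ref{creditdecompo} the equilibrium outcomes are locally independent of $\gamma_j$, so $\partial Y/\partial \gamma_j = 0$ and the claim holds. Thus I may assume agent $j$'s constraint is binding, which is precisely the regime in which the differentiable decomposition (\ref{decom-difference}) of Proposition \ref{creditdecompo} applies.

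In that regime, I would simply write
\begin{align*}
\frac{\partial Y}{\partial \gamma_j} \;=\; \frac{\partial R}{\partial \gamma_j}\sum_{i\neq j}\bigl(F_j'(k_j)-F_i'(k_i)\bigr)\Bigl(-\frac{\partial k_i}{\partial R}\Bigr),
\end{align*}
and then verify that all three factors in every summand have compatible signs. The factor $\partial R/\partial \gamma_j$ is nonnegative (as already signed in Proposition \ref{creditdecompo}), and the factor $-\partial k_i/\partial R$ is nonnegative because each agent's capital demand is weakly decreasing in the interest rate (the constrained optimum $k_i^n(R/A_i)$ or $k_i^b(R/(\gamma_i A_i),S_i)$ both being monotone in $R$ under Assumption \ref{assum-concave}).

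The remaining factor $F_j'(k_j)-F_i'(k_i)$ is exactly where the hypothesis $j\in\mathcal{I}_n=\argmax_i\{F_i'(k_i)\}$ enters: by definition of $\mathcal{I}_n$, one has $F_j'(k_j)\geq F_i'(k_i)$ for every $i$, so each summand is nonnegative and the whole sum is nonnegative. Multiplying by the nonnegative $\partial R/\partial \gamma_j$ yields $\partial Y/\partial \gamma_j \geq 0$, which is the claim.

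The only mild point to check carefully is the sign of $-\partial k_i/\partial R$ on both the constrained and unconstrained branches of agent $i$'s optimal capital choice; I would handle it by differentiating the defining relations in Definition \ref{knb} (namely $A_i f_i'(k_i^n)=R$ and $R(k_i^b-S_i)=\gamma_i A_i f_i(k_i^b)$) and using strict concavity of $f_i$ together with $\gamma_i<1$ to conclude monotonicity. This is the only step with any calculation; the rest of the argument is just reading off signs from the decomposition, so there is no real obstacle.
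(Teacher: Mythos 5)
Your proposal is correct and follows essentially the same route as the paper, which derives the corollary directly from decomposition (\ref{decom-difference}) in Proposition \ref{creditdecompo} by reading off the signs of the three factors, with $j\in\argmax_i\{F_i'(k_i)\}$ making every summand nonnegative. Your additional care in verifying $-\partial k_i/\partial R\ge 0$ on both branches and in disposing of the slack-constraint case is consistent with (indeed slightly more explicit than) the paper's treatment.
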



We now provide conditions under which the aggregate output may be decreasing in credit limits.
 \begin{proposition}[effects of individual credit limit]\label{yn-fi-co} 
  Assume that $F_i(k)=A_ik$ $\forall i,k$.  Assume that $\max_i(\gamma_iA_i)<A_1<\cdots <A_m$. Consider the case where the equilibrium interest rate is belong to the interval $(A_{n-1},A_{n})$.  Then, we have that:
\begin{enumerate}
\item $
\frac{\partial Y}{\partial \gamma_{n}}< 0< \frac{\partial Y}{\partial \gamma_m}$  if $n<m$.\footnote{Moreover, if $n=m$ (i.e., only agent $m$ produces), we have $\frac{\partial Y}{\partial \gamma_m}= 0$.}

\item Consider an entrepreneur $i$ with $n< i< m$, we have that: 
\begin{subequations}
\begin{align}
\label{ynfi1}\frac{\partial Y}{\partial \gamma_{i}}&>0 \text{ if  $A_i$ is high enough, i.e., } \frac{A_i-A_{i-1}}{A_m-A_i}>\frac{\sum_{t=i+1}^m\frac{\gamma_{t}A_tS_t}{(A_{n-1}-\gamma_{t}A_t)^2}}{\sum_{t=n}^{i-1}\frac{\gamma_{t}A_tS_t}{(A_{n}-\gamma_{t}A_t)^2}}\\
\label{ynfi2}\frac{\partial Y}{\partial \gamma_{i}}&<0 \text{ if $A_i$ is low enough, i.e., } \frac{A_i-A_{n}}{A_{i+1}-A_i}<\frac{\sum_{t=i+1}^m\frac{\gamma_{t}A_tS_t}{(A_{n}-\gamma_{t}A_t)^2}}{\sum_{t=n}^{i-1}\frac{\gamma_{t}A_tS_t}{(A_{n-1}-\gamma_{t}A_t)^2}}.
\end{align}
\end{subequations}
\end{enumerate}
\end{proposition}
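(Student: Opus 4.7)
The plan is to exploit the explicit structure of equilibrium in the regime $R\in(A_{n-1},A_n)$. Under the standing hypothesis $\max_i(\gamma_iA_i)<A_1$, any agent with $A_i<R$ strictly prefers to lend her entire wealth, so $k_i=0$ for $i<n$, while any agent with $A_i>R$ borrows up to her credit limit, and the binding constraint $R(k_i-S_i)=\gamma_iA_ik_i$ gives
\begin{align*}
k_i=\frac{RS_i}{R-\gamma_iA_i},\qquad i\geq n,
\end{align*}
with $R-\gamma_iA_i>0$ by hypothesis. The interest rate is then pinned down by market clearing $\sum_{i\geq n}k_i=S$. Direct computation yields $-\partial k_i/\partial R=\gamma_iA_iS_i/(R-\gamma_iA_i)^2>0$, and implicit differentiation of the market clearing equation gives $\partial R/\partial\gamma_j>0$ for every producing $j$.

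Plugging these into the decomposition \eqref{decom-difference} of Proposition \ref{creditdecompo}, and noting that $F_i'\equiv A_i$ and $\partial k_i/\partial R=0$ for $i<n$, the sum collapses to
\begin{align*}
\frac{\partial Y}{\partial \gamma_j}=\frac{\partial R}{\partial \gamma_j}\sum_{\substack{i\geq n\\ i\neq j}}(A_j-A_i)\,\frac{\gamma_iA_iS_i}{(R-\gamma_iA_i)^2}.
\end{align*}
For $j=n$ every term $(A_n-A_i)$ with $i>n$ is strictly negative, hence $\partial Y/\partial\gamma_n<0$; for $j=m$ and $n<m$ every term $(A_m-A_i)$ with $n\leq i<m$ is strictly positive, hence $\partial Y/\partial\gamma_m>0$. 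This delivers part 1.

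For the intermediate case $n<j<m$, I would split the sum into a positive part over $i\in\{n,\ldots,j-1\}$ and the magnitude of the negative part over $i\in\{j+1,\ldots,m\}$, then use the two-sided bracket
\begin{align*}
\frac{\gamma_iA_iS_i}{(A_n-\gamma_iA_i)^2}<\frac{\gamma_iA_iS_i}{(R-\gamma_iA_i)^2}<\frac{\gamma_iA_iS_i}{(A_{n-1}-\gamma_iA_i)^2},
\end{align*}
which follows from $A_{n-1}<R<A_n$ and positivity of $R-\gamma_iA_i$. For \eqref{ynfi1}, I pair the inner estimate $A_j-A_i\geq A_j-A_{j-1}$ (valid for $i\leq j-1$) with the lower bound on $(R-\gamma_iA_i)^{-2}$ inside the positive sum, and $A_i-A_j\leq A_m-A_j$ (valid for $i\geq j+1$) with the upper bound inside the magnitude of the negative sum. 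For \eqref{ynfi2}, I reverse both pairings, using $A_j-A_i\leq A_j-A_n$ with the upper bound on the positive sum and $A_i-A_j\geq A_{j+1}-A_j$ with the lower bound on the negative sum. In each direction, the stated inequality forces one bound to dominate the other, fixing the sign of $\partial Y/\partial\gamma_j$.

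The only real difficulty is the bookkeeping in this last step: each coarsest estimate on $A_j-A_i$ must be paired with the appropriate side of the bracket on $(R-\gamma_iA_i)^{-2}$, otherwise one obtains an inequality that is merely necessary rather than sufficient. Once the pairings are set correctly, the remainder is a routine algebraic comparison.
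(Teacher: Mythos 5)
Your proposal is correct and follows essentially the same route as the paper's proof: the explicit formula $k_i=\frac{RS_i}{R-\gamma_iA_i}$ for binding borrowers, implicit differentiation of $\sum_{i\geq n}\frac{RS_i}{R-\gamma_iA_i}=S$ to get $\frac{\partial R}{\partial\gamma_j}>0$, the reduction to $\frac{\partial Y}{\partial\gamma_j}=\frac{\partial R}{\partial\gamma_j}\sum_{i\geq n}(A_j-A_i)\frac{\gamma_iA_iS_i}{(R-\gamma_iA_i)^2}$, and the two-sided bracket $A_{n-1}<R<A_n$ paired with the coarse estimates on $A_j-A_i$ exactly as in Appendix B's argument. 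The only cosmetic difference is that you invoke the general decomposition \eqref{decom-difference} rather than differentiating $Y=\sum_{j\geq n}\frac{RA_jS_j}{R-\gamma_jA_j}$ directly, and your pairings of each bound with the correct side of the bracket match the paper's.
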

\begin{proof}See Appendix \ref{effect-fi-proof}.
\end{proof}
Condition $\frac{\partial Y}{\partial \gamma_{n+1}}< 0$ indicates that an increasing of the credit limit of the least productive producer harms the aggregate output while condition  $ \frac{\partial Y}{\partial \gamma_m}>0)$ has a similar interpretation as in Corollary \ref{creditco}.

According to (\ref{ynfi1}) and (\ref{ynfi2}), the aggregate output is more likely to be increasing (resp., decreasing) in the credit limit of an agent if the TFP of this producer is quite close to those of more productive entrepreneurs (resp., that of the least productive entrepreneur) or/and credit limits and initial wealths of more productive agents $(\gamma_{t})_{t>i}$ are low. 

We complement our above points by a numerical example.
\begin{example}\label{ex-f2-2}{\normalfont
Consider a three-agent economy with linear production functions $F_i(k)=A_ik$, $\forall i, \forall k$, and borrowing constraints are $Rb_i\leq \gamma_iA_ik_i$. In Appendix \ref{effect-fi-proof}, we completely compute the equilibrium. Assume now that fundamentals are given by $S_1=4$, $S_2=4$, $S_3=3$,  $A_1=1$, $A_2=1.2$, $A_3=1.5$, $\gamma_1=0.2$.

First, we set $\gamma_3=0.3$ and we let $\gamma_2$ vary.  Figure \ref{effect-f2-2} shows the effects of the agent $2$'s credit limit $\gamma_2$ on the equilibrium interest rate and the aggregate output.  When $\gamma_2$ varies from $0.15$ to $0.45$, the interest rate varies from $A_1=1$ to $A_2=1.2$.  The aggregate output is not monotonic functions of $\gamma_2$. Indeed, it is increasing in $\gamma_2$ in the regime $\mathcal{A}_1$ where the interest rate $R=A_1$, but decreasing in $\gamma_2$ in the regime $\mathcal{R}_1$ where the interest rate $R=R_1$ (consistent with Proposition  \ref{yn-fi-co}), and then constant in the regime $\mathcal{A}_1$ where $R=A_2$. 
 \begin{figure}[h!]
\centering
    \includegraphics[width=8cm,height=6cm]{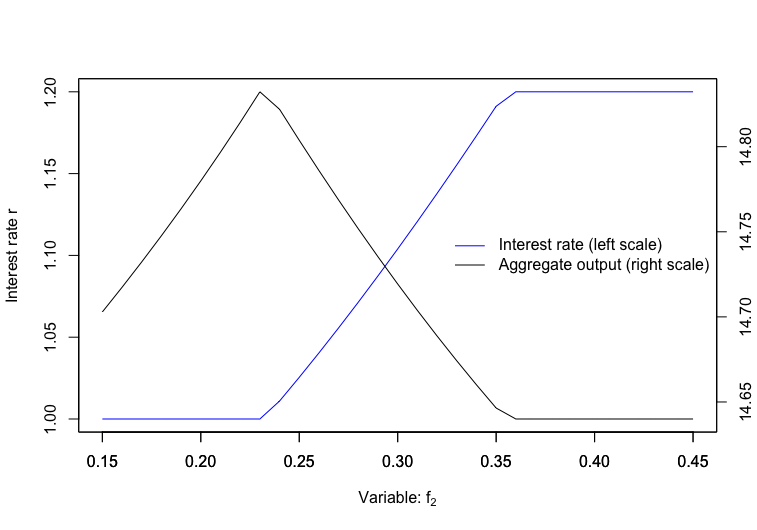}
  \caption{Non-monotonic effects of credit limit $\gamma_2$.}
  \label{effect-f2-2}
\end{figure}

Second, we set $\gamma_2=0.3$ and let $\gamma_3$ vary.  Figure \ref{effect-f3-2} shows the effects of the most productive agent's credit limit $\gamma_3$ on the equilibrium interest rate and the aggregate output.  The output is increasing in $\gamma_3$ (this is consistent with point 1 of Proposition  \ref{yn-fi-co}).
 \begin{figure}[h!]
\centering
   \includegraphics[width=8cm,height=6cm]{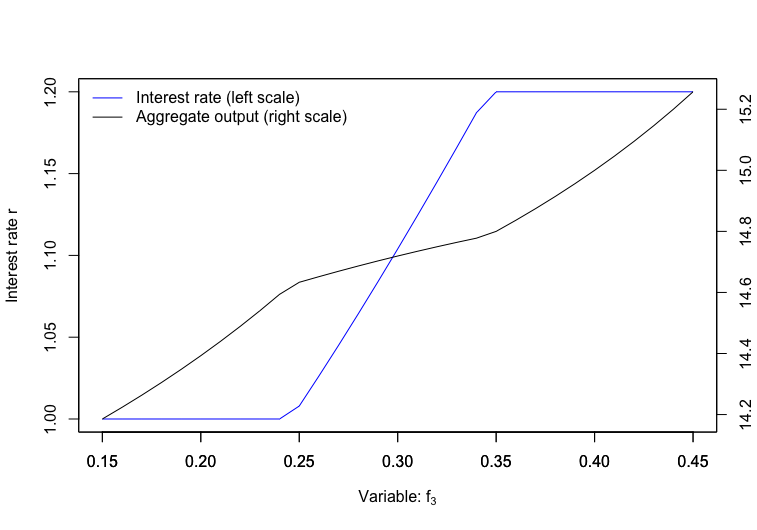}
  \caption{Monotonic effects of credit limit $\gamma_3$.}
  \label{effect-f3-2}
\end{figure}



}

\end{example}

\subsubsection{Homogeneous versus heterogeneous credit changes}


We firstly consider the case of homogeneous credit change. 

\begin{corollary}[homogeneous credit change]\label{hete-f}Assume either $F_i(k)=A_ik$, $\forall i, \forall k$ or Assumption \ref{assum-concave}  is satisfied. Assume also that $\gamma_i=\gamma\in (0,1)$, $\forall i$. Then the equilibrium aggregate output is an increasing function of the credit limit $\gamma$.
\end{corollary}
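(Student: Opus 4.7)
My plan is to use an envelope-theorem argument. First, market clearing $\sum_i b_i = 0$ gives $Y = \sum_i F_i(k_i) = \sum_i [F_i(k_i) - Rb_i] = \sum_i \pi_i$, so the aggregate output equals aggregate profit. Writing each agent's Lagrangian as $\mathcal{L}_i = F_i(k_i) - Rb_i + \mu_i[\gamma F_i(k_i) - Rb_i]$ (with a further multiplier for $k_i \geq 0$), the envelope theorem yields $\partial \pi_i/\partial \gamma = \mu_i F_i(k_i)$ and $\partial \pi_i/\partial R = -(1+\mu_i)b_i$, where $\mu_i \geq 0$ is the Lagrange multiplier on the credit constraint.

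Next, I would take the total derivative along the equilibrium path $R = R(\gamma)$:
$$\frac{dY}{d\gamma} = \sum_i \mu_i F_i(k_i) - R'(\gamma)\sum_i (1+\mu_i)b_i = \sum_i \mu_i F_i(k_i) - R'(\gamma)\sum_i \mu_i b_i,$$
the second equality by $\sum_i b_i = 0$. Complementary slackness ($\mu_i > 0 \Rightarrow \gamma F_i(k_i) = Rb_i$) gives $\mu_i b_i = (\gamma/R)\mu_i F_i(k_i)$ uniformly in $i$. Substituting,
$$\frac{dY}{d\gamma} = \left(1 - \frac{\gamma R'(\gamma)}{R}\right) \sum_i \mu_i F_i(k_i).$$

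The core step is then to show $\gamma R'(\gamma) \leq R$. Differentiating the market-clearing condition $\sum_i k_i(R,\gamma) = S$ implicitly, this is equivalent to $\sum_i [\gamma\, \partial k_i/\partial \gamma + R\, \partial k_i/\partial R] \leq 0$. For each binding agent, implicit-differentiating $R(k_i - S_i) = \gamma F_i(k_i)$ yields $\gamma\, \partial k_i/\partial \gamma + R\, \partial k_i/\partial R = [\gamma F_i(k_i) - Rb_i]/[R - \gamma F_i'(k_i)] = 0$ (the denominator is positive since concavity of $F_i$ with $F_i(0)=0$ gives $F_i(k_i) \geq k_i F_i'(k_i)$). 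For each slack agent with $k_i > 0$ (ensured under Assumption \ref{assum-concave} by the Inada condition), the FOC $A_i f_i'(k_i) = R$ gives $\partial k_i/\partial \gamma = 0$ and $\partial k_i/\partial R = 1/(A_i f_i''(k_i)) < 0$, contributing a negative term. Hence $\gamma R'(\gamma) \leq R$, and so $dY/d\gamma \geq 0$.

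The main obstacle lies in the linear case, where $R(\gamma)$ is only piecewise smooth: constant on ``boundary'' regimes $R = A_j$ with an indifferent producer $j$ whose $k_j$ is pinned only by market clearing, and strictly increasing on ``interior'' regimes $R \in (A_{j-1}, A_j)$. Within an interior regime all productive agents are binding so the sum above vanishes, giving $\gamma R'(\gamma) = R$ and $dY/d\gamma = 0$; within a boundary regime $R' = 0$, so $dY/d\gamma = \sum_i \mu_i F_i(k_i) \geq 0$. Combining these piecewise conclusions with the continuity of $Y$ in $\gamma$ at the regime-change kinks would yield the desired global monotonicity.
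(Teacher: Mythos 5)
Your proof is correct, but it takes a genuinely different route from the paper's. The paper argues directly in terms of capital reallocation: market clearing gives $\sum_{i:\,\mathrm{binding}}\frac{dk_i}{d\gamma}=-\sum_{i:\,\mathrm{slack}}\frac{dk_i}{d\gamma}\geq 0$, implicit differentiation of each binding constraint $\gamma F_i(k_i)=R(k_i-S_i)$ along the equilibrium path then yields $\frac{dk_i}{d\gamma}>0$ for every constrained agent, and finally $\frac{dY}{d\gamma}=\sum_iF_i'(k_i)\frac{dk_i}{d\gamma}\geq R\sum_i\frac{dk_i}{d\gamma}=0$ since $F_i'(k_i)\geq R$. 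You instead pass to aggregate profit via $Y=\sum_i\pi_i$ and the envelope theorem, which collapses the whole problem into the single scalar condition $\gamma R'(\gamma)\leq R$, verified by partial (rather than total) differentiation of market clearing. Your route isolates the crux as an elasticity bound on the equilibrium interest rate and treats the linear and strictly concave cases in one framework (modulo the regime-by-regime patch for the linear case, which is genuinely needed and which you supply); the paper's route makes the economic mechanism --- capital flowing toward the high-marginal-product, constrained producers --- more transparent. One further point in your favor: your inequality $\gamma R'(\gamma)/R\leq 1$ has the correct sign. The paper's displayed identity (\ref{trick}) carries a sign slip (the correct version is $\frac{dk_i}{d\gamma}=\bigl(1-\frac{\gamma R'}{R}\bigr)\frac{F_i(k_i)}{R-\gamma F_i'(k_i)}$), so its deduced inequality $\frac{\partial R}{\partial\gamma}\frac{\gamma}{R}-1\geq 0$ is reversed; the two errors cancel in the paper's last step, so its conclusion survives, but your intermediate claims are the ones that are actually true --- e.g.\ in the linear interior regime $R$ is proportional to $\gamma$, so $\gamma R'/R=1$, and in the strictly concave case the presence of an unconstrained lender forces $\gamma R'/R<1$ strictly.
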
 
\begin{proof}
See Appendix \ref{effect-fi-proof}.
\end{proof}

The intuition of the result is  simple: all credit-constrained producers, who have higher marginal productivity, can borrow more from other agents who have lower marginal productivity, and hence produce more.  This point is consistent with those in in \cite{kt13} (section VI. C), \cite{mx14} (section II.B), \cite{moll14} (Proposition 1), and \cite{cchst17}.


We now assume that there is an aggregate change on credit limits under which the new credit limits are $(\gamma_i')_i$.   Our novel point is that, even $\gamma_i'>\gamma_i$ $\forall i$, the new aggregate output $Y'=Y(\gamma_1',\ldots, \gamma_m')$ may be lower than $Y=Y(\gamma_1,\ldots, \gamma_m)$. Formally, we have the following result.

 \begin{corollary}[general credit changes]\label{credit-hete}
  Assume that $F_i(k)=A_ik$ $\forall i,k$, and $\max_i(\gamma_iA_i)<A_1<\cdots <A_m$. 
  Consider the case where the equilibrium interest rate is in the interval $[A_n,A_{n+1})$.  Consider an agent $i$ such that $n+1<i<m$ and assume that condition (\ref{ynfi2}) holds. Then there exist $g\in (0,1)$ and a neighborhood $\mathcal{G}$ of $(\gamma_1,\ldots, \gamma_m)$ such that
\begin{align}
&\frac{Y(\gamma_1',\ldots, \gamma_m')-Y(\gamma_1,\ldots, \gamma_m)}{\gamma_i'-\gamma_i}<0,  \\
&\forall (\gamma_1',\ldots, \gamma_m')\in \mathcal{G} \text{ satisfying } |\frac{\gamma_j'-\gamma_j}{\gamma_i'-\gamma_i}|<g, \forall j\not=i.\notag
\end{align}
\end{corollary}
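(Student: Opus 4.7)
The plan is to mirror the argument of Corollary \ref{propcsg1} (asymmetric productivity changes) but applied to the credit-limit variables. The core idea is a first-order Taylor expansion of $Y$ at the point $(\gamma_1,\ldots,\gamma_m)$, combined with the sign information $\partial Y/\partial \gamma_i < 0$ furnished by Proposition \ref{yn-fi-co} under condition (\ref{ynfi2}). The constraint $|\gamma_j'-\gamma_j|/|\gamma_i'-\gamma_i| < g$ on the other coordinates then lets me absorb the remaining terms into the negative leading term.

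First I would verify the needed differentiability. Because the equilibrium interest rate lies strictly in the open interval $(A_n,A_{n+1})$ (by hypothesis; the boundary case is genuinely non-smooth and is excluded on a small open neighborhood), the identity of the binding/non-binding agents, and the closed-form formulas for $R$ and $(k_t)$ given in Proposition \ref{yn-fi-co} and its appendix, imply that on a small neighborhood $\mathcal{G}$ of $(\gamma_1,\ldots,\gamma_m)$ the equilibrium regime does not change, and $Y$ is a $C^1$ function of $(\gamma_1,\ldots,\gamma_m)$. Shrink $\mathcal{G}$ if necessary so that this regime is preserved throughout.

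Next I would apply Taylor's theorem: writing $\gamma \equiv (\gamma_1,\ldots,\gamma_m)$ and $\gamma'\equiv (\gamma'_1,\ldots,\gamma'_m)$,
\begin{align*}
Y(\gamma') - Y(\gamma) = \frac{\partial Y(\gamma)}{\partial \gamma_i}(\gamma_i'-\gamma_i) + \sum_{j\ne i}\frac{\partial Y(\gamma)}{\partial \gamma_j}(\gamma_j'-\gamma_j) + \sum_{j}h_j(\gamma,\gamma')(\gamma_j'-\gamma_j),
\end{align*}
with $h_j(\gamma,\gamma')\to 0$ as $\gamma'\to\gamma$. By Proposition \ref{yn-fi-co} combined with hypothesis (\ref{ynfi2}), the coefficient $D \equiv \partial Y(\gamma)/\partial \gamma_i$ is strictly negative. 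Dividing by $\gamma_i'-\gamma_i$ (and noting $\gamma_i'\ne\gamma_i$ by the condition $|(\gamma_j'-\gamma_j)/(\gamma_i'-\gamma_i)|<g$ being nontrivial) yields
\begin{align*}
\frac{Y(\gamma')-Y(\gamma)}{\gamma_i'-\gamma_i} = D + \sum_{j\ne i}\frac{\partial Y(\gamma)}{\partial \gamma_j}\cdot\frac{\gamma_j'-\gamma_j}{\gamma_i'-\gamma_i} + h_i(\gamma,\gamma') + \sum_{j\ne i}h_j(\gamma,\gamma')\cdot\frac{\gamma_j'-\gamma_j}{\gamma_i'-\gamma_i}.
\end{align*}

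Finally, I choose the constants. Set $M \equiv \sum_{j\ne i}|\partial Y(\gamma)/\partial \gamma_j|$ and pick $g \in (0,1)$ with $gM < |D|/3$. This bounds the middle sum on the right by $|D|/3$ in absolute value. Then shrink $\mathcal{G}$ further so that $|h_j(\gamma,\gamma')| < |D|/(3m)$ for every $j$ and every $\gamma'\in\mathcal{G}$; since also $|\gamma_j'-\gamma_j|/|\gamma_i'-\gamma_i|<g<1$, the remainder terms are bounded by $|D|/3$ in absolute value. Adding up, the right-hand side is at most $D + 2|D|/3 = -|D|/3 < 0$, which is the desired conclusion.

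The main (though modest) obstacle is the differentiability step: one must keep the credit regime unchanged under perturbation, which is why hypothesizing that $R$ lies in the \emph{open} subinterval matters and why $\mathcal{G}$ has to be taken sufficiently small before invoking Taylor. Everything else is a direct transcription of the asymmetric-productivity argument, with productivity derivatives replaced by credit-limit derivatives and with the sign of $\partial Y/\partial\gamma_i$ supplied by condition (\ref{ynfi2}) instead of by the analog for $A_1$.
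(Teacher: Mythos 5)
Your proposal is correct and follows essentially the same route as the paper: the paper's proof of Corollary \ref{credit-hete} simply invokes the Taylor-expansion argument of Corollary \ref{propcsg1}, with the negative derivative $\partial Y/\partial\gamma_i<0$ supplied by Proposition \ref{yn-fi-co} under condition (\ref{ynfi2}), which is exactly what you do. Your explicit attention to keeping the equilibrium regime (and hence differentiability) fixed on a small neighborhood is a welcome refinement that the paper leaves implicit, but it does not change the underlying argument.
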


We can apply the same argument used in Corollary \ref{propcsg1} to prove Corollary \ref{credit-hete}.

Corollary \ref{credit-hete} shows that the aggregate output may be reduced even the credit limits of all agents increase (i.e., $\gamma_i'>\gamma_i, \forall i$). It complements  Proposition \ref{hete-f}, Proposition \ref{yn-fi-co}, and those in \cite{bs13}, \cite{kt13}, \citet{mx14},  \cite{moll14},  \cite{cchst17}. Recall that these studies provide conditions under which relaxing credit limits has positive impact on the aggregate output.  

\subsection{Productivity growth, productivity dispersion and credit constraint}
\begin{definition}[aggregate production function and aggregate TFP]
If we assume that $F_i(k)=A_if(k)$ where $A_i$ represents the individual productivity of agent $i$ and $f$ is a production function, then we can define the aggregate production function $G$ and the aggregate TFP $A$ by
\begin{subequations}
\begin{align}
\label{TFP}
\text{the aggregate TFP: }
A&\equiv \frac{Y}{f(S)}\\
\text{the aggregate production function: }G(S)&\equiv Y=Af(S).
\end{align}
\end{subequations}
\end{definition}

 Assume that there is a shock (technical progress, for instance) that changes productivity from $A_i$ to $A_i'$ and credit limit from $\gamma_i$ to $\gamma_i'$. The new TFP of the economy is $TFP'=Y'/f(S)$.  We have 
\begin{align*}
\frac{TFP'}{TFP}=\frac{\frac{Y(A_1',\ldots,A_m',\gamma_1',\ldots,\gamma_m')}{f(S)}}{\frac{Y(A_1,\ldots,A_2,\gamma_1,\ldots,\gamma_m)}{f(S)}}=\frac{Y(A_1',\ldots,A_m',\gamma_1',\ldots,\gamma_m')}{Y(A_1,\ldots,A_2,\gamma_1,\ldots,\gamma_m)}
\end{align*}

We aim to understand the relationship between the aggregate productivity growth $\frac{TFP'}{TFP}$ and individual ones $\frac{A_1'}{A_1},\ldots,\frac{A_m'}{A_m}$.

In the economy without frictions, by using the definition (\ref{y-1}) we have that
 \begin{align*}
 \frac{TFP'}{TFP}=\frac{\max\{\sum_iA'_if(k_i): k_i\geq 0, \sum_ik_i\leq S\}}{\max\{\sum_iA_if(k_i): k_i\geq 0, \sum_ik_i\leq S\}}
 \end{align*}
Observe that $min_i\{\frac{A_i'}{A_i}\}A_if(k_i)\leq A_i'f(k_i)\leq max_i\{\frac{A_i'}{A_i}\}A_if(k_i)$. So, obtain that  $ min_i\{\frac{A_i'}{A_i}\}\leq \frac{TFP'}{TFP}\leq max_i\{\frac{A_i'}{A_i}\}$. 
 
 However, when we consider economies with credit constraints, our above analyses (see Propositions
\ref{asymmetry}, \ref{A1effect-general-manyagents},  and Corollaries \ref{propcsg1}, \ref{credit-hete}) show that the aggregate productivity growth $\frac{TFP'}{TFP}$ may be less than $min_i\{\frac{A_i'}{A_i}\}$. Indeed, for instance, we can choose $(A_i)$ and $(A_i')$ so that all conditions in Proposition \ref{propcsg1} are satisfied and $min_i\{\frac{A_i'}{A_i}\}>1$. In this case, we have $Y'-Y<0$, or, equivalently, $\frac{TFP'}{TFP}<1$. We summarize our points in the following result. 
\begin{proposition}[productivity growth, productivity dispersion and credit constraint]
Consider the case $F_i(k)=A_if(k)$, $\forall i, \forall k$. Assume that there is a shock that changes productivity from $A_i$ to $A_i'$ and credit limit from $\gamma_i$ to $\gamma_i'$.  
\begin{enumerate}
\item In the economy without frictions, we always have  
 \begin{align}\label{tfp-nofriction}
 min_i\{\frac{A_i'}{A_i}\}\leq \frac{TFP'}{TFP}\leq max_i\{\frac{A_i'}{A_i}\}
 \end{align}
\item Consider economies with credit constraints $\mathcal{E} \equiv (A_i, f_i, \gamma_i,S_i)_{i=1,\ldots,m}$. 
\begin{enumerate}
\item If $\frac{A_i'}{A_i}=g>0$, $\forall i$, then Corollary \ref{TFP-homo} implies that $\frac{TFP'}{TFP}=g$.
\item However, under some situations as in  Propositions
\ref{asymmetry}, \ref{A1effect-general-manyagents},  \ref{propcsg1}, \ref{credit-hete}, we may have that \begin{align}\label{tfp-friction}
  \frac{TFP'}{TFP}< min_i\{\frac{A_i'}{A_i}\}.
 \end{align}
 \end{enumerate}
\end{enumerate}
\end{proposition}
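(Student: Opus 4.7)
The proof naturally splits into the three items in the statement, and most of the work has already been done in earlier results; my plan is essentially to collect these and verify the two sandwich inequalities in item 1 directly from the definition of $Y^{perfect}$.

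For item 1, I would argue from the variational definition in (\ref{y-1}). Let $(k_i^*)$ solve $Y = \max\{\sum_i A_i f(k_i): k_i\geq 0, \sum_i k_i\leq S\}$ and let $(k_i^{**})$ solve the same problem with $A_i$ replaced by $A_i'$. For the lower bound, substitute $(k_i^*)$ into the maximization problem defining $Y'$ (feasibility is preserved since the constraint set does not depend on $A_i$):
\begin{align*}
Y'\ \geq\ \sum_i A_i' f(k_i^*)\ =\ \sum_i \frac{A_i'}{A_i}\,A_i f(k_i^*)\ \geq\ \min_j\!\left\{\tfrac{A_j'}{A_j}\right\}\sum_i A_i f(k_i^*)\ =\ \min_j\!\left\{\tfrac{A_j'}{A_j}\right\}Y.
\end{align*}
For the upper bound, by symmetry, feed $(k_i^{**})$ into the maximization defining $Y$:
\begin{align*}
Y\ \geq\ \sum_i A_i f(k_i^{**})\ =\ \sum_i \frac{A_i}{A_i'}\,A_i' f(k_i^{**})\ \geq\ \min_j\!\left\{\tfrac{A_j}{A_j'}\right\}Y',
\end{align*}
which rearranges to $Y'/Y \leq \max_j\{A_j'/A_j\}$. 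Dividing by $f(S)$ turns this into the bound on $TFP'/TFP$.

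For item 2(a), this is immediate from Corollary \ref{TFP-homo}: if $A_i' = g A_i$ for all $i$, that corollary exhibits an equilibrium of the shocked economy with $Y' = gY$, hence $TFP'/TFP = Y'/Y = g = \min_i\{A_i'/A_i\}$. For item 2(b), the task is to exhibit parameters and a shock such that $Y' < Y$ while every $A_i'/A_i > 1$. I would invoke Corollary \ref{propcsg1}: choose the underlying economy and $A_1$ small enough so that the hypotheses of case 2.(b) of Proposition \ref{A1effect-general-manyagents} hold and hence $\partial Y/\partial A_1 < 0$ on a neighborhood $\mathcal{G}$; then pick $A_1' > A_1$ close to $A_1$ and $A_i'>A_i$ for $i\geq 2$ with $(A_i'-A_i)/(A_1'-A_1)$ sufficiently small in absolute value (below the threshold $g$ of that corollary). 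The conclusion of Corollary \ref{propcsg1} gives $Y(A') - Y(A) < 0$, while all $A_i'/A_i > 1$ so $\min_i\{A_i'/A_i\}>1$; dividing by $f(S)$ yields $TFP'/TFP < 1 < \min_i\{A_i'/A_i\}$, which is (\ref{tfp-friction}).

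The only nontrivial step is item 1; items 2(a) and 2(b) are simply invocations of the prior corollaries. Even item 1 is a one-line observation once one notices that the feasible set for the aggregate maximization is independent of the productivities, so each allocation is admissible in the shocked problem and vice versa. I do not anticipate any real obstacle; the main care is merely to state explicitly that in item 2(b) the parameters from Corollary \ref{propcsg1} can be chosen with all $A_i'/A_i$ strictly above $1$, which is transparent because the neighborhood $\mathcal{G}$ is open around $(A_1,\ldots,A_m)$ and we are free to pick $A_i' > A_i$ inside it.
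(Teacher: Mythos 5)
Your proof is correct and follows essentially the same route as the paper: item 1 exploits that the feasible set of the aggregate maximization (\ref{y-1}) is independent of the productivities so the objective can be sandwiched termwise (the paper bounds the objective for an arbitrary feasible allocation and takes suprema, while you plug each optimizer into the other problem — an equivalent one-line argument), and items 2(a)–2(b) are, as in the paper, direct invocations of Corollary \ref{TFP-homo} and Corollary \ref{propcsg1} together with the observation that the neighborhood $\mathcal{G}$ permits choosing all $A_i'/A_i>1$, giving $TFP'/TFP<1<\min_i\{A_i'/A_i\}$.
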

By consequence, the aggregate productivity growth rate may be far from that of most productive firms. It may be even lower than the smallest productivity growth rate of firms.

Our points contribute to the debate concerning the slowdown in aggregate productivity growth. For instance, by using data in 23 OECD countries over the 2000s, \cite{acg15} document a slowdown in aggregate productivity growth, a rising productivity gap between the global frontier and other firms, and that productivity growth at the global frontier remained robust. 

 The following graphic from \cite{bcl21} 
 shows the median productivity level of frontier firms and laggard firms, over the period 1991-2016 in France, productivity being measured by TFP. We see that the productivity dispersion tends to increase over time.
\begin{figure}[h!]
\centering
      \includegraphics[width=8cm,height=6cm]{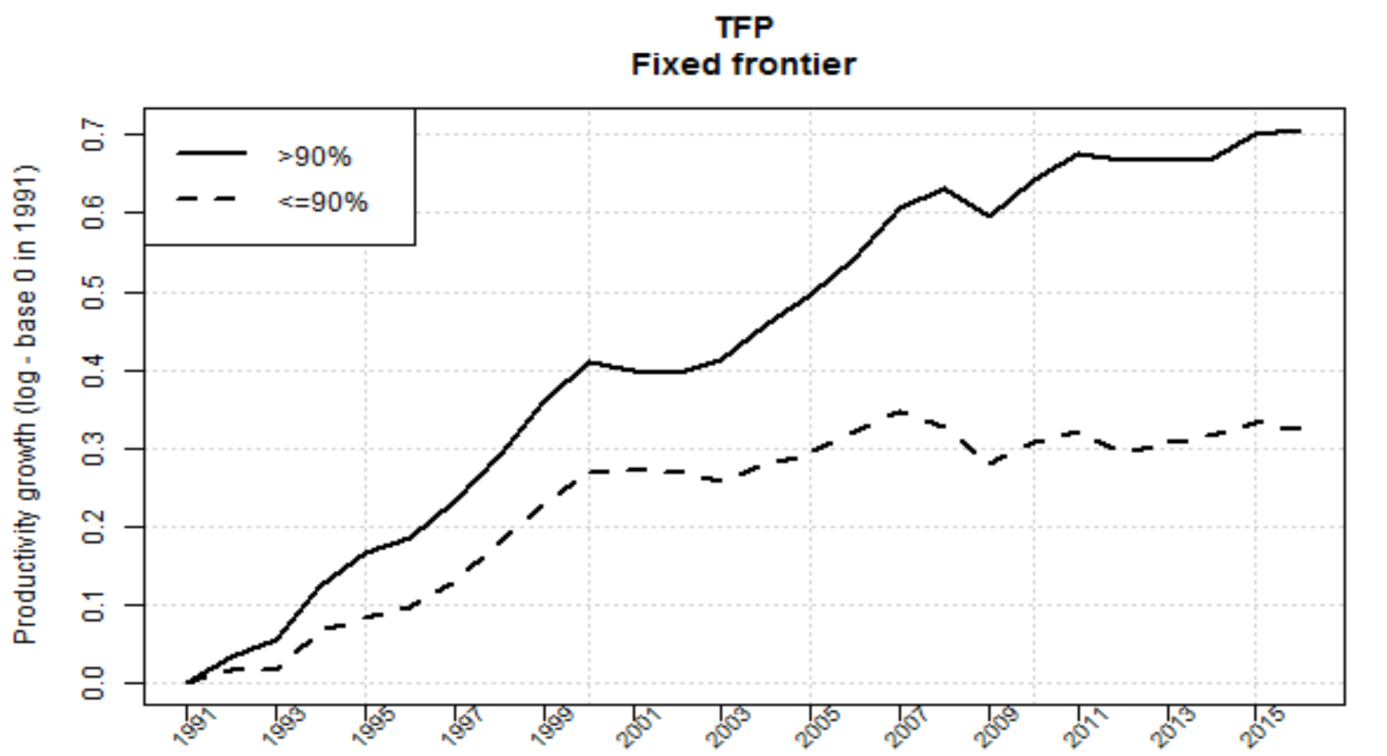}
  \caption{TFP growth. Source: \cite{bcl21}}
  \label{effect-f2f3}
\end{figure}

As recognized by  \cite{gklw24}, there is no single reason for the slowdown in aggregate productivity growth.  We provide a supply-side point of view by using a general equilibrium model with credit constraint. Our above analyses suggest that the interplay between credit constraints, high heterogeneity of productivity, asymmetry of productivity and financial shocks may generate a slowdown in the aggregate productivity growth, and eventually a decrease in the aggregate productivity.




\section{Extension: Infinite-horizon models \`a la Ramsey}
\label{extensions}

We now extend our previous models by considering  infinite-horizon  models \`a la Ramsey.  Each agent $i$ maximizes her intertemporal utility subject to budget and borrowing constraints:
\begin{subequations}
\begin{align}
 &\ma_{(c_i,k_i,b_i)} \sum_{t=0}^{\infty}\beta_i^tu_i(c_{i,t})\\
\label{bc-i}\text{subject to: }& c_{i,t}+k_{i,t}-(1-\delta)k_{i,t-1}+R_tb_{i,t-1} \leq f_{i,t+1}(k_{i,t-1}) + b_{i,t}\\
\label{cc-i}&  R_{t+1}b_{i,t}\leq \gamma_i\Big(f_{i,t}(k_{i,t})+(1-\delta)k_{i,t}\Big),\\
&c_{i,t}\geq 0, \quad k_{i,t}\geq 0,
\end{align}
\end{subequations}
where $\delta\in [0,1]$ is the depreciation rate. We assume that $b_{i,-1}=0$, $\forall i$,  and denote the exogenous initial wealth $w_{i,0}=F_{i,0}(k_{i,-1})$.

Note that we allow for non-stationary production functions. Let us define the function $F_{i,t}: \rr_+\to \rr_+$ by \textcolor{blue}{$$F_{i,t}(k)=f_{i,t}(k)+(1-\delta)k.$$}


\begin{definition}
An intertemporal equilibrium is a list $((c_{i,t},k_{i,t},b_{i,t})_i,R_t)_{t\geq 0}$ satisfying two conditions: (1) given $(R_t)$, the allocation $(c_{i,t},k_{i,t},b_{i,t})$ is a solution of the above maximization problem, and (2) markets clear: $\sum_ib_{i,t}=0$, $\sum_{i}(c_{i,t}+k_{i,t})=\sum_iF_{i,t}(k_{t-1})$, $\forall t$.
\end{definition}


Before providing the equilibrium analyses, we prove the existence of intertemporal equilibrium. To so so, we require standard assumptions.
\begin{assum}[endowments] $k_{i,-1}>0$  and $b_{i,-1}=0$ for any $i$. 
\end{assum}
\begin{assum}[borrowing limits] $\gamma_i\in (0,1)$ for any $i$.
\end{assum}

\begin{assum}[production functions] For each $i$, the function $F_{i,t}$ is concave, continuously differentiable, $F'_{i,t}>0$, $F_{i,t}(0)=0$.
\end{assum}
\begin{assum}[utility functions]For each $i$ and for each $t\geq 0$, the function $u_{i}:\rr_+\rightarrow \rr_+$ is continuously differentiable, concave, strictly increasing. 
\end{assum}

\begin{assum}[finite utility] \label{assum_utility_finite} For each $i\in \{1,\cdots,m\}$,
\begin{align}\label{utility_finite-sto}
\ma_{c_{i,t},k_{i,t}\geq 0} \Big\{\sum_{t=0}^{\infty}\beta_i^tu_i(c_{i,t}): c_{i,t}+k_{i,t}\leq F_{i,t}(k_{i,t-1})\Big\}&>-\infty\\
\summ_{t\geq 0}\beta_i^tu_{i}(B_{K,t})&<\infty.
\end{align}
where we define the exogenous sequence $(B_{K,t})$ as follows:
\begin{align}
B_{K,-1}&=\sup_{(k_i):\sum_ik_i\leq \sum_{i}k_{i,-1}; k_i\geq 0, \forall i} \sum_iF_{i,0}(k_{i})\\
B_{K,t}&=\max_{(k_i):\sum_ik_i\leq B_{K,t-1}; k_i\geq 0, \forall i} \sum_iF_{i,t}(k_{i}).
\end{align}

\end{assum}
\begin{theorem}\label{prop1} Under the above assumptions, there exists an intertemporal equilibrium.
\end{theorem}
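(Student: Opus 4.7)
The plan is to follow the truncation-and-limit strategy sketched in the introduction, which is a Bewley-type argument used in \cite{bblvs15} and \cite{lvp16}. For each $T\geq 1$, I would introduce the $T$-truncated economy $\ee^T$ in which each agent $i$ chooses $(c_{i,t},k_{i,t},b_{i,t})_{t=0}^{T}$ subject to the same budget and borrowing constraints (\ref{bc-i})--(\ref{cc-i}) for $t\leq T$, augmented with the terminal conditions $k_{i,T}=0$ and $b_{i,T}=0$, and maximizes the finite-horizon utility $\summ_{t=0}^{T}\beta_i^t u_i(c_{i,t})$. Feasibility in any equilibrium of the original economy forces $\sum_i k_{i,t}\leq B_{K,t-1}$ and $c_{i,t}\leq B_{K,t}$, and these bounds (which are exogenous, independent of $T$, and finite by Assumption \ref{assum_utility_finite}) can be imposed as auxiliary compact constraints. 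For the interest rate, the borrowing constraint (\ref{cc-i}) together with strict positivity of initial capital gives an exogenous upper bound $\bar R_t$ depending only on the $F_{i,t}$ and $B_{K,t}$, so one may restrict $R_t\in[\underline R_t,\bar R_t]$ with $\underline R_t>0$ arising from monotonicity of $u_i$.

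For the existence in $\ee^T$, I would apply the Kakutani--Fan--Glicksberg fixed point theorem to the joint correspondence mapping a price vector $(R_t)_{t=0}^T$ to the set of market-clearing allocations. Concavity of $u_i$ and $F_{i,t}$ makes each individual demand correspondence convex-valued and, by Berge's maximum theorem, upper hemi-continuous, provided the budget correspondence is lower hemi-continuous. Lower hemi-continuity at each price follows from the existence of a strictly interior point, which is guaranteed by $k_{i,-1}>0$ and $F_{i,t}>0$ on the interior (the agent can always consume a tiny fraction of the fruit from her own initial capital). Aggregating demands and using Walras' law on the bond market yields a fixed point, hence an equilibrium $\big((c_{i,t}^T,k_{i,t}^T,b_{i,t}^T)_i,R_t^T\big)_{t=0}^{T}$.

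Next, I would extend each truncated equilibrium to an element of $\prod_{t\geq 0}\rr^{3m+1}$ by setting the variables to zero after date $T$. All components at a given date $t$ lie in a fixed compact set (the bounds $B_{K,t}$, $[\underline R_t,\bar R_t]$, and the corresponding bond bounds), so by Tychonoff's theorem the sequence admits a subsequence converging in the product topology to some $\big((c_{i,t}^*,k_{i,t}^*,b_{i,t}^*)_i,R_t^*\big)_{t\geq 0}$. Since market clearing, the budget constraint, and the borrowing constraint at each fixed date $t$ are finite-dimensional continuous (in)equalities, they pass to the limit.

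The main obstacle is verifying individual optimality in the infinite-horizon problem at the limit. Suppose, for contradiction, that for some $i$ there is a feasible plan $(\tilde c_i,\tilde k_i,\tilde b_i)$ at prices $(R_t^*)$ with strictly higher utility, say by a gap $\eta>0$. Assumption \ref{assum_utility_finite} ensures $\summ_{t\geq 0}\beta_i^t u_i(B_{K,t})<\infty$, so the tails $\summ_{t>N}\beta_i^tu_i(\tilde c_{i,t})$ and $\summ_{t>N}\beta_i^tu_i(c_{i,t}^*)$ are uniformly small for large $N$. Hence one can truncate $(\tilde c_i,\tilde k_i,\tilde b_i)$ at some large $N$ (setting $\tilde k_{i,N}=\tilde b_{i,N}=0$ and absorbing the residual into $\tilde c_{i,N}$, which remains feasible thanks to $F_{i,t}\geq 0$ and $1-\delta\geq 0$) to obtain a plan that is feasible in $\ee^T$ for all $T\geq N$ large enough, and whose truncated utility exceeds that of $(c_i^T,k_i^T,b_i^T)$ by at least $\eta/2$. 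This contradicts optimality in $\ee^T$ once $T$ is taken along the convergent subsequence. The delicate points here are: (i) the construction of the truncated deviation must respect the borrowing constraint (\ref{cc-i}) at date $N$, which is automatic since $\tilde b_{i,N}=0$; (ii) the prices $(R_t^T)$ converge to $(R_t^*)$ only pointwise, so the truncated-feasibility of the deviation at prices $R_t^T$ for all large $T$ requires continuity of the budget correspondence at $R_t^*$, which again follows from the Slater-type interior condition. Combining these steps establishes that the limit is an intertemporal equilibrium.
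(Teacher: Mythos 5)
Your overall blueprint coincides with the paper's: truncate at horizon $T$, obtain equilibrium in each truncated economy via a Kakutani fixed-point argument, extract a product-topology limit using the exogenous bounds $B_{K,t}$, and verify limit optimality by truncating a hypothetically better deviation and invoking Assumption \ref{assum_utility_finite} to control the tails. The limit step of your argument, including the use of a Slater-type interior point to get lower hemi-continuity of the budget correspondence and to make the deviation feasible at the prices $R_t^T$ for large $T$, matches the paper's treatment via the open budget sets $C_i^T$.

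The genuine gap is in the compactness step, which the paper explicitly flags as the main difficulty: bounding the bond positions and the price space. You assert that the borrowing constraint (\ref{cc-i}) ``gives an exogenous upper bound $\bar R_t$'' and that monotonicity of $u_i$ gives $\underline R_t>0$, and you then refer to ``the corresponding bond bounds'' without deriving them. But (\ref{cc-i}) only bounds the product $R_{t+1}b_{i,t}$, not $R_{t+1}$ itself; and both $R_t>0$ and $R_t\leq \bar R_t$ are properties one must \emph{prove} at the fixed point, not restrictions one can impose a priori on the domain of the fixed-point map --- if you restrict $R_t$ to $[\underline R_t,\bar R_t]$ you must additionally show that excess demand points inward at the boundary, otherwise the fixed point need not clear markets. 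Similarly, the lending side $-b_{i,t}$ is only controlled through the budget identity, which involves $R_t b_{i,t-1}$ and hence requires the very bounds on $R_t$ you have not established. The paper resolves both issues at once by introducing a consumption price $p_t$ and normalizing $(p_t,R_t)$ to the simplex $p_t+R_t=1$: the budget constraint then yields $-b_{i,t}\leq p_tF_{i,t}(k_{i,t-1})-R_tb_{i,t-1}\leq (p_t+R_t)B_c=B_c$ by induction, market clearing $\sum_i b_{i,t}=0$ converts this into the two-sided bound $|b_{i,t}|\leq mB_c$, the price space is compact by construction, and strict positivity of $\bar p_t$ and $\bar R_t$ is proved \emph{after} the fixed point is obtained (with the original interest rate recovered as $R_tp_{t-1}/p_t$). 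To close your argument you would need either to adopt such a normalization, or to supply the missing boundary analysis on $[\underline R_t,\bar R_t]$ together with an inductive derivation of uniform bond bounds.
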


The detailed proof is presented in Appendix \ref{existencequilibrium}. Let us explain the main idea. First, we prove the existence of equilibrium for each $T-$ truncated economy $\ee^T$ where there is no activity from date $T+1$. Second, we show that this sequence of equilibria converges for the product topology to an equilibrium of our economy. The main difficulty is to bound the volume of financial asset holding of agents. Thanks to borrowing and budget constraints, we can do this.


 



 \subsection{Effects of productivity changes}
 \label{sec41}
We firstly look at the steady state.
 \begin{proposition}[steady state analysis] \label{ss} Consider the above infinite-horizon model. Assume that $F_{i,t}=F_i$, i.e., does not depend on time. Consider a steady state equilibrium with $k_i>0$, $\forall i$. 
\begin{enumerate}
\item The steady state interest rate is $R=1/\max_i\{\beta_i\}.$
\item 
Assume, in addition, that $\beta_1>\beta_i$, $\forall i\geq 2$. 
Then  $A_1F_1'(k_1)=R=1/\beta_1$, agent $1$'s borrowing constraint is not binding, and  for any $i\geq 2$, the capital $k_i$ is determined by 
 \begin{align*}
\frac{R-\gamma_iF_{i}'(k_{i})}{R}&=F_{i}'(k_{i})(1-\gamma_i)
\end{align*}
Hence, $k_i$ is increasing in $A_i$. Since $R\beta_i\leq 1$, the value $k_i$ is increasing in credit limit $\gamma_i$. By consequence, the steady state output $Y=\sum_iF_i(k_i)$ is increasing in TFP $A_i$ and credit limit $\gamma_i$ for any $i$.

\end{enumerate}
\end{proposition}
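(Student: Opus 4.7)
The plan is to work directly from the first-order conditions of each agent's intertemporal problem evaluated at a steady state. Forming the Lagrangian with multipliers $\lambda_{i,t}$ on the budget constraint \eqref{bc-i} and $\mu_{i,t}$ on the credit constraint \eqref{cc-i}, the envelope condition $\lambda_{i,t}=\beta_i^t u_i'(c_{i,t})$ together with stationarity allows me to write a stationary multiplier $\hat{\mu}_i\geq 0$ on the credit constraint. The FOC with respect to $b_{i,t}$ then yields $u_i'(c_i)=R\beta_i u_i'(c_i)+R\hat{\mu}_i$, so
\begin{align*}
\hat{\mu}_i=\frac{(1-R\beta_i)u_i'(c_i)}{R}.
\end{align*}
Since $\hat{\mu}_i\geq 0$ and $u_i'(c_i)>0$, this immediately forces $R\beta_i\leq 1$ for every $i$, hence $R\leq 1/\max_i\beta_i$.

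For the reverse inequality I would invoke bond market clearing $\sum_i b_i=0$. If we had $R<1/\beta_i$ for every $i$, then $\hat{\mu}_i>0$ for every $i$, so every credit constraint binds: $Rb_i=\gamma_i F_i(k_i)>0$, contradicting $\sum_i b_i=0$. Thus some agent $j^*$ must satisfy $\hat{\mu}_{j^*}=0$, i.e.\ $R=1/\beta_{j^*}$; combined with $R\leq 1/\beta_i$ for all $i$, this gives $\beta_{j^*}=\max_i\beta_i$ and proves statement~1. Under the assumption $\beta_1>\beta_i$ for $i\geq 2$, the unique choice is $j^*=1$, so $R=1/\beta_1$ and agent~$1$'s borrowing constraint is slack. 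The $k_{i,t}$ FOC $u_i'(c_i)=F_i'(k_i)[\beta_i u_i'(c_i)+\gamma_i\hat{\mu}_i]$ then reduces for agent~$1$ to $F_1'(k_1)=1/\beta_1=R$; for $i\geq 2$, substituting $\hat{\mu}_i=(1-R\beta_i)u_i'(c_i)/R$ gives
\begin{align*}
R=F_i'(k_i)\bigl[R\beta_i(1-\gamma_i)+\gamma_i\bigr],
\end{align*}
which after dividing by $R$ rearranges into the equation stated in the proposition (characterizing $k_i$ implicitly).

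The comparative statics follow by implicit differentiation once $F_i$ is written as $A_i f_i$. The bracket $R\beta_i(1-\gamma_i)+\gamma_i$ is strictly positive and, since $R\beta_i\leq 1$, nondecreasing in $\gamma_i$; writing the equation as $A_i f_i'(k_i)[R\beta_i(1-\gamma_i)+\gamma_i]=R$ and using strict concavity of $f_i$ (so $f_i'$ strictly decreasing) yields that $k_i$ is increasing in $A_i$ and in $\gamma_i$. The monotonicity of $Y=\sum_i F_i(k_i)$ in each $A_i$ and each $\gamma_i$ is then immediate from the monotonicity of the $F_i$'s and of each $k_i$. The main obstacle is the step ruling out the ``everyone constrained'' configuration: one has to translate $\hat{\mu}_i>0$ into $b_i>0$ cleanly and invoke bond market clearing to produce the contradiction. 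Everything else is routine Euler-equation manipulation and implicit differentiation.
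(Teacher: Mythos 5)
Your proof is correct and follows essentially the same route as the paper: the same first-order conditions with stationary multipliers, the same appeal to bond-market clearing to exhibit an agent whose credit constraint is slack (you phrase it as a contradiction with all $b_i>0$, the paper asserts the lender's existence directly), and the same combined Euler equation for the constrained agents, with the comparative statics filled in by routine implicit differentiation. One small remark: your steady-state relation $R=F_i'(k_i)\bigl[R\beta_i(1-\gamma_i)+\gamma_i\bigr]$ rearranges to $\frac{R-\gamma_iF_i'(k_i)}{R}=\beta_iF_i'(k_i)(1-\gamma_i)$, which carries a factor $\beta_i$ that is absent from the proposition's displayed equation but present in the paper's own derivation, so your version is the correct one rather than literally "the equation stated in the proposition."
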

\begin{proof}
See Appendix \ref{extensions-proof}.
\end{proof}
In Proposition \ref{ss}, the long run interest rate is determined by the time preference rate of the most patient agent. According to Proposition \ref{ss}, the non-monotonic effect of productivity and credit limit on the aggregate output can only be appeared along the global dynamics of the economy. Therefore, we will focus on global dynamics, i.e., the dynamic properties of the intertemporal equilibrium. 

In general, it is challenge to provide comparative statics of intertemporal equilibrium in infinite-horizon models.  For the sake of tractability, we assume that $u_i(c)=ln(c)$ and  $F_{i,t}(k)=A_{i,t}k$. Thanks to this specification, we can, in some cases, explicitly compute the equilibrium. 

Firstly, we look at the economy without financial frictions. It is easy to prove the following result.
\begin{lemma}[economy without credit constraint]\label{perfecty} Assume that $u_i(c)=ln(c)$ and  $F_{i,t}(k)=A_{i,t}k$ where $A_{1,t}<A_{2,t}\cdots <A_{m,t}$, $\forall t$. Consider an economy without credit constraints. Then, in equilibrium, we must have $R_t=A_{m,t}$ and the output equals denoted by $Y^*_t$ and  growth rate $(G^*_t)$ of this economy are determined by 
\begin{subequations}
\begin{align}
Y^*_t=&A_{m,t}\cdots A_{m,1}\sum_{i=1}^m\beta_i^{t-1}s_{i,0}, \\
G^*_{t+1}=&A_{m,t+1}\frac{\sum_{i =1}^m\beta_i^{t}}{\sum_{i =1}^m\beta_i^{t-1}}.
\end{align}
\end{subequations}
where we denote $s_{i,0}\equiv \beta_iw_{i,0}, \forall i$.
\end{lemma}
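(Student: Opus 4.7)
The plan is to (i) pin down the equilibrium gross interest rate $R_t$ by a no-arbitrage argument exploiting linearity of the technology, (ii) derive the closed-form saving rule from log utility, and (iii) aggregate the individual savings and use the market-clearing condition to read off $Y^*_t$ and $G^*_t$.

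\emph{Step 1: $R_t=A_{m,t}$ by arbitrage.} Because each $F_{i,t}(k)=A_{i,t}k$ is linear, the marginal return on capital held by agent $i$ from date $t-1$ to $t$ is the constant $A_{i,t}$, while lending yields the gross return $R_t$. If $R_t>A_{m,t}$, then every agent strictly prefers to lend rather than invest in capital, so $k_{i,t-1}=0$ for all $i$; but then aggregate capital is zero at date $t$, output is zero, and with $u_i=\ln$ (so $u_i'(0)=\infty$) no agent can afford positive consumption, contradicting feasibility of a positive-consumption plan. If $R_t<A_{m,t}$, agent $m$ earns a strictly positive per-unit profit $A_{m,t}-R_t$ by borrowing to invest; with a linear technology and no credit constraint, her demand is unbounded and markets cannot clear. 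Therefore $R_t=A_{m,t}$ for every $t\geq 1$.

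\emph{Step 2: log-utility saving rule.} At $R_t=A_{m,t}$ each agent is indifferent between holding capital and lending, so we can consolidate agent $i$'s portfolio into a single financial wealth $w_{i,t}$ evolving as $w_{i,t+1}=R_{t+1}(w_{i,t}-c_{i,t})$, with $w_{i,0}=F_{i,0}(k_{i,-1})$. The intertemporal problem $\max\sum_t\beta_i^t\ln c_{i,t}$ subject to this linear budget is textbook: guessing and verifying the value function $V_{i,t}(w)=\alpha_{i,t}\ln w+\eta_{i,t}$ (or equivalently combining the Euler equation $1/c_{i,t}=\beta_iR_{t+1}/c_{i,t+1}$ with the transversality condition) yields the saving rule $s_{i,t}\equiv w_{i,t}-c_{i,t}=\beta_i w_{i,t}$. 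Iterating gives
\begin{align*}
s_{i,t}=\beta_i^{t+1}\,R_t R_{t-1}\cdots R_1\, w_{i,0}=\beta_i^{t}\,A_{m,t}\cdots A_{m,1}\, s_{i,0},
\end{align*}
using $s_{i,0}=\beta_i w_{i,0}$ and $R_\tau=A_{m,\tau}$.

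\emph{Step 3: aggregation and growth rate.} Aggregate capital carried from $t-1$ into $t$ equals aggregate savings $K_t=\sum_i s_{i,t-1}$ (since $\sum_i b_{i,t-1}=0$), and in equilibrium all of this capital is effectively operated at the frontier productivity $A_{m,t}$ (agents below the frontier weakly prefer lending), so
\begin{align*}
Y^*_t=A_{m,t}K_t=A_{m,t}\sum_i s_{i,t-1}=A_{m,t}A_{m,t-1}\cdots A_{m,1}\sum_{i=1}^m\beta_i^{t-1}s_{i,0}.
\end{align*}
The growth-rate formula $G^*_{t+1}=Y^*_{t+1}/Y^*_t$ then falls out by direct division.

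\emph{Main obstacle.} The only subtlety is Step 1: the no-arbitrage pinning of $R_t$ must simultaneously rule out $R_t>A_{m,t}$ (where the capital market degenerates) and $R_t<A_{m,t}$ (where the linear technology makes individual demand unbounded); both sides need a brief but careful feasibility/market-clearing argument, and one must also verify that the resulting indifferent allocation of capital across agents is consistent with the bond market-clearing $\sum_i b_{i,t}=0$. The remaining steps are essentially algebraic once the log-utility saving rule is established.
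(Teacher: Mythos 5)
Your proposal is correct and follows essentially the same route the paper uses for the analogous constrained case (Lemma \ref{infinite-a11}): pin down $R_t=A_{m,t}$, use the log-utility saving rule $s_{i,t}=\beta_i R_t s_{i,t-1}$, and aggregate via bond-market clearing so that $k_{m,t}=\sum_i s_{i,t}$. One small remark: your "direct division" for the growth rate yields $G^*_{t+1}=A_{m,t+1}\sum_i\beta_i^{t}s_{i,0}/\sum_i\beta_i^{t-1}s_{i,0}$, i.e.\ the sums should carry the weights $s_{i,0}$, which the lemma's displayed formula omits (an apparent typo in the statement rather than a flaw in your argument).
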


For the economy with credit constraints, the following result provides conditions under which the equilibrium interest rate equals the TFP of some agent. Other cases will be presented latter.
\begin{lemma}[economy with credit constraint]\label{infinite-a11}
Consider an infinite-horizon model with utility function $u_i(c)=ln(c)$ $\forall c$, $\forall t$, $\forall i$ and production functions $F_{i,t}(k)=A_{i,t}k$, $\forall k$, $\forall t$, $\forall i$. 
 Assume that $\max_i\gamma_iA_{i,t}<A_{1,t}<\cdots<A_{m,t}$ $\forall t$.
\begin{enumerate}
\item 
  Assume that there is an agent $h$ so that \begin{align}\label{mh}
\frac{\beta_h^t s_{h,0}}{1-\gamma_h}\geq &\sum_{i\leq h}\beta_i^t
s_{i,0}-\sum_{j>h}\beta_j^t\frac{\gamma_jA_{j,t+1}}{A_{h,t+1}-\gamma_jA_{j,t+1}}\frac{(1-\gamma_j)A_{j,t}}{A_{h,t}-\gamma_jA_{j,t}} \cdots \frac{(1-\gamma_j)A_{j,1}}{A_{h,1}-\gamma_jA_{j,1}} s_{j,0}>0.
\end{align}
where $s_{i,0}=\beta_iw_{i,0}$. 

Then there exists an equilibrium with $R_t=A_{h,t},$ $\forall t$.  In such an equilibrium, the aggregate output at date $t$, $(t\geq 1)$, is 
 \begin{align}\label{linear-output}
Y_t&=A_{h,t}\cdots A_{h,1}\sum_{i \leq h}\beta_i^{t-1}
s_{i,0}\\
&+A_{h,t}\cdots A_{h,1}\sum_{j>h}\beta_j^{t-1}(1-\gamma_j)^{t}\frac{A_{j,t}}{A_{h,t}-\gamma_jA_{j,t}}\frac{A_{j,t-1}}{A_{h,t-1}-\gamma_jA_{j,t-1}}\cdots \frac{A_{j,1}}{A_{h,1}-\gamma_jA_{j,1}} s_{j,0}.\notag
 \end{align}
 
\item In particular, when \begin{align}\label{mhm}
\frac{\beta_m^t s_{m,0}}{1-\gamma_m}\geq &\sum_{i\leq m}\beta_i^t
s_{i,0}, \forall t,
\end{align}
then there exists an equilibrium which coincides with the equilibrium in the economy without credit constraints: the interest rate equals $R_t=A_{m,t},$ $\forall t$ and the aggregate output is $Y_t=A_{h,t}\cdots A_{h,1}\sum_{i \leq h}\beta_i^{t-1}s_{i,0}.$ 
 
\end{enumerate}
\end{lemma}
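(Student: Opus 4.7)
The plan is to construct the equilibrium by conjecture-and-verify, with the conjecture $R_t=A_{h,t}$ for every $t\geq 1$. Under this guess, linearity of each technology immediately forces the portfolio type of each agent: every $i<h$ has $A_{i,t+1}<R_{t+1}$ and strictly prefers lending, so $k_{i,t}=0$ and $b_{i,t}=-s_{i,t}$; every $j>h$ has $A_{j,t+1}>R_{t+1}$ and saturates the credit limit, so $R_{t+1}b_{j,t}=\gamma_j A_{j,t+1}k_{j,t}$, which gives $k_{j,t}=\tfrac{A_{h,t+1}}{A_{h,t+1}-\gamma_j A_{j,t+1}}s_{j,t}$ and the wealth recursion $w_{j,t+1}=A_{j,t+1}(1-\gamma_j)k_{j,t}$; agent $h$ is indifferent at rate $A_{h,t+1}$ and her portfolio will be pinned down by market clearing.

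Next, log utility combined with a deterministic per-period effective return yields the standard rule $s_{i,t}=\beta_i w_{i,t}$, where $w_{i,t}=A_{i,t}k_{i,t-1}-R_tb_{i,t-1}$. Iterating from $s_{i,0}=\beta_i w_{i,0}$ produces the closed forms $s_{i,t}=\beta_i^{t}A_{h,t}\cdots A_{h,1}\,s_{i,0}$ for $i\leq h$ (effective return $A_{h,t+1}$) and $s_{j,t}=\beta_j^{t}(1-\gamma_j)^{t}\prod_{\tau=1}^{t}\tfrac{A_{j,\tau}A_{h,\tau}}{A_{h,\tau}-\gamma_j A_{j,\tau}}\,s_{j,0}$ for $j>h$ (effective return $\tfrac{A_{j,\tau}(1-\gamma_j)A_{h,\tau}}{A_{h,\tau}-\gamma_j A_{j,\tau}}$). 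Financial market clearing $\sum_i b_{i,t}=0$ together with $b_{h,t}=k_{h,t}-s_{h,t}$ then yields $k_{h,t}=\sum_{i\leq h}s_{i,t}-\sum_{j>h}\tfrac{\gamma_j A_{j,t+1}}{A_{h,t+1}-\gamma_j A_{j,t+1}}s_{j,t}$.

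For the guess to be a bona fide equilibrium two things must be verified at every $t$: (i) $k_{h,t}\geq 0$, and (ii) agent $h$'s own credit constraint $R_{t+1}b_{h,t}\leq\gamma_h A_{h,t+1}k_{h,t}$, which with $R_{t+1}=A_{h,t+1}$ reduces to $(1-\gamma_h)k_{h,t}\leq s_{h,t}$. Substituting the explicit $s_{i,t}$ and cancelling the common factor $A_{h,t}\cdots A_{h,1}$, these two inequalities bundle into exactly condition (\ref{mh}). The output formula then follows from $Y_t=A_{h,t}k_{h,t-1}+\sum_{j>h}A_{j,t}k_{j,t-1}$: for each $j>h$ the combination $A_{j,t}k_{j,t-1}$ minus the cancelling $A_{h,t}\tfrac{\gamma_j A_{j,t}}{A_{h,t}-\gamma_j A_{j,t}}s_{j,t-1}$ (coming from the $j$-summand inside $k_{h,t-1}$) produces precisely the extra factor $\tfrac{A_{j,t}(1-\gamma_j)}{A_{h,t}-\gamma_j A_{j,t}}$ multiplying $s_{j,t-1}$, which is what gives (\ref{linear-output}).

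The main delicate step is verifying that the log-utility rule $s_{i,t}=\beta_i w_{i,t}$ is genuinely optimal for each agent under the conjectured price path. For $i\neq h$ the effective return is a deterministic, strictly positive sequence, so a standard Bellman argument together with the transversality/finite-utility condition from Assumption~\ref{assum_utility_finite} (which rules out Ponzi schemes on the borrower side) delivers the constant-rate rule; the no-arbitrage direction (why $i<h$ refuses to produce and $j>h$ borrows to the limit) is immediate by comparing $A_{i,t+1}$ to $R_{t+1}$. For agent $h$, indifference between lending and producing (with or without debt up to her own credit limit) means every portfolio satisfying her budget and credit constraints with $c_{h,t}=(1-\beta_h)w_{h,t}$ is optimal, and the market-clearing portfolio is one such whenever (\ref{mh}) holds. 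Part 2 is then the specialization $h=m$: the sum over $j>h$ disappears, (\ref{mh}) collapses to (\ref{mhm}), and (\ref{linear-output}) reduces to the frictionless formula of Lemma~\ref{perfecty}.
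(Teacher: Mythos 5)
Your proposal is correct and follows essentially the same route as the paper's proof: guess $R_t=A_{h,t}$, use the log-utility saving rule $s_{i,t}=\beta_i w_{i,t}$ to get closed-form savings for the three classes of agents ($i<h$ lenders, $j>h$ credit-constrained producers, $h$ the marginal producer), pin down $k_{h,t}$ by market clearing, and observe that $k_{h,t}\geq 0$ and agent $h$'s own borrowing constraint are exactly the two inequalities in (\ref{mh}). The paper packages the final optimality verification into its sufficiency Lemma \ref{ns} (FOCs plus transversality), which is the same check you describe informally.
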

\begin{proof}
See Appendix \ref{extensions-proof}.
\end{proof}
The right hand side of condition (\ref{mh}) ensures that agent $h$ produces, i.e., $k_{h,t}>0$ while the left hand side ensures agent h's borrowing constraint. Under these conditions, we can compute the equilibrium outcome.

\subsubsection{Effect of permanent productivity changes}
 Lemma \ref{infinite-a11} allows us to investigate the effects of productivity changes.  First, assume that, for several reasons, the productivity of producers increase (or decrease) at any date. We explore how this change affect the aggregate output and the growth rate along the intertemporal equilibrium.

\begin{proposition}\label{infinite-akm}
Assume that $F_{i,t}(k)=A_ik$, $\forall i,\forall k\geq 0$ with $\max_i{\gamma_iA_i}<A_1<A_2<\ldots<A_m$, and utility function $u_i(c)=ln(c)$ $\forall i$.
 Assume that \begin{align}\label{mhs}
\frac{\beta_h^t s_{h,0}}{1-\gamma_h}&\geq \sum_{i\leq h}\beta_i^t
s_{i,0}-\sum_{j>h}\frac{\gamma_jA_{j}}{A_{h}-\gamma_jA_{j}}\big(\frac{\beta_j(1-\gamma_j)A_{j}}{A_{h}-\gamma_jA_{j}}\big)^t s_{j,0}>0,\forall t\geq 0\\
\label{mhs2}\beta_h=\max_{i\leq h}\beta_i&>\max_{j>h}\frac{\beta_j(1-\gamma_j)A_{j}}{A_{h}-\gamma_jA_{j}}.
\end{align}
for some agent $h$.

Then, there is an equilibrium with the interest rate $R_t=A_h$, $\forall t$. In this equilibrium,  we have that:
\begin{enumerate}
\item The aggregate output equals  \begin{align}\label{outputt}
Y_t&=A_h^t\sum_{i\leq h}\beta_i^{t-1}
s_{i,0}+\sum_{j>h}\beta_j^{t-1}(1-\gamma_j)^{t}A_{j}^t\Big(\frac{A_{h}}{A_{h}-\gamma_jA_{j}}\Big)^t s_{j,0}.
 \end{align}
 This is increasing in $A_j$ for any $j>h$. However, for agent $h$, we have that:
   \begin{align}\label{ytah}
 \frac{\partial Y_t}{\partial A_h}>0 \Leftrightarrow \sum_{i\leq h}\beta_i^{t-1}
s_{i,0} -\sum_{j>h}\frac{(1-\gamma_j)\gamma_j A_{j}^2}{(A_h-\gamma_jA_j)^2}\Big(\frac{\beta_j(1-\gamma_j)A_j}{A_{h}-\gamma_jA_{j}}\Big)^{t-1} s_{j,0}>0
 \end{align}
 and this condition can be satisfied for some value of parameters $(s_{i,0},\gamma_i,A_i)_i$.

\begin{enumerate}
\item If $\sum_{i: \beta_i=\beta_h}s_{i,0}>\sum_{j>h}\frac{(1-\gamma_j)\gamma_j A_{j}^2}{(A_h-\gamma_jA_j)^2} s_{j,0}$, then $ \frac{\partial Y_t}{\partial A_h}>0$ for any $t$.
\item If $\sum_{i\leq h}s_{i,0}<\sum_{j>h}\frac{(1-\gamma_j)\gamma_j A_{j}^2}{(A_h-\gamma_jA_j)^2} s_{j,0}$, then  $\frac{\partial Y_1}{\partial A_h}< 0 $ at date $1$ but there exists a date  $t_0$ such that $\frac{\partial Y_t}{\partial A_h}\geq 0 $, $\forall t> t_0$.
\end{enumerate}

 \item The growth rate $G_{t+1}\equiv \frac{Y_{t+1}}{Y_t} $ equals
\begin{align}\label{Gth}
G_{t+1}\equiv \frac{Y_{t+1}}{Y_t} = A_h\frac{\sum_{i\leq h}\beta_i^{t}
s_{i,0}+\sum_{j>h}\Big(\frac{\beta_j(1-\gamma_j)A_j}{A_{h}-\gamma_jA_{j}}\Big)^{t+1} \frac{s_{j,0}}{\beta_j}}{\sum_{i\leq h}\beta_i^{t-1}
s_{i,0}+\sum_{j>h}\Big(\frac{\beta_jA_{j}(1-\gamma_j)}{A_{h}-\gamma_jA_{j}}\Big)^t \frac{s_{j,0}}{\beta_j}}.
\end{align} 
and it  converges to $A_h\beta_h$. Moreover, for agent $j$, with $j>h$, there exists a date $t_1$ such that the growth rate $G_{t+1}$ is decreasing in the productivity $A_j$ for any date $t\geq t_1$.

\end{enumerate}

\end{proposition}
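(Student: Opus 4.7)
The plan is to specialize Lemma \ref{infinite-a11} to the stationary case $A_{i,t}=A_i$ and then perform comparative statics directly on the resulting closed-form formula for $Y_t$. When productivities are time-independent, the product $\prod_{s=1}^{t}\frac{(1-\gamma_j)A_{j,s}}{A_{h,s}-\gamma_jA_{j,s}}$ appearing in Lemma \ref{infinite-a11} collapses to $\bigl(\frac{(1-\gamma_j)A_j}{A_h-\gamma_jA_j}\bigr)^t$, which turns condition (\ref{mh}) into exactly (\ref{mhs}). Lemma \ref{infinite-a11} then yields an equilibrium with constant interest rate $R_t=A_h$, and (\ref{linear-output}) specializes, after the identity $A_h^t\bigl(\frac{A_j}{A_h-\gamma_jA_j}\bigr)^t=A_j^t\bigl(\frac{A_h}{A_h-\gamma_jA_j}\bigr)^t$, to formula (\ref{outputt}).

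Monotonicity in $A_j$ for $j>h$ is immediate, since only the $j$-th term of the second sum in (\ref{outputt}) depends on $A_j$ and both $A_j^t$ and $A_h/(A_h-\gamma_jA_j)$ are strictly increasing in $A_j$. For $\partial Y_t/\partial A_h$, I would differentiate (\ref{outputt}) term by term and collect the common positive factor $tA_h^{t-1}$; the remaining bracket is exactly the quantity displayed in (\ref{ytah}), which proves the stated sign equivalence.

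For sub-case (a), divide the bracket in (\ref{ytah}) by $\beta_h^{t-1}$. Since $\beta_h=\max_{i\le h}\beta_i$ by (\ref{mhs2}), the first sum satisfies $\sum_{i\le h}(\beta_i/\beta_h)^{t-1}s_{i,0}\ge\sum_{i:\beta_i=\beta_h}s_{i,0}$ for every $t\ge 1$. Also by (\ref{mhs2}), each ratio $\frac{\beta_j(1-\gamma_j)A_j}{\beta_h(A_h-\gamma_jA_j)}<1$, so the second sum is bounded above by its $t=1$ value $\sum_{j>h}\frac{(1-\gamma_j)\gamma_jA_j^2}{(A_h-\gamma_jA_j)^2}s_{j,0}$. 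Under hypothesis (a) the bracket is therefore positive for all $t\ge 1$. For sub-case (b), at $t=1$ both sums equal their reference values and the hypothesis makes the bracket strictly negative, so $\partial Y_1/\partial A_h<0$. As $t\to\infty$, the first sum divided by $\beta_h^{t-1}$ still tends to $\sum_{i:\beta_i=\beta_h}s_{i,0}>0$ (the index $h$ lies in this set), while every term of the second sum decays geometrically to $0$; the bracket is thus eventually positive, yielding the existence of the crossing time $t_0$.

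For the growth-rate claim, set $\alpha_j\equiv\frac{\beta_j(1-\gamma_j)A_j}{A_h-\gamma_jA_j}$, so that (\ref{mhs2}) reads $\alpha_j<\beta_h$ for every $j>h$. Dividing numerator and denominator of (\ref{Gth}) by $\beta_h^{t-1}$ shows that the $j>h$ contributions decay geometrically and both top and bottom are dominated by $\sum_{i:\beta_i=\beta_h}s_{i,0}\cdot(1+o(1))$, yielding $G_{t+1}\to A_h\beta_h$. For fixed $j>h$, only the terms $\alpha_j^{t+1}s_{j,0}/\beta_j$ (numerator) and $\alpha_j^{t}s_{j,0}/\beta_j$ (denominator) depend on $A_j$; a quotient-rule calculation, using $\partial\alpha_j/\partial A_j>0$, shows that the sign of $\partial G_{t+1}/\partial A_j$ equals that of $(t+1)\alpha_jD_t-tN_t$, where $N_t,D_t$ denote the numerator and denominator in (\ref{Gth}). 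The main obstacle is to extract the sign of this expression: writing $(t+1)\alpha_jD_t-tN_t=\alpha_jD_t-t(N_t-\alpha_jD_t)$ and using the leading-order asymptotics $N_t-\alpha_jD_t\sim(\beta_h-\alpha_j)\beta_h^{t-1}\sum_{i:\beta_i=\beta_h}s_{i,0}>0$ while $\alpha_jD_t=O(\beta_h^{t-1})$, the negative term $-t(N_t-\alpha_jD_t)$ eventually dominates, so the expression is negative for every $t\ge t_1$ with $t_1$ sufficiently large. This asymptotic bookkeeping — bounding the error from slower geometric decays uniformly enough to pin down the sign of a difference of two terms of the same geometric order — is the only genuinely technical step.
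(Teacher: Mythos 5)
Your proposal is correct and follows essentially the same route as the paper: specialize Lemma \ref{infinite-a11} to the stationary case to get the closed form (\ref{outputt}), differentiate term by term to obtain (\ref{ytah}), and settle all the sign and limit claims by comparing geometric rates, using $\beta_h>\max_{j>h}\frac{\beta_j(1-\gamma_j)A_j}{A_h-\gamma_jA_j}$ (the paper packages this comparison as its Lemma \ref{useful1}, and your rearrangement $(t+1)\alpha_jD_t-tN_t=\alpha_jD_t-t(N_t-\alpha_jD_t)$ is an equivalent variant of the paper's ratio-limit computation for $\partial g/\partial x_d$). If anything, you are slightly more explicit than the paper on sub-case (a) and on the $t=1$ sign in sub-case (b), which the paper leaves implicit.
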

\begin{proof}
See Appendix \ref{extensions-proof}.
\end{proof}
Observe that the right hand side of (\ref{ytah}) is increasing in $A_h$. So,  the aggregate output $Y_t$ is more likely to be increasing in the TFP $A_h$ (i.e., $\frac{\partial Y_t}{\partial A_h}>0$) if (1) the productivity gap $\frac{A_{j}}{A_{h}}$ (for $j>h$) is low or (2) the initial income gap $\frac{s_{j,0}}{s_{i,0}}$ (for $j>h$, $i\leq h$) is low or (3) the time preference gap $\frac{\beta_{j}}{\beta_{i}}$ (for $j>h$, $i\leq h$) is low.

 Condition (\ref{mhs}) ensures that agent h still produces, i.e.,  $k_{h,t}>0$, $\forall t$.\footnote{Notice that condition (\ref{mhs}) requires that $
\max_{i\geq h}\beta_i\geq \max_{j>h}\frac{\beta_j(1-\gamma_j)A_{j}}{A_{h}-\gamma_jA_{j}}$ and $ \max\Big(\beta_h,\max_{j>h}\frac{\beta_j(1-\gamma_j)A_{j}}{A_{h}-\gamma_jA_{j}}\Big)\geq \max_{i< h}\beta_i.$}
 This happens if its TFP $A_h$ is not too low and the rate of time preference $\beta_h$ is high enough. Notice that  Condition (\ref{mhs2}) ensures imply that $\beta_hA_h>\beta_jA_j$, $\forall j>h$. This ensures that agent $1$ still produces and the growth rate $\frac{Y_{t+1}}{Y_t}$ converges to $\beta_hA_h$. 

Proposition \ref{infinite-akm} allows us to understand the impact of a shock on the TFP of the less productive agent. Observe that, if $A_h$ increases, then the output will increase in the long run in both cases cases 1.a and 1.b of Proposition \ref{infinite-akm}. However, in the short-run, point 1.b of Proposition \ref{infinite-akm} indicates that, the output may decrease in the short run if the productivity dispersion is high.


\begin{example}{\normalfont
To complement our theoretical findings presented above, we run a simulation in a two-agent model with linear production function $F_i(k)=A_ik$, and $s_{1,0}=200, s_{2,0}=100, \beta_1=0.99, \beta_2=0.4, A_1=1.5, A_2=2.25$. The credit limit of agent $2$ is $\gamma_2=0.4$. Let us denote $Y_t(A_1)$ the equilibrium aggregate output of the economy when the productivity of agent $1$ is $A_1$. 
The following graphics show how the difference between $Y_t(A_1+h)-Y_t(A_1)$ changes over time, where $h$ is a productivity change. 

First, when the productivity of agent $1$ increases from 1.5 to 1.53 (a small productivity change), the output goes down and then goes up. Precisely, $Y_t(1.5+0.03)-Y_t(1.5)<0$ for $t=1,2,3,4$ and then $Y_t(1.5+0.03)-Y_t(1.5)>0$, $\forall t\geq 5.$
\begin{figure}[h!]
\centering
   \includegraphics[width=7cm,height=5cm]{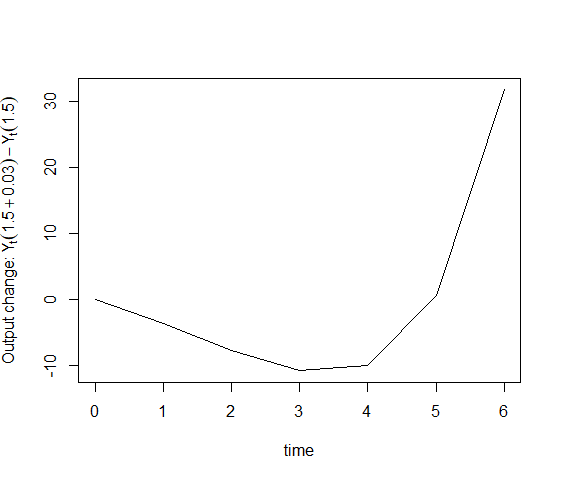}
    \includegraphics[width=7cm,height=5cm]{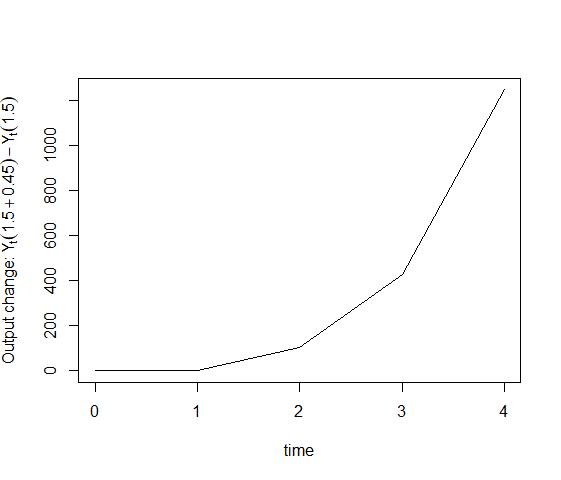}
\end{figure}

Second, when there is a high productivity change so that the productivity of agent $1$ increases from 1.5 to 1.95, the output goes up at any period: $Y_t(1.5+0.45)-Y_t(1.5)>0$, $\forall t\geq 1.$ This is consistent with the insights in Proposition \ref{asymmetry}.
} \end{example}

\subsubsection{Effect of temporary productivity changes}

 Let us look at the effects of temporary productivity changes. Assume that there is a productivity change only at date $1$, which affects the TFP of agent $h$. We would like to understand how the aggregate output changes when  $A_{h,1}$ varies. 
 

According to Lemma \ref{infinite-a11}, we have 
 \begin{align*}
\frac{\partial Y_t}{\partial A_{h,1}}=&A_{h,t} \cdots A_{h,2}\sum_{i\leq h}\beta_{i}^{t-1}s_{i,0}\\
&+A_{h,t}\cdots A_{h,2}\sum_{j>h}\beta_j^{t-1}(1-\gamma_j)^{t}\frac{A_{j,t}}{A_{h,t}-\gamma_jA_{j,t}}\cdots\frac{A_{j,2}}{A_{h,2}-\gamma_jA_{j,2}}\frac{\partial \big(\frac{A_{h,1}}{A_{h,1}-\gamma_jA_{j,1}}\big)}{\partial A_{h,1}}  A_{j,1}s_{j,0}\\
=&A_{h,t} \cdots A_{h,2}\sum_{i\leq h}\beta_{i}^{t-1}s_{i,0}\\
&-A_{h,t}\cdots A_{h,2}\sum_{j>h}\beta_j^{t-1}(1-\gamma_j)^{t}\frac{A_{j,t}}{A_{h,t}-\gamma_jA_{j,t}}\cdots\frac{A_{j,2}}{A_{h,2}-\gamma_jA_{j,2}}
\frac{\gamma_jA_{j,1}^2}{(A_{h,1}-\gamma_jA_{j,1})^2}
s_{j,0}
 \end{align*}

For the sake of simplicity, we focus on the case where $A_{i,t}=A_i$, $\forall t$, $\forall i\not=h$ and $A_{h,t}=A_h$, $\forall t\not=1$. We only let the productivity $A_{h,1}$ of agent $h$ at date $1$ vary. In this case, we can compute that    \begin{align}
&Y_t=A_{h}^{t-1}A_{h,1}\Big(\sum_{i \leq h}\beta_i^{t-1}
s_{i,0}+\sum_{j>h}\beta_j^{t-1}(1-\gamma_j)^{t}\big(\frac{A_{j}}{A_{h}-\gamma_jA_{j}}\big)^{t-1} \frac{A_{j,1}}{A_{h,1}-\gamma_jA_{j,1}} s_{j,0}\\
&\label{tempo1}\frac{\partial Y_t}{\partial A_{h,1}}=A_{h}^{t-1} \Big(\sum_{i\leq h}\beta_{i}^{t-1}s_{i,0}-\sum_{j>h}\beta_j^{t-1}(1-\gamma_j)^{t}\big(\frac{A_{j}}{A_{h}-\gamma_jA_{j}}\big)^{t-1}
\frac{\gamma_jA_{j,1}^2}{(A_{h,1}-\gamma_jA_{j,1})^2}
s_{j,0}\Big).
 \end{align}
We can also prove that the growth rate $\frac{Y_{t+1}}{Y_t}$ again converges to $\beta_hA_h$. However, the output can decrease when $A_{h,1}$ increases. According to (\ref{tempo1}), this happens  when the productivity gaps $\frac{A_j}{A_h}$ are high. The insights are consistent with the effects of permanent productivity shocks. 

\subsection{Effects of credit limits}
 \label{sec42}
In this section, we explore the effects of credit limits on the aggregate output in intertemporal equilibrium. To simplify our exposition, we focus on the case of stationary linear technology $F_i(k)=A_ik$. Since $A_1<A_2<\cdots<A_m$, the equilibrium interest rate is between $A_1$ and $A_m$. We distinguish two cases: (1) the interest rate equals the TFP of some producer and (2) the interest rate is between the TFPs of two producers. The following result considers an equilibrium in the first case.
\begin{proposition}\label{ahf}
Assume that the technology is stationary: $A_{i,t}=A_i,\forall i,\forall t$. Let assumptions in Lemma \ref{infinite-a11} be satisfied. Then, there exists an equilibrium with $R_t=A_{h},$ $\forall t$. In equilibrium, we have that:
\begin{enumerate}
\item  The aggregate output $Y_t$ is increasing in the credit limit $\gamma_j$ of agent $j$ for any $j>h$. Moreover, the output in  (\ref{outputt}) in the economy with credit constraints is lower that the output in the economy without credit constraints.

\item  However, the growth rate determined by (\ref{Gth}) is not necessarily increasing in the credit limit $\gamma_j$. It converges to $A_h\beta_h$ which  is  higher than $A_m\beta_m$ - the growth rate of the economy without credit constraint.

\end{enumerate}
\end{proposition}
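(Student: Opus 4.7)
The existence of the equilibrium with $R_t=A_h$ is a direct application of Lemma \ref{infinite-a11} (its hypotheses restricted to stationary linear technology are precisely what is assumed here), so all remaining work is explicit analysis of the closed-form expressions (\ref{outputt}) and (\ref{Gth}).

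For Part 1(a), observe that $\gamma_j$ with $j>h$ enters (\ref{outputt}) only through the factor $\phi_j(\gamma_j)^t$, where $\phi_j(\gamma):=(1-\gamma)A_h/(A_h-\gamma A_j)$. A one-line computation yields
\begin{align*}
\phi_j'(\gamma)=\frac{A_h(A_j-A_h)}{(A_h-\gamma A_j)^2}>0,
\end{align*}
since $j>h$ forces $A_j>A_h$; hence $\partial Y_t/\partial\gamma_j>0$. For the comparison with the frictionless output $Y_t^*=A_m^t\sum_i\beta_i^{t-1}s_{i,0}$ from Lemma \ref{perfecty}, a term-by-term comparison in (\ref{outputt}) fails because $\phi_m(\gamma_m)>1$ makes the $j=m$ contribution larger in the constrained economy. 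I would instead argue at the aggregate level: under log utility each agent saves a fraction $\beta_i$ of her wealth while the equilibrium gross return earned by the ``non-producing'' agents $i\le h$ is only $A_h<A_m$, so an induction on $t$ (tracking total savings through the budget constraint) shows the aggregate capital stock in the constrained equilibrium stays weakly below its frictionless counterpart at every date, yielding $Y_t\le Y_t^*$.

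For Part 2, introduce $\mu_j:=\beta_j(1-\gamma_j)A_j/(A_h-\gamma_jA_j)$; hypothesis (\ref{mhs2}) then reads $\beta_h>\mu_j$ for every $j>h$. Dividing numerator and denominator of (\ref{Gth}) by $\beta_h^{t-1}$ sends every $(\mu_j/\beta_h)^t$-term and every $(\beta_i/\beta_h)^{t-1}$-term with $\beta_i<\beta_h$ to zero, so $G_{t+1}\to A_h\beta_h$. Applying (\ref{mhs2}) at $j=m$ gives $\beta_h A_h>\mu_m A_h=\beta_m(1-\gamma_m)A_m A_h/(A_h-\gamma_m A_m)>\beta_m A_m$, where the last inequality is equivalent to $A_m>A_h$ (which holds). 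For non-monotonicity in $\gamma_j$, write $G_{t+1}=A_hN_t/D_t$ and compute
\begin{align*}
N_t'D_t-N_tD_t'=\frac{\mu_j^{t-1}\mu_j'(\gamma_j)s_{j,0}}{\beta_j}\bigl[(t+1)\mu_j D_t-tN_t\bigr].
\end{align*}
Under (\ref{mhs2}) the term $\beta_h^{t-1}s_{h,0}$ dominates $D_t$ and $\beta_h^ts_{h,0}$ dominates $N_t$ for $t$ large, so the bracket behaves like $\beta_h^{t-1}s_{h,0}[(t+1)\mu_j-t\beta_h]$, which is eventually negative since $\mu_j<\beta_h$; a concrete three-agent numerical example then secures the ``not necessarily increasing'' claim.

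The only non-routine step is Part 1(b): since the direct termwise inequality fails, the bound $Y_t\le Y_t^*$ has to be obtained at the aggregate-wealth level through explicit bookkeeping via the log-utility consumption rule. Everything else reduces to single-variable calculus on $\phi_j$ and $\mu_j$ combined with straightforward manipulation of the exponential growth rates in the closed-form expressions.
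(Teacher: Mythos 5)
Parts 1(a) and 2 of your proposal are correct and essentially coincide with what the paper does: the paper's displayed proof of this proposition is precisely your derivative computation for $\phi_j$ (it differentiates $\frac{1-\gamma_j}{A_h-\gamma_jA_j}$, which is $\phi_j$ up to the constant $A_h$), and your treatment of the growth rate --- dividing numerator and denominator by $\beta_h^{t-1}$, reading the sign of $(t+1)\mu_jD_t-tN_t$ for large $t$, and the chain $\beta_hA_h>\mu_mA_h>\beta_mA_m$ from (\ref{mhs2}) --- reproduces the argument the paper gives for Proposition \ref{infinite-akm}.

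The genuine gap is Part 1(b), and it is not the routine bookkeeping you describe. Your induction rests on the premise that no agent earns more than the frictionless return $A_m$; that is false for the leveraged producers. Agent $j>h$'s net worth per unit of saving is $\frac{(1-\gamma_j)A_jA_h}{A_h-\gamma_jA_j}$, and for $j=m$ this \emph{strictly exceeds} $A_m$ (the inequality reduces to $A_m>A_h$). Hence $s_{m,t}=(A_h\mu_m)^t s_{m,0}>(\beta_mA_m)^t s_{m,0}$: agent $m$ accumulates strictly more wealth than in the frictionless economy, so $K_t\le K_t^*$ is a race between the lenders' losses and the producers' leveraged gains, not a monotone induction. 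Moreover $K_t=\sum_i\beta_iW_{i,t}$ is a $\beta$-weighted sum of net worths, so even the total-net-worth inequality $Y_t\le Y_t^*$ does not transfer to $K_t\le K_t^*$ without controlling the cross-sectional distribution. The claim is true, but the induction step only closes by invoking the viability condition (\ref{mhs}) --- the same inequality that guarantees $k_{h,t}\ge 0$ --- which bounds $\sum_{j>h}\frac{\gamma_jA_j}{A_h-\gamma_jA_j}\mu_j^t s_{j,0}$ above by $\sum_{i\le h}\beta_i^t s_{i,0}$. Substituting this into $K_t^*-K_t$ reduces the comparison to the per-$j$ inequality
\begin{align*}
\gamma_jA_j^{t+1}+(1-\gamma_j)\Big(\frac{A_h-\gamma_jA_j}{1-\gamma_j}\Big)^{t+1}\ge A_h^{t+1},
\end{align*}
which is Jensen's inequality for $x\mapsto x^{t+1}$ at the convex combination $\gamma_jA_j+(1-\gamma_j)\frac{A_h-\gamma_jA_j}{1-\gamma_j}=A_h$; one then concludes $Y_{t+1}\le A_mK_t\le A_mK_t^*=Y_{t+1}^*$. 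As written, your sketch omits both of the ingredients that make this step work.
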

\begin{proof}
Observe that $\frac{1-\gamma_j}{A_{h,t}-\gamma_jA_{j,t}}$ is increasing in $\gamma_j$.\footnote{Indeed, we have
\begin{align*}
\frac{\partial \big(\frac{1-\gamma_j}{A_{h,t}-\gamma_jA_{j,t}}\big)}{\partial \gamma_j}=\frac{-1(A_{h,t}-\gamma_jA_{j,t})+(1-\gamma_j)A_{j,t}}{(A_{h,t}-\gamma_jA_{j,t})^2}=\frac{A_{j,t}-A_{h,t}}{(A_{h,t}-\gamma_jA_{j,t})^2}>0.
\end{align*}} So, according to (\ref{linear-output}), the aggregate output $Y_t$ is increasing in each $\gamma_j,\forall j>h$.
\end{proof}
Point 1 is consistent with the insights in the literature concerning the macroeconomic effects of credit constraint (\cite{kt13} (section VI. C), \cite{mx14} (section II.B), \cite{moll14} (Proposition 1), and \cite{cchst17}).

However, the insight of point 2 of Proposition \ref{ahf} is new. It indicates when producers' credit limits are low, a rise in credit limit may decrease the growth rate of the economy. This is consistent with the empirical fact: the rate of growth of developing countries (with more severe credit constraints of firms) is in general higher than the grow rate of developed countries).

We now investigate a question: Along an intertemporal equilibrium, does relaxing credit limit always  improve or, in some cases, reduce the aggregate output? The full answer is complicated. The following result provides the first part of our answer: conditions (based on exogenous parameters) under which the aggregate output is a decreasing function of the credit limit.

\begin{proposition}[\textcolor{blue}{intertemporal equilibrium with $R_1\in (A_{m-1},A_m)$, $R_t=A_m, \forall t\geq 2$}]
\label{credit-infinite-magents}  Assume that $u_i(c)=ln(c)$, $\forall i,\forall c>0$, $F_{i,t}(k)=A_ik$, $\forall i,\forall k\geq 0$ with $\max_i{\gamma_iA_i}<A_1<A_2<\ldots<A_m$, and 
 \begin{subequations}\begin{align}
 \label{magentm1}
\frac{\sum_{i<m}\beta_i^ts_{i,0}}{\sum_{i<m}s_{i,0}}&\leq \beta_m^{t},\forall t,\\
\gamma_m&<\frac{\sum_{i\not=m}s_{i,0}}{S_0}\\
\frac{A_{m-1}}{A_m}&<\gamma_m\frac{S_0}{\sum_{i\not=m}s_{i,0}}.
\end{align} \end{subequations}
Then, there exists an equilibrium where the interest rates are determined by
\begin{align}
R_1&=\gamma_mA_m\frac{S_0}{\sum_{i\not=m}s_{i,0}} \in (A_{m-1},A_m), \quad 
R_t=A_m, \forall t\geq 2,
\end{align}
where $S_0\equiv \sum_{i=1}^ms_{i,0}$.

The aggregate capital is
 \begin{subequations}
\begin{align}
K_0&\equiv\s\sum_is_{i,0}=\sum_i\beta_iw_{i,0}\\
K_t&=k_{m,t}=S_0A_m^{t}\Big(\gamma_m\frac{\sum_{i\not=m}\beta_i^ts_{i,0}}{\sum_{j\not=m}s_{j,0}}+\beta_m^t(1-\gamma_m)\Big), \forall t\geq 1
\end{align}
 \end{subequations}
and the aggregate output
 \begin{subequations}\begin{align}
Y_1&=A_mk_{m,0}=A_mS_0\\
\label{ytm}Y_t&=A_mk_{m,t-1}=S_0A_m^{t}\Big(\gamma_m\frac{\sum_{i\not=m}\beta_i^{t-1}s_{i,0}}{\sum_{j\not=m}s_{j,0}}+\beta_m^{t-1}(1-\gamma_m)\Big).
\end{align}
 \end{subequations}
\end{proposition}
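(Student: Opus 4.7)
The plan is to guess-and-verify: I conjecture an equilibrium in which agent $m$ is the sole producer at every date, all agents $i<m$ lend their savings, and agent $m$'s credit constraint binds at $t=0$ and is slack for $t\geq 1$. Since $R_1 > A_{m-1} \geq A_i$ for $i<m$ while $R_1 < A_m$, each agent $i<m$ strictly prefers lending to producing at date $0$, whereas agent $m$ strictly prefers producing. From $t=1$ on, $R_{t+1}=A_m$ again makes agents $i<m$ strictly prefer lending and leaves agent $m$ indifferent, so I let her hold the market-clearing capital stock.

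Under log utility with linear intertemporal returns, the standard value-function ansatz $V(w)=\tilde A + B\ln w$ yields the consumption rule $c_{i,t}=(1-\beta_i)w_{i,t}$, hence $s_{i,t}=\beta_i w_{i,t}$. For $i<m$, the recursion $w_{i,t+1}=R_{t+1}s_{i,t}$ iterates to $s_{i,t}=\beta_i^t A_m^{t-1} R_1\, s_{i,0}$ for $t\geq 1$. For agent $m$, the binding credit constraint at $t=0$ gives $w_{m,1}=(1-\gamma_m)A_m k_{m,0}$; for $t\geq 1$, substituting $R_{t+1}=A_m$ into $w_{m,t+1}=A_m k_{m,t}-R_{t+1}b_{m,t}$ collapses to $w_{m,t+1}=A_m s_{m,t}$, and iterating gives $s_{m,t}=\beta_m^t(1-\gamma_m) A_m^t S_0$ for $t\geq 1$.

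Next I would pin down $R_1$ from the date-$0$ market-clearing conditions. Financial clearing gives $b_{m,0}=\sum_{i\neq m} s_{i,0}$, whence $k_{m,0}=s_{m,0}+b_{m,0}=S_0$. Inserting this into the binding credit constraint $R_1 b_{m,0}=\gamma_m A_m k_{m,0}$ yields $R_1=\gamma_m A_m S_0/\sum_{i\neq m} s_{i,0}$. The hypothesis $\gamma_m<\sum_{i\neq m} s_{i,0}/S_0$ delivers $R_1<A_m$, while $A_{m-1}/A_m<\gamma_m S_0/\sum_{i\neq m} s_{i,0}$ gives $R_1>A_{m-1}$. Substituting the closed-form savings into $k_{m,t}=s_{m,t}+\sum_{i\neq m} s_{i,t}$ and $Y_t=A_m k_{m,t-1}$ then produces the stated formulas.

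The main obstacle will be verifying that agent $m$'s credit constraint is actually satisfied (with slack) for $t\geq 1$, since I assigned $b_{m,t}$ by market clearing rather than by her own first-order condition. The constraint $R_{t+1}b_{m,t}\leq \gamma_m A_m k_{m,t}$, after substituting $R_{t+1}=A_m$, $b_{m,t}=\sum_{i\neq m} s_{i,t}$, and the closed-form expressions for $s_{i,t}$ and $s_{m,t}$, reduces algebraically to $\sum_{i\neq m}\beta_i^t s_{i,0}/\sum_{i\neq m} s_{i,0}\leq \beta_m^t$, which is exactly the first hypothesis of the proposition. The requisite transversality conditions then follow routinely from log utility and the geometric wealth dynamics, completing the verification that the candidate is an intertemporal equilibrium.
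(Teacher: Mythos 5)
Your proposal is correct and follows essentially the same route as the paper's proof: guess the candidate equilibrium with agent $m$ as sole producer, compute the log-utility saving rules $s_{i,t}=\beta_i w_{i,t}$, pin down $R_1$ from the date-$0$ market-clearing condition combined with the binding credit constraint, and verify that the borrowing constraint for $t\geq 1$ reduces exactly to hypothesis (\ref{magentm1}) before checking transversality (the paper packages this verification step as an appeal to its Lemma \ref{ns}). No gaps.
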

\begin{proof}
See Appendix \ref{extensions-proof}.
\end{proof}
In such an equilibrium, only the most productive agent produces. Notice that her borrowing constraint at date $1$ is binding but her borrowing constraints from date $2$ on are not necessarily binding.\footnote{In this equilibrium, borrowing constraint $R_{t+1}b_{m,t}\leq \gamma_mA_mk_{m,t}$ is equivalent to  $\sum_{i\not=m}\beta_i^t\frac{s_{i,0}}{\sum_{j\not=m}s_{j,0}}\leq \beta_m^{t}$.} From date $2$ on, the equilibrium interest rate equals the productivity of the most productive agent: $R_t=A_m,$ $\forall t\geq 2$. However, the interest rate between the initial date and date 1 equals $R_1$ which is lower than $A_m$ because the credit limit $\gamma_m$ of agent $m$ is not so high (in the sense that $\gamma_m<\frac{\sum_{i\not=m}s_{i,0}}{S_0}$) and the productivity gap is high (in the sense  that $\frac{A_{m-1}}{A_m}<\gamma_m\frac{S_0}{\sum_{i\not=m}s_{i,0}}$). Notice that \cite{k98}'s Section 2 only focuses on the case where the equilibrium interest rate equals the rate of return on investment of unproductive agents, i.e., $R_t=A_1$, $\forall t$.

We now look at the equilibrium aggregate output. 
\begin{enumerate}
\item 
First, according to Lemma \ref{perfecty}, the output in the economy without credit constraints  is $Y^*_t=A_m^{t}\sum_{i=1}^m\beta_i^{t-1}s_{i,0}$. So, the output at date 1 in our economy coincides to $Y^*_1$. However, we can verify, by using $\gamma_m<\frac{\sum_{i\not=m}s_{i,0}}{S_
0}$ and $\frac{\sum_{i<m}\beta_i^ts_{i,0}}{\sum_{i<m}s_{i,0}}\leq \beta_m^{t}$, that $Y_t<Y^*_t$ for any $t\geq 2$.\footnote{Indeed, we have $
Y_t= S_0A_m^{t}\Big(\frac{\sum_{i\not=m}s_{i,0}}{S_0}\frac{\sum_{i\not=m}\beta_i^{t-1}s_{i,0}}{\sum_{i\not=m}s_{i,0}}+\beta_m^{t-1}(1-\frac{\sum_{i\not=m}s_{i,0}}{S_0})\Big)=A_m^{t}\sum_{i=1}^m\beta_i^{t-1}s_{i,0}=Y^*_t$.} It means that, the output in the economy with credit constraints is lower than the output in the economy without credit constraints. This is consistent with the existing literature. 

\item  Second, according to (\ref{ytm}) and our assumption (\ref{magentm1}), we have that:
\textcolor{blue}{
\begin{align}\label{yt2}
\frac{\partial Y_t}{\partial \gamma_m}<0, \forall t\geq 2.
\end{align}}
It means that, from date $2$ on, the aggregate output  decreases when the most productive agent's credit limit increases.\footnote{Notice that the aggregate output at date $1$ does not depend on the credit limit $\gamma_m$ of the most productive agent. The equilibrium in Proposition \ref{credit-infinite-magents} does not depend on the credit limits $(\gamma_i)_{i<m}$ of less productive agents because these agents neither borrow nor produce.} This interesting result is new with respect to the standard view on the effects of financial constraints as shown in \cite{bs13}, \cite{kt13}, \citet{mx14},  \cite{moll14},  \cite{cchst17}. 

\end{enumerate}

Let us explain the intuition of our finding (\ref{yt2}). 
Denote $W_{i,t}\equiv F_i(k_{i,t-1})-R_tb_{i,t-1}$ the net worth of agent $i$ at date $t$.   In equilibrium in Proposition \ref{credit-infinite-magents}, the net worth of the most productive agent is given by
\begin{align*}W_{m,1}&=A_mk_{m,0}-R_1b_{m,0}=(1-\gamma_m)\sum_is_{i,0}\\
\text{$t\geq 2$: }W_{m,t}&=A_mk_{m,t-1}-R_tb_{m,t-1}=A_ms_{m,t-1}=A_m(\beta_mA_m)^{t-1}s_1\\
&=(\beta_mA_m)^t(1-\gamma_m)A_mS_0.
\end{align*}
where we denote the individual saving: $s_{i,t}\equiv k_{i,t}-b_{i,t}$.

We see that the net worth is decreasing in the credit limit $\gamma_m$. The reason behind is that when $\gamma_m$ increases, the interest rate $R_1$ goes up which makes the repayment $R_1b_{m,0}$ increase. However, the capital $k_{m,0}$ of agent $m$ is already equal to the aggregate savings $\sum_is_{i,0}$ which can no longer increase. By consequence, the net worth $W_{m,1}=A_mk_{m,0}-R_1b_{m,0}$ decreases. 
This makes the saving of agent $m$ go down, and, hence, the output decreases. The mechanism can be summarized by the following schema:
\begin{align}
&\text{\bf Credit limit } \gamma_m 	\quad \uparrow \quad  \Rightarrow  \text{\bf Interest rate  } \uparrow \quad  \Rightarrow  \text{\bf Agent m's net worth  } \downarrow   \quad \Rightarrow \notag\\
 & \Rightarrow  \text{\bf Saving  } \downarrow \quad  \Rightarrow  \text{\bf Production  } \downarrow   \quad  \Rightarrow  \cdots
\end{align}
However, this mechanism does not happen when the credit limit $\gamma_m$ of agent $m$ is high enough (if this happens, we recover the equilibrium in part 2 of Lemma \ref{infinite-a11} where the output of our economy coincides to the output of the economy without credit constraints).

In Proposition \ref{credit-infinite-magents}, the most productive agent is the unique producer at date 1 and, thanks to this, the output at date $1$ equals the output in the economy without credit constraints. When there are more than 2 producers, the effects of credit limits $(\gamma_i)$ of different agents become more interesting. We attempt to understand what would happen in this case. Let us start with an intermediate step.
\begin{lemma}[\textcolor{blue}{intertemporal equilibrium with $R_1\in (A_{n-1},A_{n}), R_t=A_h$, $\forall t\geq 2$, $h\geq n$}]
\label{credit-infinite-nh}
 Assume that $u_i(c)=ln(c)$, $\forall i,\forall c>0$, $F_{i,t}(k)=A_ik$, $\forall i,\forall k\geq 0$ with $\max_i{\gamma_iA_i}<A_1<A_2<\ldots<A_m$, and 
\begin{align}
\label{nh1}
\sum_{j\geq n}\frac{\gamma_jA_{j}}{A_{n}-\gamma_jA_{j}}s_{j,0}<&\sum_{i<n}s_{i,0}<\sum_{j\geq n}\frac{\gamma_jA_{j}}{A_{n-1}-\gamma_jA_{j}}s_{j,0}\\
\label{nh2}\beta_h^{t}\frac{A_h}{R_1-\gamma_hA_{h}}s_{j,0}\geq &\sum_{i<n}\beta_i^t s_{i,0}+\sum_{n\leq i\leq h}\beta_j^{t}(1-\gamma_j)\frac{A_j}{R_1-\gamma_jA_{j}}s_{j,0}\notag\\
&-\sum_{j>h}\Big(\beta_j\frac{(1-\gamma_j)A_{j}}{A_h-\gamma_jA_{j}}\Big)^t \frac{\gamma_jA_j}{R_1-\gamma_jA_j} s_{j,0}\quad \geq 0,\quad  \forall t\geq 1.
\end{align}
for some agent $h$ with $n\leq h\leq m$. Then, there exists an equilibrium  with the interest rates
\begin{subequations}
\begin{align}
\label{ahR1}R_1&\in (A_{n-1},A_n)  \text{ is determined by  }\sum_{i<n}s_{i,0}=\sum_{j\geq n}\frac{\gamma_jA_{j}}{R_{1}-\gamma_jA_{j}}s_{j,0}\\
\label{ahRt}R_t&=A_h, \forall t\geq 2.
\end{align}\end{subequations}
\end{lemma}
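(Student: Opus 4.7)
The plan is to construct the equilibrium explicitly and then verify each equilibrium condition. The conjectured structure follows the pattern suggested by the assumptions: at date $0$, agents $i<n$ lend only ($k_{i,0}=0$, $b_{i,0}=-s_{i,0}$) while every agent $j\geq n$ produces with a binding borrowing constraint ($R_1b_{j,0}=\gamma_jA_jk_{j,0}$); from date $1$ onward, with $R_t=A_h$, only agents $j\geq h$ produce, each $j>h$ with a binding constraint, and agent $h$ absorbs whatever residual is required by capital-market clearing (since $A_h=R_{t+1}$ makes her indifferent at the margin). Existence of $R_1\in(A_{n-1},A_n)$ satisfying (\ref{ahR1}) is then an immediate consequence of the intermediate-value theorem: the map $R\mapsto\sum_{j\geq n}\gamma_jA_js_{j,0}/(R-\gamma_jA_j)$ is continuous and strictly decreasing on that interval (the denominators are positive since $\max_i\gamma_iA_i<A_1\leq A_{n-1}$), and condition (\ref{nh1}) places $\sum_{i<n}s_{i,0}$ strictly between its values at the two endpoints.

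Using the log-utility saving rule $s_{i,t}=\beta_iW_{i,t}$ together with the explicit wealth laws $W_{i,t+1}=R_{t+1}s_{i,t}$ for lenders and $W_{i,t+1}=(1-\gamma_i)A_iR_{t+1}s_{i,t}/(R_{t+1}-\gamma_iA_i)$ for producers with a binding constraint, I iterate from $s_{i,0}=\beta_iw_{i,0}$ to obtain closed-form expressions for every $s_{i,t}$, and hence for every $k_{i,t}=R_{t+1}s_{i,t}/(R_{t+1}-\gamma_iA_i)$ for active producers other than $h$. Financial-market clearing $\sum_ib_{i,0}=0$ at date $0$ reduces algebraically to exactly the equation defining $R_1$ in (\ref{ahR1}). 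For $t\geq 1$, I set $k_{h,t}=\sum_is_{i,t}-\sum_{j>h}A_hs_{j,t}/(A_h-\gamma_jA_j)$. After dividing by $R_1A_h^{t-1}$ and introducing $\lambda_j:=\beta_j(1-\gamma_j)A_j/(A_h-\gamma_jA_j)$, the feasibility condition $k_{h,t}\geq 0$ becomes precisely the right inequality in (\ref{nh2}), while agent $h$'s slackness condition $(1-\gamma_h)k_{h,t}\leq s_{h,t}$ (i.e.\ $b_{h,t}\leq\gamma_hk_{h,t}$ when $R_{t+1}=A_h$) becomes the left inequality.

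Individual optimality then splits into three natural cases. For agents $i$ with $A_i<R_{t+1}$, inactivity in production combined with investment only in the asset is optimal. For agents $j$ with $A_j>R_{t+1}$, a binding borrowing constraint is optimal, since otherwise they would scale production up without bound. Agent $h$ is indifferent at every date $t\geq 1$ because $R_{t+1}=A_h$ equals her marginal product, so any plan satisfying her borrowing constraint is optimal, in particular the one dictated by market clearing. With logarithmic utility the consumption-savings split $c_{i,t}=(1-\beta_i)W_{i,t}$ is standard regardless of whether constraints bind, and the transversality conditions follow automatically because $\beta_i<1$ and the ratio of each agent's asset holding to her consumption is bounded (both grow geometrically at the same rate). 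The main obstacle is the bookkeeping in the second paragraph: tracking the savings recursions across four agent classes and matching them to the exact algebraic form of (\ref{nh2}); once this match is in place, verifying optimality and transversality is routine.
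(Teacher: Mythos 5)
Your proposal is correct and follows essentially the same route as the paper's proof: guess the same equilibrium structure, derive the closed-form savings recursions from the log-utility rule $s_{i,t}=\beta_i W_{i,t}$, pin down $R_1$ via date-$0$ asset-market clearing and $k_{h,t}$ as the market-clearing residual, and identify the two inequalities in (\ref{nh2}) with $k_{h,t}\geq 0$ and agent $h$'s borrowing constraint $(1-\gamma_h)k_{h,t}\leq s_{h,t}$. The only cosmetic difference is that you verify optimality by a direct case analysis while the paper invokes its sufficient-condition Lemma \ref{ns}; the substance is identical.
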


\begin{proof}
See Appendix \ref{extensions-proof}.
\end{proof}
Condition (\ref{nh1}) ensures that the equilibrium interest rate $R_1$ is determined by (\ref{ahR1}) while (\ref{nh2}) ensures (\ref{ahRt}). The first inequality  in (\ref{nh2}) means that the borrowing constraint of agents $h$ are satisfied while the second inequality is equivalent to  $k_{h,t}\geq 0$. Note that condition (\ref{nh2}) requires  that\footnote{Because $\max\Big(\beta_h,\max_{j>h}\beta_j\frac{(1-\gamma_j)A_{j}}{A_h-\gamma_jA_{j}}\Big)\geq \max_{i<h}\beta_i$ and $\max_{i\leq h}\beta_i\geq \max_{j>h}\beta_j\frac{(1-\gamma_j)A_{j}}{A_h-\gamma_jA_{j}}$.} 
\begin{align}
\beta_h\geq \max\big(\max_{i<h}\beta_i,\max_{j>h}\beta_j\frac{(1-\gamma_j)A_{j}}{A_h-\gamma_jA_{j}}\big)>\max_{j>h}\beta_i.
\end{align}
So, agent $h$ has the highest discount factor.

 In such an equilibrium in Lemma \ref{credit-infinite-nh}, the capital of producers and the aggregate output are determined by
\begin{align}\label{capital-nhm}
k_{j,0}&=\begin{cases}0, &\forall j<n\\
 \frac{R_1}{R_1-\gamma_jA_{j}}s_{j,0}, &\forall j\geq n
\end{cases}\\
k_{j,t}&=\begin{cases}0, &\forall j<h\\
\sum_{i<n}\beta_i^t A_h^{t-1} R_{1}s_{i,0}+\sum_{n\leq j\leq h}\beta_j^{t}A_h^{t-1}(1-\gamma_j)\frac{A_jR_1}{R_1-\gamma_jA_{j}}s_{j,0}\\
\quad \quad \quad \quad \quad -\sum_{j>h}A_h^{t-1}\Big(\beta_j\frac{(1-\gamma_j)A_{j}}{A_h-\gamma_jA_{j}}\Big)^t \frac{\gamma_jA_jR_1}{R_1-\gamma_jA_j} s_{j,0}, &\text{for } j=h,\\
\frac{A_h}{A_h-\gamma_jA_{j}}\Big(\beta_j\frac{(1-\gamma_j)A_{j}A_h}{A_h-\gamma_jA_{j}}\Big)^{t-1}\Big(\beta_j\frac{(1-\gamma_j)A_{j}R_1}{R_1-\gamma_jA_{j}}\Big) s_{j,0}, &\forall j>h.
\end{cases}
\end{align}
We are now ready to state our result showing the effects of credit limits.
\begin{proposition}\label{ytanalysis-infinite}
Let assumptions in Lemma \ref{credit-infinite-nh} be satisfied. \\
\textcolor{blue}{1. For date $t=1$}, the aggregate output equals $Y_1=\sum_{j\geq n}A_jk_{j,0}$.
\begin{enumerate}
\item[1.1.] $
\frac{\partial Y_j}{\partial \gamma_{n}}< 0< \frac{\partial Y_j}{\partial \gamma_m}$  if $n<m$.\footnote{Moreover, if $n=m$ (i.e., only agent $m$ produces), we have $\frac{\partial Y}{\partial \gamma_m}= 0$.}

\item[1.2.] Consider any producer $i$ with $n< i< m$, we have that: 
\begin{subequations}
\begin{align}
\label{ynfi1}\frac{\partial Y_1}{\partial \gamma_{i}}&>0 \text{ if  $A_i$ is high enough, i.e., } \frac{A_i-A_{i-1}}{A_m-A_i}>\frac{\sum_{j=i+1}^m\frac{\gamma_{j}A_js_{j,0}}{(A_{n-1}-\gamma_{j}A_j)^2}}{\sum_{j=n}^{i-1}\frac{\gamma_{j}A_js_{j,0}}{(A_{n}-\gamma_{j}A_j)^2}}\\
\label{ynfi2}\frac{\partial Y_1}{\partial \gamma_{i}}&<0 \text{ if $A_i$ is low enough, i.e., } \frac{A_i-A_{n}}{A_{i+1}-A_i}<\frac{\sum_{j=i+1}^m\frac{\gamma_{j}A_js_{j,0}}{(A_{n}-\gamma_{j}A_j)^2}}{\sum_{j=n}^{i-1}\frac{\gamma_{j}A_js_{j,0}}{(A_{n-1}-\gamma_{j}A_j)^2}}.
\end{align}
\end{subequations}
\end{enumerate}
\textcolor{blue}{2. From second date ($t\geq 2$)}.\\
2.1. For $v\in \{n,\ldots,h\}$, this agent produces only at date 1. We have that 
\begin{align*}
\frac{1}{A_h^t}\dfrac{\partial Y_{t+1}}{\partial \gamma_v}\frac{1}{\frac{\partial R_1}{\partial \gamma_v}}&=\sum_{i<n}\beta_i^t s_{i,0}-\sum_{n\leq j\leq h}\beta_j^{t}\frac{(1-\gamma_j)\gamma_jA_j^2}{(R_1-\gamma_jA_{j})^2}s_{j,0}-
\sum_{j> h}\Big(\beta_j\frac{(1-\gamma_j)A_{j}}{A_h-\gamma_jA_{j}}\Big)^{t}\frac{(1-\gamma_j)\gamma_jA_j^2}{(R_1-\gamma_jA_{j})^2}s_{j,0}
\Big)\\
&+\beta_v^{t}(A_v-R_1)\sum_{j\geq n}\frac{\gamma_jA_{j}}{(R_{1}-\gamma_jA_{j})^2}s_{j,0}, \forall t\geq 1.
\end{align*}
By consequence, if 
$\beta_j>\max_{i\not=j}\beta_i$, then there exists $t_0$ such that $\dfrac{\partial Y_{t+1}}{\partial \gamma_v}<0,\forall t\geq t_0$.

2.2. For agent $v>h$, this agent produces at any date. We have, for any $t\geq 1$, that
\begin{align*}
\frac{1}{A_h^t}\dfrac{\partial Y_{t+1}}{\partial \gamma_v}\frac{1}{\frac{\partial R_1}{\partial \gamma_v}}&=\sum_{i<n}\beta_i^t s_{i,0}-\sum_{n\leq j\leq h}\beta_j^{t}\frac{(1-\gamma_j)\gamma_jA_j^2}{(R_1-\gamma_jA_{j})^2}s_{j,0}-
\sum_{j> h}\Big(\beta_j\frac{(1-\gamma_j)A_{j}}{A_h-\gamma_jA_{j}}\Big)^{t}\frac{(1-\gamma_j)\gamma_jA_j^2}{(R_1-\gamma_jA_{j})^2}s_{j,0}
\Big)\\
&+\Big(\frac{\beta_v(1-\gamma_v)A_v}{A_h-\gamma_vA_v}\Big)^t\Big(\frac{t(R_1-\gamma_vA_v)(A_v-A_h)}{(A_h-\gamma_vA_v)}+A_v-R_1\Big)\Big(\sum_{j\geq n}\frac{\gamma_jA_{j}}{(R_{1}-\gamma_jA_{j})^2}s_{j,0}\Big)
\end{align*}
By consequence, if $\beta_h>\max\{\frac{\beta_v(1-\gamma_v)A_v}{A_h-\gamma_vA_v},\max_{i<n}\beta_i\}$, then there exists $t_0$ such that $\dfrac{\partial Y_{t+1}}{\partial \gamma_v}<0,\forall t\geq t_0$.
 
\end{proposition}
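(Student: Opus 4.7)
For Part 1, the observation is that the equilibrium at date $1$ described by Lemma \ref{credit-infinite-nh} is structurally identical to the static two-period equilibrium of Section \ref{simple}: the output is $Y_1=\sum_{j\ge n}A_j k_{j,0}$ with $k_{j,0}=\frac{R_1}{R_1-\gamma_j A_j}s_{j,0}$ and $R_1$ is the unique solution in $(A_{n-1},A_n)$ of $\sum_{i<n}s_{i,0}=\sum_{j\ge n}\frac{\gamma_j A_j}{R_1-\gamma_jA_j}s_{j,0}$. This is exactly the static system of Proposition \ref{yn-fi-co} with the substitution $S_j\mapsto s_{j,0}=\beta_j w_{j,0}$. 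Therefore the sign statements in 1.1 and the thresholds (\ref{ynfi1})--(\ref{ynfi2}) follow by a verbatim transcription of the proof of Proposition \ref{yn-fi-co} (see Appendix \ref{effect-fi-proof}).

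For Part 2, start from the closed-form expressions for $k_{h,t}$ and $k_{j,t}$ ($j>h$) given in (\ref{capital-nhm}) and compute $Y_{t+1}=A_hk_{h,t}+\sum_{j>h}A_jk_{j,t}$. Differentiate with respect to $\gamma_v$. Two channels contribute: an \emph{indirect} channel through $R_1(\gamma_1,\ldots,\gamma_m)$, which enters every term of $Y_{t+1}$, and a \emph{direct} channel through the factor attached to agent $v$ (appearing in $k_{h,t}$ if $v\le h$, and in $k_{v,t}$ if $v>h$). The indirect channel, after applying implicit differentiation to (\ref{ahR1}), produces $\partial R_1/\partial\gamma_v = [A_vR_1 s_{v,0}/(R_1-\gamma_v A_v)^2]\big/D$, where $D\equiv\sum_{j\ge n}\gamma_j A_j s_{j,0}/(R_1-\gamma_j A_j)^2$. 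Dividing $\partial Y_{t+1}/\partial\gamma_v$ by $\partial R_1/\partial\gamma_v$ eliminates the ``per-unit-of-$R_1$'' factor and leaves exactly the \emph{common part} displayed in the statement, i.e. $\sum_{i<n}\beta_i^t s_{i,0}-\sum_{n\le j\le h}\beta_j^t\frac{(1-\gamma_j)\gamma_j A_j^2}{(R_1-\gamma_j A_j)^2}s_{j,0}-\sum_{j>h}\bigl(\beta_j\frac{(1-\gamma_j)A_j}{A_h-\gamma_j A_j}\bigr)^t\frac{(1-\gamma_j)\gamma_j A_j^2}{(R_1-\gamma_j A_j)^2}s_{j,0}$.

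For the $v$-specific term, treat the two subcases. If $v\le h$ (case 2.1), then $v$ appears in $k_{h,t}$ only through the factor $(1-\gamma_v)\frac{A_v R_1}{R_1-\gamma_v A_v}$, whose direct $\gamma_v$-derivative (with $R_1$ held fixed) equals $-\frac{A_v(A_v-R_1)R_1}{(R_1-\gamma_v A_v)^2}$; multiplying by $A_h \cdot A_h^{t-1}\beta_v^t$ and then by $D/N_v$ produces the claimed term $\beta_v^t(A_v-R_1)D$. If $v>h$ (case 2.2), then $\gamma_v$ enters $k_{v,t}$ in both the leading coefficient $\frac{A_h}{A_h-\gamma_v A_v}\cdot\frac{(1-\gamma_v)A_v R_1}{R_1-\gamma_v A_v}$ and in the geometric factor $\bigl(\frac{\beta_v(1-\gamma_v)A_v A_h}{A_h-\gamma_v A_v}\bigr)^{t-1}$. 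Differentiating the logarithm of the latter yields $\frac{A_v-A_h}{(1-\gamma_v)(A_h-\gamma_v A_v)}$, which when multiplied by $(t-1)$ and combined with the derivative of the leading coefficient gives precisely the bracket $\frac{t(R_1-\gamma_v A_v)(A_v-A_h)}{A_h-\gamma_v A_v}+(A_v-R_1)$ after normalization. This is the main computational step and the principal obstacle of the proof: the bookkeeping must track (i) the shift of the exponent from $t-1$ to $t$ arising when we absorb a factor $\beta_v A_v$ into the base, and (ii) the cancellation of $s_{v,0}A_v R_1/(R_1-\gamma_v A_v)^2$ against $N_v$ in the normalization.

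For the asymptotic sign, compare geometric rates. The second term in the common part is $-\beta_h^t \frac{(1-\gamma_h)\gamma_h A_h^2}{(R_1-\gamma_h A_h)^2}s_{h,0}$, while the positive terms have rates $\beta_i$ for $i<h$, $\bigl(\beta_j\frac{(1-\gamma_j)A_j}{A_h-\gamma_jA_j}\bigr)$ for $j>h$, and either $\beta_v$ (case 2.1) or $\frac{\beta_v(1-\gamma_v)A_v}{A_h-\gamma_v A_v}$ (case 2.2, with a possible linear-in-$t$ prefactor). Under the hypothesis $\beta_h>\max_{i\ne h}\beta_i$ of case 2.1 (respectively $\beta_h>\max\{\frac{\beta_v(1-\gamma_v)A_v}{A_h-\gamma_v A_v},\max_{i<n}\beta_i\}$ of case 2.2), the rate $\beta_h$ strictly dominates all other rates; since any linear factor $t$ is negligible against a strictly larger geometric rate, the negative $\beta_h^t$-term overwhelms everything else for $t$ large enough, so the bracket is eventually negative. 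Because $\partial R_1/\partial\gamma_v>0$, this gives $\partial Y_{t+1}/\partial\gamma_v<0$ for all $t\ge t_0$, completing the proof.
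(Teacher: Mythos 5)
Your plan follows the paper's own route essentially step by step: Part 1 is proved in the paper precisely by transplanting the argument of Proposition \ref{yn-fi-co} with $S_j$ replaced by $s_{j,0}$, and Part 2 is obtained exactly as you describe, by differentiating the closed forms (\ref{capital-nhm}), applying implicit differentiation to (\ref{ahR1}) to get $\big(\sum_{j\geq n}\frac{\gamma_jA_j}{(R_1-\gamma_jA_j)^2}s_{j,0}\big)\frac{\partial R_1}{\partial\gamma_v}=\frac{A_vR_1}{(R_1-\gamma_vA_v)^2}s_{v,0}$, and normalizing by $\partial R_1/\partial\gamma_v$ to separate the common part from the $v$-specific term.

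The genuine gap is in your final sign argument for case 2.1. You assert that under $\beta_h>\max_{i\neq h}\beta_i$ ``the rate $\beta_h$ strictly dominates all other rates,'' but case 2.1 allows $v=h$, and then the positive $v$-specific term $\beta_v^{t}(A_v-R_1)\sum_{j\geq n}\frac{\gamma_jA_{j}}{(R_{1}-\gamma_jA_{j})^2}s_{j,0}$ carries exactly the rate $\beta_h^t$ --- the same rate as the dominant negative term $-\beta_h^{t}\frac{(1-\gamma_h)\gamma_hA_h^2}{(R_1-\gamma_hA_{h})^2}s_{h,0}$. Rate comparison alone therefore cannot settle the sign; one must compare the two $\beta_h^t$-coefficients. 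The paper treats $v=h$ as a separate subcase and uses the identity $(1-\gamma_h)A_h-(A_h-R_1)=R_1-\gamma_hA_h$ to show that the negative coefficient exceeds the $j=h$ component of the positive one; if you redo this step you should also make explicit what controls the remaining $j\neq h$ summands of $\sum_{j\geq n}$, which enter the positive term with the same rate $\beta_h^t$ and are not addressed by that identity. For $v<h$ in case 2.1 and for all of case 2.2 your rate argument is sound, since there the $v$-specific term has rate $\beta_v<\beta_h$, respectively $\frac{\beta_v(1-\gamma_v)A_v}{A_h-\gamma_vA_v}<\beta_h$, and a linear factor $t$ cannot offset a strictly smaller geometric rate.
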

Proposition \ref{ytanalysis-infinite}  allows us to understand why the aggregate output is decreasing or increasing in the credit limits of producers. It depends not only on the distribution of productivity and of credit limits but also on the distribution of initial capital of agents. 

Since the insight of part 1 is similar to Proposition \ref{yn-fi-co} in the two-period model, let us explain the intuition of part 2. Note that the aggregate output does not depend on the credit limits of non-producers. So, we only look at the producers in equilibrium. At date 1, producers are any agent $v\geq n$. From date $2$ on, producers are any agent $v\geq h$.  In both cases of part 2, from some date on, the output will be decreasing in the credit limit of any producer if the discount factor $\beta_h$ is high. This finding is consistent with (\ref{yt2}). The basic intuition behind is the input is used by less productive agents. Indeed, at the date $1$, because low credit limits and high productivity dispersion (see condition (\ref{nh1})), we have $R_1<A_n<A_h$, so we have a capital misallocation. When agent $h$ has the highest discount factor $\beta_h$, this agent absorbs capital in the long run which makes the misallocation persistent over time and the output decrease.


\begin{remark}[additional analyses] In Appendix \ref{additional}, we present two additional results.  First, Proposition \ref{m-1mh} shows that the aggregate output is increasing in the credit limits of producers for the case $R_1\in (A_{m-1},A_m), R_t=A_h, \forall t\geq 2$, with $h<m$. 

Second, Proposition \ref{m-1m} provides conditions under which there exists an equilibrium with $R_1\in (A_{m-1},A_m), \forall t\geq 1$. In this case, there is only one producer in equilibrium and the output is increasing in the credit limit of this agent. 

The intuition behind these two results is that the equilibrium interest rate is not so high low (it is lower that $A_m$). Hence, the borrowing cost of producers is not so high. This helps producers borrow more and produce more.
\end{remark}

\begin{example}
We complement our theoretical result by a numerical simulation (Figure \ref{effect-f2f3}). Consider a model with 3 agents. In this simulation, we set that $\beta_1=0.2, 
\beta_2=0.2, \beta_3= 0.95,$
$s_{1,0}=4=\beta_1w_{1,0}, s_{2,0}=4=\beta_2w_{2,0}, s_{3,0}=3=\beta_3w_{3,0},$ $\gamma_1=0.2, \gamma_3=0.3$. Productivity: $A_1=1, A_2=1.2, A_3=1.5$.  We draw the output path for two cases: $ \gamma_2=0.3$ and $ \gamma_2=0.35$. 
We observe that 
\begin{align*}
Y_t(\gamma_2=0.35)<Y_t(\gamma_2=0.30), \forall t \geq 1.
\end{align*}
It means that when the credit limit $\gamma_2$ of agent 2 increases from 0.30 to 0.35, the aggregate output will be lower at any period of time.

\begin{figure}[h!]
\centering
      \includegraphics[width=11cm,height=8cm]{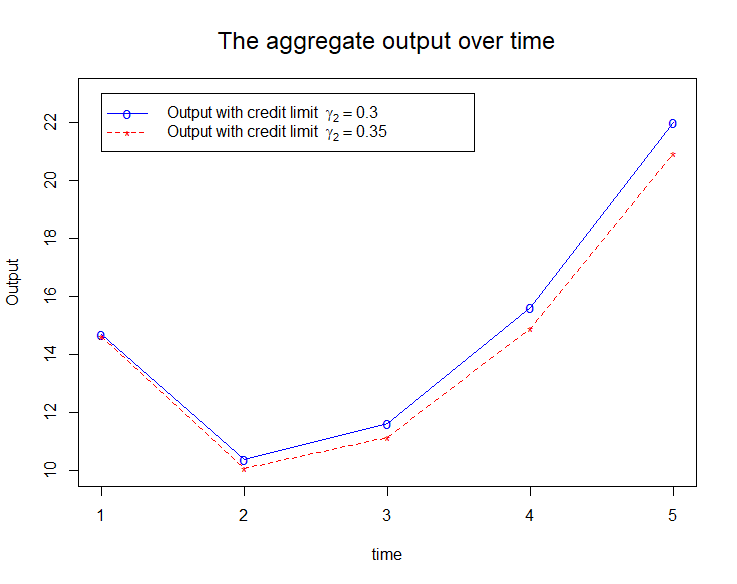}
  \caption{Effects of credit limits $\gamma_2$ on the aggregate output.}
  \label{effect-f2f3}
\end{figure}
 \end{example}

\section{Conclusion} \label{conclu}

We have build general equilibrium models with borrowing constraints to explain why the aggregate output may be decreasing (increasing, respectively) when the productivity or credit limit of producers increases (decreases, respectively).  A positive homogeneous (productivity or financial) shock has a positive impact on the aggregate output. This is consistent with the insights in economic textbooks and several articles. Our new insight is that positive asymmetric (productivity or financial) shocks may reduce the aggregate production. Overall, not only productivity but also financial frictions and the productivity gap (or dispersion of productivity distribution) matter for the economic development.

The contribution of the present paper is primarily theoretical. A promising avenue for future research would be to develop a quantitative model calibrated with empirical data to reassess the effects of asymmetric (productivity and financial) shocks and the persistence of shocks on equilibrium dynamics.



\appendix
{\small
\section*{Appendices}
\section{Proofs for Section  \ref{productivity}}
\label{proof3.1}

  \subsection{Characterization of general equilibrium}
  \subsubsection{Linear technology}
When the production functions are linear, it is easy to compute the optimal allocation of agents as a function of the interest rate (see Lemma  \ref{lem1} in Appendix \ref{append-linear}). Therefore, the key point is to determine the equilibrium interest rate. To state our characterization of equilibrium, we introduce some notations. 
\begin{align}
\label{bdn}\mathbb{D}_{n}&\equiv \sum_{i=n}^{m}\frac{A_{n}S_i}{A_{n}-\gamma_iA_i} \text{ }\forall n\geq 1, \quad
\mathbb{B}_n \equiv \sum_{i=n+1}^{m}\frac{A_nS_i}{A_n-\gamma_iA_i} \text{ }\forall n\geq 1.
\end{align}
where by convention,  $\sum_{i=n}^mx_i=0$ if $n>m$.
 
Denote $R^L_n$ the greatest solution of the following equation:\footnote{It should be noticed that the function $f(x)\equiv \sum_{i=n+1}^m\frac{xS_i}{x- \gamma_iA_i}$ is not continuous at point $\gamma_iA_i$ with $i\geq n+1$.   However, it is continuous and decreasing in the interval $(\max_{i\geq n+1}(\gamma_iA_i),\infty)$. Then, the equation $f(x)=S$ has a unique solution in such interval.}
\begin{align}\label{rn} 
\underbrace{\sum_{i=n+1}^m\dfrac{\gamma_iA_i}{R- \gamma_iA_i}S_i}_{\text{Asset demand}}=\underbrace{\sum_{i=1}^nS_i}_{\text{Asset supply}}  \text{ or equivalently }
 \underbrace{\sum_{i=n+1}^m\dfrac{RS_i}{R- \gamma_iA_i}}_{\text{Capital demand}}=\underbrace{S}_{\text{Capital supply}}
\end{align} 

\begin{definition}
\begin{enumerate}
\item 
the regime $\mathcal{A}_n$ (with $n\in \{1,\ldots,m\}$) is the set of all economies satisfying $A_n>\max_i(\gamma_iA_i)$ and $\mathbb{B}_n\leq S\leq \mathbb{D}_n$
\item  the regime $\mathcal{R}_n$ (with $n\in \{1,\ldots,m-1\}$) is the set of all economies satisfying 
\begin{enumerate}
\item 
either $\max_i(\gamma_iA_i)< A_n< R^L_n< A_{n+1}$ (or equivalently $\max_i(\gamma_iA_i)< A_n$ and $\mathbb{D}_{n+1}<S<\mathbb{B}_n$) 
\item or  $A_n\leq \max_i(\gamma_iA_i)<R^L_n<A_{n+1}$ (or equivalently $A_n\leq \max_i(\gamma_iA_i)<R^L_n$ and $\mathbb{D}_{n+1}<S$).
\end{enumerate}
\end{enumerate}
\end{definition}
We now provide a characterization of general equilibrium.
\begin{theorem}[characterization of general equilibrium with  linear technologies]
\label{cate-r} Assume that $F_i(K)=A_iK$ $\forall i$ and  $A_1<\cdots<A_m$. 
Then, there exists a unique equilibrium. The equilibrium interest rate is determined by the following:
\begin{align}
\label{r-linear-deter}
R=\begin{cases}A_i &\text{ in the regime } \mathcal{A}_i.\\
R^L_i &\text{ in the regime } \mathcal{R}_i.
\end{cases}
\end{align}

\end{theorem}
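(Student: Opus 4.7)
The plan is to reduce the equilibrium problem to a scalar fixed-point condition on $R$ via the agents' individual linear programs, and then show that market clearing places the economy in exactly one of the regimes $\mathcal{A}_n$ or $\mathcal{R}_n$, thereby obtaining existence and uniqueness simultaneously.

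First I would solve problem $(P_i)$ parametrically in $R$. Ruling out unbounded profits forces $R>\max_i(\gamma_iA_i)$ in any equilibrium: if $\gamma_{i_0}A_{i_0}\geq R$ for some $i_0$, the constraint $Rb_{i_0}\leq \gamma_{i_0}A_{i_0}(S_{i_0}+b_{i_0})$ leaves $b_{i_0}$ unbounded above and the objective $(A_{i_0}-R)k_{i_0}+RS_{i_0}$ explodes (note $A_{i_0}>\gamma_{i_0}A_{i_0}\geq R$ since $\gamma_{i_0}<1$). Under $R>\max_i(\gamma_iA_i)$, the optimal policy is pinned down entirely by the sign of $A_i-R$: if $A_i<R$, the agent lends his whole endowment and $k_i=0$, $b_i=-S_i$; if $A_i>R$, the credit constraint binds and $k_i=RS_i/(R-\gamma_iA_i)$, $b_i=\gamma_iA_iS_i/(R-\gamma_iA_i)$; and if $A_i=R$, the agent is indifferent, with $k_i$ free in $[0,S_i/(1-\gamma_i)]$.

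Aggregating these responses, I would define the aggregate capital demand
\[
\Phi(R)\equiv \sum_{i:\,A_i>R}\frac{RS_i}{R-\gamma_iA_i},\qquad R\in(\max_i\gamma_iA_i,A_m],
\]
so that market clearing is $\Phi(R)=S$, up to the freedom allowed at each atom $R=A_n$ by the indifference case. On any open interval $(A_n,A_{n+1})$ inside the admissible range, $\Phi$ is continuous and strictly decreasing, with one-sided limits $\Phi(A_n^+)=\mathbb{B}_n$ and $\Phi(A_{n+1}^-)=\mathbb{D}_{n+1}$; at the atom $R=A_n$ itself, letting agent $n$'s capital range over $[0,S_n/(1-\gamma_n)]=[0,\mathbb{D}_n-\mathbb{B}_n]$ fills the jump, so the set of aggregate demands attainable at $R=A_n$ is exactly $[\mathbb{B}_n,\mathbb{D}_n]$. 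A priori $R\in[A_1,A_m]$: if $R>A_m$, nobody produces and $\Phi\equiv 0<S$; if $R<A_1$, market clearing would require $\sum_i\gamma_iA_iS_i/(R-\gamma_iA_i)=0$, impossible since every summand is positive.

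With these ranges in hand, the theorem follows by exhausting the cases. The condition $S\in[\mathbb{B}_n,\mathbb{D}_n]$ with $A_n>\max_i\gamma_iA_i$ corresponds to $R=A_n$ and $k_n=S-\mathbb{B}_n$, which is precisely regime $\mathcal{A}_n$; whereas $\mathbb{D}_{n+1}<S<\mathbb{B}_n$ (or its analogue when $A_n\leq\max_i\gamma_iA_i$) forces $R$ to be the unique root of $\Phi(R)=S$ in $(\max(A_n,\max_i\gamma_iA_i),A_{n+1})$, namely $R^L_n$, giving regime $\mathcal{R}_n$. Uniqueness follows from the weak monotonicity of aggregate demand on $[A_1,A_m]$, and existence from the intermediate value theorem on the relevant interval together with the atom-filling at each $A_n$. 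The main obstacle I foresee is the bookkeeping at $A_n\leq\max_i\gamma_iA_i$: here $\Phi$ may still be discontinuous at $A_n$ because some index $i>n$ with $\gamma_iA_i>A_n$ truncates the interval on which $\Phi(R)=S$ can be solved, and the second clause in the definition of $\mathcal{R}_n$ is designed to absorb exactly this situation; verifying that the two clauses together partition the remaining parameter space (and do not overlap with any $\mathcal{A}_n$) is the only delicate step.
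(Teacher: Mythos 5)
Your proposal is correct and rests on the same two pillars as the paper's own proof: the parametric solution of $(P_i)$ (Lemma \ref{lem1}, including the unboundedness argument that forces $R>\max_i\gamma_iA_i$) and the thresholds $\mathbb{B}_n,\mathbb{D}_n$ together with the chain $\mathbb{D}_{n+1}<\mathbb{B}_n<\mathbb{D}_n$. Where you differ is in the packaging of the market-clearing step. The paper proceeds through a battery of necessary-and-sufficient lemmas (Lemmas \ref{lemma2}--\ref{lem-rn}) characterizing separately when $R=A_n$ and when $R=R^L_n$, and then obtains existence and uniqueness in a closing remark by verifying combinatorially that the regimes $\mathcal{A}_1,\ldots,\mathcal{A}_m,\mathcal{R}_1,\ldots,\mathcal{R}_{m-1}$ are pairwise disjoint and cover the whole parameter space. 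You instead fold everything into a single decreasing aggregate-demand correspondence $\Phi$ on $(\max_i\gamma_iA_i,A_m]$, with the indifference of agent $n$ at each atom $R=A_n$ filling the jump $[\mathbb{B}_n,\mathbb{D}_n]$ (your identity $\mathbb{D}_n-\mathbb{B}_n=S_n/(1-\gamma_n)$ and the one-sided limits $\Phi(A_n^+)=\mathbb{B}_n$, $\Phi(A_{n+1}^-)=\mathbb{D}_{n+1}$ all check out against the paper's formulas), so that existence becomes an intermediate-value argument and uniqueness becomes strict monotonicity. This buys a shorter, more unified argument in which the partition of parameter space is a corollary of the tiling of the demand range rather than a separate verification; the price is exactly the bookkeeping you flag when $A_n\le\max_i\gamma_iA_i$, where the atom is inadmissible and $\Phi$ blows up as $R$ decreases to $\max_i\gamma_iA_i$ (note the maximizer $j$ of $\gamma_jA_j$ necessarily satisfies $A_j>\max_i\gamma_iA_i$, since $\gamma_jA_j\ge A_j$ would force $\gamma_j\ge1$), so that the equation $\Phi(R)=S$ has a root in $(\max_i\gamma_iA_i,A_{n+1})$ precisely when $S>\mathbb{D}_{n+1}$ --- which is the second clause of $\mathcal{R}_n$. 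Your handling of that case is correct, and no step of the argument is missing.
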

\begin{proof}See  Appendix \ref{append-linear}.  
  \end{proof}

 \subsubsection{Strictly concave technology}

Before providing the characterization of equilibrium, we state an assumption about the credit limit.
\begin{assum}\label{gammaalpha-assum}$\gamma_i<\lim_{k \to \infty}\frac{kf_i'(k)}{f_i(k)}$, $\forall i$.
\end{assum}
As proved in Lemma \ref{gammaalpha} in Appendix \ref{general1-proof}, under Assumptions \ref{assum-concave} and \ref{Hi}, if agent i's borrowing constraint is binding, we must have $\gamma_i\leq \lim_{x\rightarrow \infty}\frac{xF_i'(x)}{F_i(x)}$. 

We are now ready to state the characterization of equilibrium.
\begin{theorem}[characterization of general equilibrium: strictly concave technologies]
\label{general1}Under Assumption \ref{assum-concave}, there exists a unique equilibrium. Assume, in addition, that Assumption \ref{Hi} and \ref{gammaalpha-assum} hold and $R_1<R_2<\ldots<R_m$,  where $R_i$ is the unique value satisfying 
\begin{align}\label{Ri}
H_i(R_i)\equiv R_i \frac{k^n_i(R_i/A_i)-S_i}{A_if_i(k^n_i(R_i/A_i))}= \gamma_i.
\end{align}
Then the unique equilibrium is determined as follows:
\begin{enumerate}
\item In the regime $\mathcal{R}_m$, i.e., when $
S<\sum_{i=1}^mk^n_i(R_m/A_i)$, 
 credit constraint of any agent is not binding. In this case, the equilibrium coincides to that of the economy without credit constraints, and the interest rate is $R=R^*>R_m$. Agent $i$ borrows ($k_i\geq S_i$) if and only if $F'_i(S_i)\geq R^*$.

\item In the regime $\mathcal{R}_n$ (with $1\leq n\leq m-1$), i.e., when 
\begin{align*}
\sum_{i=1}^nk^n_i(\frac{R_n}{A_i})+\sum_{i=n+1}^mk^b_i(\frac{R_n}{\gamma_iA_i},S_i) >  S\geq \sum_{i=1}^{n+1}k^n_i(\frac{R_{n+1}}{A_i})+\sum_{i=n+2}^mk^b_i(\frac{R_{n+1}}{\gamma_iA_i},S_i),*\end{align*}
then the equilibrium interest rate  is determined by  the following equation
\begin{align}
\label{rn-concave}
 \sum_{i=1}^nk^n_i(\frac{R}{A_i})+\sum_{i=n+1}^mk^b_i(\frac{R}{\gamma_iA_i},S_i)=S\equiv \sum_iS_i
\end{align}
while agents' capital is
\begin{align*}k_i=
\begin{cases}k^n_i(\frac{R}{A_i})\quad \text{ if } i\leq n \\
k^b_i(\frac{R}{\gamma_iA_i},S_i)\quad \text{ if } i\geq n+1.
 \end{cases}
\end{align*}
Notice that $R_n<R\leq R_{n+1}$ in this case. Any agent $i$ $(i\geq n+1)$ borrows and her credit constraint  is binding. The credit constraint of any agent $i\leq n$ is not binding. Moreover, agent $i$ $(i\leq n)$ borrows if and only if $F_i'(S_i)\geq  R$. 
\end{enumerate}
\end{theorem}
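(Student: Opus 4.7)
The plan is to establish existence and uniqueness first, and then to reduce the equilibrium to a single scalar equation in $R$ whose form depends on which agents' borrowing constraints are binding. Existence is already guaranteed by Proposition \ref{existence}. For uniqueness, under Assumption \ref{assum-concave} each problem $(P_i)$ is strictly concave in $(k_i,b_i)$, so for every $R>0$ the optimal pair $(k_i(R),b_i(R))$ is unique; consequently $R\mapsto\sum_i k_i(R)$ is continuous and strictly decreasing, and the market-clearing condition $\sum_i k_i(R)=S$ admits at most one solution.

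Next, I would characterize individual demand from the KKT conditions of $(P_i)$. If constraint (\ref{concave2g}) is slack, the Euler equation $A_if_i'(k_i)=R$ yields $k_i=k_i^n(R/A_i)$. If it is binding, then $b_i=k_i-S_i$ and $R(k_i-S_i)=\gamma_iA_if_i(k_i)$ give $k_i=k_i^b(R/(\gamma_iA_i),S_i)$. I would then invoke Lemma \ref{indi-general}---treated here as a black box, as announced in the excerpt---which says that the constraint binds at $R$ if and only if $R\leq R_i$. Combined with the ordering $R_1<R_2<\cdots<R_m$, this implies that for $R\in(R_n,R_{n+1}]$ the constraints of agents $1,\ldots,n$ are slack and those of $n+1,\ldots,m$ are binding. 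Substituting these demands into $\sum_i k_i=S$ yields exactly equation (\ref{rn-concave}).

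Both $k_i^n(R/A_i)$ and $k_i^b(R/(\gamma_iA_i),S_i)$ are continuous and strictly decreasing in $R$ on $(R_n,R_{n+1}]$---the former by strict concavity of $f_i$, the latter by implicit differentiation of the defining equation $R(k-S_i)=\gamma_iA_if_i(k)$. Hence the left-hand side of (\ref{rn-concave}) is continuous and strictly decreasing, so the equation admits a unique solution in $(R_n,R_{n+1}]$ precisely when $S$ lies between the values taken at the two endpoints, which are exactly the bounds defining regime $\mathcal{R}_n$. For regime $\mathcal{R}_m$, when $S<\sum_{i=1}^m k_i^n(R_m/A_i)$, the unconstrained aggregate demand at $R=R_m$ already exceeds supply, forcing the equilibrium $R^*$ to satisfy $R^*>R_m$; all constraints are then slack and the equilibrium coincides with the frictionless allocation determined by $\sum_i k_i^n(R^*/A_i)=S$. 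Finally, agent $i$ borrows ($k_i>S_i$) iff $k_i^n(R^*/A_i)>S_i$, which by $A_if_i'(k_i^n)=R^*$ and the strict concavity of $f_i$ is equivalent to $F_i'(S_i)\geq R^*$.

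The main obstacle is the preparatory work on $R_i$: one must verify that under Assumptions \ref{Hi} and \ref{gammaalpha-assum} the function $H_i$ in (\ref{Ri}) is monotone in $R$ (so that $R_i$ is uniquely defined), and that $H_i$ actually attains the value $\gamma_i$ (so that $R_i$ is finite). Monotonicity follows from Assumption \ref{Hi} applied to the map $k\mapsto kf_i'(k)/f_i(k)$ composed with the strictly decreasing envelope $R\mapsto k_i^n(R/A_i)$, while Assumption \ref{gammaalpha-assum}, stipulating $\gamma_i<\lim_{k\to\infty}kf_i'(k)/f_i(k)$, guarantees that $H_i$ eventually exceeds $\gamma_i$ and thus crosses that level. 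Once this auxiliary lemma is established, the case analysis above glues together into the full characterization.
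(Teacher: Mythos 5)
Your proposal is correct and follows essentially the same route as the paper's Appendix proof: the individual KKT characterization with the threshold $R_i$ defined through the monotone map $H_i$ (Lemmas \ref{indi-general} and \ref{Ridef}), the ordering $R_1<\cdots<R_m$ to identify which constraints bind on each interval $(R_n,R_{n+1}]$, and a globally continuous, strictly decreasing aggregate capital demand whose values at the endpoints $R_n$ and $R_{n+1}$ (using $k^n_{n+1}(R_{n+1}/A_{n+1})=k^b_{n+1}(R_{n+1}/(\gamma_{n+1}A_{n+1}),S_{n+1})$) reproduce the bounds defining regime $\mathcal{R}_n$. The paper packages the last step as a case-by-case contradiction argument (Lemma \ref{cate-pre} and Steps 1--3) rather than invoking global monotonicity directly, but the substance is the same.
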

\begin{proof}See  Appendix \ref{general1-proof}.
\end{proof}

\begin{proof}[{\bf Proof of Proposition \ref{A1effect-general-manyagents}}]
{\bf Part 1}. Point (a) is a direct consequence of Lemma \ref{gammaalpha}  in Appendix \ref{general1-proof}. Point (b) is a direct consequence of  Theorem \ref{general1}. 

{\bf Part 2}.  Since the production functions satisfy Inada's condition, all agents produce in equilibrium.  According to (\ref{yaj}), we have
\begin{align}\label{ya1n} \frac{\partial Y}{\partial A_1}=&\underbrace{f_1(k_1)}_\text{Productivity effect}+ \underbrace{\sum_{i\not=1}\big(A_1f_1^\prime(k_1)-A_if_i^\prime(k_i)\big)\underbrace{\frac{-\partial k_i}{\partial R}}_\text{$\geq 0$}\underbrace{\frac{\partial R}{\partial A_1}}_\text{$\geq 0$}}_\text{Allocation effect}.\end{align}

According to FOCs, we have
\begin{align*}
[k]&: (1 +\mu_i \gamma_i)F_i'(k) =\lambda_i\\
[a]&: (1+\mu_i)R = \lambda_i, \quad \mu_i \geq 0, \text{ and } \mu_i(\gamma_i  F_i(k_i)-R_ib_i)=0.
\end{align*}
These equations imply that:
\begin{equation}
\gamma_iA_if_i'(k_i)\leq R=A_if_i'(k_i)\dfrac{1+\gamma_i \mu_i}{1+\mu_i}\leq A_if_i'(k_i), \forall i.
\end{equation}
This implies that $R\geq \max_j\gamma_jF_j'(k_j)\geq  \max_j\gamma_jF_j'(S))$.  Thus, $R\geq \max_j\gamma_jF_j'(S))>0$, $\forall A_1$.

\begin{enumerate}
\item When $A_1$ is high enough. Note that $\lim_{A_1\to \infty}R_1=\infty$. Hence, for $A_1$ high enough, we have that $R_1>S$. We prove that the equilibrium interest rate goes to infinity when $A_1$ goes to infinity. Indeed, if agent 1's borrowing constraint is not binding, we have $R=A_1f_1'(k_1)>A_1f_1'(S)$. If agent 1's borrowing constraint is  binding, we have $R(k_1-S_1)=\gamma_1A_1f_1(k_1)$ which implies that 
\begin{align*}
R=\frac{\gamma_1A_1f_1(k_1)}{k_1-S_1}\geq \frac{\gamma_1A_1f_1(S_1)}{S-S_1}
\end{align*}
Hence, $R\geq \min \big(A_1f_1'(S),\frac{\gamma_1A_1f_1(S_1)}{S-S_1}\big)$. From this, we obtain that $\lim_{A_1\to \infty}R=\infty$.

Now, condition $\lim_{A_1\to \infty}R=\infty$ implies that borrowing constraint of any agent $i\geq 2$ is not binding for $A_1$ high enough. So, $A_1f_1'(k_1)\geq R=A_if_i'(k_i)$, $\forall i\geq 1$. By combining this and condition (\ref{ya1n}), we get that $\frac{\partial Y}{\partial A_1}>0$ for $A_1$ high enough.

\item We will prove that when $A_1$ is small enough, the productivity effect is smaller than the allocation effect. 
To show $\frac{\partial Y}{\partial A_1}<0$ for $A_1$ small enough, we will prove that $\lim_{A_1\to 0}k_1=0$, $\lim_{A_1\to 0}A_1f_1^\prime(k_1)-A_2f_2^\prime(k_2)<0$, $\lim_{A_1\to 0}\frac{-\partial k_2}{\partial R}>0$, and $\lim_{A_1\to 0}\frac{\partial R}{\partial A_1}>0$. 

 Since $A_1f_1'(k_1)\geq R\geq \max_j\gamma_jF_j'(S))>0$, we have $\lim_{A_1\to 0}f_1'(k_1)=\infty$. Therefore, we have 
\begin{align}\label{i}
\lim_{A_1\to 0}k_1=0, \text{ and } \lim_{A_1\to 0}\sum_{i\not=1}k_i=S.
\end{align}

Since $\lim_{A_1\to 0}k_1=0$, we get that 
$\gamma_1A_1f_1(k_1)-Rb_1=\gamma_1A_1f_1(k_1)-Rk_1+RS_1>0$ for $A_1$ small enough. It means that the borrowing constraint of agent $1$ is not binding. To sum up, we have $$R= A_1f_1'(k_1)\geq \max_j\gamma_jF_j'(S))>0, \text{ for $A_1$ small enough}.$$


Denote
\begin{align*}
B_1&=B_1(R_1)\equiv k^n_1(R_1)+\sum_{i=2}^mk^b_i(R_1),\quad 
B_2=B_2(R_2)=\sum_{i=1}^2k^n_i(R_2)+\sum_{i=3}^mk^b_i(R_2)\\
B_m&=B_m(R_m)=\sum_{i=1}^mk^n_i(R_m)
\end{align*}
\textcolor{blue}{where, to simplify notations, we write $k^n_i(R)$ and $k^b_i(R)$ instead of $k^n_i(\frac{R}{A_i})$ and  $k^b_i(\frac{R}{\gamma_iA_i}, S_i)$} (see Definition \ref{knb}). 
We see that $D_i\equiv B_i-k^n_i(R_i)$, $\forall i$. Notice also that $B_1,\ldots,B_m$ depend on $A_1$ but  $D_2,D_3,\ldots, D_m$ do not. Moreover, $\lim_{A_1\to 0}(B_i-D_i)=0$, $\forall i\geq 2$ because $\lim_{A_1\to 0}k^n_1(R_i)=0$, $\forall i\geq 2$.

 Condition $R_2<R_3<\cdots <R_m$ implies that $D_2>\cdots >D_m$. Since $\lim_{A_1\to 0}B_1=+\infty$ and $\lim_{A_1\to 0}(B_i-D_i)=0$, $\forall i\geq 2$, we have $B_1>B_2>\cdots >B_m$ for $A_1$ small enough.
\begin{enumerate}
\item $S<D_m$. Then we have $S<B_m$. According to Theorem \ref{general1},  the equilibrium coincides to that of the economy without frictions. Therefore, the output is increasing in $A_1$.  
\item Let $D_n>S>D_{n+1}$. In this case, we have $B_n>S>B_{n+1}$ for any $A_1$ small enough. According to Theorem \ref{general1}, the equilibrium interest rate $R$ is in the interval $(R_n,R_{n+1}]$ and determined by 
\begin{align}
\label{rn-concave0}
 \sum_{i=1}^nk^n_i(R)+\sum_{i=n+1}^mk^b_i(R)=S\equiv \sum_iS_i
\end{align}
Denote $Z_2(R)=\sum_{i=2}^nk^n_i(R)+\sum_{i=n+1}^mk^b_i(R)$. When $A_1$ tends to zero, we have $\lim_{A_1\to 0} k^n_1(R)=0$ and $\lim_{A_1\to 0}R=R(0)$ where $R(0)>0$ is uniquely determined by $Z_2(R(0))=S$.

For $i\geq n+1$, agent $i$'s borrowing constraint is binding: $R(k_i-S_i)=\gamma_iA_if_i(k_i)$ for any $A_1$ small enough. Let $A_1$ tend to zero, we have $k_i$ tends to $k_i(0)$, $R$ tends to $R(0)$, and $$\gamma_iA_if_i(k_i(0))=R(0)(k_i(0)-S_i).$$
Let $\sigma $ be such that
\begin{align}\label{g11}
\gamma_i \frac{f_i(k)}{kf_i'(k)}<\sigma<\frac{S_1}{S_1+S_{n+1}+\cdots +S_m}, \forall i\geq n+1, \forall k\in(0,S).
\end{align}

  According to condition (\ref{g1}), we have
\begin{align}\label{ii}
R(0)-A_if_i^\prime(k_i(0))&=\frac{\gamma_iA_if_i(k_i(0))}{k_i(0)-S_i}-A_if_i^\prime(k_i(0))\\
&\leq \frac{A_if_i^\prime(k_i(0))}{k_i(0)-S_i}(\sigma k_i(0)-(k_i(0)-S_i)\big)
\end{align}

 By market clearing condition, we have
 \begin{align*}
\sum_{i=n+1}^mk_i=\sum_{i=2}^m(S_i-k_i)+S_1-k_1+\sum_{i=n+1}^mS_i\geq S_1-k_1+\sum_{i=n+1}^mS_i
 \end{align*}
 Let $A_1$ tend to zero, we get that $\sum_{i=n+1}^mk_i(0)\geq  S_1+\sum_{i=n+1}^mS_i$. Thus,
 \begin{align*}
&\sum_{i=n+1}^m \Big(\sigma k_i(0)-(k_i(0)-S_i)\Big)=\sum_{i=n+1}^m\Big(S_i-(1-\sigma) k_i(0)\Big)\\
&\leq \sum_{i=n+1}^mS_i -(1-\sigma)\big(S_1+\sum_{i=n+1}^mS_i\big)<0
 \end{align*} 
  Therefore, there exists $j\in \{n+1,\ldots,m\}$ such that $\sigma k_j(0)-(k_j(0)-S_j)<0$, and hence 
\begin{align}\label{iiig}
R(0)-A_jf_j^\prime(k_j(0))&\leq \frac{A_jf_j^\prime(k_j(0))}{k_j(0)-S_j}(\sigma k_j(0)-(k_j(0)-S_j)\big)<0.
\end{align}
Now, by noting that $A_1f'(k_1)=R$, we have
\begin{align}\label{ya1n2} \frac{\partial Y}{\partial A_1}\leq &f_1(k_1) + \big(R-A_jf_j^\prime(k_j)\big) \frac{-\partial k_j}{\partial R} \frac{\partial R}{\partial A_1}
\end{align}

Again, by the market clearing condition 
\begin{align}
k^n_1(\frac{R}{A_1})+\sum_{i\not= 2}k_i(R)=S
\end{align}
we have that 
\begin{align}
(k^n_1)^{\prime}\big(\frac{R}{A_1}\big)\frac{R'(A_1)A_1-R}{A_1^2}+\sum_{i\not=2}\frac{\partial k_i}{\partial R}R'(A_1)&=0\\
\Leftrightarrow R'(A_1)\Big(\frac{1}{A_1}(k^n_1)^{\prime}\big(\frac{R}{A_1}\big)+\sum_{i\not=2}\frac{\partial k_i}{\partial R}\Big)&=(k^n_1)^{\prime}\big(\frac{R}{A_1}\big)\frac{R}{A_1^2}\notag\\
\Leftrightarrow R'(A_1)A_1\Big(\frac{1}{R}+\frac{A_1}{R}\frac{\sum_{i\not=2}\frac{\partial k_i}{\partial R}}{(k^n_1)^{\prime}\big(\frac{R}{A_1}\big)}\Big)&=1.\label{Rprimea3}
\end{align}
Since $\frac{\partial k_i}{\partial R}<0$, $\forall i\not=1$, and $(k^n_1)^{\prime}\big(\frac{R}{A_1}\big)<0$, we have $R'(A_1)>0$. 

By definition of $k^n_1$, we have $f_1'(k^n_1(x))=x$. So,  $(k^n_1)'(x)f_1^{\prime \prime}(k^n_1(x))=1$, and hence,  
$$\lim_{A_1\to 0}\frac{R}{A_1}(k^n_1)'(\frac{R}{A_1})=\lim_{A_1\to 0}\frac{\frac{R}{A_1}}{f_1^{\prime \prime}(\frac{R}{A_1})}=\lim_{x\to \infty}\frac{x}{f_1^{\prime \prime}(x)}< 0.$$
By combining this with (\ref{Rprimea3}), $\lim_{R\to R(0)}\frac{\partial k_i}{\partial R}<0$, $\forall i$, and $\lim_{A_1\to 0}R=R(0)>0$, we get that
\begin{align}\label{iiigg}
\lim_{A_1\to 0}R'(A_1)=+\infty.
\end{align} 

By combining (\ref{ya1n2}), (\ref{iiig}), (\ref{iiigg}),  and $\lim_{R\to R(0)}\frac{\partial k_j}{\partial R}<0$, we get that $\frac{\partial Y}{\partial A_1}<0$ for any $A_1>0$ small enough.
\end{enumerate}

\end{enumerate}

\end{proof} 
 \subsection{Additional results}
\label{additional} 
In the case of a two-agent model, we have the following result with more details and intuitive conditions.
\begin{proposition}\label{A1effect-general}Consider a two-agent model. 
\begin{enumerate}
\item  Let Assumptions \ref{assum-concave}, \ref{Hi} and \ref{gammaalpha-assum}  be satisfied. Assume also that
\begin{align*}k^n_2(\frac{R_2}{A_2})<S, \quad 
\gamma_2<\frac{S_1}{S_1+S_2}\frac{Sf_2^\prime(S)}{f_2(S)}, \quad \lim_{x\to +\infty}\frac{x}{f_1^{\prime \prime}(x)}<0
\end{align*}
Then, for any $A_1$ small enough, we have that 
$\textcolor{blue}{
\frac{\partial Y}{\partial A_1}<0}$.
\item By consequence, in a two-agent economy with Cobb-Douglas production functions ($F_i(k)=A_ik^{\alpha}$) and  $\gamma_2< \alpha \frac{S_1}{S_1+S_2}$, we have that:  \textcolor{blue}{$
\frac{\partial Y}{\partial A_1}<0 \text{ for $A_1$ small enough}.$}

\end{enumerate}
\end{proposition}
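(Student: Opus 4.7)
The plan is to apply the decomposition (\ref{yaj}) of $\partial Y/\partial A_1$ and to show that, as $A_1 \to 0$, the productivity effect $f_1(k_1)$ vanishes while the allocation effect $\bigl(A_1 f_1'(k_1)-A_2 f_2'(k_2)\bigr)\cdot(-\partial k_2/\partial R)\cdot(\partial R/\partial A_1)$ diverges to $-\infty$. This is essentially the two-agent specialization of the argument for Proposition \ref{A1effect-general-manyagents}, part 2.b.ii, with $m=2$ and $n=1$, but the intermediate bounds become more transparent.

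First I would establish the asymptotic behaviour as $A_1\to 0$. The FOC bound $R\geq \gamma_2 A_2 f_2'(S)>0$ together with $R=A_1 f_1'(k_1)$ (valid once we check agent~1's constraint is slack) and the Inada property $f_1'(0)=\infty$ forces $k_1\to 0$ and therefore $f_1(k_1)\to 0$. Because $k_1<S_1$ eventually, agent~1 is a lender and her credit constraint is indeed slack. The hypothesis $k^n_2(R_2/A_2)<S$ places us, via Theorem~\ref{general1}, in the regime where agent~2's borrowing constraint binds, i.e.\ $R(k_2-S_2)=\gamma_2 A_2 f_2(k_2)$; letting $A_1\to 0$ gives $k_2\to S$ and $R\to R(0)\equiv \gamma_2 A_2 f_2(S)/(S-S_2)>0$.

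Next I would compare $R(0)$ and $A_2 f_2'(S)$. Choosing $\sigma$ with $\gamma_2 f_2(S)/(Sf_2'(S))<\sigma<S_1/(S_1+S_2)$ (possible by the hypothesis on $\gamma_2$), one computes
\[
R(0)-A_2 f_2'(S)=\frac{A_2 f_2'(S)}{S-S_2}\Bigl(\gamma_2\frac{f_2(S)}{f_2'(S)}-(S-S_2)\Bigr)<\frac{A_2 f_2'(S)}{S-S_2}\bigl(\sigma(S_1+S_2)-S_1\bigr)<0,
\]
so the bracket $A_1 f_1'(k_1)-A_2 f_2'(k_2)$ is bounded above by a strictly negative constant. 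Differentiating the binding constraint yields $-\partial k_2/\partial R=(k_2-S_2)/(R-\gamma_2 A_2 f_2'(k_2))$, which under Assumption~\ref{Hi} is positive and bounded away from zero. Finally, implicit differentiation of market clearing $k^n_1(R/A_1)+k_2(R)=S$ gives an equation of the form (\ref{Rprimea3}); using $(k^n_1)'(x)=1/f_1''(k^n_1(x))$ with $k^n_1(R/A_1)\to\infty$ as $A_1\to 0$, and invoking $\lim_{x\to\infty} x/f_1''(x)<0$, one concludes $R'(A_1)\to+\infty$ as in equation (\ref{iiigg}). Multiplying the three factors of the allocation effect shows it tends to $-\infty$, so it dominates the vanishing productivity effect $f_1(k_1)$, proving Part~1.

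For Part~2 (Cobb--Douglas), the verification is direct: with $f_i(k)=k^\alpha$ one has $Sf_2'(S)/f_2(S)=\alpha$, so $\gamma_2<\alpha S_1/(S_1+S_2)$ coincides with the hypothesis of Part~1; the same inequality is equivalent to $k^n_2(R_2/A_2)=\alpha S_2/(\alpha-\gamma_2)<S$; and $x/f_1''(x)=x^{3-\alpha}/[A_1\alpha(\alpha-1)]\to -\infty$ since $\alpha\in(0,1)$. The main obstacle is the third step above, namely controlling the blow-up rate of $R'(A_1)$; everything else is a routine limiting argument, but that divergence is what turns a bounded-sign comparison into a dominating negative term and so drives the whole result.
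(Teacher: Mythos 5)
Your proposal is correct and follows essentially the same route as the paper's own proof: the decomposition (\ref{yaj}), the limits $k_1\to 0$, $k_2\to S$, $R\to R(0)=\gamma_2A_2f_2(S)/S_1$ as $A_1\to 0$, the sign comparison $R(0)<A_2f_2'(S)$ from the hypothesis on $\gamma_2$, and the blow-up $R'(A_1)\to+\infty$ obtained from implicit differentiation of market clearing together with $\lim_{x\to\infty}x/f_1''(x)<0$. The only cosmetic differences are your explicit formula for $-\partial k_2/\partial R$ and the auxiliary constant $\sigma$ (which the paper reserves for the many-agent case); the Cobb--Douglas verification in Part 2 matches the paper's as well.
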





\begin{proof}[{\bf Proof of Proposition \ref{A1effect-general}}]First, we state a corollary of Theorem \ref{general1}.
\begin{corollary}\label{charac2agents}Let Assumptions \ref{assum-concave}, \ref{Hi} and \ref{gammaalpha-assum} be satisfied. Consider a two-agent model and assume that $R_1<R_2$. 
\begin{enumerate}
\item In the regime $\mathcal{R}_2$, i.e., when $S<k^n_1(\frac{R_2}{A_1})+k^n_2(\frac{R_2}{A_2})$, credit constraint of any agent is not binding. 

\item In the regime $\mathcal{R}_1$, i.e., when $S\geq k^n_1(\frac{R_2}{A_1})+k^n_2(\frac{R_2}{A_2})$,\footnote{Notice that we always have that $k^n_1(R_1)=k^b_1(R_1)$,  $k^n_2(R_2)=k^b_2(R_2)$, and $k^n_1(R_1)+k^b_2(R_1)= k^b_1(R_1)+k^b_2(R_1)>S$.} 
the equilibrium interest rate $R$ is determined by 
\begin{align}
 k^n_1(\frac{R}{A_1})+k^b_2(\frac{R}{\gamma_2A_2},S_2)=S\equiv \sum_iS_i.
\end{align}
In this regime, $R_1<R\leq R_{2}$, agent 2 borrows and her credit constraint  is binding while agent $1$ is lender. 

\end{enumerate}

\end{corollary}

Now, we prove part 1 of Proof of Proposition \ref{A1effect-general}. Since Inada condition holds, all agents produce in equilibrium.  According to (\ref{yaj}), we have
\begin{align}\label{ya1} \frac{\partial Y}{\partial A_1}=&f_1(k_1)+ \big(A_1f_1^\prime(k_1)-A_2f_2^\prime(k_2)\big)\underbrace{\frac{-\partial k_2}{\partial R}}_\text{$>0$}\underbrace{\frac{\partial R}{\partial A_1}}_\text{$>0$}.\end{align}
To show $\frac{\partial Y}{\partial A_1}<0$ for $A_1$ small enough, we will prove that $\lim_{A_1\to 0}k_1=0$, $\lim_{A_1\to 0}A_1f_1^\prime(k_1)-A_2f_2^\prime(k_2)<0$, $\lim_{A_1\to 0}\frac{-\partial k_2}{\partial R}>0$, and $\lim_{A_1\to 0}\frac{\partial R}{\partial A_1}>0$. 

According to FOCs, we have
\begin{align*}
[k]&: (1 +\mu_i \gamma_i)F_i'(k) =\lambda_i\\
[a]&: (1+\mu_i)R = \lambda_i, \quad \mu_i \geq 0, \text{ and } \mu_i(\gamma_i  F_i(k_i)-R_ib_i)=0.
\end{align*}
These equations imply that:
\begin{equation}
\gamma_iA_if_i'(k_i)\leq R=A_if_i'(k_i)\dfrac{1+\gamma_i \mu_i}{1+\mu_i}\leq A_if_i'(k_i), \forall i.
\end{equation}
This implies that $R\geq \gamma_2F_2'(k_2)\geq \gamma_2F_2'(S)$.  Thus, $R\geq \gamma_2F_2'(S)$, $\forall A_1$. Since $R\leq A_1f_1'(k_1)$. So, we have $\lim_{A_1\to 0}f_1'(k_1)=\infty$. Therefore, we have 
\begin{align}\label{i}
\lim_{A_1\to 0}k_1=0, \text{ and } \lim_{A_1\to 0}k_2=S.
\end{align}By consequence, we get that 
$\gamma_1A_1f_1(k_1)-Rb_1=\gamma_1A_1f_1(k_1)-Rk_1+RS_1>0$ for $A_1$ small enough. It means that the borrowing constraint of agent $1$ is not binding. To sum up, we have $R= A_1f_1'(k_1)\geq \gamma_2F_2'(S)$ for $A_1$ small enough.

Since $R_2$ does not depend on $A_1$, we observe that $\lim_{A_1\to 0}k_1^n(R_2/A_1)=0$. So, by combining with the assumption $k^n_2(\frac{R_2}{A_2})<S$, we have $k^n_1(\frac{R}{A_1})+k^n_2(\frac{R}{A_2})<S$ for $A_1$ small enough. According to point 3 of Lemma \ref{Ridef}, we have $R_1<R_2$ for $A_1$ small enough. Hence, we can apply Corollary \ref{charac2agents} to obtain that the borrowing constraint of agent $2$ is binding in equilibrium. It means that $\gamma_2A_2f_2(k_2)-Rk_2+RS_2=0$. 

Look at the market clearing condition:  $k^n_1(\frac{R}{A_1})+k^b_2(\frac{R}{\gamma_2A_2},S_2)=S\equiv \sum_iS_i$. When $A_1$ converges to $0$, we have $k^n_1(\frac{R}{A_1})$ converges to $0$. So, $R$ converges to $R(0)$ satisfying $k^b_2(\frac{R(0)}{\gamma_2A_2},S_2)=S$. So, we have 
\begin{align}
\lim_{A_1\to \infty}\big(A_1f_1^\prime(k_1)-A_2f_2^\prime(k_2)\big)=\lim_{A_1\to \infty}\big(R-A_2f_2^\prime(k_2)\big)=R(0)-A_2f_2^\prime(S).
\end{align}


Since agent $2$'s borrowing constraint is binding: $R(k_2-S_2)=\gamma_2A_2f_2(k_2)$ for any $A_1$ small enough. Let $A_1$ tend to zero, we have $\gamma_2A_2f_2(S)=R(0)(S-S_2)=R(0)S_1$. So, we have
\begin{align}\label{ii}
R(0)-A_2f_2^\prime(S)=\frac{\gamma_2A_2f_2(S)}{S_1}-A_2f_2^\prime(S)<0
\end{align}
because we assume that $\gamma_2<\frac{S_1}{S_1+S_2}\frac{Sf_2^\prime(S)}{f_2(S)}$.

Again, by the market clearing condition $k^n_1(\frac{R}{A_1})+k^b_2(\frac{R}{\gamma_2A_2},S_2)=S$, we have 
\begin{align}
(k^n_1)^{\prime}\big(\frac{R}{A_1}\big)\frac{R'(A_1)A_1-R}{A_1^2}+\frac{\partial k^b_2}{\partial x_1}\big(\frac{R}{\gamma_2A_2},S_2\big)\frac{R'(A_1)}{(\gamma_2A_2)^2}=0\notag\\
\Leftrightarrow R'(A_1)\Big(\frac{1}{A_1}(k^n_1)^{\prime}\big(\frac{R}{A_1}\big)+\frac{1}{(\gamma_2A_2)^2}\frac{\partial k^b_2}{\partial x_1}\big(\frac{R}{\gamma_2A_2},S_2\big)\Big)=(k^n_1)^{\prime}\big(\frac{R}{A_1}\big)\frac{R}{A_1^2}\notag\\
\Leftrightarrow R'(A_1)A_1\Big(\frac{1}{R}+\frac{A_1}{(\gamma_2A_2)^2R}\frac{\frac{\partial k^b_2}{\partial x_1}\big(\frac{R}{\gamma_2A_2},S_2\big)}{(k^n_1)^{\prime}\big(\frac{R}{A_1}\big)}\Big)=1\label{Rprimea2}
\end{align}
First, since $\frac{\partial k^b_2}{\partial x_1}<0$ and $(k^n_1)^{\prime}\big(\frac{R}{A_1}\big)<0$, we have $R'(A_1)>0$.


Recall that $f_1'(k^n_1(x))=x$. So, we have $(k^n_1)'(x)f_1^{\prime \prime}(k^n_1(x))=1$, and hence, 
$$\lim_{A_1\to 0}\frac{R}{A_1}(k^n_1)'(\frac{R}{A_1})=\lim_{A_1\to 0}\frac{\frac{R}{A_1}}{f_1^{\prime \prime}(\frac{R}{A_1})}=\lim_{x\to +\infty}\frac{x}{f_1^{\prime \prime}(x)}< 0.$$
By combining this with (\ref{Rprimea2}) and $\lim_{A_1\to 0}R=R(0)>0$, we get that
\begin{align}\label{iii}
\lim_{A_1\to 0}R'(A_1)=+\infty.
\end{align} It is easy to see that, when $A_1$ is small enough, the  $\frac{\partial k_2}{\partial R}=\frac{\partial k_2^b}{\partial R}\big(\frac{R}{\gamma_2A_2},S_2\big)$. Thus, 
 \begin{align}\label{iv}
\lim_{A_1\to 0}\frac{\partial k_2}{\partial R}=\lim_{R\to R(0)}\frac{\partial k_2^b}{\partial R}\big(\frac{R}{\gamma_2A_2},S_2\big)<0.
\end{align}
By combining (\ref{i}), (\ref{ii}), (\ref{iii}), (\ref{iv}) and (\ref{ya1}), we conclude that $\frac{\partial Y}{\partial A_1}<0$ for any $A_1>0$ small enough.

We now consider the Cobb-Douglas production functions.  In such a case, condition $k^n_2(\frac{R_2}{A_2})<S$ becomes $\gamma_2< \alpha \frac{S_1}{S_1+S_2}$.  For the sake of simplicity, we write $k^n_2$ instead of $k^n_2(\frac{R_2}{A_2})$. Recall that $R_2=A_2f_2'(k^n_2)=A_2\alpha (k^n_2)^{\alpha-1}$. Hence, \begin{align*}
&\big(k^n_2-S_2\big)R_2=\gamma_2A_2f_2\big(k^n_2\big)\Leftrightarrow (k^n_2-S_2)R_2=\gamma_2A_2(k^n_2)^{\alpha} \\
&\Leftrightarrow (k^n_2-S_2)A_2\alpha (k^n_2)^{\alpha-1}=\gamma_2A_2(k^n_2)^{\alpha}\\
&\Leftrightarrow(\alpha-\gamma_2)A_2(k^n_2)^{\alpha}=S_2A_2\alpha (k^n_2)^{\alpha-1} \Leftrightarrow (\alpha-\gamma_2)k^n_2=\alpha S_2.
\end{align*}
Therefore, condition $S_1+S_2> k^n_2(R_2/A_2)$ becomes $(S_1+S_2)(\alpha-\gamma_2)> \alpha S_2$, or, equivalently,
$\alpha\frac{S_1}{S_1+S_2}> \gamma_2.$
\end{proof}

\begin{proof}[{\bf Proof of Proposition \ref{tfp-rn}}]
\label{linear-ww-proof}
We make use of Theorem \ref{cate-r}. We firstly consider the regime $\mathcal{R}_{n}$ with $n\leq m-1$. In this regime, we have
\begin{align}Y=Y_n=\sum_{i=n+1}^m\frac{rA_iS_i}{r-f_iA_i}\leq \sum_{i=n+1}^mA_m\frac{rS_i}{r-f_iA_i}=A_mS.
\end{align} 
Notice that $Y=A_mS$ if and only if $n+1=m$.

We now consider the regime $\mathcal{A}_{n}$ with $n\leq m$.  In this regime, we have
\begin{align*}Y&=A_n\sum_{i=1}^nS_i+\sum_{i=n+1}^m\frac{A_n(1-f_i)A_iS_i}{A_n-f_iA_i}=A_n\sum_{i=1}^mS_i+A_n\sum_{i=n+1}^m\frac{A_n(A_i-A_n)}{A_n-f_iA_i}\\
&\leq A_nS+(A_m-A_n)\sum_{i=n+1}^m\frac{A_n}{A_n-f_iA_i}\leq A_nS+(A_m-A_n)S=A_mS.
\end{align*} 
where the last inequality is from the condition $\sum_{i=n+1}^m\frac{A_nS_i}{A_n-f_iA_i}$ in the regime $\mathcal{A}_{n}$.

It is easy to see that $Y=A_mS$ if and only if either (i) $n+1>m$ or (ii) $n+1=m$ and $\frac{A_{m-1}}{A_{m-1}-f_mA_m}S_m=S$. Combining these two cases, we obtain point 1 of our result.

\end{proof}

\section{Proofs for Section \ref{effect-fi}}\label{effect-fi-proof}

\begin{proof}[{\bf Proof of Proposition \ref{yn-fi-co}}\label{yn-fi-co-proof}]Under assumptions in Proposition \ref{yn-fi-co}, we can prove that the equilibrium interest rate is in $(A_{n-1},A_{n})$ if and only if 
\begin{align}
\sum_{i=n}^m\dfrac{A_{n}S_i}{A_{n}- \gamma_iA_i}<\sum_iS_i<\sum_{i=n}^m\dfrac{A_{n-1}S_i}{A_{n-1}- \gamma_iA_i}
\end{align} 
Then, when $R\in (A_{n-1},A_{n})$, it is determined by \begin{align}\label{Rn} 
\underbrace{\sum_{i=n}^m\dfrac{\gamma_iA_i}{R- \gamma_iA_i}S_i}_{\text{Asset demand}}=\underbrace{\sum_{i=1}^{n-1}S_i}_{\text{Asset supply}}  \text{ or equivalently }
 \underbrace{\sum_{i=n}^m\dfrac{RS_i}{R- \gamma_iA_i}}_{\text{Capital demand}}=\underbrace{S}_{\text{Capital supply}}
\end{align} 
Agents $1,\ldots,n-1$ are lenders while agents $n,\ldots,m$ are borrowers.
 It is easy to see that  $\frac{\partial Y}{\partial \gamma_i}=0$, $\forall i\leq n-1$.  For $i\geq n$, by using condition $\sum_{i=n}^m\dfrac{RS_i}{R- \gamma_iA_i}=\sum_iS$, we get that 
\begin{align}
\frac{\partial R}{\partial \gamma_j}&=\frac{\frac{RA_jS_j}{(R-\gamma_jA_j)^2}}{\Big(\sum_{i=n}^{m}\frac{\gamma_iA_iS_i}{(R-\gamma_iA_i)^2}\Big)}>0, \text{ and notice that }
\sum_j\frac{\partial R}{\partial \gamma_j}\frac{\gamma_j}{R}=1
\end{align}
Then, we can compute that
\begin{align*}
\frac{\partial Y}{\partial \gamma_i}
&=\sum_{j=n}^mA_jS_j \frac{\partial \Big(\dfrac{R}{R-\gamma_jA_j}\Big) }{\partial \gamma_i}=\sum_{j=n}^mA_jS_j \frac{-\gamma_jA_j }{(R-\gamma_jA_j)^2}\frac{\partial R}{\partial \gamma_i}+\frac{RS_iA_i^2}{(R-\gamma_iA_i)^2}\\
&=\sum_{j=n}^mA_jS_j \frac{-\gamma_jA_j }{(R-\gamma_jA_j)^2}\frac{\frac{RA_iS_i}{(R-\gamma_iA_i)^2}}{\Big(\sum_{j=n}^{m}\frac{\gamma_jA_jS_j}{(R-\gamma_jA_j)^2}\Big)}+\frac{RS_iA_i^2}{(R-\gamma_iA_i)^2}\\
&=\frac{\partial R}{\partial \gamma_i}\Big( A_i\sum_{j=n}^{m}\frac{\gamma_jA_jS_j}{(R-\gamma_jA_j)^2} -\sum_{j=n}^m \frac{\gamma_jS_jA_j^2 }{(R-\gamma_jA_j)^2}\Big).
\end{align*}

The first point is a direct consequence of this expression and the fact that $A_m>\cdots>A_{n+1}$. Let us prove the second point. We have, by noticing that $R\in (A_{n-1},A_{n})$ and $A_{t+1}>A_t$, $\forall t$,
\begin{align}
&A_i\sum_{t=n}^{m}\frac{\gamma_{t}A_tS_t}{(R-\gamma_{t}A_t)^2} -\sum_{t=n}^m \frac{\gamma_{t}S_tA_t^2 }{(R-\gamma_{t}A_t)^2}\\
&=\sum_{t=n}^{i-1}\frac{\gamma_{t}A_tS_t}{(R-\gamma_{t}A_t)^2} (A_i-A_t)-\sum_{t=i+1}^{m}\frac{\gamma_{t}A_tS_t}{(R-\gamma_{t}A_t)^2} (A_t-A_i)\\
&\geq \sum_{t=n}^{i-1}\frac{\gamma_{t}A_tS_t}{(A_{n}-\gamma_{t}A_t)^2} (A_i-A_{i-1})-\sum_{t=i+1}^{m}\frac{\gamma_{t}A_tS_t}{(A_{n-1}-\gamma_{t}A_t)^2} (A_m-A_i).
\end{align}
Combining this with the expression of $\frac{\partial Y}{\partial \gamma_i}$, we obtain (\ref{ynfi1}). 

We also have 
\begin{align}
&A_i\sum_{t=n}^{m}\frac{\gamma_{t}A_tS_t}{(R-\gamma_{t}A_t)^2} -\sum_{t=n}^m \frac{\gamma_{t}S_tA_t^2 }{(R-\gamma_{t}A_t)^2}\\
&=\sum_{t=n}^{i-1}\frac{\gamma_{t}A_tS_t}{(R-\gamma_{t}A_t)^2} (A_i-A_t)-\sum_{t=i+1}^{m}\frac{\gamma_{t}A_tS_t}{(R-\gamma_{t}A_t)^2} (A_t-A_i)\\
&< \sum_{t=n}^{i-1}\frac{\gamma_{t}A_tS_t}{(A_{n-1}-\gamma_{t}A_t)^2} (A_i-A_{n})-\sum_{t=i+1}^{m}\frac{\gamma_{t}A_tS_t}{(A_{n}-\gamma_{t}A_t)^2} (A_{i+1}-A_i).
\end{align}
Combining this with the expression of $\frac{\partial Y}{\partial \gamma_i}$, we obtain (\ref{ynfi2}). 
\end{proof}

\begin{proof}[{\bf Proof of Example \ref{ex-f2-2}}]
\label{3agents}
We focus here on the case $max(\gamma_2A_2,\gamma_3A_3)<A_1$ (in this case the interest rate $R$ may take any value in $[A_1,A_m]$). Applying Theorem \ref{cate-r}, we can check that the interest rate is uniquely determined by
\begin{align}
R=\begin{cases}A_1 \text{ if }  S_1 \geq \frac{\gamma_3 A_3}{A_1 - \gamma_3 A_3}S_3 + \frac{\gamma_2A_2}{A_1- \gamma_2A_2}S_2  \\
R_1  \text{ if } \frac{\gamma_3 A_3}{A_2 - \gamma_3 A_3}S_3 + \frac{\gamma_2}{1- \gamma_2}S_2 < S_1 < \frac{\gamma_3 A_3}{A_1 - \gamma_3 A_3}S_3 + \frac{\gamma_2A_2}{A_1- \gamma_2A_2}S_2  \\
A_2  \text{ if } \frac{\gamma_3 A_3}{A_2 - \gamma_3 A_3}S_3-S_2\leq S_1 \leq \frac{\gamma_3 A_3}{A_2 - \gamma_3 A_3}S_3 + \frac{\gamma_2}{1- \gamma_2}S_2 \\
R_2  \text{ if }  \frac{\gamma_3}{1-\gamma_3}S_3 - S_2 < S_1 < \frac{\gamma_3 A_3}{A_2 - \gamma_3 A_3} - S_2\\
A_3  \text{ if } S_1 \leq \frac{\gamma_3}{1-\gamma_3}S_3 - S_2
\end{cases}
\end{align}
where $R_2 = \gamma_3 A_3 \Big( 1 + \frac{S_3}{S_1 + S_2}\Big)$ and $R_1$ is the highest solution of the equation: 
\begin{align}
\label{r3} \dfrac{\gamma_2 A_2}{R- \gamma_2 A_2} S_2 + \dfrac{\gamma_3A_3}{R-\gamma_3A_3}S_3 = S_1.
\end{align}
This equation implies that  
$R(S_2(R-\gamma_3A_3)+S_3(R-\gamma_2A_2))=S(R-\gamma_2A_2)(R-\gamma_3A_3)$, or equivalently
\begin{subequations}
\begin{align}
&S_1R^2-R\Big((S_1+S_2)\gamma_2A_2+(S_1+S_3)\gamma_3A_3\Big)+S\gamma_2A_2\gamma_3A_3=0.
\end{align}
So, the rate $R_1$ is computed by
\begin{align}
R&=\frac{(S_1+S_2)\gamma_2A_2+(S_1+S_3)\gamma_3A_3+\sqrt{\Delta}}{2S_1}\\
\text{where }\Delta&\equiv \big((S_1+S_2)\gamma_2A_2+(S_1+S_3)\gamma_3A_3\big)^2-4S_1S\gamma_2A_2\gamma_3A_3
\end{align}
\end{subequations}
There are 5 different cases. In each case, we can explicitly compute equilibrium outcomes thanks to Lemma \ref{lem1}. 

\end{proof}

\begin{proof}[{\bf Proof of Corollary  \ref{hete-f} (homogeneous credit limit)}]
\label{hete-f-concave-proof}

Since $F_i'(k_t)\geq R$, there are two cases. (1) If $F_i'(k_i)=R$, then we have hence $\frac{\partial k_i}{\partial \gamma}<0$.  (2) If $F_i'(k_i)>R$, then  borrowing constraint of this agent is binding. 

The market clearing condition $\sum_ik_i=\sum_iS_i$ implies that
$$\sum_{i: F_i'(k_i)=R}\frac{\partial k_i}{\partial \gamma}+\sum_{i: F_i'(k_i)>R}\frac{\partial k_i}{\partial \gamma}=0.$$ 

So, we have $\sum_{i: F_i'(k_i)>R}\frac{\partial k_i}{\partial \gamma}>0$.

We now claim that $\frac{\partial k_i}{\partial \gamma}>0$  for any agent with $F_i'(k_i)>R$. For such agents we have $\gamma F_i(k_i^n)-R(k_i^n-S_i)$. Taking the derivative with respect to $\gamma$ of both sides of this equation, we have
\begin{align}
F_i(k_i)+\gamma F_i'(k_i)\frac{\partial k_i}{\partial \gamma}=\frac{\partial R}{\partial \gamma}(k_i-S_i)+R\frac{\partial k_i}{\partial \gamma}\\
\label{trick}i.e., \quad \frac{\partial k_i}{\partial \gamma}=\Big(\frac{\partial R}{\partial \gamma}\frac{\gamma }{R}-1\Big)\frac{F_i(k_i)}{R-\gamma F_i'(k_i)}.
\end{align}
By summing with respect to $i$ such that $F_i'(k_i)>R$ and noticing that  $\sum_{i: F_i'(k_i)>R}\frac{\partial k_i}{\partial \gamma}>0$ and $R-\gamma F_i'(k_i)>0$ $\forall i$, we get that $\frac{\partial R}{\partial \gamma}\frac{\gamma }{R}-1\geq 0$. From this and (\ref{trick}), we obtain $\frac{\partial k_i}{\partial \gamma} >0$ $\forall i$ such that $F_i'(k_i)>R$.

We now observe that 
\begin{align}\frac{\partial Y}{\partial \gamma}
&=\sum_{i: F_i'(k_i)=R}
F_j'(k_i)\frac{\partial k_i}{\partial \gamma}+\sum_{i: F_i'(k_i)>R}F_i'(k_i)\frac{\partial k_i}{\partial \gamma}\geq R\Big(\sum_{i=1}^m \frac{\partial k_i}{\partial \gamma}\Big)=0.
\end{align}
\end{proof}
\section{Proofs of Section \ref{extensions}}
\label{extensions-proof}

 \setcounter{equation}{0} 
\renewcommand{\theequation}{B.\arabic{equation}}

Firstly, we provide a sufficient condition to check whether a sequence of prices and allocations is an intertemporal equilibrium. 

\begin{lemma}
\label{ns}

If the sequences $(R_t, (c_{i,t},k_{i,t},b_{i,t})_i)_{t}$    and $(\lambda_{i,t},\mu_{i,t},\eta_{i,t})_{i,t}$ satisfy  the following conditions:
\begin{enumerate}
\item $c_{i,t}, l_{i,t}, \lambda_{i,t}, \eta_{i,t}, \mu_{i,t+1}$ are non-negative and $R_t>0$ for any $t$.
\item  $c_{i,t}+k_{i,t}+R_tb_{i,t-1}= F_{i,t}(k_{i,t-1}) + b_{i,t},$ and $ R_{t+1}b_{i,t}- \gamma_iF_{i,t}(k_{i,t})= 0$, $\forall i, \forall t$.
\item $\sum_ib_{i,t}=0$, $\forall t$.
\item $\summ_{t=0}^{\infty}\lambda_{i,t}c_{i,t}<\infty, \summ_{t=0}^{\infty}\beta_i^tu_i(c_{i,t})<\infty$. 
\item TVCs: $\lim_{T\to \infty}\beta_i^tu_i'(c_{i,t})(k_{i,t}-b_{i,t})=0$, $\forall i.$
\item FOCs:  $\forall i, \forall t,$
\begin{align*}
\beta_i^tu_i'(c_{i,t})&=\lambda_{i,t}\\
\lambda_{i,t}&=\lambda_{i,t+1} F_{i,t+1}'(k_{i,t})+\mu_{i,t+1}\gamma_iF_{i,t+1}'(k_{i,t})+\eta_{i,t}, \quad \eta_{i,t}k_{i,t}=0\\
\lambda_{i,t}&=R_{t+1}\lambda_{i,t+1}+\mu_{i,t+1}R_{t+1}, \quad \mu_{i,t+1}\big(R_{t+1}b_{i,t}- \gamma_iF_{i,t}(k_{i,t})\big)=0,
\end{align*}
%
\end{enumerate}
then the list $(R_t, (c_{i,t},k_{i,t},b_{i,t})_i)$  is an intertemporal equilibrium.

\end{lemma}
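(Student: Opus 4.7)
The plan is to verify the two defining conditions of an intertemporal equilibrium: market clearing (which is hypothesis~3 and so requires nothing further) and individual optimality of $(c_{i,t},k_{i,t},b_{i,t})_{t}$ given the interest rates $(R_t)$. Optimality is a concave-programming question, so the strategy is the familiar Lagrangian/perturbation argument: use concavity of $u_i$ and $F_{i,t}$ together with the stated first-order conditions and complementary slackness to show that any alternative feasible plan yields weakly lower utility.

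First I would fix agent $i$ and let $(c'_{i,t},k'_{i,t},b'_{i,t})$ be any feasible alternative. Strict monotonicity of $u_i$ lets me assume the budget constraint binds for the alternative, so $c'_{i,t}=F_{i,t}(k'_{i,t-1})+b'_{i,t}-k'_{i,t}-R_tb'_{i,t-1}$. Concavity of $u_i$ and the FOC $\lambda_{i,t}=\beta_i^t u_i'(c_{i,t})$ give $\beta_i^t[u_i(c'_{i,t})-u_i(c_{i,t})]\le \lambda_{i,t}(c'_{i,t}-c_{i,t})$. Substituting both budget identities and bounding $F_{i,t}(k'_{i,t-1})-F_{i,t}(k_{i,t-1})\le F_{i,t}'(k_{i,t-1})(k'_{i,t-1}-k_{i,t-1})$ by concavity, then summing from $t=0$ to $T$ and shifting indices (using the initial conditions $k_{i,-1}=k'_{i,-1}$ and $b_{i,-1}=b'_{i,-1}=0$), I obtain a telescoped expression whose interior coefficients of $(k'_{i,t}-k_{i,t})$ and $(b'_{i,t}-b_{i,t})$ are exactly $\lambda_{i,t}-\lambda_{i,t+1}F_{i,t+1}'(k_{i,t})$ and $\lambda_{i,t}-\lambda_{i,t+1}R_{t+1}$, plus a terminal boundary term at $T$.

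Substituting the two FOCs collapses the interior sum into
\begin{align*}
\sum_{t=0}^{T-1}\Bigl\{\mu_{i,t+1}\bigl[R_{t+1}(b'_{i,t}-b_{i,t})-\gamma_i F_{i,t+1}'(k_{i,t})(k'_{i,t}-k_{i,t})\bigr]-\eta_{i,t}(k'_{i,t}-k_{i,t})\Bigr\}.
\end{align*}
Each summand is non-positive: the $\eta$-term is handled by $\eta_{i,t}\ge 0$, $k'_{i,t}\ge 0$, and $\eta_{i,t}k_{i,t}=0$; the $\mu$-term uses concavity of $F_{i,t+1}$ to write $\gamma_iF_{i,t+1}'(k_{i,t})(k'_{i,t}-k_{i,t})\ge \gamma_i[F_{i,t+1}(k'_{i,t})-F_{i,t+1}(k_{i,t})]$, then combines this with the binding credit identity $R_{t+1}b_{i,t}=\gamma_iF_{i,t+1}(k_{i,t})$ (whenever $\mu_{i,t+1}>0$) and the feasibility inequality $R_{t+1}b'_{i,t}\le \gamma_iF_{i,t+1}(k'_{i,t})$ to bound the bracket above by zero.

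The main obstacle is the terminal boundary $\lambda_{i,T}\bigl[(b'_{i,T}-k'_{i,T})-(b_{i,T}-k_{i,T})\bigr]$. Hypothesis~5 gives $\lambda_{i,T}(k_{i,T}-b_{i,T})\to 0$. For the alternative plan I would use feasibility beyond $T$: iterating the budget identity forward, invoking $c'_{i,s}\ge 0$ and the credit constraint $R_{s+1}b'_{i,s}\le \gamma_iF_{i,s+1}(k'_{i,s})$ together with $\gamma_i<1$, one bounds $\lambda_{i,T}(k'_{i,T}-b'_{i,T})$ below by a quantity that vanishes as $T\to\infty$. Hypothesis~4 guarantees that the utility and price sums converge, so passing to the limit gives $\sum_t\beta_i^t[u_i(c'_{i,t})-u_i(c_{i,t})]\le 0$, which yields optimality of agent $i$'s plan and completes the verification that the candidate list is an intertemporal equilibrium.
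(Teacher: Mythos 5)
Your overall architecture is the paper's: market clearing is hypothesis 3, and optimality is verified by a concave-programming argument (concavity of $u_i$ and $F_{i,t}$, telescoping with the FOCs, complementary slackness to make the interior $\mu$- and $\eta$-terms nonpositive). That part of your sketch is sound and matches the paper step for step. The gap is in your treatment of the terminal boundary term. You split it as $\lambda_{i,T}(k'_{i,T}-b'_{i,T})-\lambda_{i,T}(k_{i,T}-b_{i,T})$, dispose of the second piece by hypothesis 5, and then claim that $\lambda_{i,T}(k'_{i,T}-b'_{i,T})$ \emph{on its own} can be bounded below by a vanishing quantity by iterating the budget identity of the alternative plan forward. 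This does not go through: hypotheses 4 and 5 control only the equilibrium plan, and nothing in the assumptions bounds $\lambda_{i,s}c'_{i,s}$, $\lambda_{i,s}k'_{i,s}$ or $\lambda_{i,s}b'_{i,s}$ for an arbitrary feasible deviation. The credit constraint $R_{T+1}b'_{i,T}\le\gamma_iF_{i,T+1}(k'_{i,T})$ does not even force $b'_{i,T}\le k'_{i,T}$ (the average product $F_{i,T+1}(k')/k'$ can exceed $R_{T+1}/\gamma_i$ when $k'$ is small and $F$ satisfies an Inada-type condition), so the isolated term has no sign and no reason to vanish along an arbitrary deviation.

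The way to close this — and what the paper actually does — is to keep the two plans together and push the boundary term one step forward through the FOCs. Substituting $\lambda_{i,T}=R_{T+1}(\lambda_{i,T+1}+\mu_{i,T+1})$ and $\lambda_{i,T}=\lambda_{i,T+1}F'_{i,T+1}(k_{i,T})+\mu_{i,T+1}\gamma_iF'_{i,T+1}(k_{i,T})+\eta_{i,T}$, discarding the nonnegative $\mu$- and $\eta$-contributions, and applying concavity of $F_{i,T+1}$ gives
\begin{align*}
\lambda_{i,T}\big[(k'_{i,T}-b'_{i,T})-(k_{i,T}-b_{i,T})\big]\ \geq\ \lambda_{i,T+1}\big(R_{T+1}b_{i,T}-F_{i,T+1}(k_{i,T})\big)+\lambda_{i,T+1}\big(F_{i,T+1}(k'_{i,T})-R_{T+1}b'_{i,T}\big).
\end{align*}
The second summand is nonnegative because $R_{T+1}b'_{i,T}\le\gamma_iF_{i,T+1}(k'_{i,T})\le F_{i,T+1}(k'_{i,T})$; the first equals $-\lambda_{i,T+1}(c_{i,T+1}+k_{i,T+1}-b_{i,T+1})$ by the equilibrium budget identity and tends to $0$ by hypotheses 4 and 5. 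This yields $\liminf_{T}\lambda_{i,T}\big[(k'_{i,T}-b'_{i,T})-(k_{i,T}-b_{i,T})\big]\ge 0$ directly, with no separate estimate on the primed plan. The cancellation between the two plans — in particular the use of the equilibrium complementary-slackness identity $\mu_{i,T+1}(R_{T+1}b_{i,T}-\gamma_iF_{i,T+1}(k_{i,T}))=0$ alongside the feasibility inequality of the deviation — is essential and is lost once you separate them.
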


\begin{proof}[{\bf Proof of  Lemma \ref{ns}}] 

Before presenting our proof, we should notice that this result requires neither $u_i(0)=0$ nor $u_i'(0)=\infty$.  Let us now prove our result. It is sufficient to prove the optimality of $(c_i,k_i,b_i)$ for all $i$.  Let $(c_i',k_i',b_i')$ be a plan satisfying all budget and borrowing constraints and $b_{i,-1}'-b_{i,-1}=0=k'_{i,-1}-k_{i,-1}$. We have
$
\summ_{t=0}^T \beta_i^t( u_i(c_{i,t})-u_i(c'_{i,t}))\geq  \summ_{t=0}^T \beta_i^t u_i'(c_{i,t}) (c_{i,t}-c'_{i,t})=\summ_{t=0}^T \lambda_{i,t} (c_{i,t}-c'_{i,t}).$

Budget constraints imply that $c_{i,t}=F_{i,t}(k_{i,t-1}) + b_{i,t} - k_{i,t} - R_tb_{i,t-1}$ and $ c'_{i,t}\leq F_{i,t}(k'_{i,t-1}) + b'_{i,t} - k'_{i,t} - R_tb'_{i,t-1}$, and hence,
\begin{align*}
\lambda_{i,t} (c_{i,t}-c'_{i,t})\geq &\lambda_{i,t} (F_{i,t}(k_{i,t-1}) + b_{i,t} - k_{i,t} - R_tb_{i,t-1}-F_{i,t}(k'_{i,t-1}) - b'_{i,t} + k'_{i,t} + R_tb'_{i,t-1})\\
=& \lambda_{i,t} \big(F_{i,t}(k_{i,t-1})-F_{i,t}(k'_{i,t-1})\big)-\lambda_{i,t}(k_{i,t}-k'_{i,t})+\lambda_{i,t}(b_{i,t}-b'_{i,t})-\lambda_{i,t}R_t(b_{i,t-1}-b'_{i,t-1}).
\end{align*}
According to FOCs, we have 
\begin{align*}
\lambda_{i,t}k'_{i,t}&=\lambda_{i,t+1} F_{i,t+1}'(k_{i,t})k'_{i,t}+\gamma_i\mu_{i,t+1}F_{i,t+1}'(k_{i,t})k'_{i,t}+\eta_{i,t}k'_{i,t}\\
\lambda_{i,t}b'_{i,t}&=R_{t+1}\lambda_{i,t+1}b'_{i,t}+R_{t+1}\mu_{i,t+1}b'_{i,t}
\end{align*}
This implies that 
\begin{align}
\label{add1}
\lambda_{i,t}(k_{i,t}-k'_{i,t})&=\lambda_{i,t+1} F_{i,t+1}'(k_{i,t})(k_{i,t}-k'_{i,t})+\gamma_i\mu_{i,t+1}F_{i,t+1}'(k_{i,t})(k_{i,t}-k'_{i,t})+\eta_{i,t}(k_{i,t}-k'_{i,t})\\
\label{add2}\lambda_{i,t}(b_{i,t}-b'_{i,t})&=R_{t+1}\lambda_{i,t+1}(b_{i,t}-b'_{i,t})+R_{t+1}\mu_{i,t+1}(b_{i,t}-b'_{i,t})
\end{align}
Therefore, we have that 
\begin{align*}
&\sum_{t=0}^T\lambda_{i,t} (c_{i,t}-c'_{i,t})\geq \sum_{t=0}^T\Big(\lambda_{i,t} \big(F_{i,t}(k_{i,t-1})-F_{i,t}(k'_{i,t-1})\big)-\lambda_{i,t}(k_{i,t}-k'_{i,t})\Big)\\
&\quad\quad\quad\quad\quad\quad +\sum_{t=0}^T\Big(\lambda_{i,t}(b_{i,t}-b'_{i,t})-\lambda_{i,t}R_t(b_{i,t-1}-b'_{i,t-1})\Big)\\
\geq&\sum_{t=0}^{T-1}\big(\lambda_{i,t+1} F'_{i,t+1}(k_{i,t})-\lambda_{i,t}\big)(k_{i,t}-k'_{i,t})-\lambda_{i,T}(k_{i,T}-k'_{i,T})\\
&+\sum_{t=0}^{T-1}\big(\lambda_{i,t}-\lambda_{i,t+1}R_{t+1}\big)(b_{i,t}-b'_{i,t})+\lambda_{i,T}(b_{i,T}-b'_{i,T})\\
=&\lambda_{i,T}\big(k'_{i,T}-b'_{i,T}-(k_{i,T}-b_{i,T})\big)+\sum_{t=0}^{T-1}\eta_{i,t}(k'_{i,t}-k_{i,t})\\
&
+\sum_{t=0}^{T-1}\mu_{i,t+1}\Big(-\gamma_iF_{i,t+1}'(k_{i,t})(k_{i,t}-k'_{i,t})+R_{t+1}(b_{i,t}-b'_{i,t})\Big)\\
\end{align*}
We consider $\mu_{i,t+1}\Big(-\gamma_iF_{i,t+1}'(k_{i,t})(k_{i,t}-k'_{i,t})+R_{t+1}(b_{i,t}-b'_{i,t})\Big)$.

\begin{align}
&\mu_{i,t+1}\Big(-\gamma_iF_{i,t+1}'(k_{i,t})(k_{i,t}-k'_{i,t})+R_{t+1}(b_{i,t}-b'_{i,t})\Big) \notag\\
=&\mu_{i,t+1}(R_{t+1}b_{i,t}- \gamma_iF_{i,t}(k_{i,t})-(R_{t+1}b'_{i,t}- \gamma_iF_{i,t}(k'_{i,t}))) \notag\\
&+\mu_{i,t+1}\Big(-(R_{t+1}b_{i,t}- \gamma_iF_{i,t}(k_{i,t}))+(R_{t+1}b'_{i,t}- \gamma_iF_{i,t}(k'_{i,t}))\\
&\quad \quad \quad \quad-\gamma_iF_{i,t+1}'(k_{i,t})(k_{i,t}-k'_{i,t})+R_{t+1}(b_{i,t}-b'_{i,t})\Big) \notag\\
\geq &\mu_{i,t+1}(R_{t+1}b_{i,t}- \gamma_iF_{i,t}(k_{i,t})-(R_{t+1}b'_{i,t}- \gamma_iF_{i,t}(k'_{i,t})))\\
&=\mu_{i,t+1}(\gamma_iF_{i,t}(k'_{i,t})-R_{t+1}b'_{i,t})\geq 0.\label{add31}
\end{align}
It remains to prove that $\liminf_{T\rightarrow \infty}\lambda_{i,T}\big(k'_{i,T}-b'_{i,T}-(k_{i,T}-b_{i,T})\big)\geq 0$.

According to (\ref{add1}) and (\ref{add2}), we have 
\begin{align*}
&\lambda_{i,t}\big(k'_{i,t}-b'_{i,t}-(k_{i,t}-b_{i,t})\big)\\
=&R_{t+1}\lambda_{i,t+1}(b_{i,t}-b'_{i,t})+\mu_{i,t+1}R_{t+1}(b_{i,t}-b'_{i,t})\\
&-\Big(\lambda_{i,t+1} F_{i,t+1}'(k_{i,t})(k_{i,t}-k'_{i,t})+\gamma_i\mu_{i,t+1}F_{i,t+1}'(k_{i,t})(k_{i,t}-k'_{i,t})+\eta_{i,t}(k_{i,t}-k'_{i,t})\Big)\\
=&R_{t+1}\lambda_{i,t+1}(b_{i,t}-b'_{i,t})-\lambda_{i,t+1} F_{i,t+1}'(k_{i,t})(k_{i,t}-k'_{i,t}) +\eta_{i,t}(k_{i,t}-k'_{i,t})\\
&+\mu_{i,t+1}R_{t+1}(b_{i,t}-b'_{i,t})+
\mu_{i,t+1}\gamma_iF_{i,t+1}'(k_{i,t})(k_{i,t}-k'_{i,t})\\
\geq &R_{t+1}\lambda_{i,t+1}(b_{i,t}-b'_{i,t})-\lambda_{i,t+1} F_{i,t+1}'(k_{i,t})(k_{i,t}-k'_{i,t}).
\end{align*}
where we use (\ref{add31}) the fact that $\eta_{i,t}(k_{i,t}-k'_{i,t})=-\eta_{i,t}k'_{i,t}\leq 0$ for the last inequality. 

Since $F_{i,t+1}$ is concave, we have $F_{i,t+1}'(k_{i,t})(k_{i,t}-k'_{i,t})\leq F_{i,t+1}(k_{i,t})-F_{i,t+1}(k'_{i,t})$. So, we get that 
\begin{align*}
\lambda_{i,t}\big(k'_{i,t}-b'_{i,t}-(k_{i,t}-b_{i,t})\big)\geq &R_{t+1}\lambda_{i,t+1}(b_{i,t}-b'_{i,t})-\lambda_{i,t+1}\big(F_{i,t+1}(k_{i,t})-F_{i,t+1}(k'_{i,t})\big)\\
=&\lambda_{i,t+1}\big(R_{t+1}b_{i,t}-F_{i,t+1}(k_{i,t})\big)+\lambda_{i,t+1}\big(F_{i,t+1}(k'_{i,t})-R_{t+1}b'_{i,t}\big)
\end{align*}
We have $F_{i,t+1}(k'_{i,t})-R_{t+1}b'_{i,t}\geq 0$ because $\gamma_iF_{i,t+1}(k'_{i,t})-R_{t+1}b'_{i,t}\geq 0$.

The budget constraint at date $t$ implies that $\lambda_{i,t}(c_{i,t}+k_{i,t}-b_{i,t})=\lambda_{i,t}\big(F_{i,t}(k_{i,t-1}) -R_tb_{i,t-1} \big)$. Since $\lim_{t\to \infty}\lambda_{i,t}c_{i,t}=0=\lim_{t\to \infty}\lambda_{i,t}(k_{i,t}-b_{i,t})$, we get that  $\lim_{t\to \infty}\lambda_{i,t}\big(F_{i,t}(k_{i,t-1}) -R_tb_{i,t-1} \big)=0$. By consequence, we obtain that $\liminf_{T\rightarrow \infty}\lambda_{i,T}\big(k'_{i,T}-b'_{i,T}-(k_{i,T}-b_{i,T})\big)\geq 0$.
\end{proof}
\subsection{Proofs for Section \ref{sec41}}

\begin{proof}[{\bf Proof of Proposition \ref{ss}. Steady state analysis.}] 
Let us focus on an interior equilibrium (i.e., $k_{i,t}>0$, $\forall i,t$), we can write the FOCs
\begin{align*}\beta_i^tu_i'(c_{i,t})&=\lambda_{i,t}\\
\lambda_{i,t}&=F_{i,t}'(k_{i,t})(\lambda_{i,t+1}+\gamma_i\mu_{i,t+1})\\
\lambda_{i,t}&=R_{t+1}(\lambda_{i,t+1}+\mu_{i,t+1})\\
\mu_{i,t+1}(R_{t+1}b_{i,t}-\gamma_iF_i(k_{i,t}))&=0
\end{align*}
where $\mu_{i,t}\geq 0$ is the multiplier with respect to the constraint  $R_tb_{i,t-1}- \gamma_i F_{i,t}(k_{i,t-1})\leq 0$.

According to FOCs, we have that $1\geq R_{t+1}\max_i\frac{\beta_i u_i'(c_{i,t+1})}{u_i'(c_{i,t})}$, $\forall i$. Since $k_{i,t}>0$, $\forall i, \forall t$, there exists an agent, say agent $i$, whose borrowing constraint at date $t+1$ is not binding. It means that $\mu_{i,t+1}=0$. By consequence, we have $1= R_{t+1}\frac{\beta_i u_i'(c_{i,t+1})}{u_i'(c_{i,t})}=R_{t+1}\max_j\frac{\beta_j u_j'(c_{j,t+1})}{u_j'(c_{j,t})}$.  Therefore, we have $R=1/\max_i\{\beta_i\}$ at steady state.

The first-order conditions imply that $\lambda_{i,t}\frac{R_{t+1}-\gamma_iF_{i,t}'(k_{i,t})}{R_{t+1}}=F_{i,t}'(k_{i,t})\lambda_{i,t+1}(1-\gamma_i)$.  By consequence, we obtain point 2.

\end{proof}
 \begin{proof}[{\bf Proof of Lemma \ref{infinite-a11}}]
The maximization problem of agent $i$ is 
\begin{align*}
 &\ma_{(c_i,k_i,b_i)} \sum_{t=0}^{\infty}\beta_i^tu_i(c_{i,t})\\
\text{subject to: }& c_{i,t}+k_{i,t}+R_tb_{i,t-1} \leq A_{i,t}k_{i,t-1} + b_{i,t}\\
&  R_tb_{i,t-1}\leq \gamma_i A_{i,t}(k_{i,t-1})
\end{align*}
 Denote $s_{i,t}=k_{i,t}-b_{i,t}$ the net saving of agent $i$ at date $t$.

Let $R_t=A_{h,t}$, $\forall t$, for some agent $h$.

For agent $h$, we have $c_{h,t}+(k_{h,t}-b_{h,t}) \leq A_{h,t}(k_{h,t-1}-b_{h,t-1})$.   We can compute that
\begin{align*}
 s_{h,0}&=\beta_h w_{h,0}, \quad    s_{h,t}=\beta_h A_{h,t}s_{h,t-1} \text{ } \forall t\geq 1\\
   s_{h,t}&=\beta_h^t A_{h,t}\cdots A_{h,1}s_{i,0}
\end{align*}
For agent $i<h$, since $A_{i,t}<R_t=A_{h,t}$, $\forall t$, we have $k_{i,t}=0$ and hence we find that  
\begin{align*}
 s_{i,0}&=\beta_i w_{i,0}, \quad    s_{i,t}=\beta_i R_ts_{i,t-1} \text{ } \forall t\geq 1\\
   s_{i,t}&=\beta_i^t R_{t}\cdots R_{1}s_{i,0}.
\end{align*}
For agent $j>h$, since $A_{j,t}>R_t=A_{h,t}$, $\forall t$, her borrowing constraint is always binding: $R_tb_{j,t-1}= \gamma_j A_{j,t}k_{j,t-1}$. Therefore, we have $$s_{j,t}=k_{j,t}\Big(1-\frac{\gamma_jA_{j,t+1}}{R_{t+1}}\Big), \quad A_{j,t}k_{j,t-1} - R_tb_{j,t-1}=(1-\gamma_j)A_{j,t}k_{j,t-1}, \forall t\geq 1.$$ From this, we can compute that
\begin{align*}
 s_{j,t}&=\beta_j\frac{(1-\gamma_j)A_{j,t}R_t}{R_t-\gamma_jA_{j,t}}s_{j,t-1}=\Big(\beta_j\frac{(1-\gamma_j)A_{j,t}R_t}{R_t-\gamma_jA_{j,t}}\Big)\cdots \Big(\beta_j\frac{(1-\gamma_j)A_{j,1}R_1}{R_1-\gamma_jA_{j,1}}\Big) s_{j,0}\\
 k_{j,t}&=\frac{1}{1-\frac{\gamma_jA_{j,t+1}}{R_{t+1}}}s_{j,t}=\frac{R_{t+1}}{R_{t+1}-\gamma_jA_{j,t+1}}s_{j,t}\\
  b_{j,t}&= \frac{\gamma_jA_{j,t+1}}{R_{t+1}}k_{j,t}=\frac{\gamma_jA_{j,t+1}}{R_{t+1}-\gamma_jA_{j,t+1}}s_{j,t}
\end{align*}

Therefore, we can find the capital of the agent $h$
\begin{align*}
k_{h,t}=&s_{h,t}+b_{h,t}=s_{h,t}-\sum_{i<h}b_{i,t}-\sum_{j>h}b_{j,t}\\
=&\beta_h^t A_{h,t}\cdots A_{h,1}s_{h,0}+\sum_{i<h}\beta_i^t A_{h,t}\cdots A_{h,1}s_{i,0}\\
&-\sum_{j>h}\frac{\gamma_jA_{j,t+1}}{R_{t+1}-\gamma_jA_{j,t+1}}\Big(\beta_j\frac{(1-\gamma_j)A_{j,t}R_t}{R_t-\gamma_jA_{j,t}}\Big)\cdots \Big(\beta_j\frac{(1-\gamma_j)A_{j,1}R_1}{R_1-\gamma_jA_{j,1}}\Big) s_{j,0}
\end{align*}

In order to keep $k_{h,t}>0$, $\forall t$, we impose that
\begin{align*}
&\sum_{i\leq h}\beta_i^t
s_{i,0}-\sum_{j>h}\beta_j^t\frac{\gamma_jA_{j,t+1}}{A_{h,t+1}-\gamma_jA_{j,t+1}}\frac{(1-\gamma_j)A_{j,t}}{A_{h,t}-\gamma_jA_{j,t}} \cdots \frac{(1-\gamma_j)A_{j,1}}{A_{h,1}-\gamma_jA_{j,1}} s_{j,0}>0.
\end{align*}
which is actually condition (\ref{mh}).

The borrowing constraint of agent $h$ at date $t$ becomes $k_{h,t}\leq \frac{R_{t+1}s_{h,t}}{R_{t+1}-\gamma_hA_{h,t}}=\frac{s_{h,t}}{1-\gamma_h}$. This is equivalent to
\begin{align*}
&\sum_{i\leq h}\beta_i^t A_{h,t}\cdots A_{h,1}s_{i,0}-\sum_{j>h}\frac{\gamma_jA_{j,t+1}}{R_{t+1}-\gamma_jA_{j,t+1}}\Big(\beta_j\frac{(1-\gamma_j)A_{j,t}R_t}{R_t-\gamma_jA_{j,t}}\Big)\cdots \Big(\beta_j\frac{(1-\gamma_j)A_{j,1}R_1}{R_1-\gamma_jA_{j,1}}\Big) s_{j,0}\\
&\leq \beta_h^t A_{h,t}\cdots A_{h,1}s_{i,0}\frac{1}{1-\gamma_h}\\
&\Leftrightarrow \sum_{i\leq h}\beta_i^t s_{i,0}-\sum_{j>h}\frac{\gamma_jA_{j,t+1}}{R_{t+1}-\gamma_jA_{j,t+1}}\Big(\beta_j\frac{(1-\gamma_j)A_{j,t}}{R_t-\gamma_jA_{j,t}}\Big)\cdots \Big(\beta_j\frac{(1-\gamma_j)A_{j,1}}{R_1-\gamma_jA_{j,1}}\Big) s_{j,0}\leq \beta_h^t s_{h,0}\frac{1}{1-\gamma_h}.
\end{align*}

Under these conditions, by applying Lemma \ref{ns}, we can check that the above list $(R_t,(c_{i,t},k_{i,t},b_{i,t})_i)$ is an equilibrium.
 
  We now compute the aggregate production
\begin{align*}
Y_t&=A_{h,t}k_{h,t-1}+\sum_{j>h}A_{h,t}k_{j,t-1}\\
&=A_{h,t}\cdots A_{h,1}\Big(\beta_h^{t-1}s_{h,0}+\sum_{i<h}\beta_i^{t-1}
s_{i,0}-\sum_{j>h}\beta_j^{t-1}\frac{\gamma_jA_{j,t}}{A_{h,t}-\gamma_jA_{j,t}}\frac{(1-\gamma_j)A_{j,t-1}}{A_{h,t-1}-\gamma_jA_{j,t-1}} \cdots \frac{(1-\gamma_j)A_{j,1}}{A_{h,1}-\gamma_jA_{j,1}} s_{j,0}\Big)\\
&+\sum_{j>h}A_{j,t}\frac{R_{t}}{R_{t}-\gamma_jA_{j,t}}\Big(\beta_j\frac{(1-\gamma_j)A_{j,t-1}R_{t-1}}{R_{t-1}-\gamma_jA_{j,t-1}}\Big)\cdots \Big(\beta_j\frac{(1-\gamma_j)A_{j,1}R_1}{R_1-\gamma_jA_{j,1}}\Big) s_{j,0}\\
&=A_{h,t}\cdots A_{h,1}\sum_{i\leq h}\beta_i^{t-1}
s_{i,0}\\
&+A_{h,t}\cdots A_{h,1}\sum_{j>h}\beta_j^{t-1}(1-\gamma_j)^{t}\frac{A_{j,t}}{A_{h,t}-\gamma_jA_{j,t}}\frac{A_{j,t-1}}{A_{h,t-1}-\gamma_jA_{j,t-1}}\cdots \frac{A_{j,1}}{A_{h,1}-\gamma_jA_{j,1}} s_{j,0}.
 \end{align*}
 
\begin{enumerate}
\item When there are 2 agents and $h=2$, i.e., only the most productive agent produces, this condition is obviously satisfied.
\item 
When there are 2 agents and $h=1$. This condition becomes
\begin{align*}
\beta_1^t s_{1,0}-\beta_2^t\frac{\gamma_2A_{2,t+1}}{A_{1,t+1}-\gamma_2A_{2,t+1}}\frac{(1-\gamma_2)A_{2,t}}{A_{1,t}-\gamma_2A_{2,t}}\cdots \frac{(1-\gamma_2)A_{2,1}}{A_{1,1}-\gamma_2A_{2,1}} s_{2,0}>0
\end{align*}
or equivalently
\begin{align*}
 \frac{s_{2,0}}{s_{1,0}}\frac{\beta_1}{\beta_2}\frac{\gamma_2}{1-\gamma_2}\Big(\frac{\beta_2}{\beta_1}\frac{(1-\gamma_2)A_{2,1}}{A_{1,1}-\gamma_2A_{2,1}} \cdots \frac{\beta_2}{\beta_1}\frac{(1-\gamma_2)A_{2,t+1}}{A_{1,t+1}-\gamma_2A_{2,t+1}}\Big) <1 \forall t.
\end{align*}
This happens if  $\sup_t\dfrac{\beta_2}{\beta_1}\dfrac{1-\gamma_2}{\frac{A_{1,t}}{A_{2,t}}-\gamma_2}<1$ and $ \frac{s_{2,0}}{s_{1,0}}\frac{\beta_1}{\beta_2}\frac{\gamma_2}{1-\gamma_2}\leq 1$. 

\end{enumerate}
 \end{proof}

\begin{proof}[{\bf Proof of Proposition \ref{infinite-akm} (m agents)}]
 To investigate the properties of the output and the growth rate, we need an useful lemma whose proof is left for the reader.
  \begin{lemma}\label{useful1} Let $N\geq 1$ be an integer. For each integer $t\geq 1$, we denote  $X_t\equiv \sum_{i=1}^N\alpha_ia_i^t$, where $\alpha_i>0,a_i>0$ for any $i$. We have
 \begin{align}
 \lim_{t\to \infty}\frac{X_{t+1}}{X_t}=\max_{1\leq i\leq N}a_i
 \end{align}
 \end{lemma}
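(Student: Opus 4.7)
The plan is to isolate the dominant exponential by factoring out the largest base and then show that the remaining pieces become negligible in the ratio. Let $a^{*}\equiv \max_{1\leq i\leq N}a_{i}$, and relabel indices so that $a_{1}=a_{2}=\cdots=a_{k}=a^{*}$ while $a_{j}<a^{*}$ for $j>k$ (where $1\leq k\leq N$). My first step is to write
\[
X_{t}=(a^{*})^{t}\Bigl(\sum_{i=1}^{k}\alpha_{i}+\sum_{j>k}\alpha_{j}\bigl(a_{j}/a^{*}\bigr)^{t}\Bigr),
\]
which is valid since all $\alpha_{i},a_{i}>0$.

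Next I would form the ratio and cancel the $(a^{*})^{t}$ factor, yielding
\[
\frac{X_{t+1}}{X_{t}}=a^{*}\cdot\frac{\sum_{i=1}^{k}\alpha_{i}+\sum_{j>k}\alpha_{j}(a_{j}/a^{*})^{t+1}}{\sum_{i=1}^{k}\alpha_{i}+\sum_{j>k}\alpha_{j}(a_{j}/a^{*})^{t}}.
\]
Since $0<a_{j}/a^{*}<1$ for every $j>k$, each term $(a_{j}/a^{*})^{t}$ tends to $0$ as $t\to\infty$, and the finite sums over $j>k$ in both numerator and denominator therefore vanish in the limit. The constant term $\sum_{i=1}^{k}\alpha_{i}$ is strictly positive (at least one index achieves the maximum), so the denominator stays bounded away from zero. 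Passing to the limit gives $\lim_{t\to\infty}X_{t+1}/X_{t}=a^{*}\cdot 1=a^{*}$.

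The only subtle point — and the one to check explicitly — is that the denominator does not approach $0$; this is guaranteed precisely because $k\geq 1$ and $\alpha_{1}>0$. Everything else reduces to the elementary fact that $r^{t}\to 0$ when $0<r<1$, together with the boundedness of the finite number of decaying terms. No further ingredients are needed, and I do not anticipate any serious obstacle: the statement is a standard "dominant exponential" computation, and a clean one-page argument along the lines above should suffice.
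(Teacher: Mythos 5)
Your argument is correct and complete: factoring out $(a^*)^t$, noting that the finitely many terms $(a_j/a^*)^t$ with $a_j<a^*$ vanish, and observing that the denominator stays bounded below by $\sum_{i=1}^k\alpha_i>0$ is exactly the standard computation this lemma calls for. The paper itself omits the proof ("left for the reader"), and what you have written is the natural argument that fills that gap, with the one genuinely necessary check (the denominator not tending to zero) handled explicitly.
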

 According to Lemma \ref{infinite-a11}, we have that
 \begin{align*}
Y_t&=A_{h,t}\cdots A_{h,1}\sum_{i\leq h}\beta_i^{t-1}
s_{i,0}\\
&+A_{h,t}\cdots A_{h,1}\sum_{j>h}\beta_j^{t-1}(1-\gamma_j)^{t}\frac{A_{j,t}}{A_{h,t}-\gamma_jA_{j,t}}\frac{A_{j,t-1}}{A_{h,t-1}-\gamma_jA_{j,t-1}}\cdots \frac{A_{j,1}}{A_{h,1}-\gamma_jA_{j,1}} s_{j,0}.
 \end{align*}
 
 When $A_{i,t}=A_i$, $\forall t,\forall i,$, we have that
 \begin{align*}
Y_t&=A_h^t\sum_{i\leq h}\beta_i^{t-1}
s_{i,0}+\sum_{j>h}\beta_j^{t-1}(1-\gamma_j)^{t}A_{j}^t\Big(\frac{A_{h}}{A_{h}-\gamma_jA_{j}}\Big)^t s_{j,0}.
 \end{align*}

 \begin{enumerate}
 \item 
By consequence, we can compute that
  \begin{align*}
\frac{1}{tA_h^{t-1}} \frac{\partial Y_t}{\partial A_h}=\sum_{i\leq h}\beta_i^{t-1}
s_{i,0} -\sum_{j>h}(1-\gamma_j)\frac{\gamma_j A_{j}^2}{(A_h-\gamma_jA_j)^2}\Big(\frac{\beta_j(1-\gamma_j)A_j}{A_{h}-\gamma_jA_{j}}\Big)^{t-1} s_{j,0}.
 \end{align*}
 This implies (\ref{ytah}).
 
 Since $\max_{i\leq h}\beta_i>\max_{j>h}\frac{\beta_j(1-\gamma_j)A_{j}}{A_{h}-\gamma_jA_{j}}$, Lemma \ref{useful1} implies that 
 \begin{align*}\lim_{t\to \infty}\frac{\Big(
 \sum_{i\leq h}\beta_i^{t-1}
s_{i,0} -\sum_{j>h}(1-\gamma_j)\frac{\gamma_j A_{j}^2}{(A_h-\gamma_jA_j)^2}\Big(\frac{\beta_j(1-\gamma_j)A_j}{A_{h}-\gamma_jA_{j}}\Big)^{t-1} s_{j,0}\Big)}{\beta_h^t}=\sum_{i\leq h: \beta_i=\beta_h}s_{i,0}>0
 \end{align*}
 By consequence, there exists a date  $t_0$ such that $\frac{\partial Y_t}{\partial A_1}\geq 0 $, $\forall t> t_0$.
 
 \item  Since $\max_{i\leq h}\beta_i>\max_{j>h}\frac{\beta_j(1-\gamma_j)A_{j}}{A_{h}-\gamma_jA_{j}}$, Lemma \ref{useful1} directly implies that $G_{t+1}\equiv \frac{Y_{t+1}}{Y_t} $ converges to $A_h\max_{1\leq i\leq m}\beta_i$. 
 
 We now look at the formula of $G_{t+1}\equiv \frac{Y_{t+1}}{Y_t} $
 \begin{align*}
 G_{t+1}&\equiv \frac{Y_{t+1}}{Y_t} =\frac{A_h^{t+1}\sum_{i\leq h}\beta_i^{t}
s_{i,0}+\sum_{j>h}\beta_j^{t}(1-\gamma_j)^{t+1}A_{j}^{t+1}\Big(\frac{A_{h}}{A_{h}-\gamma_jA_{j}}\Big)^{t+1} s_{j,0}}{A_h^t\sum_{i\leq h}\beta_i^{t-1}
s_{i,0}+\sum_{j>h}\beta_j^{t-1}(1-\gamma_j)^{t}A_{j}^t\Big(\frac{A_{h}}{A_{h}-\gamma_jA_{j}}\Big)^t s_{j,0}}\\
&=A_h\frac{\sum_{i\leq h}\beta_i^{t}
s_{i,0}+\sum_{j>h}\Big(\frac{\beta_j(1-\gamma_j)A_j}{A_{h}-\gamma_jA_{j}}\Big)^{t+1} \frac{s_{j,0}}{\beta_j}}{\sum_{i\leq h}\beta_i^{t-1}
s_{i,0}+\sum_{j>h}\Big(\frac{\beta_jA_{j}(1-\gamma_j)}{A_{h}-\gamma_jA_{j}}\Big)^t \frac{s_{j,0}}{\beta_j}}.
 \end{align*}
 Let us denote $g(x_{h+1},\ldots,x_m)\equiv \frac{\sum_{i\leq h}\beta_i^{t}
s_{i,0}+\sum_{j>h}x_j^{t+1} \frac{s_{j,0}}{\beta_j}}{\sum_{i\leq h}\beta_i^{t-1}
s_{i,0}+\sum_{j>h}x_j^t \frac{s_{j,0}}{\beta_j}},$

  where $x_j\equiv \frac{\beta_jA_{j}(1-\gamma_j)}{A_{h}-\gamma_jA_{j}} $, for $j>h$.

  Denote $B_h\equiv \max_{j>h}x_j$. Recall that we assume that  $M<\beta_h<1$.
  
  For $d\in\{h+1,\ldots,m\}$, we compute that
  \begin{align*}
  \frac{\partial g}{\partial x_d}&=\frac{(t+1)x_d^{t} \frac{s_{d,0}}{\beta_d}\big(\sum_{i\leq h}\beta_i^{t-1}
s_{i,0}+\sum_{j>h}x_j^t \frac{s_{j,0}}{\beta_j}\big)-tx_d^{t-1}\frac{s_{d,0}}{\beta_d}\big(\sum_{i\leq h}\beta_i^{t}
s_{i,0}+\sum_{j>h}x_j^{t+1} \frac{s_{j,0}}{\beta_j}\big)}{\big(\sum_{i\leq h}\beta_i^{t-1}
s_{i,0}+\sum_{j>h}x_j^t \frac{s_{j,0}}{\beta_j}\big)^2}\\
&=A\Big((t+1)x_d\big(\sum_{i\leq h}\beta_i^{t-1}
s_{i,0}+\sum_{j>h}x_j^t \frac{s_{j,0}}{\beta_j}\big)-t\big(\sum_{i\leq h}\beta_i^{t}
s_{i,0}+\sum_{j>h}x_j^{t+1} \frac{s_{j,0}}{\beta_j}\big)\Big).
  \end{align*}
  where $A\equiv \frac{x_d^{t-1} \frac{s_{d,0}}{\beta_d}}{\big(\sum_{i\leq h}\beta_i^{t-1}
s_{i,0}+\sum_{j>h}x_j^t \frac{s_{j,0}}{\beta_j}\big)^2}$. 
 Applying Lemma \ref{useful1}, we have that
 \begin{align*}
 \lim_{t\to \infty}\frac{(t+1)x_d\big(\sum_{i\leq h}\beta_i^{t-1}
s_{i,0}+\sum_{j>h}x_j^t \frac{s_{j,0}}{\beta_j}\big)}{t\big(\sum_{i\leq h}\beta_i^{t}
s_{i,0}+\sum_{j>h}x_j^{t+1} \frac{s_{j,0}}{\beta_j}\big)\Big)}=\frac{x_d}{\beta_h}<1
 \end{align*}
 which implies that there exists a date $t_1$ such that $ \frac{\partial g}{\partial x_d}<0$ for any $t\geq t_1$. 
 Since  $x_j\equiv \frac{\beta_jA_{j}(1-\gamma_j)}{A_{h}-\gamma_jA_{j}} $ is increasing in $A_j$, we get our result.
\end{enumerate} 
   
 \end{proof}
\subsection{Proofs for Section \ref{sec42}}

\begin{proof}[{\bf Proof of Proposition \ref{credit-infinite-magents}}]
Let us focus on an equilibrium where only the most productive agent produces. The interest rate $R_{1}\in (A_{m-1},A_m)$ and $R_t=A_m$, $\forall t\geq 2$. 

  Denote the individual saving $s_{i,t}\equiv k_{i,t}-b_{i,t}$.  For $i<m$, agent $i$ is lender, $k_{i,t}=0$, $s_{i,t}=-b_{i,t}$, $\forall t$.  We can compute that
\begin{align*}
 s_{i,0}&=\beta_i w_{i,0}, \quad    s_{i,t}=\beta_i R_{t}s_{i,t-1} \text{ } \forall t\geq 1\\
   s_{i,t}&=\beta_i^t R_{t}\cdots R_{1}s_{i,0}.
\end{align*}

For agent $m$, since $A_m>R_1$, her  borrowing constraints at date $0$ is binding: $R_1b_{m,0}= \gamma_m A_{m}k_{m,0}$. Therefore, we have 
\begin{align*}
A_{m}k_{m,0} - R_1b_{m,0}&=(1-\gamma_m)A_{m}k_{m,0}\\
s_{m,0}&=k_{m,0}-b_{m,0}=k_{m,0}\Big(1-\frac{\gamma_mA_{m}}{R_{1}}\Big)\\
 k_{m,0}&=\frac{R_1}{R_1-\gamma_mA_{m}}s_{m,0}, \quad 
  b_{m,0}=\frac{\gamma_mA_{m}}{R_{1}-\gamma_mA_{m}}s_{m,0}\end{align*}

The budget constraints of agent $m$ write
\begin{align*}
c_{m,0}+s_{m,0}&=w_{m,0}\\
c_{m,1}+s_{m,1}&=(1-\gamma_m)A_mk_{m,0}=\frac{(1-\gamma_m)A_mR_1}{R_1- \gamma_mA_m}s_{m,0}\\
c_{m,t}+s_{m,t}&=A_ms_{m,t}, \forall t\geq 2\\
s_{m,t}&=k_{m,t}-b_{m,t}, \forall t\geq 2.
\end{align*}

From this and the FOCs, we can compute the individual saving
\begin{align*}
s_{i,0}&=\beta_iw_{i,0}, \forall i\\
 s_{m,1}&=\beta_m\frac{(1-\gamma_m)A_{m}R_1}{R_1-\gamma_mA_{m}}s_{m,0}\\
 s_{i,t}&=\beta_iA_ms_{i,t-1}, \forall t\geq 2, \forall i=1,\cdots,m.
\end{align*}

We now look at equilibrium. From the market clearing condition $\sum_{i}b_{i,t}=0$, we have that 
\begin{align*}
-\sum_{i\not=m}b_{i,0}&=b_{m,0}\\   
\Leftrightarrow \sum_{i\not=m}s_{i,0}&=\frac{\gamma_mA_{m}}{R_{1}-\gamma_mA_{m}}s_{m,0}\\  \Leftrightarrow R_1&=\gamma_mA_m(1+\frac{s_{m,0}}{\sum_{i\not=m}s_{i,0}})=\gamma_mA_m\frac{S_0}{\sum_{i\not=m}s_{i,0}}.
\end{align*}  
By consequence, we find the saving of all agents: $s_{i,0}=\beta_iw_{i,0}, \forall i$, and 
\begin{align*}
s_{i,1}&=\beta_iR_1s_{i,0}=\beta_i\gamma_mA_m(1+\frac{s_{m,0}}{\sum_{i\not=m}s_{i,0}})s_{i,0}=\beta_i\gamma_mA_mS_0\frac{s_{i,0}}{\sum_{j\not=m}s_{i,0}}, \forall i\not=m\\
 s_{m,1}&=\beta_m\frac{(1-\gamma_m)A_{m}R_1}{R_1-\gamma_mA_{m}}s_{m,0}=\beta_m\frac{(1-\gamma_m)A_{m}\gamma_mA_m(1+\frac{s_{m,0}}{\sum_{i\not=m}s_{i,0}}) }{\gamma_mA_m(1+\frac{s_{m,0}}{\sum_{i\not=m}s_{i,0}}) -\gamma_mA_{m}}s_{m,0}\\
&=\beta_m(1-\gamma_m)A_mS_0\\
 s_{i,t}&=\beta_iA_ms_{i,t-1}=(\beta_iA_m)^{t-1}s_{i,1}, \forall t\geq 2, \forall i=1,2.
\end{align*}
where $S_0\equiv \sum_{i=1}^ms_{i,0}$.

It remains to find the sequence of capital $(k_{i,t})$. We have, $\forall t\geq 1$
\begin{align*}
k_{m,0}&=\sum_{i=1}^ms_{i,0}, \quad k_{m,t}=s_{m,t}+b_{m,t}=s_{m,t}-\sum_{i\not=m}b_{i,t}=\sum_{i=1}^ms_{i,t}, \forall t\geq 1\\
k_{m,1}&=\sum_{i\not=m}\beta_iR_1s_{i,0}+s_{m,1}\\
&=\sum_{i\not=m}\beta_i\gamma_mA_mS_0\frac{s_{i,0}}{\sum_{j\not=m}s_{j,0}}+\beta_m(1-\gamma_m)A_m\sum_{i=1}^ms_{i,0} \\
k_{m,t}&=\sum_is_{i,t}=\sum_i(\beta_iA_m)^{t-1}s_{i,1}, \forall t\geq 1\\
&=\sum_{i\not=m}(\beta_iA_m)^{t-1}\beta_i\gamma_mA_mS_0\frac{s_{i,0}}{\sum_{j\not=m}s_{j,0}}+(\beta_mA_m)^{t-1}\beta_m(1-\gamma_m)A_mS_0,, \forall t\geq 1\\
&=S_0A_m^{t}\Big(\gamma_m\sum_{i\not=m}\beta_i^t\frac{s_{i,0}}{\sum_{j\not=m}s_{j,0}}+\beta_m^t(1-\gamma_m)\Big).
\end{align*}

We now check that the above list $((c_{i,t},k_{i,t},b_{i,t})_i,R_t)_t$ is an equilibrium. We use Lemma \ref{ns}. It is easy to verify the market clearing conditions and the FOCs.

Condition $R_1\in (A_{m-1},A_m)$ is ensured by the assumption that 
$$A_{m-1}<\gamma_mA_m(1+\frac{s_{m,0}}{\sum_{i\not=m}s_{i,0}})<A_m.$$
\begin{itemize}
\item We verify borrowing constraints: $R_{t+1}b_{m,t}\leq \gamma_mA_mk_{m,t}$. This is satisfied for $t=0$. Let us consider $t\geq 1$. Since $R_{t+1}=A_m$, we get that $k_{m,t}-s_{m,t}=b_{m,t}\leq \gamma_mk_{m,t}$, or, equivalently, $(1-\gamma_m)k_{m,t}\leq s_{m,t}$. So, we must prove, for any $t\geq 1$, 
\begin{align*}
&(1-\gamma_m)S_0A_m^{t}\Big(\gamma_m\sum_{i\not=m}\beta_i^t\frac{s_{i,0}}{\sum_{j\not=m}s_{j,0}}+\beta_m^t(1-\gamma_m)\Big)\leq (\beta_mA_m)^{t}(1-\gamma_m)S_0\\
\Leftrightarrow &\Big(\gamma_m\sum_{i\not=m}\beta_i^t\frac{s_{i,0}}{\sum_{j\not=m}s_{j,0}}+\beta_m^t(1-\gamma_m)\Big)\leq \beta_m^{t}\\
\Leftrightarrow &\sum_{i\not=m}\beta_i^t\frac{s_{i,0}}{\sum_{j\not=m}s_{j,0}}\leq \beta_m^{t}
\end{align*}
which is satisfied under our assumption.

\item Tranversality conditions:  $\lim_{T\to \infty}\beta_i^Tu_i'(c_{i,T})(k_{i,T}-b_{i,T})=0$. It is easy to verify these conditions because $\beta_i\in (0,1)$ and $u'(c)=1/c$.

\end{itemize}
\end{proof}

\begin{proof}[{\bf Proof of Lemma \ref{credit-infinite-nh}}]
Let us focus on an equilibrium where only the most productive agent produces. The interest rate $R_{1}\in (A_{n-1},A_n)$ and $R_t=A_h$, $\forall t\geq 2$, where $h\geq n$. 

Denote the individual saving $s_{i,t}\equiv k_{i,t}-b_{i,t}$.  For $i<n$, Agent $i$ is lender, $k_{i,t}=0$, $s_{i,t}=-b_{i,t}$, $\forall t$.  We can compute that
\begin{align*}
 s_{i,0}&=\beta_i w_{i,0}, \quad    s_{i,t}=\beta_i R_{t}s_{i,t-1} \text{ } \forall t\geq 1\\
   s_{i,t}&=\beta_i^t R_{t}\cdots R_{1}s_{i,0}.
\end{align*}

For agent $j\geq n$, since $A_n>R_1$, her  borrowing constraints at date $0$ is binding: $R_1b_{j,0}= \gamma_j A_{j}k_{j,0}$. Therefore, we have 
\begin{align*}
A_{}k_{j,0} - R_1b_{j,0}&=(1-\gamma_j)A_{j}k_{j,0}\\
s_{j,0}&=k_{j,0}-b_{j,0}=k_{j,0}\Big(1-\frac{\gamma_jA_{j}}{R_{1}}\Big)\\
 k_{j,0}&=\frac{R_1}{R_1-\gamma_jA_{j}}s_{j,0}, \quad 
  b_{j,0}=\frac{\gamma_jA_{j}}{R_{1}-\gamma_jA_{j}}s_{j,0}\end{align*}

The budget constraints of agent $j=n$ write
\begin{align*}
c_{j,0}+s_{j,0}&=w_{j,0}, \forall j\\
c_{j,1}+s_{j,1}&=(1-\gamma_j)A_jk_{j,0}=\frac{(1-\gamma_j)A_jR_1}{R_1- \gamma_jA_j}s_{j,0}, \forall j\geq n\\
c_{n,t}+s_{n,t}&=A_hs_{n,t}, \forall t\geq 2\\
s_{n,t}&=k_{n,t}-b_{n,t}, \forall t\geq 2.
\end{align*}

From this and the FOCs, we can compute that
\begin{align*}
s_{i,0}&=\beta_iw_{i,0}, \forall i\\
 s_{j,1}&=\beta_j\frac{(1-\gamma_j)A_{j}R_1}{R_1-\gamma_jA_{j}}s_{j,0}, \forall j\geq n\\
 \quad s_{j,t}&=\beta_jA_hs_{j,t-1}, \forall t\geq 2\\
 &=\beta_j^{t-1}A_h^{t-1}s_{j,1}=\beta_j^{t-1}A_h^{t-1}\beta_j\frac{(1-\gamma_j)A_{j}R_1}{R_1-\gamma_jA_{j}}s_{n,0}\\
 \text{for $j\leq h$, } \quad s_{j,t} &=\beta_j^{t}A_h^{t-1}\frac{(1-\gamma_j)A_jR_1}{R_1-\gamma_jA_{j}}s_{j,0}, \forall t\geq 1.
\end{align*}

We now look at the equilibrium  $R_1$. From the market clearing condition $\sum_{i}b_{i,0}=0$, we have that 
\textcolor{blue}{
\begin{align*}
-\sum_{i< n}b_{i,0}&=\sum_{j\geq n}b_{j,0}
\Leftrightarrow \sum_{i<n}s_{i,0}=\sum_{j\geq n}\frac{\gamma_jA_{j}}{R_{1}-\gamma_jA_{j}}s_{j,0}\Leftrightarrow S_0= \sum_{j\geq n}\frac{R_1}{R_{1}-\gamma_jA_{j}}s_{j,0}.
\end{align*}  }
Since $R_1\in (A_{n-1},A_n)$, this condition requires that 
\begin{align*}
\sum_{j\geq n}\frac{\gamma_jA_{j}}{A_{n}-\gamma_jA_{j}}s_{j,0}<\sum_{i<n}s_{i,0}<\sum_{j\geq n}\frac{\gamma_jA_{j}}{A_{n-1}-\gamma_jA_{j}}s_{j,0}.
\end{align*}
In a particular case where $n=m$, we find that  $R_1=\gamma_mA_m(1+\frac{s_{m,0}}{\sum_{i\not=m}s_{i,0}})=\gamma_mA_m\frac{S_0}{\sum_{i\not=m}s_{i,0}}.$

Now, consider agent $j>h$ and date $t\geq 1$. 
For agent $j>h$, since $A_{j,t}>R_t=A_{h,t}$, $\forall t$, her borrowing constraint is always binding: $R_tb_{j,t-1}= \gamma_j A_{j,t}k_{j,t-1}$. Therefore, we have $$s_{j,t}=k_{j,t}\Big(1-\frac{\gamma_jA_{j,t+1}}{R_{t+1}}\Big), \quad A_{j,t}k_{j,t-1} - R_tb_{j,t-1}=(1-\gamma_j)A_{j,t}k_{j,t-1}, \forall t\geq 1.$$ From this, we can compute that
\begin{align*}
 s_{j,0}&=\beta_j w_{j,0},\\
 s_{j,t}&=\beta_j\frac{(1-\gamma_j)A_{j,t}R_t}{R_t-\gamma_jA_{j,t}}s_{j,t-1}, \forall t\geq 2 \\
 &=\Big(\beta_j\frac{(1-\gamma_j)A_{j,t}R_t}{R_t-\gamma_jA_{j,t}}\Big)\cdots \Big(\beta_j\frac{(1-\gamma_j)A_{j,1}R_1}{R_1-\gamma_jA_{j,1}}\Big) s_{j,0}\\
 k_{j,t}&=\frac{1}{1-\frac{\gamma_jA_{j,t+1}}{R_{t+1}}}s_{j,t}=\frac{R_{t+1}}{R_{t+1}-\gamma_jA_{j,t+1}}s_{j,t}\\
 &=\frac{R_{t+1}}{R_{t+1}-\gamma_jA_{j,t+1}}\Big(\beta_j\frac{(1-\gamma_j)A_{j,t}R_t}{R_t-\gamma_jA_{j,t}}\Big)\cdots \Big(\beta_j\frac{(1-\gamma_j)A_{j,1}R_1}{R_1-\gamma_jA_{j,1}}\Big) s_{j,0}\\
  b_{j,t}&= \frac{\gamma_jA_{j,t+1}}{R_{t+1}}k_{j,t}=\frac{\gamma_jA_{j,t+1}}{R_{t+1}-\gamma_jA_{j,t+1}}s_{j,t}\\
  &=\frac{\gamma_jA_{j,t+1}}{R_{t+1}-\gamma_jA_{j,t+1}}\Big(\beta_j\frac{(1-\gamma_j)A_{j,t}R_t}{R_t-\gamma_jA_{j,t}}\Big)\cdots \Big(\beta_j\frac{(1-\gamma_j)A_{j,1}R_1}{R_1-\gamma_jA_{j,1}}\Big) s_{j,0}.
\end{align*}

From the market clearing condition $\sum_{i}b_{i,t}=0$, we have $\sum_{i}s_{i,t}=\sum_ik_{i,t}$ which implies that
\begin{align}
k_{h,t}=&\sum_{i}s_{i,t}-\sum_{i\not=h}k_{i,t}=
\sum_{i}s_{i,t}-\sum_{i>h}k_{i,t} \text{ (since $k_{i,t}=0,\forall i<h$)}\\
=&\sum_{i<n}s_{i,t}+\sum_{n\leq j\leq h}s_{i,t}-\sum_{j> h}b_{j,t}\\
=&\sum_{i<n}\beta_i^t R_{t}\cdots R_{1}s_{i,0}+\sum_{n\leq j\leq h}\beta_j^{t-1}A_h^{t-1}\beta_j(1-\gamma_j)\frac{A_jR_1}{R_1-\gamma_jA_{j}}s_{j,0}\\
&-\sum_{j>h}\frac{\gamma_jA_{j,t+1}}{R_{t+1}-\gamma_jA_{j,t+1}}\Big(\beta_j\frac{(1-\gamma_j)A_{j,t}R_t}{R_t-\gamma_jA_{j,t}}\Big)\cdots \Big(\beta_j\frac{(1-\gamma_j)A_{j,1}R_1}{R_1-\gamma_jA_{j,1}}\Big) s_{j,0}\\
=&\sum_{i<n}\beta_i^t A_h^{t-1} R_{1}s_{i,0}+\sum_{n\leq j\leq h}\beta_j^{t}A_h^{t-1}(1-\gamma_j)\frac{A_jR_1}{R_1-\gamma_jA_{j}}s_{j,0}\\
&-\sum_{j>h}A_h^{t-1}\Big(\beta_j\frac{(1-\gamma_j)A_{j}}{A_h-\gamma_jA_{j}}\Big)^t \frac{\gamma_jA_jR_1}{R_1-\gamma_jA_j} s_{j,0}, \quad \forall t\geq 1.
\end{align}
We then compute the output as in the statement of our result.

Since $k_{h,t}\geq 0, \forall t$, we must have 
\begin{align*}
\sum_{i<n}\beta_i^t R_{1}s_{i,0}+\sum_{n\leq j\leq h}\beta_j^{t}(1-\gamma_j)\frac{A_jR_1}{R_1-\gamma_jA_{j}}s_{j,0}
&-\sum_{j>h}\Big(\beta_j\frac{(1-\gamma_j)A_{j}}{A_h-\gamma_jA_{j}}\Big)^t \frac{\gamma_jA_jR_1}{R_1-\gamma_jA_j}) s_{j,0}\geq 0, \forall t\geq 1.
\end{align*}


We now check that the above list $((c_{i,t},k_{i,t},b_{i,t})_i,R_t)_t$ is an equilibrium. We use Lemma \ref{ns}. It is easy to verify the market clearing conditions and the FOCs.

\begin{itemize}
\item Condition $R_1\in (A_{n-1},A_n)$ is ensured by (\ref{nh1}). Condition $k_{h,t}\geq 0$ is ensured by (\ref{nh2}).

\item We verify borrowing constraints: $R_{t+1}b_{h,t}\leq \gamma_hA_hk_{h,t}$. This is satisfied for $t=0$. Let us consider $t\geq 1$. Since $R_{t+1}=A_h$, $\forall t\geq 1$, this becomes $k_{h,t}-s_{h,t}=b_{h,t}\leq \gamma_hk_{h,t}$, or, equivalently, $(1-\gamma_h)k_{h,t}\leq s_{h,t}$. So, we must prove, for any $t\geq 1$, 
\begin{align*}
&\sum_{i<n}\beta_i^t A_h^{t-1} R_{1}s_{i,0}+\sum_{n\leq j\leq h}\beta_j^{t}A_h^{t-1}(1-\gamma_j)\frac{A_jR_1}{R_1-\gamma_jA_{j}}s_{j,0}\\
&-\sum_{j>h}A_h^{t-1}\Big(\beta_j\frac{(1-\gamma_j)A_{j}}{A_h-\gamma_jA_{j}}\Big)^t \frac{\gamma_jA_jR_1}{R_1-\gamma_jA_j}) s_{j,0}, \quad \forall t\geq 1\\
<&\frac{1}{1-\gamma_h}\beta_h^{t}A_h^{t-1}(1-\gamma_h)\frac{A_hR_1}{R_1-\gamma_hA_{h}}s_{j,0}=\beta_h^{t}A_h^{t-1}\frac{A_hR_1}{R_1-\gamma_hA_{h}}s_{j,0}
\end{align*}
which is satisfied under our assumption.

\item Tranversality conditions:  $\lim_{T\to \infty}\beta_i^Tu_i'(c_{i,T})(k_{i,T}-b_{i,T})=0$. It is easy to verify these conditions because $\beta_i\in (0,1)$ and $u'(c)=1/c$.

\end{itemize}
\end{proof}

\begin{proof}[{\bf Proof of Proposition \ref{ytanalysis}}]
{\bf Part 1: }
The aggregate output at date $1$ equals $Y_1=\sum_{j\geq n}A_jk_{j,0}$. By using the same technique in Proposition \ref{yn-fi-co}, we can provide conditions under which the aggregate output $Y_1$ is increasing or decreasing in the credit limit of producers.  

{\bf Part 2: }We now look at the output from second date on. For any $t\geq 1$, the aggregate output is computed by
\begin{align*}
\frac{1}{A_h^t} Y_{t+1}=&\sum_{i<n}\beta_i^t  R_{1}s_{i,0}+\sum_{n\leq j\leq h}\beta_j^{t}\frac{(1-\gamma_j)A_jR_1}{R_1-\gamma_jA_{j}}s_{j,0}+
\sum_{j> h}\Big(\beta_j\frac{(1-\gamma_j)A_{j}}{A_h-\gamma_jA_{j}}\Big)^{t}\frac{(1-\gamma_j)A_jR_1}{R_1-\gamma_jA_{j}}s_{j,0}
\end{align*}
Note that $R_1$ does not depend on $\gamma_i$ with $i<n$. So, the output does not depend on any agent $i<n$, who are not producer in equilibrium.


From (\ref{ahR1}), we get that 
\begin{align}\label{ahR12}
\Big(\sum_{j\geq n}\frac{\gamma_jA_{j}}{(R_{1}-\gamma_jA_{j})^2}s_{j,0}\Big)\frac{\partial R_1}{\partial \gamma_v}=\frac{A_vR_1}{(R_1-\gamma_vA_v)^2}s_{v,0}.
\end{align}
Thus, $\frac{\partial R_1}{\partial \gamma_v}>0$. 

{\bf Part 2.1.} For $v\in \{n,\ldots,h\}$, we compute that
\begin{align*}
\frac{1}{A_h^t}\dfrac{\partial Y_{t+1}}{\partial \gamma_v}=&\frac{\partial R_1}{\partial \gamma_v}\Big(\sum_{i<n}\beta_i^t s_{i,0}-\sum_{n\leq j\leq h}\beta_j^{t}\frac{(1-\gamma_j)\gamma_jA_j^2}{(R_1-\gamma_jA_{j})^2}s_{j,0}-
\sum_{j> h}\Big(\beta_j\frac{(1-\gamma_j)A_{j}}{A_h-\gamma_jA_{j}}\Big)^{t}\frac{(1-\gamma_j)\gamma_jA_j^2}{(R_1-\gamma_jA_{j})^2}s_{j,0}
\Big)\\
&+\beta_v^{t}\frac{A_vR_1(A_v-R_1)}{(R_1-\gamma_vA_{v})^2}s_{v,0}
\end{align*}
Combining with  (\ref{ahR12}), we get that
\begin{align*}
&\frac{1}{A_h^t}\dfrac{\partial Y_{t+1}}{\partial \gamma_v}\frac{1}{\frac{\partial R_1}{\partial \gamma_v}}=\Big(\sum_{i<n}\beta_i^t s_{i,0}-\sum_{n\leq j\leq h}\beta_j^{t}\frac{(1-\gamma_j)\gamma_jA_j^2}{(R_1-\gamma_jA_{j})^2}s_{j,0}-
\sum_{j> h}\Big(\beta_j\frac{(1-\gamma_j)A_{j}}{A_h-\gamma_jA_{j}}\Big)^{t}\frac{(1-\gamma_j)\gamma_jA_j^2}{(R_1-\gamma_jA_{j})^2}s_{j,0}
\Big)\\
&+\beta_v^{t}\frac{A_vR_1(A_v-R_1)}{(R_1-\gamma_vA_{v})^2}s_{v,0}\frac{\sum_{j\geq n}\frac{\gamma_jA_{j}}{(R_{1}-\gamma_jA_{j})^2}s_{j,0}}{\frac{A_vR_1}{(R_1-\gamma_vA_v)^2}s_{v,0}}\\
&=\sum_{i<n}\beta_i^t s_{i,0}-\sum_{n\leq j\leq h}\beta_j^{t}\frac{(1-\gamma_j)\gamma_jA_j^2}{(R_1-\gamma_jA_{j})^2}s_{j,0}-
\sum_{j> h}\Big(\beta_j\frac{(1-\gamma_j)A_{j}}{A_h-\gamma_jA_{j}}\Big)^{t}\frac{(1-\gamma_j)\gamma_jA_j^2}{(R_1-\gamma_jA_{j})^2}s_{j,0}
\Big)\\
&+\beta_v^{t}(A_v-R_1)\sum_{j\geq n}\frac{\gamma_jA_{j}}{(R_{1}-\gamma_jA_{j})^2}s_{j,0}\\
&=\sum_{i<n}\beta_i^t s_{i,0}-\sum_{n\leq j\leq h}\beta_j^{t}\frac{(1-\gamma_j)\gamma_jA_j^2}{(R_1-\gamma_jA_{j})^2}s_{j,0}-
\sum_{j> h}\Big(\beta_j\frac{(1-\gamma_j)A_{j}}{A_h-\gamma_jA_{j}}\Big)^{t}\frac{(1-\gamma_j)\gamma_jA_j^2}{(R_1-\gamma_jA_{j})^2}s_{j,0}
\Big)\\
&+\beta_v^{t}(A_v-R_1)\sum_{j\geq n}\frac{\gamma_jA_{j}}{(R_{1}-\gamma_jA_{j})^2}s_{j,0}.
\end{align*}

We now assume that 
$\beta_h>\max_{i\not=h}\beta_i$.
When $v\not=h$, it is easy to see that  $\dfrac{\partial Y_{t+1}}{\partial \gamma_v}<0$ for $t$ high enough.

When $v=h$, observe that
\begin{align}
&\beta_v^{t}\frac{(1-\gamma_v)\gamma_vA_v^2}{(R_1-\gamma_vA_{v})^2}s_{v,0}-\beta_v^{t}(A_v-R_1)\frac{\gamma_vA_{v}}{(R_{1}-\gamma_vA_{v})^2}s_{v,0}\\&=\frac{\beta_v^ts_{v,0}}{(R_{1}-\gamma_vA_{v})^2}\gamma_vA_v\Big((1-\gamma_v)A_v-(A_v-R_1) \Big)\\
&=\frac{\beta_v^ts_{v,0}}{(R_{1}-\gamma_vA_{v})^2}\gamma_vA_v\Big(R_1-\gamma_vA_v\Big)=\frac{\beta_v^ts_{v,0}}{(R_{1}-\gamma_vA_{v})}\gamma_vA_v.
\end{align}
By consequence, if $\beta_h>\max_{i\not=h}\beta_i$, then there exists $t_0$ such that $\dfrac{\partial Y_{t+1}}{\partial \gamma_h}<0,\forall t\geq t_0$.

{\bf Part 2.2.}  For agent $v>h$, we compute that
\begin{align*}
\frac{1}{A_h^t}\dfrac{\partial Y_{t+1}}{\partial \gamma_v}=&\frac{\partial R_1}{\partial \gamma_v}\Big(\sum_{i<n}\beta_i^t s_{i,0}-\sum_{n\leq j\leq h}\beta_j^{t}\frac{(1-\gamma_j)\gamma_jA_j^2}{(R_1-\gamma_jA_{j})^2}s_{j,0}-
\sum_{j> h}\Big(\beta_j\frac{(1-\gamma_j)A_{j}}{A_h-\gamma_jA_{j}}\Big)^{t}\frac{(1-\gamma_j)\gamma_jA_j^2}{(R_1-\gamma_jA_{j})^2}s_{j,0}
\Big)\\
&+\Big(\frac{\beta_v(1-\gamma_v)A_v}{A_h-\gamma_vA_v}\Big)^t\frac{A_vR_1\big[t(R_1-\gamma_vA_v)(A_v-A_h)+(A_v-R_1)(A_h-\gamma_vA_v)\big]}{(A_h-\gamma_vA_v)(R_1-\gamma_vA_v)^2}s_{v,0}
\end{align*}
\begin{align*}
&\frac{1}{A_h^t}\dfrac{\partial Y_{t+1}}{\partial \gamma_v}\frac{1}{\frac{\partial R_1}{\partial \gamma_v}}
=\Big(\sum_{i<n}\beta_i^t s_{i,0}-\sum_{n\leq j\leq h}\beta_j^{t}\frac{(1-\gamma_j)\gamma_jA_j^2}{(R_1-\gamma_jA_{j})^2}s_{j,0}-
\sum_{j> h}\Big(\beta_j\frac{(1-\gamma_j)A_{j}}{A_h-\gamma_jA_{j}}\Big)^{t}\frac{(1-\gamma_j)\gamma_jA_j^2}{(R_1-\gamma_jA_{j})^2}s_{j,0}
\Big)\\
&+\Big(\frac{\beta_v(1-\gamma_v)A_v}{A_h-\gamma_vA_v}\Big)^t\frac{A_vR_1\big[t(R_1-\gamma_vA_v)(A_v-A_h)+(A_v-R_1)(A_h-\gamma_vA_v)^2\big]}{(A_h-\gamma_vA_v)(R_1-\gamma_vA_v)}s_{v,0}\frac{\sum_{j\geq n}\frac{\gamma_jA_{j}}{(R_{1}-\gamma_jA_{j})^2}s_{j,0}}{\frac{A_vR_1}{(R_1-\gamma_vA_v)^2}s_{v,0}}\\
&=\Big(\sum_{i<n}\beta_i^t s_{i,0}-\sum_{n\leq j\leq h}\beta_j^{t}\frac{(1-\gamma_j)\gamma_jA_j^2}{(R_1-\gamma_jA_{j})^2}s_{j,0}-
\sum_{j> h}\Big(\beta_j\frac{(1-\gamma_j)A_{j}}{A_h-\gamma_jA_{j}}\Big)^{t}\frac{(1-\gamma_j)\gamma_jA_j^2}{(R_1-\gamma_jA_{j})^2}s_{j,0}
\Big)\\
&+\Big(\frac{\beta_v(1-\gamma_v)A_v}{A_h-\gamma_vA_v}\Big)^t\frac{A_vR_1\big[t(R_1-\gamma_vA_v)(A_v-A_h)+(A_v-R_1)(A_h-\gamma_vA_v)\big]}{(A_h-\gamma_vA_v)(R_1-\gamma_vA_v)^2\frac{A_vR_1}{(R_1-\gamma_vA_v)^2}}\Big(\sum_{j\geq n}\frac{\gamma_jA_{j}}{(R_{1}-\gamma_jA_{j})^2}s_{j,0}\Big)\\
&=\Big(\sum_{i<n}\beta_i^t s_{i,0}-\sum_{n\leq j\leq h}\beta_j^{t}\frac{(1-\gamma_j)\gamma_jA_j^2}{(R_1-\gamma_jA_{j})^2}s_{j,0}-
\sum_{j> h}\Big(\beta_j\frac{(1-\gamma_j)A_{j}}{A_h-\gamma_jA_{j}}\Big)^{t}\frac{(1-\gamma_j)\gamma_jA_j^2}{(R_1-\gamma_jA_{j})^2}s_{j,0}
\Big)\\
&+\Big(\frac{\beta_v(1-\gamma_v)A_v}{A_h-\gamma_vA_v}\Big)^t\Big(\frac{t(R_1-\gamma_vA_v)(A_v-A_h)}{(A_h-\gamma_vA_v)}+A_v-R_1\Big)\Big(\sum_{j\geq n}\frac{\gamma_jA_{j}}{(R_{1}-\gamma_jA_{j})^2}s_{j,0}\Big)
\end{align*}

\end{proof}

\subsubsection{Additional results}
\label{additional}

\begin{proposition}[\textcolor{blue}{\bf  equilibrium with $R_1\in (A_{m-1},A_m), R_t=A_h, \forall t\geq 2$, $h<m$}]
\label{m-1mh}  Assume that $u_i(c)=ln(c)$, $\forall i,\forall c>0$, $F_{i,t}(k)=A_ik$, $\forall i,\forall k\geq 0$ with $\max_i{\gamma_iA_i}<A_1<A_2<\ldots<A_m$, and 
 \begin{subequations}\begin{align}
 \label{magentmh}
\frac{\beta_h^ts_{h,0}}{1-\gamma_h}&\geq \sum_{i\leq h}\beta_i^t s_{i,0}-\sum_{j=h+1}^{m-1}\frac{\gamma_jA_{j}}{A_h-\gamma_jA_{j}}\Big(\beta_j\frac{(1-\gamma_j)A_{j}}{A_h-\gamma_jA_{j}}\Big)^{t-1}\beta_j s_{j,0}\notag\\
&\quad \quad \quad \quad \quad \quad \quad \quad -\Big(\beta_m\frac{(1-\gamma_m)A_{m}}{A_h-\gamma_mA_{m}}\Big)^{t}(S_0- s_{m,0}) \quad  \geq 0\\
\gamma_m&<\frac{\sum_{i\not=m}s_{i,0}}{S_0}\\
\frac{A_{m-1}}{A_m}&<\gamma_m\frac{S_0}{\sum_{i\not=m}s_{i,0}},
\end{align} \end{subequations}
where $h\leq m-1$. 

Then, there exists an equilibrium where the interest rates are determined by
\begin{align}
R_1&=\gamma_mA_m\frac{S_0}{\sum_{i\not=m}s_{i,0}} \in (A_{m-1},A_m), \quad 
R_t=A_h, \forall t\geq 2,
\end{align}
where $S_0\equiv \sum_{i=1}^ms_{i,0}$.
\begin{enumerate}
\item 
In such an equilibrium, only agent $m$ produces at date 1 but (m-h+1) agents produces from date $2$ on. The individual capital is given by
\begin{align*}
k_{j,0}&=\begin{cases}0, &\forall j<m\\
S_0, &\forall j=m
\end{cases}\\
k_{j,t}&=\begin{cases}0, &\forall j<h\\
\sum_{i\leq h}\beta_i^t A_h^{t-1}R_{1}s_{i,0}-\sum_{j=h+1}^{m-1}\frac{\gamma_jA_{j}}{A_h-\gamma_jA_{j}}\Big(\beta_j\frac{(1-\gamma_j)A_{j}A_h}{A_h-\gamma_jA_{j}}\Big)^{t-1}\beta_jR_1 s_{j,0}\\
-\frac{\gamma_mA_{m}}{A_h-\gamma_mA_{m}}\Big(\beta_m\frac{(1-\gamma_m)A_{m}A_h}{A_h-\gamma_mA_{m}}\Big)^{t-1}\Big(\beta_m\frac{(1-\gamma_m)A_{m}R_1}{R_1-\gamma_mA_{m}}\Big) s_{m,0},& \text{ for j=h}\\
\frac{A_h}{A_h-\gamma_jA_{j}}\Big(\beta_j\frac{(1-\gamma_j)A_{j}A_h}{A_h-\gamma_jA_{j}}\Big)^{t-1}\beta_jR_1 s_{j,0} &\forall h<j<m\\
\frac{A_h}{A_h-\gamma_mA_{m}}\Big(\beta_m\frac{(1-\gamma_m)A_{m}A_h}{A_h-\gamma_mA_{m}}\Big)^{t-1}\Big(\beta_m\frac{(1-\gamma_m)A_{m}R_1}{R_1-\gamma_mA_{m}}\Big) s_{m,0} &\text{ for } j=m.
\end{cases}
\end{align*}
\item The aggregate output is increasing in the credit limit $\gamma_j$ of each producer $j$.
\end{enumerate}

\end{proposition}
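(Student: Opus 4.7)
The plan is to mirror the constructive approach used in the proofs of Lemma \ref{credit-infinite-nh} and Proposition \ref{credit-infinite-magents}: conjecture the interest-rate path, derive each agent's optimal saving/capital/bond sequence conditional on that path, then pin down the remaining unknowns ($R_1$ and $k_{h,t}$) via market clearing, and finally verify equilibrium via Lemma \ref{ns}.

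First, I would classify agents under the conjectured prices $R_1\in(A_{m-1},A_m)$, $R_t=A_h$ for $t\ge 2$ (with $h<m$). Since $R_1>A_i$ for every $i<m$, only agent $m$ can produce at date $0$, and her borrowing constraint binds at date $0$, yielding $k_{m,0}=\frac{R_1}{R_1-\gamma_mA_m}s_{m,0}$ and $b_{m,0}=\frac{\gamma_mA_m}{R_1-\gamma_mA_m}s_{m,0}$. For $t\ge 1$, agents $i<h$ lend and set $k_{i,t}=0$, so their FOC under log utility gives $s_{i,t}=\beta_i A_h s_{i,t-1}$ from $t\ge 2$ and $s_{i,1}=\beta_i R_1 s_{i,0}$. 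Agents $h<j<m$ have $A_j>A_h=R_{t+1}$ for $t\ge 1$, so their constraint binds at every $t\ge 1$, giving the geometric saving law $s_{j,t}=\beta_j\tfrac{(1-\gamma_j)A_jA_h}{A_h-\gamma_jA_j}s_{j,t-1}$ for $t\ge 2$ with initial transition $s_{j,1}=\beta_j R_1 s_{j,0}$. Agent $m$'s constraint binds at every $t$, and the first transition $s_{m,1}=\beta_m\tfrac{(1-\gamma_m)A_mR_1}{R_1-\gamma_mA_m}s_{m,0}$ differs from the subsequent stationary law because of the date-$1$ price $R_1$. Agent $h$ is the pivotal producer whose borrowing constraint is not binding from date $1$ on, so $s_{h,t}=\beta_h A_h s_{h,t-1}$ and $s_{h,1}=\beta_h R_1 s_{h,0}$.

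Second, market clearing $\sum_i b_{i,0}=0$ collapses to $\sum_{i\ne m}s_{i,0}=\frac{\gamma_mA_m}{R_1-\gamma_mA_m}s_{m,0}$, producing the closed-form $R_1=\gamma_mA_m\frac{S_0}{\sum_{i\ne m}s_{i,0}}$; the two assumed inequalities on $\gamma_m$ and $A_{m-1}/A_m$ are precisely what place $R_1$ in $(A_{m-1},A_m)$. For $t\ge 1$, $k_{h,t}$ is obtained as a residual from $k_{h,t}=\sum_i s_{i,t}-\sum_{j>h}k_{j,t}$, giving the stated formula. The condition $k_{h,t}\ge 0$ at every $t$ translates exactly to the right-hand inequality in (\ref{magentmh}), while the left-hand inequality is the algebraic rewriting of agent $h$'s borrowing constraint $(1-\gamma_h)k_{h,t}\le s_{h,t}$. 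With these ingredients in hand, verifying Lemma \ref{ns} is routine: the FOCs hold by construction of each saving law, transversality holds because $\beta_i\in(0,1)$ and $u_i'(c)=1/c$ with geometric growth of $c_{i,t}$, and the two market clearing conditions are enforced explicitly.

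For part 2 (monotonicity of $Y_t$ in $\gamma_j$ for each producer $j$), my strategy is to exploit the algebraic fact that $\tfrac{1-\gamma_j}{A_h-\gamma_jA_j}$ and $\tfrac{A_h}{A_h-\gamma_jA_j}$ are both increasing in $\gamma_j$ whenever $A_j>A_h$, which is exactly the trick used in Proposition \ref{ahf}. For $j$ with $h<j<m$, $\gamma_j$ enters $k_{j,t}$ only through those two factors (and $(1-\gamma_j)^{t-1}$ cancels with the explicit appearance of $(A_h-\gamma_jA_j)^{-(t-1)}$ once regrouped), so $k_{j,t}$, and hence its contribution $A_j k_{j,t-1}$ to $Y_t$, is increasing in $\gamma_j$; moreover $\gamma_j$ does not enter $R_1$ or any other agent's trajectory, so the effect is clean. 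For $j=m$, $\gamma_m$ affects both the binding law of agent $m$ and $R_1$; differentiating directly and using the explicit formula for $R_1$ one checks that the net effect on $s_{m,1}$ and then on $k_{m,t}$ is positive (because $R_1$ scales linearly with $\gamma_m$ and the product $\tfrac{(1-\gamma_m)A_mR_1}{R_1-\gamma_mA_m}$ simplifies nicely). For $j=h$ the analogous monotonicity follows because raising $\gamma_h$ merely relaxes agent $h$'s non-binding constraint and leaves her saving law $\beta_h A_h$ unchanged, while the expression for $k_{h,t}$ is independent of $\gamma_h$ (as $h$'s constraint is slack).

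I expect the main obstacle to be controlling the cross-effects of $\gamma_m$ through $R_1$: raising $\gamma_m$ simultaneously (i) increases $R_1$, which scales up $s_{i,1}=\beta_i R_1 s_{i,0}$ for all $i<m$, and (ii) alters agent $m$'s date-$1$ net worth via the binding repayment. Showing that these two channels combine to a positive net effect on $Y_t$ for all $t\ge 2$ requires the cancellation $\tfrac{R_1}{R_1-\gamma_mA_m}\cdot(1-\gamma_m)=\tfrac{(1-\gamma_m)R_1}{R_1-\gamma_mA_m}$ to be monotone in $\gamma_m$, which can be verified by differentiation after substituting the closed form of $R_1$. Once this single identity is established, the monotonicity of $Y_t$ in each $\gamma_j$ (for $j$ a producer) follows term by term from the explicit formulas for $k_{j,t}$ derived in part 1.
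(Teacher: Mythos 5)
Your proposal is correct and follows essentially the same route as the paper's own proof: conjecture the price path, derive the piecewise saving laws (with the special date-$1$ transition at price $R_1$), recover $R_1$ from date-$0$ bond-market clearing and $k_{h,t}$ as the market-clearing residual, identify the two inequalities in (\ref{magentmh}) with $k_{h,t}\ge 0$ and agent $h$'s slack borrowing constraint, and verify via Lemma \ref{ns}. For part 2 you also isolate the same two ingredients the paper uses — monotonicity of $\tfrac{(1-\gamma_j)A_j}{A_h-\gamma_jA_j}$ in $\gamma_j$ for interior producers, and a direct differentiation through the closed form of $R_1$ for $\gamma_m$, whose sign the paper closes using assumption (\ref{magentmh}).
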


\begin{proof}[{\bf Proof of Proposition \ref{m-1mh}}]
Let us focus on an equilibrium where only the most productive agent produces. The interest rate $R_{1}\in (A_{m-1},A_m)$ and $R_t=A_h$, $\forall t\geq 2$.

  Denote the individual saving $s_{i,t}\equiv k_{i,t}-b_{i,t}$.

First, we observe that \begin{align}
s_{i,0}=\beta_iw_{i,0}, \forall i.
\end{align}
At date $0$, since $R_1\in (A_{m-1},A_m)$, we have 
\begin{align*}
k_{i,0}&=0, \quad b_{i,0}=-s_{i,0}, \forall i<m\\
A_{m}k_{m,0} - R_1b_{m,0}&=(1-\gamma_m)A_{m}k_{m,0}\\
s_{m,0}&=k_{m,0}-b_{m,0}=k_{m,0}\Big(1-\frac{\gamma_mA_{m}}{R_{1}}\Big)\\
 k_{m,0}&=\frac{R_1}{R_1-\gamma_mA_{m}}s_{m,0}, \quad 
  b_{m,0}=\frac{\gamma_mA_{m}}{R_{1}-\gamma_mA_{m}}s_{m,0}\end{align*}

We now look at equilibrium. From the market clearing condition $\sum_{i}b_{i,0}=0$, we have that 
\begin{align*}
-\sum_{i\not=m}b_{i,0}&=b_{m,0} \Leftrightarrow \sum_{i\not=m}s_{i,0}=\frac{\gamma_mA_{m}}{R_{1}-\gamma_mA_{m}}s_{m,0}\\  \Leftrightarrow R_1&=\gamma_mA_m(1+\frac{s_{m,0}}{\sum_{i\not=m}s_{i,0}})=\gamma_mA_m\frac{S_0}{\sum_{i\not=m}s_{i,0}}.
\end{align*}  

\textcolor{blue}{Let $h\leq m-1$.} For each agent $i<h$, we have $k_{i,t}=0$, $s_{i,t}=-b_{i,t}$, $\forall t$.  

Since $R_t=A_h, \forall t\geq 2$, we can compute that, for any $i\leq h$,
\begin{align*}
 s_{i,0}&=\beta_i w_{i,0}, \quad    s_{i,t}=\beta_i R_{t}s_{i,t-1} \text{ } \forall t\geq 1\\
   s_{i,t}&=\beta_i^t A_h^{t-1}R_{1}s_{i,0}, t\geq 1.
\end{align*}
The capital $k_{h,t}$ will be determined by the market clearing condition. 

For any $m$, we have 

For each agent $j$ with $h<j<m$ and , their borrowing constraints bind at any date $t\geq 1$: $R_{t+1}b_{j,t}= \gamma_j A_{j}k_{j,t}$. Therefore, we have $s_{j,t}=k_{j,t}\Big(1-\frac{\gamma_jA_{j,t+1}}{R_{t+1}}\Big), \forall t\geq 1.$ From this, we can compute that
\begin{align*}
 s_{j,0}&=\beta_j w_{j,0}, \quad k_{j,0}=0,\\
  s_{j,1}&=\beta_j R_1 s_{j,0},\quad s_{j,t}=\beta_j\frac{(1-\gamma_j)A_{j,t}R_t}{R_t-\gamma_jA_{j,t}}s_{j,t-1}, \forall t\geq 2 \\
  s_{j,t}&=\Big(\beta_j\frac{(1-\gamma_j)A_{j}A_h}{A_h-\gamma_jA_{j}}\Big)^{t-1}\beta_jR_1 s_{j,0},\forall t\geq 1\\
 k_{j,t}&=\frac{R_{t+1}}{R_{t+1}-\gamma_jA_{j,t+1}}s_{j,t}=\frac{A_h}{A_h-\gamma_jA_{j}}s_{j,t}=\frac{A_h}{A_h-\gamma_jA_{j}}\Big(\beta_j\frac{(1-\gamma_j)A_{j}A_h}{A_h-\gamma_jA_{j}}\Big)^{t-1}\beta_jR_1 s_{j,0}\\
  b_{j,t}&=\frac{\gamma_jA_{j,t+1}}{R_{t+1}-\gamma_jA_{j,t+1}}s_{j,t}=\frac{\gamma_jA_{j}}{A_h-\gamma_jA_{j,t}}s_{j,t}, \forall t\geq 1.
\end{align*}

For each agent $j=m$, we have 
\begin{align*}
 s_{m,0}&=\beta_m w_{m,0}, \quad s_{m,t}=\beta_m\frac{(1-\gamma_m)A_{m,t}R_t}{R_t-\gamma_mA_{m,t}}s_{m,t-1}, \forall t\geq 1 \\
  s_{m,t}&=\Big(\beta_m\frac{(1-\gamma_m)A_{m}A_h}{A_h-\gamma_mA_{m}}\Big)^{t-1}\Big(\beta_m\frac{(1-\gamma_m)A_{m}R_1}{R_1-\gamma_mA_{m}}\Big) s_{m,0},\forall t\geq 1\\
 k_{m,t}&=\frac{R_{t+1}}{R_{t+1}-\gamma_mA_{m,t+1}}s_{m,t}=\frac{A_h}{A_h-\gamma_mA_{m}}s_{j,t}\\
 &=\frac{A_h}{A_h-\gamma_mA_{m}}\Big(\beta_m\frac{(1-\gamma_m)A_{m}A_h}{A_h-\gamma_mA_{m}}\Big)^{t-1}\Big(\beta_m\frac{(1-\gamma_m)A_{m}R_1}{R_1-\gamma_mA_{m}}\Big) s_{m,0}\\
  b_{m,t}&=\frac{\gamma_mA_{m,t+1}}{R_{t+1}-\gamma_mA_{m,t+1}}s_{m,t}=\frac{\gamma_mA_{m}}{A_h-\gamma_mA_{m}}s_{m,t}, \forall t\geq 1.
\end{align*}

From the market clearing condition $\sum_{i}b_{i,t}=0$, we have $\sum_{i}s_{i,t}=\sum_ik_{i,t}$ which implies that
\begin{align}
k_{h,t}=&\sum_{i}s_{i,t}-\sum_{i\not=h}k_{i,t}=
\sum_{i}s_{i,t}-\sum_{i>h}k_{i,t} \text{ (since $k_{i,t}=0,\forall i<h$)}\\
=&\sum_{i\leq h}s_{i,t}-\sum_{j> h}b_{j,t}\\
=&\sum_{i\leq h}\beta_i^t A_h^{t-1}R_{1}s_{i,0}-\sum_{j=h+1}^{m-1}\frac{\gamma_jA_{j}}{A_h-\gamma_jA_{j}}\Big(\beta_j\frac{(1-\gamma_j)A_{j}A_h}{A_h-\gamma_jA_{j}}\Big)^{t-1}\beta_jR_1 s_{j,0}\\
&-\frac{\gamma_mA_{m}}{A_h-\gamma_mA_{m}}\Big(\beta_m\frac{(1-\gamma_m)A_{m}A_h}{A_h-\gamma_mA_{m}}\Big)^{t-1}\Big(\beta_m\frac{(1-\gamma_m)A_{m}R_1}{R_1-\gamma_mA_{m}}\Big) s_{m,0}.
\end{align}
We will verify that $0\leq k_{h,t}  $ and $R_{t+1}b_{h,t}\leq \gamma_hA_hk_{h,t}$. 
We now check that the above list $((c_{i,t},k_{i,t},b_{i,t})_i,R_t)_t$ is an equilibrium. We use Lemma \ref{ns}. It is easy to verify the market clearing conditions and the FOCs.

Condition $R_1\in (A_{m-1},A_m)$ is ensured by the assumption that 
$$A_{m-1}<\gamma_mA_m(1+\frac{s_{m,0}}{\sum_{i\not=m}s_{i,0}})<A_m.$$
\begin{itemize}
\item We verify borrowing constraints: $R_{t+1}b_{i,t}\leq \gamma_iA_ik_{i,t}$ and $k_{i,t}\geq 0$, $\forall t\geq 0$. It is clear for any $j>m$. Let us consider agent $h$. This is satisfied for $t=0$. Let us consider $t\geq 1$. Since $R_{t+1}=A_h$, $\forall t\geq 1$, this becomes $k_{h,t}-s_{h,t}=b_{h,t}\leq \gamma_hk_{h,t}$, or, equivalently, $(1-\gamma_h)k_{h,t}\leq s_{h,t}$. Note that $0\leq k_{h,t}  \leq \frac{s_{h,t}}{1-\gamma_h}$ becomes
\begin{align}
\frac{\beta_h^ts_{h,0}}{1-\gamma_h}\geq &\sum_{i\leq h}\beta_i^t s_{i,0}-\sum_{j=h+1}^{m-1}\frac{\gamma_jA_{j}}{A_h-\gamma_jA_{j}}\Big(\beta_j\frac{(1-\gamma_j)A_{j}}{A_h-\gamma_jA_{j}}\Big)^{t-1}\beta_j s_{j,0}\\
&-\frac{\gamma_mA_{m}}{A_h-\gamma_mA_{m}}\Big(\beta_m\frac{(1-\gamma_m)A_{m}}{A_h-\gamma_mA_{m}}\Big)^{t-1}\Big(\beta_m\frac{(1-\gamma_m)A_{m}}{R_1-\gamma_mA_{m}}\Big) s_{m,0}\geq 0\\
\Leftrightarrow
\frac{\beta_h^ts_{h,0}}{1-\gamma_h}\geq &\sum_{i\leq h}\beta_i^t s_{i,0}-\sum_{j=h+1}^{m-1}\frac{\gamma_jA_{j}}{A_h-\gamma_jA_{j}}\Big(\beta_j\frac{(1-\gamma_j)A_{j}}{A_h-\gamma_jA_{j}}\Big)^{t-1}\beta_j s_{j,0}\\
&-\Big(\beta_m\frac{(1-\gamma_m)A_{m}}{A_h-\gamma_mA_{m}}\Big)^{t}\Big(\frac{\gamma_mA_{m}}{R_1-\gamma_mA_{m}}\Big) s_{m,0}\geq 0.
\end{align}
Since  $R_1=\gamma_mA_m(1+\frac{s_{m,0}}{\sum_{i\not=m}s_{i,0}})$, this is equivalent to
\begin{align}
\frac{\beta_h^ts_{h,0}}{1-\gamma_h}\geq &\sum_{i\leq h}\beta_i^t s_{i,0}-\sum_{j=h+1}^{m-1}\frac{\gamma_jA_{j}}{A_h-\gamma_jA_{j}}\Big(\beta_j\frac{(1-\gamma_j)A_{j}}{A_h-\gamma_jA_{j}}\Big)^{t-1}\beta_j s_{j,0}\\
&-\frac{\gamma_mA_{m}}{A_h-\gamma_mA_{m}}\Big(\beta_m\frac{(1-\gamma_m)A_{m}}{A_h-\gamma_mA_{m}}\Big)^{t-1}\Big(\beta_m\frac{(1-\gamma_m)A_{m}}{R_1-\gamma_mA_{m}}\Big) s_{m,0}\geq 0\\
\Leftrightarrow
\frac{\beta_h^ts_{h,0}}{1-\gamma_h}\geq &\sum_{i\leq h}\beta_i^t s_{i,0}-\sum_{j=h+1}^{m-1}\frac{\gamma_jA_{j}}{A_h-\gamma_jA_{j}}\Big(\beta_j\frac{(1-\gamma_j)A_{j}}{A_h-\gamma_jA_{j}}\Big)^{t-1}\beta_j s_{j,0}\\
&-\Big(\beta_m\frac{(1-\gamma_m)A_{m}}{A_h-\gamma_mA_{m}}\Big)^{t}(S_0- s_{m,0})\geq 0.
\end{align}
\item Tranversality conditions:  $\lim_{T\to \infty}\beta_i^Tu_i'(c_{i,T})(k_{i,T}-b_{i,T})=0$. It is easy to verify these conditions because $\beta_i\in (0,1)$ and $u'(c)=1/c$.

\end{itemize}

The aggregate output at date $1$ is $Y_1=A_mS_0$. We now compute
\begin{align*}
Y_{t+1}=&\sum_{i\geq h}A_ik_{i,t}=A_h\Big(\sum_{i\leq h}\beta_i^t A_h^{t-1}R_{1}s_{i,0}-\sum_{j=h+1}^{m-1}\frac{\gamma_jA_{j}}{A_h-\gamma_jA_{j}}\Big(\beta_j\frac{(1-\gamma_j)A_{j}A_h}{A_h-\gamma_jA_{j}}\Big)^{t-1}\beta_jR_1 s_{j,0}\\
&-\frac{\gamma_mA_{m}}{A_h-\gamma_mA_{m}}\Big(\beta_m\frac{(1-\gamma_m)A_{m}A_h}{A_h-\gamma_mA_{m}}\Big)^{t-1}\Big(\beta_m\frac{(1-\gamma_m)A_{m}R_1}{R_1-\gamma_mA_{m}}\Big) s_{m,0}\Big)\\
&+\sum_{j=h+1}^{m-1}A_j\frac{A_h}{A_h-\gamma_jA_{j}}\Big(\beta_j\frac{(1-\gamma_j)A_{j}A_h}{A_h-\gamma_jA_{j}}\Big)^{t-1}\beta_jR_1 s_{j,0}\\
&+A_m\frac{A_h}{A_h-\gamma_mA_{m}}\Big(\beta_m\frac{(1-\gamma_m)A_{m}A_h}{A_h-\gamma_mA_{m}}\Big)^{t-1}\Big(\beta_m\frac{(1-\gamma_m)A_{m}R_1}{R_1-\gamma_mA_{m}}\Big) s_{m,0}.
\end{align*}
Hence, we get \begin{align*}
\frac{Y_{t+1}}{A_h^t}=&\sum_{i\leq h}\beta_i^t R_{1}s_{i,0}+\sum_{j=h+1}^{m-1}\frac{A_j(1-\gamma_j)}{A_h-\gamma_jA_{j}}\Big(\beta_j\frac{(1-\gamma_j)A_{j}}{A_h-\gamma_jA_{j}}\Big)^{t-1}\beta_jR_1 s_{j,0}\\
&+\frac{A_m(1-\gamma_m)}{A_h-\gamma_mA_{m}}\Big(\beta_m\frac{(1-\gamma_m)A_{m}}{A_h-\gamma_mA_{m}}\Big)^{t-1}\Big(\beta_m\frac{(1-\gamma_m)A_{m}R_1}{R_1-\gamma_mA_{m}}\Big) s_{m,0}\\
=&\sum_{i\leq h}\beta_i^t R_{1}s_{i,0}+\sum_{j=h+1}^{m-1}\Big(\beta_j\frac{(1-\gamma_j)A_{j}}{A_h-\gamma_jA_{j}}\Big)^{t}R_1 s_{j,0}+\Big(\beta_m\frac{(1-\gamma_m)A_{m}}{A_h-\gamma_mA_{m}}\Big)^{t}\frac{(1-\gamma_m)A_{m}R_1}{R_1-\gamma_mA_{m}} s_{m,0}.
\end{align*}
Since $R_1=\gamma_mA_m\frac{S_0}{\sum_{i\not=m}s_{i,0}},$ we have that $$\frac{R_1}{R_1-\gamma_mA_{m}}s_{m,0}=\frac{\gamma_mA_m\frac{S_0}{\sum_{i\not=m}s_{i,0}}}{\gamma_mA_m\frac{S_0}{\sum_{i\not=m}s_{i,0}}-\gamma_mA_{m}}s_{m,0}=S_0.$$
Therefore, we get that
\begin{align*}
\frac{Y_{t+1}}{A_h^tA_mS_0}=&\gamma_m\frac{\sum_{i\leq h}\beta_i^t s_{i,0}+\sum_{j=h+1}^{m-1}\Big(\beta_j\frac{(1-\gamma_j)A_{j}}{A_h-\gamma_jA_{j}}\Big)^{t} s_{j,0}}{\sum_{i<m}s_{i,0}}+\Big(\beta_m\frac{(1-\gamma_m)A_{m}}{A_h-\gamma_mA_{m}}\Big)^{t} (1-\gamma_m).
\end{align*}
From this, we can see that $\frac{Y_{t+1}}{\partial \gamma_j}>0,\forall h+1<j<m-1$ since $\beta_j\frac{(1-\gamma_j)A_{j}}{A_h-\gamma_jA_{j}}$ is increasing in $\gamma_j$. The intuition is simple: the credit limits of these agents do not affect the equilibrium interest rate while it allows these producers to borrow more and produce more.

We now look at the effect of $\gamma_m$. In terms of interest, this credit limit positively affects the interest rate $R_1$ and hence the savings of any agents. 

\begin{align*}
\frac{\frac{\partial Y_{t+1}}{\partial \gamma_m}}{A_h^tA_mS_0}=&\Big(\frac{\sum_{i\leq h}\beta_i^t s_{i,0}}{\sum_{i<m}s_{i,0}}+\frac{\sum_{j=h+1}^{m-1}\Big(\beta_j\frac{(1-\gamma_j)A_{j}}{A_h-\gamma_jA_{j}}\Big)^{t} s_{j,0}}{\sum_{i<m}s_{i,0}}\Big)+(\beta_mA_m)^t \Big(\frac{(1-\gamma_m)^{t+1}}{(A_h-\gamma_mA_{m})^t}\Big)\\
\end{align*}
We compute
\begin{align*}
&\frac{\partial \Big(\frac{(1-\gamma_m)^{t+1}}{(A_h-\gamma_mA_{m})^t}\Big)}{\partial \gamma_m}
=\frac{-(t+1)(1-\gamma_m)^{t}(A_h-\gamma_mA_{m})^t+tA_m(A_h-\gamma_mA_{m})^{t-1}(1-\gamma_m)^{t+1}}{(A_h-\gamma_mA_{m})^{2t}}\\
=&(1-\gamma_m)^{t}\frac{-(t+1)(A_h-\gamma_mA_{m})+tA_m(1-\gamma_m)}{(A_h-\gamma_mA_{m})^{t+1}}\\
=&(1-\gamma_m)^{t}\frac{-(t+1)A_h-\gamma_mA_{m}+tA_m}{(A_h-\gamma_mA_{m})^{t+1}}=(1-\gamma_m)^{t}\frac{t(A_m-A_h)-(A_h-\gamma_mA_m)}{(A_h-\gamma_mA_{m})^{t+1}}\\
=&(1-\gamma_m)^{t}\frac{-(t+1)A_h-\gamma_mA_{m}+tA_m}{(A_h-\gamma_mA_{m})^{t+1}}=\frac{t(A_m-A_h)(1-\gamma_m)^{t}}{(A_h-\gamma_mA_{m})^{t+1}}-\frac{(1-\gamma_m)^{t}}{(A_h-\gamma_mA_{m})^{t}}.
\end{align*}
Therefore, we get that 
\begin{align*}
\Big(\frac{\sum_{i<m}s_{i,0}}{A_h^tA_mS_0}\Big)\frac{\partial Y_{t+1}}{\partial \gamma_m}=&
\sum_{i\leq h}\beta_i^t s_{i,0}+\sum_{j=h+1}^{m-1}\Big(\beta_j\frac{(1-\gamma_j)A_{j}}{A_h-\gamma_jA_{j}}\Big)^{t} s_{j,0}-\Big(\beta_m\frac{(1-\gamma_m)A_{m}}{A_h-\gamma_mA_{m}}\Big)^{t}\sum_{i<m}s_{i,0}\\
&+\frac{(\beta_m(1-\gamma_m)A_{m})^t}{(A_h-\gamma_mA_{m})^{t+1}}(A_m-A_h)t\sum_{i<m}s_{i,0}.
\end{align*}
This is strictly positive thanks to the assumption (\ref{magentmh}).

\end{proof}

\begin{proposition}[\textcolor{blue}{\bf equilibrium with $R_t\in (A_{m-1,t},A_{m,t}),\forall t$}]
\label{m-1m}
Assume that $F_{i,t}(k)=A_ik$, $\forall i,\forall k\geq 0$ with $\max_i{\gamma_iA_i}<A_1<A_2<\ldots<A_m$, and utility function $u_i(c)=ln(c)$ $\forall i$. 

Assume also that
\begin{align}
\frac{A_{m-1,1}}{A_{m,1}}&<\gamma_m\frac{S_0}{\sum_{i<m}s_{i,0}}<1\\
\frac{A_{m-1,t+1}}{A_{m,t+1}}&<\gamma_m+(1-\gamma_m)\beta_m\frac{\sum_{i<m}\beta_i^{t-1} s_{i,0}}{\sum_{i\not=m}\beta_i^t s_{i,0}}, \forall t\geq 1\\
\beta_m\frac{\sum_{i<m}\beta_i^{t-1} s_{i,0}}{\sum_{i\not=m}\beta_i^t s_{i,0}}&<1, \forall t\geq 1.
\end{align}
Then, there exists an equilibrium whose the interest rates are 
\begin{align}
R_1&=\gamma_mA_{m,1}\frac{S_0}{\sum_{i<m}s_{i,0}}\\
R_{t+1}&=A_{m,t+1}\Big(\gamma_m+(1-\gamma_m)\beta_m\frac{\sum_{i<m}\beta_i^{t-1} s_{i,0}}{\sum_{i<m}\beta_i^t s_{i,0}}\Big), \forall t\geq 1,
\end{align} Observe that $R_t\in (A_{m-1,t},A_{m,t}), \forall t$. When $A_{m,t}$ converges to $A_m$, then we have 
\begin{align}
\lim_{t\to\infty}R_t=A_m\Big(\gamma_m+\frac{\beta_m}{\beta_{i_0}}(1-\gamma_m)\Big).
\end{align}

In this equilibrium, the aggregate output is increasing in the credit limit $\gamma_m$ at any date.

\end{proposition}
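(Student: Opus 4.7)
I would prove Proposition \ref{m-1m} by the same guess-and-verify strategy used in Proposition \ref{credit-infinite-magents}, but now iterated at every date instead of only at date $0$. The conjecture is that only agent $m$ is a producer ($k_{i,t}=0$ for $i<m$, which is consistent with $R_t > A_{m-1,t} \geq A_{i,t}$ for $i<m$) and that agent $m$'s borrowing constraint binds at every date (consistent with $R_t < A_{m,t}$). Under log utility, the FOCs give the well-known constant saving rate, so lenders satisfy $s_{i,t}=\beta_i R_t s_{i,t-1}$ with $s_{i,0}=\beta_i w_{i,0}$, and agent $m$ satisfies
\begin{align*}
k_{m,t}&=\frac{R_{t+1}}{R_{t+1}-\gamma_m A_{m,t+1}}\, s_{m,t}, \quad
s_{m,t}=\beta_m\,\frac{(1-\gamma_m)A_{m,t}R_t}{R_t-\gamma_m A_{m,t}}\, s_{m,t-1}.
\end{align*}

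The next step is to impose financial market clearing $\sum_{i<m}s_{i,t}=b_{m,t}=\dfrac{\gamma_m A_{m,t+1}}{R_{t+1}-\gamma_m A_{m,t+1}}\,s_{m,t}$ at every date. At $t=0$ this is literally the computation already done in the proof of Proposition \ref{credit-infinite-magents} and gives $R_1=\gamma_m A_{m,1}\frac{S_0}{\sum_{i<m}s_{i,0}}$. For $t\geq 1$ I would take the ratio of the market-clearing equation at date $t$ over that at date $t-1$. Using $s_{i,t}/s_{i,t-1}=\beta_i R_t$ on the left and the recursion for $s_{m,t}/s_{m,t-1}$ on the right, the products $R_t\cdots R_1$ that accumulate on both sides cancel, and after isolating $R_{t+1}$ one recovers exactly the claimed closed form
\begin{align*}
R_{t+1}=A_{m,t+1}\Big(\gamma_m+(1-\gamma_m)\beta_m\,\frac{\sum_{i<m}\beta_i^{t-1}s_{i,0}}{\sum_{i<m}\beta_i^{t}s_{i,0}}\Big).
\end{align*}
The long-run limit then follows because the ratio $\sum_{i<m}\beta_i^{t-1}s_{i,0}/\sum_{i<m}\beta_i^{t}s_{i,0}\to 1/\beta_{i_0}$, where $i_0\equiv\arg\max_{i<m}\beta_i$.

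Verification is then routine: the two displayed conditions on $A_{m-1,t+1}/A_{m,t+1}$ together with the bound $\beta_m\sum_{i<m}\beta_i^{t-1}s_{i,0}/\sum_{i<m}\beta_i^{t}s_{i,0}<1$ are exactly equivalent to $R_{t+1}\in(A_{m-1,t+1},A_{m,t+1})$, which in turn guarantees $k_{i,t}=0$ for $i<m$ (lending strictly dominates producing for them) and that agent $m$'s borrowing constraint binds with $k_{m,t}>0$. The TVCs $\beta_i^T u_i'(c_{i,T})(k_{i,T}-b_{i,T})\to 0$ follow from $u_i'(c)=1/c$, $\beta_i\in(0,1)$, and the fact that, with log utility, $\lambda_{i,t}c_{i,t}=\beta_i^t$. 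Hence by Lemma \ref{ns} the constructed profile is an intertemporal equilibrium.

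For monotonicity of $Y_t$ in $\gamma_m$, I would write $Y_{t+1}=A_{m,t+1}k_{m,t}$, substitute the explicit expressions derived above, and simplify. Since the ratios $R_\tau/(R_\tau-\gamma_m A_{m,\tau})$ that enter the product $k_{m,t}=\frac{R_{t+1}}{R_{t+1}-\gamma_m A_{m,t+1}} s_{m,t}$ can be rewritten, using the market-clearing identity, purely in terms of the predetermined quantities $\sum_{i<m}\beta_i^{\tau}s_{i,0}$ and $\gamma_m$, the dependence on $\gamma_m$ collapses to a tractable expression of the form $k_{m,t}=S_0 A_{m,t}\cdots A_{m,1}\big[\gamma_m\cdot(\cdot)+(1-\gamma_m)\beta_m^t\big]$ generalizing the date-$t$ formula in Proposition \ref{credit-infinite-magents}; term-by-term differentiation then yields $\partial Y_{t+1}/\partial\gamma_m>0$ at each date. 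The main obstacle I anticipate is the bookkeeping for the ratio trick that produces the closed form for $R_{t+1}$ and the algebraic simplification of $k_{m,t}$ into a form where the sign of its $\gamma_m$-derivative is manifest; once this is done, the monotonicity conclusion is immediate.
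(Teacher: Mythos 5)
Your construction and verification follow the paper's own proof essentially step for step: the same guess (only agent $m$ produces, her constraint binds at every date), the same log-utility saving recursions, the same ratio of consecutive market-clearing conditions to obtain the closed form for $R_{t+1}$, the same limit argument via $\sum_{i<m}\beta_i^{t-1}s_{i,0}/\sum_{i<m}\beta_i^{t}s_{i,0}\to1/\beta_{i_0}$, and the same appeal to Lemma \ref{ns}. The one place where your sketch is loose is the final monotonicity step. The closed form you propose, $k_{m,t}=S_0A_{m,t}\cdots A_{m,1}\big[\gamma_m\cdot(\cdot)+(1-\gamma_m)\beta_m^{t}\big]$, is structurally the formula of Proposition \ref{credit-infinite-magents} --- and there exactly this structure delivers $\partial Y_t/\partial\gamma_m<0$ for $t\ge2$, so ``term-by-term differentiation'' of such an expression does not by itself produce a positive sign. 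What actually makes the sign positive here is different. Using the market-clearing identity to eliminate the accumulated saving-growth factors, one gets
\begin{align*}
k_{m,t}=R_{t}\cdots R_2\,A_{m,1}S_0\,\frac{\sum_{i<m}\beta_i^{t}s_{i,0}}{\sum_{i<m}s_{i,0}}\Big(\gamma_m+(1-\gamma_m)\beta_m\frac{\sum_{i<m}\beta_i^{t-1}s_{i,0}}{\sum_{i<m}\beta_i^{t}s_{i,0}}\Big),
\end{align*}
and positivity of the $\gamma_m$-derivative rests on two facts you need to state explicitly: (i) each $R_\tau$ for $2\le\tau\le t$ is itself increasing in $\gamma_m$, since $R_\tau/A_{m,\tau}=\gamma_m+(1-\gamma_m)y_\tau$ with $y_\tau<1$ by your third displayed hypothesis, and a map $\gamma\mapsto\gamma+(1-\gamma)y$ with $y<1$ is increasing; and (ii) the final bracket has the same form and is therefore also increasing. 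The key contrast with Proposition \ref{credit-infinite-magents} is that there $R_t=A_m$ for $t\ge2$ is independent of $\gamma_m$, whereas here the entire chain of interest rates moves with $\gamma_m$; without point (i) the monotonicity conclusion does not close. Everything else in your proposal is correct and matches the paper's argument.
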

\begin{proof}[{\bf Proof of Proposition \ref{m-1m}}]
Consider an equilibrium with $R_t\in (A_{m-1,t},A_{m,t}), \forall t$. 

For agent $i<m$, since $A_{i,t}<R_t$, $\forall t$, we have $k_{i,t}=0$ and hence we find that  
\begin{align*}
 s_{i,0}&=\beta_i w_{i,0}, \quad    s_{i,t}=\beta_i R_ts_{i,t-1} \text{ } \forall t\geq 1\\
   -b_{i,t}=s_{i,t}&=\beta_i^t R_{t}\cdots R_{1}s_{i,0}.
\end{align*}

For agent $j>h$, since $A_{j,t}>R_t=A_{h,t}$, $\forall t$, her borrowing constraint is always binding: $R_tb_{j,t-1}= \gamma_j A_{j,t}k_{j,t-1}$. Therefore, we have $$s_{j,t}=k_{j,t}\Big(1-\frac{\gamma_jA_{j,t+1}}{R_{t+1}}\Big), \quad A_{j,t}k_{j,t-1} - R_tb_{j,t-1}=(1-\gamma_j)A_{j,t}k_{j,t-1}, \forall t\geq 1.$$ From this, we can compute that
\begin{align*}
 s_{j,0}&=\beta_j w_{j,0},\\
 s_{j,t}&=\beta_j\frac{(1-\gamma_j)A_{j,t}R_t}{R_t-\gamma_jA_{j,t}}s_{j,t-1}=\Big(\beta_j\frac{(1-\gamma_j)A_{j,t}R_t}{R_t-\gamma_jA_{j,t}}\Big)\cdots \Big(\beta_j\frac{(1-\gamma_j)A_{j,1}R_1}{R_1-\gamma_jA_{j,1}}\Big) s_{j,0}\\
 k_{j,t}&=\frac{1}{1-\frac{\gamma_jA_{j,t+1}}{R_{t+1}}}s_{j,t}=\frac{R_{t+1}}{R_{t+1}-\gamma_jA_{j,t+1}}s_{j,t}\\
 &=\frac{R_{t+1}}{R_{t+1}-\gamma_jA_{j,t+1}}\Big(\beta_j\frac{(1-\gamma_j)A_{j,t}R_t}{R_t-\gamma_jA_{j,t}}\Big)\cdots \Big(\beta_j\frac{(1-\gamma_j)A_{j,1}R_1}{R_1-\gamma_jA_{j,1}}\Big) s_{j,0}\\
  b_{j,t}&= \frac{\gamma_jA_{j,t+1}}{R_{t+1}}k_{j,t}=\frac{\gamma_jA_{j,t+1}}{R_{t+1}-\gamma_jA_{j,t+1}}s_{j,t}\\
  &=\frac{\gamma_jA_{j,t+1}}{R_{t+1}-\gamma_jA_{j,t+1}}\Big(\beta_j\frac{(1-\gamma_j)A_{j,t}R_t}{R_t-\gamma_jA_{j,t}}\Big)\cdots \Big(\beta_j\frac{(1-\gamma_j)A_{j,1}R_1}{R_1-\gamma_jA_{j,1}}\Big) s_{j,0}.
\end{align*}

The market clearing condition writes $\sum_ib_{i,t}=0$, i.e., $\sum_{i>h}b_{i,t}=-\sum_{i<h}b_{i,t}=0$. This becomes
\begin{align*}
\sum_{j>h}\frac{\gamma_jA_{j,t+1}}{R_{t+1}-\gamma_jA_{j,t+1}}\Big(\beta_j\frac{(1-\gamma_j)A_{j,t}R_t}{R_t-\gamma_jA_{j,t}}\Big)\cdots \Big(\beta_j\frac{(1-\gamma_j)A_{j,1}R_1}{R_1-\gamma_jA_{j,1}}\Big) s_{j,0}=\sum_{i<h}\beta_i^t R_{t}\cdots R_{1}s_{i,0}\\
\Leftrightarrow \sum_{j>h}\frac{\gamma_jA_{j,t+1}}{R_{t+1}-\gamma_jA_{j,t+1}}\Big(\beta_j\frac{(1-\gamma_j)A_{j,t}}{R_t-\gamma_jA_{j,t}}\Big)\cdots \Big(\beta_j\frac{(1-\gamma_j)A_{j,1}}{R_1-\gamma_jA_{j,1}}\Big) s_{j,0}=\sum_{i<h}\beta_i^t s_{i,0}\\
\text{Date 1: } \sum_{j>h}\frac{\gamma_jA_{j,1}}{R_{1}-\gamma_jA_{j,1}}s_{j,0}=\sum_{i<h} s_{i,0}\\
\text{Date 2: } \sum_{j>h}\frac{\gamma_jA_{j,2}}{R_{2}-\gamma_jA_{j,2}}\Big(\beta_j\frac{(1-\gamma_j)A_{j,1}}{R_1-\gamma_jA_{j,1}}\Big) s_{j,0}=\sum_{i<h}\beta_i s_{i,0}\\
\text{Date 3: } \sum_{j>h}\frac{\gamma_jA_{j,3}}{R_{3}-\gamma_jA_{j,3}}\Big(\beta_j\frac{(1-\gamma_j)A_{j,2}}{R_2-\gamma_jA_{j,2}}\Big)\Big(\beta_j\frac{(1-\gamma_j)A_{j,1}}{R_1-\gamma_jA_{j,1}}\Big) s_{j,0}=\sum_{i<h}\beta_i^2 s_{i,0}
\end{align*}
Let us focus on the equilibrium with $R_t\in (A_{m,t-1},A_{m,t})$, i.e., $j=m$. We have that

\begin{align*}
\frac{\gamma_mA_{m,t+1}}{R_{t+1}-\gamma_mA_{m,t+1}}\Big(\beta_m\frac{(1-\gamma_m)A_{m,t}}{R_t-\gamma_mA_{m,t}}\Big)\cdots \Big(\beta_m\frac{(1-\gamma_m)A_{m,1}}{R_1-\gamma_mA_{m,1}}\Big) s_{m,0}=\sum_{i<m}\beta_i^t s_{i,0}\\
\text{$R_1$: } \frac{\gamma_mA_{m,1}}{R_{1}-\gamma_mA_{m,1}}s_{m,0}=\sum_{i<m} s_{i,0}\\
\text{$R_2$: } \frac{\gamma_mA_{m,2}}{R_{2}-\gamma_mA_{m,2}}\Big(\beta_m\frac{(1-\gamma_m)A_{m,1}}{R_1-\gamma_mA_{m,1}}\Big) s_{m,0}=\sum_{i<m}\beta_i s_{i,0}.
\end{align*}
Therefore, we can find the interest rate. First, the interest rate $R_1=\gamma_mA_{m,1}\frac{S_0}{\sum_{i<m}s_{i,0}}$. For date $t\geq 1$, we have
\begin{align*}
\frac{\sum_{i\not=m}\beta_i^t s_{i,0}}{\sum_{i\not=m}\beta_i^{t-1} s_{i,0}}&=\frac{\frac{\gamma_mA_{m,t+1}}{R_{t+1}-\gamma_mA_{m,t+1}}\Big(\beta_m\frac{(1-\gamma_m)A_{m,t}}{R_t-\gamma_mA_{m,t}}\Big)\cdots \Big(\beta_m\frac{(1-\gamma_m)A_{m,1}}{R_1-\gamma_mA_{m,1}}\Big) s_{m,0}}{\frac{\gamma_mA_{m,t}}{R_{t}-\gamma_mA_{m,t}}\Big(\beta_m\frac{(1-\gamma_m)A_{m,t-1}}{R_{t-1}-\gamma_mA_{m,t-1}}\Big)\cdots \Big(\beta_m\frac{(1-\gamma_m)A_{m,1}}{R_1-\gamma_mA_{m,1}}\Big) s_{m,0}}\\
&=\frac{\frac{\gamma_mA_{m,t+1}}{R_{t+1}-\gamma_mA_{m,t+1}}\Big(\beta_m\frac{(1-\gamma_m)A_{m,t}}{R_t-\gamma_mA_{m,t}}\Big)}{\frac{\gamma_mA_{m,t}}{R_{t}-\gamma_mA_{m,t}}}=\frac{\gamma_mA_{m,t+1}}{R_{t+1}-\gamma_mA_{m,t+1}}\frac{\beta_m(1-\gamma_m)}{\gamma_m}\\
&=\frac{\beta_m(1-\gamma_m)A_{m,t+1}}{R_{t+1}-\gamma_mA_{m,t+1}}.
\end{align*}

To sum up, we obtain that:
\begin{align}
R_1&=\gamma_mA_{m,1}\frac{S_0}{\sum_{i<m}s_{i,0}}\\
R_{t+1}&=A_{m,t+1}\Big(\gamma_m+(1-\gamma_m)\beta_m\frac{\sum_{i\not=m}\beta_i^{t-1} s_{i,0}}{\sum_{i\not=m}\beta_i^t s_{i,0}}\Big), \forall t\geq 1.
\end{align}

We need to check that $A_{m-1,t}<R_{t}<A_{m,t}$, $\forall t$. It means that
\begin{align}
\frac{A_{m-1,1}}{A_{m,1}}<\gamma_m\frac{S_0}{\sum_{i<m}s_{i,0}}<1\\
\frac{A_{m-1,t+1}}{A_{m,t+1}}<\gamma_m+(1-\gamma_m)\beta_m\frac{\sum_{i\not=m}\beta_i^{t-1} s_{i,0}}{\sum_{i\not=m}\beta_i^t s_{i,0}}<1, \forall t\geq 1\\
\beta_m\frac{\sum_{i\not=m}\beta_i^{t-1} s_{i,0}}{\sum_{i\not=m}\beta_i^t s_{i,0}}<1, \forall t\geq 1.
\end{align}
So, we can see that $R_t$ is increasing in $\gamma_m$ for any $t$.

Let $\frac{A_{m-1,t+1}}{A_{m,t+1}}$ converges  to $\frac{A_{m-1}}{A_{m}}$, $\forall t$ and $\beta_{i_0}=\max_{i<m}\beta_i$. We must have
\begin{align}
\frac{A_{m-1}}{A_{m}}\leq \gamma_m+(1-\gamma_m)\frac{\beta_m}{\beta_{i_0}}\leq 1.
\end{align}

We now find the capital at date $t$. We have
\begin{align*}
k_{m,0}&=\sum_{i}s_{i,0}\\
k_{m,1}&=\sum_{i<m}s_{i,1}+s_{m,1}=\sum_{i<m}\beta_iR_1s_{i,0}+\beta_m\frac{(1-\gamma_m)A_{m,1}R_1}{R_1-\gamma_mA_{m,1}}s_{m,0}\\
&=R_1\Big(\sum_{i<m}\beta_is_{i,0}+\beta_m\frac{(1-\gamma_m)A_{m,1}}{R_1-\gamma_mA_{m,1}}s_{m,0}\Big)=R_1\Big(\sum_{i<m}\beta_is_{i,0}+\beta_m\frac{1-\gamma_m}{\gamma_m}
\sum_{i<m} s_{i,0}\Big)\\
&=\gamma_mA_{m,1}\frac{S_0}{\sum_{i<m}s_{i,0}}\Big(\sum_{i<m}\beta_is_{i,0}+\beta_m\frac{1-\gamma_m}{\gamma_m}
\sum_{i<m} s_{i,0}\Big)\\
&=A_{m,1}S_0\Big(\gamma_m\frac{\sum_{i<m}\beta_is_{i,0}}{\sum_{i<m}s_{i,0}}+\beta_m(1-\gamma_m)\Big).
\end{align*}

Since $\beta_m\frac{\sum_{i\not=m}\beta_i^{t-1} s_{i,0}}{\sum_{i\not=m}\beta_i^t s_{i,0}}<1, \forall t\geq 1$, we see that $k_{m,1}$ is increasing in $\gamma_m$.

For any date $t\geq 2$, by using $\frac{\gamma_mA_{m,t+1}}{R_{t+1}-\gamma_mA_{m,t+1}}\Big(\beta_m\frac{(1-\gamma_m)A_{m,t}}{R_t-\gamma_mA_{m,t}}\Big)\cdots \Big(\beta_m\frac{(1-\gamma_m)A_{m,1}}{R_1-\gamma_mA_{m,1}}\Big) s_{m,0}=\sum_{i<m}\beta_i^t s_{i,0}$, we get that
\begin{align*}
 k_{m,t}&=\frac{R_{t+1}}{R_{t+1}-\gamma_mA_{m,t+1}}\Big(\beta_m\frac{(1-\gamma_m)A_{m,t}R_t}{R_t-\gamma_mA_{m,t}}\Big)\cdots \Big(\beta_m\frac{(1-\gamma_m)A_{m,1}R_1}{R_1-\gamma_mA_{m,1}}\Big) s_{m,0}\\
&=R_{t+1}\ldots R_1\frac{1}{\gamma_mA_{m,t+1}}\frac{\gamma_mA_{m,t+1}}{R_{t+1}-\gamma_mA_{m,t+1}}\Big(\beta_m\frac{(1-\gamma_m)A_{m,t}R_t}{R_t-\gamma_mA_{m,t}}\Big)\cdots \Big(\beta_m\frac{(1-\gamma_m)A_{m,1}R_1}{R_1-\gamma_mA_{m,1}}\Big) s_{m,0}\\
&=R_{t+1}\ldots R_1\frac{1}{\gamma_mA_{m,t+1}}\sum_{i<m}\beta_i^t s_{i,0}\\
&=R_{t+1}\ldots R_2\frac{1}{A_{m,t+1}}\Big(\sum_{i<m}\beta_i^t s_{i,0}\Big) A_{m,1}\frac{S_0}{\sum_{i<m}s_{i,0
}}\\
&=R_{t}\ldots R_2 A_{m,1}S_0\frac{\sum_{i<m}\beta_i^t s_{i,0}}{\sum_{i<m}s_{i,0
}}\Big(\gamma_m+(1-\gamma_m)\beta_m\frac{\sum_{i\not=m}\beta_i^{t-1} s_{i,0}}{\sum_{i\not=m}\beta_i^t s_{i,0}}\Big), \forall t\geq 1,
\end{align*}
where the last equality follows $R_{t+1}=A_{m,t+1}\Big(\gamma_m+(1-\gamma_m)\beta_m\frac{\sum_{i\not=m}\beta_i^{t-1} s_{i,0}}{\sum_{i\not=m}\beta_i^t s_{i,0}}\Big)$.
We see that this is increasing in the credit limit $\gamma_m$, and so is the aggregate output. This is also increasing in agents' productivity.

\end{proof}

}


{\small \section{Characterization of equilibrium in a two-period model}\label{34-proof}
 \setcounter{equation}{0} 
\renewcommand{\theequation}{A.\arabic{equation}}

  \subsection{Proof of Theorem 1}
  \label{append-linear}
We have the following result which characterizes the optimal solution of agents.   
   \begin{lemma}[individual choice - linear production function] \label{lem1}
Assume that $F_i(K)=A_iK$. Let $R>0$ be given. The solution for agent i's maximization problem is described as follows.
\begin{enumerate}
\item If $R\leq \gamma_iA_i$, then there is no solution ($k_i=\infty$).

\item If $A_i > R >\gamma_iA_i$, then agent i borrows from the financial market and the borrowing constraint is binding. We have
$k_i = \frac{R}{R-\gamma_iA_i}S_i,  a_i = \frac{\gamma_iA_i}{R- \gamma_iA_i}S_i, 
\pi_i=A_ik_i-Rb_i=\frac{R(1-\gamma_i)}{R-\gamma_iA_i}A_iS_i$.

\item If $A_i = R$, then the solutions for the agent's problem include all pairs $(k_i, b_i)$ such that $ -S_i \leq b_i \leq \frac{\gamma_i}{1-\gamma_i}S_i$ and $k_i= b_i +S_i$.
\item If $A_i < R$, then agent i does not produce goods and invest all her initial wealth in the  financial market: $k_i = 0, b_i = - S_i. $
\end{enumerate}
\end{lemma}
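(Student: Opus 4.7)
The plan is to reduce the bivariate problem to a univariate one. Since $R > 0$, the objective $A_i k_i - R b_i$ is strictly decreasing in $b_i$ for fixed $k_i$, so at an optimum $b_i$ should be as small as possible. The budget constraint $k_i \leq S_i + b_i$ is the only source of a lower bound on $b_i$, namely $b_i \geq k_i - S_i$, while the borrowing constraint $R b_i \leq \gamma_i A_i k_i$ only bounds $b_i$ from above. Setting $b_i = k_i - S_i$ (binding the budget constraint) and substituting into the borrowing constraint produces the equivalent scalar problem
\[
\max_{k_i \geq 0} \; (A_i - R) k_i + R S_i \quad \text{subject to} \quad k_i(R - \gamma_i A_i) \leq R S_i.
\]

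The four cases then follow from a sign analysis of the coefficients. If $R \leq \gamma_i A_i$, then $R - \gamma_i A_i \leq 0$ makes the constraint vacuous on $\{k_i \geq 0\}$; combined with the fact that $\gamma_i < 1$ forces $R \leq \gamma_i A_i < A_i$, so $A_i - R > 0$ and the objective tends to $+\infty$ as $k_i \to \infty$, yielding case~1. If $R > \gamma_i A_i$, the constraint becomes $k_i \leq \frac{R S_i}{R - \gamma_i A_i}$, after which I would split on the sign of $A_i - R$: when $A_i > R$, the objective is strictly increasing and is maximized at the upper bound $k_i^\ast = \frac{R S_i}{R - \gamma_i A_i}$, from which direct substitution produces the formulas for $b_i^\ast$ and $\pi_i$ in case~2; when $A_i = R$, the objective collapses to the constant $R S_i$, so every pair $(k_i, b_i)$ with $k_i = b_i + S_i$ and $k_i \in [0, S_i/(1-\gamma_i)]$ is optimal, which is exactly case~3 after translating the range of $k_i$ into the stated range of $b_i$; and when $A_i < R$, the objective is strictly decreasing in $k_i$, so $k_i^\ast = 0$ and $b_i^\ast = -S_i$, giving case~4.

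The argument is essentially bookkeeping once the reduction is made, and the main obstacle is not conceptual but notational. The one step that genuinely deserves justification is binding the budget constraint at an optimum: for fixed $k_i$, decreasing $b_i$ strictly raises the objective whenever $R > 0$ and cannot violate the borrowing constraint, since moving $b_i$ down only loosens it. Thus there is no loss of generality in assuming the budget binds, and the case analysis above delivers the four alternatives of the lemma with the exact closed-form expressions stated.
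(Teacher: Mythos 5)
Your proof is correct and complete. The paper actually states Lemma \ref{lem1} without proof, so there is no argument to match against; for the record, the closest analogue in the paper is Lemma \ref{indi-general} for strictly concave technologies, which is proved via the Lagrangian/KKT conditions $ (1+\mu_i\gamma_i)F_i'(k_i)=\lambda_i$, $(1+\mu_i)R=\lambda_i$ together with a case split on whether the multiplier $\mu_i$ on the borrowing constraint vanishes. Your route is different and, for the linear case, cleaner: you first observe that the objective is strictly decreasing in $b_i$ while only the budget constraint bounds $b_i$ from below (decreasing $b_i$ can only relax the borrowing constraint), so the budget binds at any optimum; substituting $b_i=k_i-S_i$ collapses the problem to maximizing $(A_i-R)k_i+RS_i$ over $\{k_i\geq 0:\ k_i(R-\gamma_iA_i)\leq RS_i\}$, and the four cases are an exhaustive sign analysis of $R-\gamma_iA_i$ and $A_i-R$ (using $\gamma_i<1$ to get $\gamma_iA_i<A_i$ in case 1). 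This avoids KKT entirely and also delivers, in case 3, the exact solution set rather than merely a family of solutions, since any allocation with a slack budget constraint is strictly dominated. The KKT approach has the advantage of generalizing to strictly concave $f_i$, where the objective is no longer affine in $k_i$ and the "corner versus interior" dichotomy genuinely requires multipliers; your monotonicity-plus-substitution argument is the natural specialization to the linear case and fills the gap the paper leaves.
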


According to Definition of $\mathbb{D}_{n}, \mathbb{B}_{n}$,
\begin{align}\mathbb{D}_{n}&\equiv \sum_{i=n}^{m}\frac{A_{n}S_i}{A_{n}-\gamma_iA_i}, \text{ }\forall n\geq 1, \quad
\mathbb{B}_n \equiv \sum_{i=n+1}^{m}\frac{A_nS_i}{A_n-\gamma_iA_i}, \text{ }\forall n\geq 1,
\end{align}
 we observe that 
\begin{align}\label{bd}
\frac{S_m}{1-\gamma_m}=\mathbb{D}_m<\cdots <\mathbb{D}_{n+1}<\mathbb{B}_{n}<\mathbb{D}_{n}<\mathbb{B}_{n-1}<\cdots < \mathbb{B}_1=\sum_{i=2}^{m}\frac{A_1S_i}{A_1-\gamma_iA_i}.
\end{align}

Theorem \ref{cate-r} is a direct consequence of the existence of equilibrium and  Lemmas  \ref{lemma2}-\ref{lem-rn} below. First, the following result is a direct consequence of 
Lemma \ref{lem1}.
\begin{lemma}\label{lemma2} Assume that $A_1<A_2<\cdots<A_m$.
 If $\max_{i} (\gamma_iA_i)\geq A_n$ and there exists an equilibrium, then  $R>A_n$.
\end{lemma}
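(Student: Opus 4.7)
The proof will be an almost immediate consequence of the first case of Lemma \ref{lem1}. My plan is as follows. Suppose an equilibrium $(R, (k_i, b_i)_i)$ exists in the sense of Definition \ref{defi}. Then, for each agent $i \in \{1,\ldots,m\}$, the pair $(k_i, b_i)$ solves the individual problem $(P_i)$; in particular $(P_i)$ admits a finite optimum. Now, part (1) of Lemma \ref{lem1} asserts that whenever $R \leq \gamma_i A_i$, the problem $(P_i)$ has no solution because $k_i$ can be taken arbitrarily large: with linear technology and a borrowing constraint that caps the repayment at a fraction $\gamma_i$ of the output $A_i k_i$, the agent can finance an unbounded project with bounded self-outlay and obtain unbounded profit. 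Hence the existence of an optimum forces the strict inequality $R > \gamma_i A_i$ for every $i$.

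Taking the maximum over $i$ then gives $R > \max_i \gamma_i A_i$. Combined with the standing hypothesis $\max_i \gamma_i A_i \geq A_n$, this immediately delivers
\begin{equation*}
R \;>\; \max_i \gamma_i A_i \;\geq\; A_n,
\end{equation*}
which is the conclusion. No contradiction argument or market-clearing input is needed; the statement is purely about what individual optimality rules out.

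I do not anticipate any real obstacle here: the entire content reduces to the fact that a linear production function combined with a proportional collateral constraint generates an arbitrage whenever the borrowing cost $R$ fails to exceed the pledgeable marginal return $\gamma_i A_i$, which is exactly what Lemma \ref{lem1}(1) records. The only thing worth double-checking when writing the formal proof is that the individual problem $(P_i)$ indeed allows $k_i$ and $b_i$ to be scaled jointly along the ray $b_i = k_i - S_i$ (satisfying the budget constraint with equality) so that the borrowing constraint $R b_i \leq \gamma_i A_i k_i$ reduces, for large $k_i$, to the asymptotic condition $R \leq \gamma_i A_i$; this is exactly what underlies Lemma \ref{lem1}(1), which we are free to invoke.
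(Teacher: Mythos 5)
Your argument is correct and is exactly the route the paper takes: it states Lemma \ref{lemma2} as a direct consequence of Lemma \ref{lem1}, namely that existence of a solution to each $(P_i)$ rules out $R\leq \gamma_iA_i$, so $R>\max_i\gamma_iA_i\geq A_n$. Nothing is missing.
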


By comparing $\mathbb{B}_n,\mathbb{D}_n$ with the aggregate capital supply $S\equiv \sum_{i=1}^mS_i$, we obtain the following result.

\begin{lemma} \label{intermediate}Assume that $A_1<A_2<\cdots<A_m$. Denote $S\equiv \sum_{i=1}^mS_i$ the aggregate capital. Consider an equilibrium.
\begin{enumerate}
\item If $A_n>\max_i(\gamma_iA_i)$ and  $R>A_n$, then $\mathbb{B}_n>S$.  
Consequently, if  $A_n>\max_i(\gamma_iA_i) $ and $
\mathbb{B}_n\leq S$, then   $R\leq A_n$.

\item If  $A_{n}>\max_i(\gamma_iA_i)$ and $R<A_{n}$, then $S>\mathbb{D}_n$.  
Consequently, if   $A_{n}>\max_i(\gamma_iA_i)$ and $S\leq \mathbb{D}_n$, then $R\geq A_{n}$. 
\end{enumerate}
\end{lemma}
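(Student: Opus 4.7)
\textbf{Proof plan for Lemma~\ref{intermediate}.} The strategy is to combine the characterization of the individual optimum in Lemma~\ref{lem1} with the market-clearing condition $\sum_i k_i=S$, and then exploit the fact that, for each $i$, the map $R\mapsto \frac{R}{R-\gamma_iA_i}$ is strictly decreasing on $(\gamma_iA_i,\infty)$. A preliminary observation I will use is that in any equilibrium one must have $R>\max_i(\gamma_iA_i)$; otherwise by part~1 of Lemma~\ref{lem1} some agent's capital demand would be infinite, contradicting equilibrium.

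For part~1, I assume $R>A_n$ together with $A_n>\max_i(\gamma_iA_i)$. The plan is to show $k_i=0$ for $i\leq n$ and to bound $k_i$ from above for $i\geq n+1$. Since $A_i\leq A_n<R$ for $i\leq n$, Lemma~\ref{lem1}(4) gives $k_i=0$ (the boundary sub-case $A_i=R$ does not arise here). For $i\geq n+1$, I consider the three sub-cases of Lemma~\ref{lem1}: if $A_i>R$, then $k_i=\tfrac{R\,S_i}{R-\gamma_iA_i}$; if $A_i=R$, then $k_i\leq \tfrac{S_i}{1-\gamma_i}=\tfrac{R\,S_i}{R-\gamma_iA_i}$; and if $A_i<R$, then $k_i=0\leq \tfrac{R\,S_i}{R-\gamma_iA_i}$. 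Therefore
\begin{align*}
S=\sum_{i=n+1}^m k_i\leq \sum_{i=n+1}^m \frac{R\,S_i}{R-\gamma_iA_i}<\sum_{i=n+1}^m \frac{A_n\,S_i}{A_n-\gamma_iA_i}=\mathbb{B}_n,
\end{align*}
where the strict inequality comes from $R>A_n$ and monotonicity of $R\mapsto R/(R-\gamma_iA_i)$. The contrapositive statement in part~1 is then immediate.

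For part~2, I assume $R<A_n$ and $A_n>\max_i(\gamma_iA_i)$. Since $R>\max_i(\gamma_iA_i)$ in equilibrium, every $i\geq n$ satisfies $A_i\geq A_n>R>\gamma_iA_i$, so Lemma~\ref{lem1}(2) gives $k_i=\tfrac{R\,S_i}{R-\gamma_iA_i}$. The remaining capitals $k_i$ for $i<n$ are non-negative, so
\begin{align*}
S=\sum_{i=1}^m k_i\geq \sum_{i=n}^m \frac{R\,S_i}{R-\gamma_iA_i}>\sum_{i=n}^m \frac{A_n\,S_i}{A_n-\gamma_iA_i}=\mathbb{D}_n,
\end{align*}
again by the same monotonicity argument with the inequality reversed because $R<A_n$. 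The contrapositive then gives the converse statement.

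The main obstacle I anticipate is a careful bookkeeping of the three Lemma~\ref{lem1} cases at the boundary $R=A_i$, since there $k_i$ is not uniquely determined; writing $\tfrac{S_i}{1-\gamma_i}=\tfrac{R\,S_i}{R-\gamma_iA_i}$ is the key identity that lets me treat the three sub-cases uniformly in part~1. Otherwise the argument is a direct monotonicity estimate combined with market clearing.
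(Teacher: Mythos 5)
Your proof is correct and follows essentially the same route as the paper: characterize individual demands via Lemma~\ref{lem1}, impose market clearing, and use the strict monotonicity of $R\mapsto R/(R-\gamma_iA_i)$ on $(\gamma_iA_i,\infty)$. The only cosmetic difference is that in part~1 you work with capital-market clearing $\sum_i k_i=S$ while the paper uses the equivalent asset-market form $\sum_i b_i=0$ with the bound $b_i\leq \gamma_iA_iS_i/(R-\gamma_iA_i)$; your explicit treatment of the boundary case $A_i=R$ via the identity $S_i/(1-\gamma_i)=RS_i/(R-\gamma_iA_i)$ is a welcome clarification.
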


\begin{proof}

\begin{enumerate}
\item Since $R>A_i$ for any $i=1,\ldots,n$, Lemma \ref{lem1}  implies that $k_i=0, a_i=-S_i$ $\forall i=1,\ldots,n$. Hence, we have, by using market clearing condition, 
\begin{align}\sum_{i=1}^nS_i=-\sum_{i=1}^na_i=\sum_{i=n+1}^ma_i\leq \sum_{i=n+1}^m\dfrac{\gamma_iA_i}{R- \gamma_iA_i}S_i<\sum_{i=n+1}^m\dfrac{\gamma_iA_i}{A_n- \gamma_iA_i}S_i
\end{align} 
where the first inequality follows $b_i\leq \frac{\gamma_i A_iS_i}{R-\gamma_iA_i}$ while the last inequality follows $R>A_n>\max_i(\gamma_iA_i)$ and the fact that the function $Func(R)\equiv \sum_{i=n+1}^m\frac{\gamma_iA_i}{R- \gamma_iA_i}S_i$ is decreasing in $(\max_{i}(\gamma_iA_i),+\infty)$. Notice that this function is not decreasing in the interval $(0,\infty)$.

\item Since $R<A_{n}$, again Lemma \ref{lem1} implies that $
k_i = \frac{R}{R-\gamma_iA_i}S_i$ and  $a_i = \frac{\gamma_iA_i}{R- \gamma_iA_i}S_i  \forall i\geq n$.  We have
\begin{align}
\sum_{i=n}^m\dfrac{A_{n}S_i}{A_{n}- \gamma_iA_i}S< \sum_{i=n}^m\dfrac{RS_i}{R- \gamma_iA_i}=\sum_{i=n}^mk_i\leq \sum_{i=1}^mS_i=S
\end{align}
where the first inequality follows $A_{n}>R>\max_i(\gamma_iA_i)$.
 \end{enumerate}
\end{proof}

\begin{lemma}\label{lem-an} $R=A_n$ if and only if $A_n>\max_i(\gamma_iA_i)$ and $\mathbb{B}_n\leq S\leq \mathbb{D}_n$.\footnote{We need condition $A_n>M\equiv \max_i(\gamma_iA_i)$ because that $R>\max_i(\gamma_iA_i)$. Condition  $\sum_{i=n+1}^m\frac{A_nS_i}{A_n-\gamma_iA_i}\leq S$ ensures that $R\leq A_n$ while condition $S\leq \sum_{i=n}^m\frac{A_nS_i}{A_n-\gamma_iA_i}$ ensures that $R\geq A_n$.}
\end{lemma}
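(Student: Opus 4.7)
\noindent\textbf{Proof proposal for Lemma \ref{lem-an}.} The plan is to prove the two implications separately and lean heavily on Lemma \ref{lem1} (individual choices under linear technology) and Lemma \ref{intermediate}. The key observation is that at $R=A_n$ the producer $n$ is indifferent (case 3 of Lemma \ref{lem1}), so her bond position $b_n$ is free within an interval whose endpoints, once $b_n$ is pinned down by market clearing, translate exactly into the two inequalities $\mathbb{B}_n\le S\le \mathbb{D}_n$.

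\emph{Forward direction.} Assume $R=A_n$ holds in some equilibrium. Finiteness of every agent's program forces $R>\gamma_iA_i$ for all $i$ (case 1 of Lemma \ref{lem1}), which is $A_n>\max_i(\gamma_iA_i)$. Using $A_1<\cdots<A_m$, Lemma \ref{lem1} then gives: $k_i=0$, $b_i=-S_i$ for $i<n$; $k_i=\frac{A_n S_i}{A_n-\gamma_iA_i}$, $b_i=\frac{\gamma_iA_i S_i}{A_n-\gamma_iA_i}$ for $i>n$; and for agent $n$, $k_n=S_n+b_n$ with $-S_n\le b_n\le \frac{\gamma_n}{1-\gamma_n}S_n$. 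I would then impose the asset market-clearing condition $\sum_i b_i=0$, which solves for $b_n$ as
\begin{equation*}
b_n=\sum_{i<n}S_i-\sum_{i>n}\frac{\gamma_iA_i S_i}{A_n-\gamma_iA_i}=S-S_n-\mathbb{B}_n,
\end{equation*}
using the algebraic identity $\frac{\gamma_iA_i}{A_n-\gamma_iA_i}=\frac{A_n}{A_n-\gamma_iA_i}-1$. The admissibility bounds $-S_n\le b_n\le \frac{\gamma_n}{1-\gamma_n}S_n=\frac{A_n S_n}{A_n-\gamma_n A_n}$ then translate directly into $\mathbb{B}_n\le S$ and $S\le \mathbb{B}_n+\frac{A_n S_n}{A_n-\gamma_n A_n}=\mathbb{D}_n$, respectively.

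\emph{Reverse direction.} Assume $A_n>\max_i(\gamma_iA_i)$ and $\mathbb{B}_n\le S\le \mathbb{D}_n$. An equilibrium exists (Proposition \ref{existence}); call its interest rate $R$. I would invoke Lemma \ref{intermediate}: part 1 together with $\mathbb{B}_n\le S$ yields $R\le A_n$, and part 2 together with $S\le \mathbb{D}_n$ yields $R\ge A_n$. Combining these gives $R=A_n$.

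\emph{Main obstacle.} There is no serious technical difficulty beyond bookkeeping: the only subtle step is handling the indeterminate agent $n$ in the forward direction, where both her budget constraint $k_n\le S_n+b_n$ (acting as $b_n\ge -S_n$ when $k_n\ge 0$) and her borrowing constraint $b_n\le \gamma_n k_n$ (yielding the upper bound $\frac{\gamma_n}{1-\gamma_n}S_n$) must be read off correctly from case 3 of Lemma \ref{lem1}. Once that interval is identified and market clearing pins down $b_n=S-S_n-\mathbb{B}_n$, the equivalence of the endpoint conditions with $\mathbb{B}_n\le S\le \mathbb{D}_n$ is a one-line simplification using $\frac{A_n S_n}{A_n-\gamma_n A_n}=\frac{S_n}{1-\gamma_n}$.
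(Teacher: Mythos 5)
Your proposal is correct and follows essentially the same route as the paper: the reverse direction is the identical appeal to both parts of Lemma \ref{intermediate}, and your forward direction, which pins down $b_n$ from $\sum_i b_i=0$ and reads the two inequalities off the admissible interval $-S_n\le b_n\le \frac{\gamma_n}{1-\gamma_n}S_n$, is just the bond-side restatement of the paper's argument, which bounds $k_n\in[0,\frac{RS_n}{R-\gamma_nA_n}]$ and uses $\sum_i k_i=S$. The two are equivalent via $k_n=S_n+b_n$, so there is nothing to add.
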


\begin{proof}
If $R=A_n$, we have $k_i=0 \text{ }\forall i\leq n-1$ and  $k_i=\frac{RS_i}{R-\gamma_iA_i}\text{ } \forall i \geq n+1$. This implies that $A_n=R>\max_{i}(\gamma_iA_i)$. 
Since $0\leq k_n\leq \frac{RS_i}{R-f_nA_n}$, we have 
\begin{align}\sum_{i=n+1}^m\frac{RS_i}{R-\gamma_iA_i}\leq \sum_ik_i=\sum_{i=n}^mk_i\leq \sum_{i=n}^m\frac{RS_i}{R-\gamma_iA_i}=\sum_{i=n}\frac{A_nS_i}{A_n-\gamma_iA_i}
\end{align} 

By converse, suppose that $A_n>\max_i(\gamma_iA_i)$ and $\sum_{i=n+1}^m\frac{A_nS_i}{A_n-\gamma_iA_i}\leq S\leq \sum_{i=n}^m\frac{A_nS_i}{A_n-\gamma_iA_i}$. Applying points 1 and 2 of Lemma \ref{intermediate}, we have $R\geq A_n$ and $R\leq A_n$. Hence $R=A_n$. 
\end{proof}

By combining Lemma \ref{intermediate} and the fact that  $R>\max_i(\gamma_iA_i)$, we obtain the following result.

\begin{lemma} \label{intermediate1}Assume that $A_1<A_2<\cdots<A_m$. Consider an equilibrium. If 
 $R\in (A_n,A_{n+1})$, then $A_{n+1}>\max_i(\gamma_iA_i)$ and $R=R^L_n$ (hence $R^L_n\in(A_n,A_{n+1})$).
\end{lemma}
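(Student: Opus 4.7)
The plan is to leverage the individual optimization characterization in Lemma \ref{lem1} together with the market clearing condition. First, I would use the hypothesis $R \in (A_n, A_{n+1})$ to classify agents by their productivity relative to the interest rate. For any agent $i \leq n$, we have $A_i \leq A_n < R$, so point 4 of Lemma \ref{lem1} gives $k_i = 0$ and $b_i = -S_i$. For any agent $i \geq n+1$, we have $A_i \geq A_{n+1} > R$, so point 2 of Lemma \ref{lem1} applies and yields $k_i = \frac{R S_i}{R - \gamma_i A_i}$ with a binding borrowing constraint.

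Next I would observe that equilibrium requires finite capital demand for every agent. Since point 1 of Lemma \ref{lem1} rules out $R \leq \gamma_i A_i$ (no solution, $k_i = \infty$), we must have $R > \gamma_i A_i$ for every $i$, hence $R > \max_i(\gamma_i A_i)$. Combined with $R < A_{n+1}$, this immediately gives the first conclusion $A_{n+1} > \max_i(\gamma_i A_i)$.

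For the second conclusion, I would plug the agent allocations into the capital market clearing condition $\sum_i k_i = S$, which reduces to
\begin{align*}
\sum_{i=n+1}^{m} \frac{R S_i}{R - \gamma_i A_i} = S.
\end{align*}
This is exactly the equation (\ref{rn}) defining $R^L_n$. By the footnote accompanying (\ref{rn}), the function $f(x) \equiv \sum_{i=n+1}^m \frac{x S_i}{x - \gamma_i A_i}$ is continuous and strictly decreasing on $(\max_{i \geq n+1}(\gamma_i A_i), \infty)$, so it admits a unique solution there. Since $R > \max_i(\gamma_i A_i) \geq \max_{i \geq n+1}(\gamma_i A_i)$, our $R$ lies in this interval and therefore coincides with the unique (hence greatest) solution $R^L_n$. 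This also shows $R^L_n \in (A_n, A_{n+1})$.

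There is no substantive obstacle here: the argument is a direct book-keeping exercise once the individual demands from Lemma \ref{lem1} are aggregated. The only point requiring mild care is making sure the ``greatest solution'' in the definition of $R^L_n$ coincides with the unique solution in the admissible range $(\max_i \gamma_i A_i, \infty)$; this is guaranteed by the monotonicity of $f$ on that interval noted in the footnote, so any other root of $f(x)=S$ must lie below $\max_{i \geq n+1}(\gamma_i A_i)$ and cannot be $R$.
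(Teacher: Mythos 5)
Your proof is correct and follows essentially the same route as the paper: classify agents via Lemma \ref{lem1} using $R\in(A_n,A_{n+1})$, deduce $R>\max_i(\gamma_iA_i)$ from the requirement that each agent's problem have a solution, and then read off $R=R^L_n$ from the market-clearing equation (\ref{rn}) and the monotonicity of the capital-demand function on $(\max_{i\geq n+1}(\gamma_iA_i),\infty)$. Your write-up is in fact more explicit than the paper's one-line derivation from Lemma \ref{intermediate}, including the needed remark that the root you find is indeed the greatest root defining $R^L_n$.
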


We now identify the necessary and sufficient conditions under which $R=R^L_n$.

\begin{lemma}\label{lem-rn}$R=R^L_n\not=A_n$ if and only if one of the following conditions is satisfied:
\begin{enumerate}
\item $\max_i(\gamma_iA_i)< A_n< r^L_n< A_{n+1}$, or equivalently $\max_i(\gamma_iA_i)< A_n$ and $\mathbb{D}_{n+1}<S<\mathbb{B}_n$
\item $A_n\leq \max_i(\gamma_iA_i)<R^L_n<A_{n+1}$, or equivalently $A_n\leq \max_i(\gamma_iA_i)<R^L_n$ and $\mathbb{D}_{n+1}<S$.

\end{enumerate} 
In any case, we have that $R^L_n\in [A_n,A_{n+1})$.
\end{lemma}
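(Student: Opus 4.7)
The plan is to exploit the monotonicity of the auxiliary function $f(x) \equiv \sum_{i=n+1}^m \frac{xS_i}{x - \gamma_i A_i}$ on the interval $(\max_i(\gamma_i A_i), \infty)$. On that interval $f$ is continuous and strictly decreasing, and $R^L_n$ is (by definition) the unique point where $f(R^L_n) = S$. Noting that $f(A_n) = \mathbb{B}_n$ whenever $A_n > \max_i(\gamma_i A_i)$ and $f(A_{n+1}) = \mathbb{D}_{n+1}$ (which is well defined because $A_{n+1} > \max_i(\gamma_i A_i)$ whenever $\mathbb{D}_{n+1}$ appears), monotonicity immediately yields the claimed equivalences $\mathbb{D}_{n+1} < S < \mathbb{B}_n \Leftrightarrow A_n < R^L_n < A_{n+1}$ (under case 1) and $\mathbb{D}_{n+1} < S \Leftrightarrow R^L_n < A_{n+1}$ (under case 2). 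So the content to prove reduces to showing $R = R^L_n \ne A_n$ is an equilibrium rate if and only if $R^L_n$ genuinely lies strictly between $A_n$ and $A_{n+1}$ and sits above $\max_i(\gamma_i A_i)$.

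For the forward direction, I would start from an equilibrium with $R = R^L_n \ne A_n$. Since the equation defining $R^L_n$ only has a solution in $(\max_i(\gamma_i A_i), \infty)$, we automatically obtain $R > \max_i(\gamma_i A_i)$. By Lemma \ref{lemma2} combined with the impossibility of $R = A_i$ (otherwise $R = R^L_n = A_i$ for some $i$, which via Lemma \ref{lem-an} would force the $A_n$-branch of the equilibrium), we see that $R$ must actually lie strictly between two consecutive productivities. Lemma \ref{intermediate1} then pins this down: $R \in (A_n, A_{n+1})$ forces $A_{n+1} > \max_i(\gamma_i A_i)$ and $R = R^L_n$. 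The dichotomy between cases 1 and 2 is then simply whether $A_n$ itself exceeds $\max_i(\gamma_i A_i)$ (case 1) or not (case 2).

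For the backward direction, I would construct the candidate equilibrium from Lemma \ref{lem1}: set $R = R^L_n$, give agents $1,\dots,n$ the non-producing allocation ($k_i=0$, $b_i = -S_i$) and give agents $n+1,\dots,m$ the binding-constraint allocation $k_i = R S_i/(R-\gamma_i A_i)$, $b_i = \gamma_i A_i S_i/(R-\gamma_i A_i)$. The defining equation $f(R^L_n) = S$ is precisely the market-clearing condition $\sum_i k_i = S$ (equivalently $\sum_i b_i = 0$). The condition $R^L_n \in (A_n, A_{n+1})$ ensures each individual choice really is the unique optimum prescribed by Lemma \ref{lem1} (lenders satisfy $A_i < R$, borrowers satisfy $R < A_i$), and $R^L_n > \max_i(\gamma_i A_i)$ keeps all the fractions finite and positive. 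This completes the characterization, and the ``in any case'' final clause $R^L_n \in [A_n, A_{n+1})$ follows since in both cases $R^L_n < A_{n+1}$ and $R^L_n > A_n$.

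The main obstacle is the careful case split around whether $A_n$ exceeds $\max_i(\gamma_i A_i)$: in case 2 we cannot simply plug $A_n$ into $f$, so the argument for ``$R^L_n > A_n$'' has to go through the inequality $R^L_n > \max_i(\gamma_i A_i) \ge A_n$ rather than through $f(A_n) = \mathbb{B}_n$. Keeping the two cases cleanly separated while reusing the same monotonicity argument for the upper bound $R^L_n < A_{n+1}$ is the only bookkeeping hurdle; everything else is a direct application of Lemmas \ref{lem1}--\ref{intermediate1}.
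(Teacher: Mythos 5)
Your backward direction is sound and in fact takes a more constructive route than the paper: you build the candidate allocation directly from Lemma \ref{lem1} and observe that the defining equation of $R^L_n$ is exactly market clearing, whereas the paper starts from an arbitrary equilibrium and squeezes its rate into $(A_n,A_{n+1})$ using Lemmas \ref{intermediate}, \ref{lem-an} and \ref{intermediate1}. The one caveat is that your construction shows that \emph{some} equilibrium has rate $R^L_n$, while the paper's argument shows that \emph{every} equilibrium does; since uniqueness of the equilibrium rate is part of what Theorem \ref{cate-r} assembles from these lemmas, the stronger version is the one actually needed.

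The forward direction, however, has a genuine gap. The whole content of that implication is to show that the equilibrium rate $R=R^L_n$ lies in the interval $(A_n,A_{n+1})$ indexed by the \emph{same} $n$ as $R^L_n$. You assert that $R$ ``must lie strictly between two consecutive productivities'' and then invoke Lemma \ref{intermediate1} to ``pin this down''---but Lemma \ref{intermediate1} takes $R\in(A_n,A_{n+1})$ as its hypothesis; it cannot tell you which gap $R$ falls into. A priori $R=R^L_n$ could sit in $(A_j,A_{j+1})$ for $j\not=n$, or equal some $A_j$ with $j\not=n$ (your hypothesis only excludes $R=A_n$, and your parenthetical ``would force the $A_n$-branch'' is not an argument: $R=A_j$ forces the $\mathcal{A}_j$-branch, which does not visibly contradict $R=R^L_n$). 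The missing step---exactly what the paper supplies---is to combine the defining equation $\sum_{i=n+1}^m\frac{RS_i}{R-\gamma_iA_i}=S$ with market clearing $\sum_ik_i=S$: if $R<A_{n+1}$, every agent $i\geq n+1$ demands $k_i=\frac{RS_i}{R-\gamma_iA_i}$, so these agents already absorb all of $S$, forcing $k_i=0$ for all $i\leq n$; in particular $k_n=0$ requires $R\geq A_n$, and $R\not=A_n$ then gives $R>A_n$. A symmetric argument, plus ruling out $R=A_{n+1}$, gives $R<A_{n+1}$. If you prefer to keep your strategy, you can instead close the gap by noting that $R\in(A_j,A_{j+1})$ with $j\not=n$ would give $R=R^L_j=R^L_n$ by Lemma \ref{intermediate1}, whence $\sum_{i=\min(j,n)+1}^{\max(j,n)}\frac{RS_i}{R-\gamma_iA_i}=0$, which is impossible since each summand is positive for $R>\max_i(\gamma_iA_i)$---but this has to be said; as written the implication is not proved.
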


\begin{proof}
{\bf Part 1}. Assume that $R=R^L_n\not=A_n$. By definition  of $R$ and $R^L_n$, we have $\sum_{i=n+1}^m\frac{RS_i}{R- \gamma_iA_i}=S$, and $R^L_n>\max_{i}(\gamma_iA_i)$. We will prove that $R=R^L_n\in (A_n,A_{n+1})$.

If $R\leq A_n$, then $R<A_{n+1}$, and hence $k_i=\frac{RS_i}{R- \gamma_iA_i}$ $\forall i\geq n+1$. Since  $\sum_{i=n+1}^m\frac{RS_i}{R- \gamma_iA_i}=S=\sum_{i}k_i$. We have $k_i=0$ $\forall i\leq n$, and hence $k_n=0$. This implies that $R\geq A_n$. Therefore, we have $R=A_n$, a contradiction. Thus, we have $R>A_n$.

If $R\geq A_{n+1}$, we have $k_i=0$ $\forall i\leq n$. Hence $S=\sum_ik_i\leq \sum_{i=n+1}^m\frac{RS_i}{R- \gamma_iA_i}.$ Since $\sum_{i=n+1}^m\frac{RS_i}{R- \gamma_iA_i}=S$, we have $k_i=\frac{RS_i}{R- \gamma_iA_i}$ $\forall i\geq n+1$. Hence $A_{n+1}\geq R$. So, $R=A_{n+1}$. We have just proved that $R\leq A_{n+1}$. By definition of $R$, we get that $A_{n+1}>\max_{i}(\gamma_iA_i)$. If $R^L_n=A_{n+1}$, then applying Lemma \ref{lem-an}, we have  $\sum_{i=n+2}^m\frac{A_{n+1}S_i}{A_{n+1}-\gamma_iA_i}=\mathbb{B}_{n+1}\leq S$. However, by definition of $R^L_n$, we have $\sum_{i=n+1}^m\frac{A_{n+1}S_i}{A_{n+1}- \gamma_iA_i}=S$, contradiction. Therefore, we obtain $R^L_n<A_{n+1}$. 

We have just proved that $R^L_n\in (A_n,A_{n+1})$.  
Applying point 2 of Lemma \ref{intermediate}, we have $S>\mathbb{D}_{n+1}$.  There are two cases:
\begin{enumerate}
\item $\max_{i}(\gamma_iA_i)\geq A_n$. In this case, we have $A_n\leq \max_{i}(\gamma_iA_i)<R^L_n<A_{n+1}$.

\item $\max_{i}(\gamma_iA_i)< A_n$. We get $\max_{i}(\gamma_iA_i)< A_n< R^L_n< A_{n+1}$. Notice that, in this case, $R^L_n\in (A_n,A_{n+1})$ is equivalent to $\mathbb{D}_{n+1}<S< \mathbb{B}_n$.
\end{enumerate}
{\bf Part 2}. Conversely, assume that (i) $A_n\leq \max_{i}(\gamma_iA_i)<R^L_n<A_{n+1}$ or (ii) $\max_{i}(\gamma_iA_i)< A_n< R^L_n< A_{n+1}$.

\begin{enumerate}
\item If $A_n\leq \max_{i}(\gamma_iA_i)<R^L_n<A_{n+1}$. Condition $A_n\leq \max_{i}(\gamma_iA_i)$ implies that $R>A_n$. Then $k_i=0$ $\forall i\leq n$, and hence $S=\sum_{i=n+1}^mk_i\leq \sum_{i=n+1}^m\frac{RS_i}{R- \gamma_iA_i}$

 By definition $R^L_n$, we have 
$S=\sum_{i=n+1}^m\frac{R^L_nS_i}{R^L_n- \gamma_iA_i}$. Since the function $f(X)\equiv \sum_{i=n+1}^m\frac{XS_i}{X- \gamma_iA_i}$ is decreasing in the interval $(\max_{i\geq n+1}(\gamma_iA_i),\infty)$ and $R,R^L_n>\max_i(\gamma_iA_i)$, we have $R\leq R^L_n$. This implies that $R\in (A_n,A_{n+1})$. Therefore, Lemma \ref{intermediate1} implies that $R=R^L_n$.

\item If $\max_{i}(\gamma_iA_i)< A_n$ and $\mathbb{D}_{n+1}<S< \mathbb{B}_n$. We have $S<\mathbb{D}_n$ because $\mathbb{D}_n>\mathbb{B}_n$. According to point 2 of Lemma \ref{intermediate}, we have $R\geq A_n$. 

Condition $S>\mathbb{D}_{n+1}$ implies that $S>\mathbb{B}_{n+1}$ because $\mathbb{D}_{n+1}>\mathbb{B}_{n+1}$. According to point 1 of Lemma \ref{intermediate}, we have $R\leq A_{n+1}$. 

If $R=A_{n+1}$, then Lemma \ref{lem-an} implies that $S\leq \mathbb{D}_{n+1}$. This is a contradiction because $S>\mathbb{D}_{d+1}$.  

If $R=A_n$, Lemma \ref{lem-an} implies that $S\in [\mathbb{B}_n,\mathbb{D}_n]$. However, $S\leq \mathbb{B}_n$. Thus, we have $S=\mathbb{B}_n=\sum_{i=n+1}^m\frac{A_nS_i}{A_n-\gamma_iA_i}$. Since $A_n>\max_i(\gamma_iA_i)$, then $A_n=R^L_n$, a contradiction.

Summing up, we have $R\in (A_n,A_{n+1})$. By applying point 3 of Lemma \ref{intermediate1}, we have $R=R^L_n$. 
\end{enumerate}
\end{proof}


\begin{remark} {\normalfont We can check that the regimes in Theorem \ref{cate-r} are not overlap, and the union of these regimes is equal to the set of economies satisfying $A_1<\cdots<A_m$, or, formally,
\begin{subequations}
\begin{align}
\label{setE1}&\mathbf{E}=
\cup_{i=1}^{m}\mathcal{A}_i\cup \cup_{i=1}^{m-1}\mathcal{R}_i\\
\label{setE2}&\mathcal{X}\cap\mathcal{Y}=\emptyset \text{ } \forall X,Y\in \{\mathcal{A}_1,\ldots,\mathcal{A}_m,\mathcal{R}_1,\ldots,\mathcal{R}_{m-1}\} \text{ and } X\not=Y.
\end{align}
\end{subequations}


Denote $M\equiv \max_i(\gamma_iA_i)$. By definition, we see that:
\begin{enumerate}
\item The economy $\mathcal{E} \equiv (F_i, \gamma_i,S_i)_{i=1,\ldots,m} \in \mathcal{A}_1$ if and only if $A_1>\max_i(\gamma_iA_i)$ and $S>\mathbb{B}_1$. 

\item $\mathcal{E}\in \mathcal{A}_m$ if and only if $S\leq \mathbb{D}_m$.

\item $\mathcal{E}\in \mathcal{A}_n$ with $n\in \{2,\ldots,m-1\}$ if and only if  $A_n>\max_i(\gamma_iA_i)$ and $\mathbb{B}_n\leq S\leq \mathbb{D}_n$.

\item $\mathcal{R}_n\equiv \mathcal{R}_{n,1}\cup \mathcal{R}_{n,2}$ with $n\in \{1,\ldots,m-1\}$ where
\begin{enumerate}
\item  $\mathcal{R}_{n,1}$ is the set of economies such that $A_{n}>\max_i(\gamma_iA_i)$ and $\mathbb{D}_{n+1}<S< \mathbb{B}_n$.

\item $\mathcal{R}_{n,2}$ is the set of economies such that $A_{n+1}>\max_i(\gamma_iA_i)\geq A_n$ and $\mathbb{D}_{n+1}<S$.
\end{enumerate}

\end{enumerate}

{\bf We now prove (\ref{setE1})} which implies the existence of equilibrium. It suffices to verify that $\mathbf{E}\subset
\cup_{i=1}^{m}\mathcal{A}_i\cup \cup_{i=1}^{m-1}\mathcal{R}_i$. Let us consider an economy $\mathcal{E}$. There are only two cases.
\begin{enumerate}
\item $\max_i(\gamma_iA_i)<A_1$. In this case, we have $\max_i(\gamma_iA_i)<A_n$ $\forall n$. Therefore, it is easy to see that $\mathcal{E}\in
\cup_{i=1}^{m}\mathcal{A}_i\cup \cup_{i=1}^{m-1}\mathcal{R}_{i,1}\subset 
\cup_{i=1}^{m}\mathcal{A}_i\cup \cup_{i=1}^{m-1}\mathcal{R}_{i}$.

\item There exists $n\in \{1,\ldots,m-1\}$ such that $A_{n+1}>\max_i(\gamma_iA_i)\geq A_n$. There are two sub-cases.
\begin{enumerate}
\item $S>\mathbb{D}_{n+1}$. In this case, $\mathcal{E}\in \mathcal{R}_{n+1,2}$.
\item $S\leq \mathbb{D}_{n+1}$. Recall that $M<A_{n+1}$. In this case, we will prove that $\mathcal{E}\in
\cup_{i=n+1}^{m}\mathcal{A}_i\cup \cup_{i=n+1}^{m-1}\mathcal{R}_i$. Indeed, since $S\leq \mathbb{D}_{n+1}$, there are $2(m-n)-1$ cases.
\begin{enumerate}
\item If there exists $i\in \{n+1,m-1\}$ such that $\mathbb{B}_{i}\leq S\leq \mathbb{D}_{i}$. Then $\mathcal{E}\in \mathcal{A}_{i}$ because $A_i\geq A_{n+1}>\max_i(\gamma_iA_i)$.

\item If there exists $i\in \{n+1,m-1\}$ such that  $\mathbb{D}_{i+1}\leq S\leq \mathbb{B}_{i}$. Then $\mathcal{E}\in \mathcal{R}_{i,1}$ because $A_i\geq A_{n+1}>\max_i(\gamma_iA_i)$.
\item Last, if $S\leq \mathbb{D}_m$, then $\mathcal{E}\in \mathcal{R}_{m}$.
\end{enumerate}

\end{enumerate}

\end{enumerate}

{\bf Proof of (\ref{setE2}).} Observe that the equilibrium interest rate is unique if (\ref{setE2}) holds. We have to prove that:
\begin{subequations}
\begin{align}
\mathcal{A}_n\cap\mathcal{A}_h&=\emptyset \text{ }\forall n\not=h\\
\mathcal{A}_n\cap\mathcal{R}_{h,1}&=\emptyset \text{ }\forall n,h\\ 
\mathcal{A}_n\cap\mathcal{R}_{h,2}&=\emptyset \text{ }\forall n,h\\
\mathcal{R}_n\cap\mathcal{R}_{h}&=\emptyset \text{ }\forall n\not=h.
\end{align}
\end{subequations}

Following (\ref{bd}), it is easy to see that the two first equalities hold.

We now prove that $\mathcal{A}_n\cap\mathcal{R}_{h,2}=\emptyset \text{ }\forall n,h$.  Suppose that there exists $\mathcal{E}\in \mathcal{A}_{n}\cap \mathcal{R}_{h,2}$. It means that (1) $A_n>\max_i(\gamma_iA_i)$ and $\mathbb{B}_n\leq S\leq \mathbb{D}_n$, and (ii) $A_{h+1}>\max_i(\gamma_iA_i)\geq A_h$ and $\mathbb{D}_{h+1}<S$. From these conditions we get $A_n>A_h$, and hence $n\geq h+1$.  Thus, we obtain
$S>\mathbb{D}_{h+1}\geq \mathbb{D}_n\geq S$, a contradiction. Therefore, we have $\mathcal{A}_n\cap\mathcal{R}_{h,2}=\emptyset \text{ }\forall n,h$.

Last, we prove $\mathcal{R}_n\cap\mathcal{R}_{h}=\emptyset$, or equivalently $\mathcal{R}_{n,i}\cap \mathcal{R}_{h,j}=\emptyset$ $\forall i,j\in \{1,2\}$, $\forall n\not=h$. Without loss of generality, we can assume that $n<h$. It is easy to see that $\mathcal{R}_{n,1}\cap \mathcal{R}_{h,1}=\emptyset$ and $\mathcal{R}_{n,2}\cap \mathcal{R}_{h,2}=\emptyset$. We now prove that $\mathcal{R}_{n,1}\cap \mathcal{R}_{h,2}=\emptyset$ and $\mathcal{R}_{n,2}\cap \mathcal{R}_{h,1}=\emptyset$. 
\begin{enumerate}
\item  Suppose that there exists $\mathcal{E}\in \mathcal{R}_{n,1}\cap \mathcal{R}_{h,2}$. It means that $A_{n}>\max_i(\gamma_iA_i)$; $\mathbb{D}_{n+1}<S<\mathbb{B}_n$; $A_{h+1}>\max_i(\gamma_iA_i) \geq A_h$;  $\mathbb{D}_{h+1}<S$. Since $h>n$, then $A_h>A_n>\max_i(\gamma_iA_i)$. This is a contradiction because $\max_i(\gamma_iA_i)\geq A_h$. So, we have $ \mathcal{R}_{n,1}\cap \mathcal{R}_{h,2}=\emptyset$.

\item Suppose that there exists $\mathcal{E}\in \mathcal{R}_{n,2}\cap \mathcal{R}_{h,1}$. It means that $A_{n+1}>\max_i(\gamma_iA_i)\geq A_n$; $\mathbb{D}_{n+1}<S$;$A_{h}>\max_i(\gamma_iA_i)$; $\mathbb{D}_{h+1}<S<\mathbb{B}_h$.

Since $h\geq n+1$, we have $\mathbb{B}_h\leq \mathbb{B}_{n+1}<\mathbb{D}_{n+1}<S<\mathbb{B}_h$, a contradiction.
\end{enumerate}

}
\end{remark}

\begin{remark}In Theorem \ref{cate-r}, we assume that $A_1<\cdots<A_m$. However, we can characterize the set of equilibria in the general case where some agents have the same productivity. Indeed, without lost of generality, we can (1) rank that $A_i\leq A_{i+1}$, $\forall i$, and assume that (2) the set $\{A_i: i\in \{1, \ldots, m\}\}$ has the cardinal $p$, $p\leq m$ and its distinct values are $(A_{i_t})_{t=1}^p$, where $A_1=A_{i_1}<A_{i_2}<\cdots<A_{i_{p}}=A_m$. We can decompose that
\begin{align*}
A_1,A_2,\ldots,A_m&=\underbrace{A_1,\ldots,A_{1}}_\text{$i_1$ times},\underbrace{A_{i_1+1},\ldots,A_{i_1+i_2}}_\text{$i_2$ times},\ldots, \underbrace{A_{i_1+\cdots+i_{p-1}},\ldots,A_{m}}_\text{$i_m$ times}
\end{align*}

Let us denote $
\mathbb{A}_t\equiv A_{i_t}, \quad \mathbb{S}_t\equiv \sum_{i: A_i=A_{i_t}}S_i.$ Then, we can use the same argument  in Theorem \ref{cate-r}  (but we replace $m$ by $p$, $A_i$ by $\mathbb{A}_i$, $S_i$ by $\mathbb{S}_i$) to determine the unique equilibrium interest rate. However, there may be multiple equilibrium allocations when one of the sets $\{i: A_i=A_{i_1}\}, \ldots, \{i: A_i=A_{i_p}\}$ has multiple elements. 
\end{remark}


\subsection{Proof of Theorem 2 (economy with strictly concave technologies)}
\label{general1-proof}

\subsubsection*{Individual optimal choice}

Before present the properties of individual optimal choice, we introduce some notations:
\begin{definition}\label{knb}  Given $R,\gamma_i,A_i,S_i$, denote $k^n_i=k^n_i(R/A_i)$ the unique solution to the equation  $A_if_i'(k)=R$ and $k^b_i=k^b_i(\frac{R}{\gamma_iA_i},S_i)$ the unique solution to $R(k-S_i)=\gamma_iA_if_i(k)$.
  \end{definition}
  
 $k^b_i$ ($k^n_i$, respectively) represents the capital level of agent $i$ when her borrowing constraint is binding (not binding, respectively). Under assumptions in Lemma \ref{assum-concave}, we can verify that: (1) $k^n_i$ is strictly decreasing in $R/A_i$. Moreover, $\lim_{R/A_1\to 0}k^n_i=+\infty$, and  $\lim_{R/A_1\to \infty}k^n_i=0$.  (2)
 $k^b_i$ is strictly increasing in $S_i$ but strictly decreasing in $\frac{R}{\gamma_iA_i}$. Moreover, $\lim_{R/A_i\to 0}k^b_i=+\infty$, and  $\lim_{R/A_i\to \infty}k^b_i=S_i$. 
 
The following result characterizes the solution of the problem $(P_i)$.
 \begin{lemma}[individual choice - strictly concave production function]\label{indi-general}Under Assumption \ref{assum-concave}, there exists a unique solution to the problem $(P_i)$. The optimal capital $k_i$ is increasing in TFP $A_i$, credit limit $\gamma_i$ but decreasing in the interest rate $R$. 
\begin{enumerate}
\item If $R\frac{k^n_i(R/A_i)-S_i}{A_if_i(k^n_i(R/A_i))} \geq \gamma_i$, then credit constraint is binding and the capital level is $k_i=k^b_i$. Moreover, $ k_i= k^b_i\leq k^n_i$.

\item If $R\frac{k^n_i(R/A_i)-S_i}{A_if_i(k^n_i(R/A_i))} < \gamma_i$, then credit constraint is not binding and $k=k^n_i$. In this case, we have $k_i=k^n_i< k^b_i$.
\end{enumerate}

\end{lemma}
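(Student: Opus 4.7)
The plan is to exploit the fact that, under strict concavity of $f_i$, the budget constraint $k_i\le S_i+b_i$ is binding at any optimum (since $F_i$ is strictly increasing, increasing $k_i$ with $b_i$ adjusted accordingly only helps the objective until the borrowing constraint bites). Substituting $b_i=k_i-S_i$ reduces the problem $(P_i)$ to the one-dimensional concave program
\begin{align*}
\max_{k_i\ge 0}\; A_if_i(k_i)-R(k_i-S_i)\quad \text{s.t.}\quad R(k_i-S_i)\le \gamma_i A_if_i(k_i).
\end{align*}
The feasible set is convex because $k\mapsto R(k-S_i)-\gamma_iA_if_i(k)$ is convex (linear minus concave), and it is nonempty (contains $k_i=S_i$) and bounded above (as $f_i'(\infty)=0<R/(\gamma_iA_i)$ the constraint fails for $k$ large). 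Strict concavity of the objective then yields existence and uniqueness of the maximizer.

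Next I would write the KKT conditions with multiplier $\mu_i\ge 0$ for the borrowing constraint:
\begin{align*}
A_if_i'(k_i)(1+\mu_i\gamma_i)=R(1+\mu_i),\qquad \mu_i\bigl(\gamma_iA_if_i(k_i)-R(k_i-S_i)\bigr)=0.
\end{align*}
Case split on $\mu_i$. If $\mu_i=0$, the FOC gives $A_if_i'(k_i)=R$, so $k_i=k^n_i(R/A_i)$; feasibility of this point is exactly $R(k^n_i-S_i)\le \gamma_iA_if_i(k^n_i)$, i.e.\ $H_i(R)\equiv R\frac{k^n_i-S_i}{A_if_i(k^n_i)}<\gamma_i$ (strict because if equality held with $\mu_i=0$ we would be in the boundary case, which still assigns $k_i=k^n_i=k^b_i$). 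If $\mu_i>0$, the binding equation pins down $k_i=k^b_i(R/(\gamma_iA_i),S_i)$, and consistency with the KKT multiplier sign forces $H_i(R)\ge \gamma_i$. The case separation is therefore sharp, and the two cases exhaust all possibilities via $H_i(R)$.

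The inequalities $k^b_i\le k^n_i$ (case 1) and $k^n_i<k^b_i$ (case 2) follow by comparing the two defining equations: the function $g(k)\equiv \gamma_iA_if_i(k)-R(k-S_i)$ is concave in $k$, vanishes at $k=k^b_i$, and its value at $k^n_i$ has the same sign as $\gamma_i-H_i(R)$; monotonicity of $g$ past its maximum (achieved where $\gamma_iA_if_i'(k)=R$, i.e.\ at a point smaller than $k^n_i$) then delivers the ordering. Finally, the comparative statics are read off directly from the known monotonicity of $k^n_i(R/A_i)$ and $k^b_i(R/(\gamma_iA_i),S_i)$ in their arguments, noting that $k^b_i$ additionally depends on $\gamma_i$ only through the ratio $R/(\gamma_iA_i)$. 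The main subtlety I anticipate is handling the boundary $H_i(R)=\gamma_i$ cleanly and confirming that $k^n_i$ and $k^b_i$ agree there, which requires showing that the defining equations of $k^n_i$ and $k^b_i$ coincide precisely on this threshold — a short calculation using the definitions in \ref{knb}.
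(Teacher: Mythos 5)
Your proposal is correct and follows essentially the same route as the paper's proof: a KKT analysis with a case split on the multiplier of the borrowing constraint, identification of the two candidate capital levels $k^n_i$ and $k^b_i$, and a direct comparison of the two via the sign of $\gamma_iA_if_i(k^n_i)-R(k^n_i-S_i)$. The only (cosmetic) differences are that you first substitute the binding budget constraint to reduce to a one-dimensional concave program — which gives you existence and uniqueness for free, where the paper works with the two-variable Lagrangian — and that your ordering argument uses concavity of $g(k)=\gamma_iA_if_i(k)-R(k-S_i)$ on its decreasing branch rather than monotonicity of $G_i(k)=(k-S_i)/F_i(k)$; both are equivalent.
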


\begin{proof}[{\bf Proof of Lemma \ref{indi-general}}]

 


 Since $F'_i(0) = \infty$, we have $k_i > 0$  at optimum. The Lagrange function is 
\begin{equation*}
L= F_i(k_i) - Rb_i + \lambda_i (S_i+b_i-k_i) +\mu_i(\gamma_i F_i(k_i)-Rb_i)
\end{equation*}
It is easy to see that $(k_i,b_i)$ is a solution if and only if there exists $(\lambda_i,\mu_i)$ such that
\begin{align*}
[k]&: (1 +\mu_i \gamma_i)F_i'(k_i) =\lambda_i\\
[a]&: (1+\mu_i)R = \lambda_i, \quad \mu_i \geq 0, \text{ and } \mu_i(\gamma_i  F_i(k_i)-R_ib_i)=0.
\end{align*}
These equations imply that:
\begin{equation}
A_if_i(k_i)=F_i'(k_i) = R\dfrac{1+\mu_i}{1+\gamma_i \mu_i} \geq R. \label{eq4}
\end{equation}

 Since $F_i'$ is decreasing, we have $k_i\leq k^n_i(R/A_i)$.

We consider two cases.

  \textbf{Case 1}: 
The credit constraint is binding: $\gamma F_i(k_i)=Rb_i$. In this case, $(k_i, b_i)$ is the solutions of the following equations:
\begin{align}
b_i = k_i - S_i\\
\label{k-general}\gamma F_i(k_i)=R(k_i-S_i),\quad i.e., \quad \frac{\gamma_i }{R}=\frac{k_i}{F_i(k_i)}-\frac{S_i}{F_i(k_i)}.
\end{align}
Consider the function $k/F_i(k)$. Its derivative equals $\frac{F_i(k)-kF_i'(k)}{(F_i(k))^2}$ which is non-negative because $F$ is concave. So, the function $G_i(k)\equiv \frac{k-S_i}{F_i(k)}$ is strictly increasing in $k$. Moreover, $\lim_{k\to 0}G_i(k)<\gamma_i /R$ and $G_i(\infty)>\gamma_i /R$ (because $F_i'(\infty)<1$). Therefore, there exists a unique solution $k_i$ of equation (\ref{k-general}), and this is positive. It is actually $k^b_i$.

We now investigate condition $k_i\leq k^n_i$. Since $G_i(k_i)=\gamma_i /R$, condition $k_i\leq k^n_i$ is equivalent to $G_i(k^n_i)\geq \gamma_i /R$ (because $G_i(k_i)=\gamma_i /R$) or, equivalently, $R\frac{k^n_i(R/A_i)-S_i}{F_i(k^n_i(R/A_i))}\geq \gamma_i$. 

Conversely, assume that  $R\frac{k^n_i(R/A_i)-S_i}{F_i(k^n_i(R/A_i))}\geq \gamma_i$. We choose $k_i=k^b_i$. Then, by definition of $k^b_i$, we have $k_i\in (S_i,\infty)$. Therefore, we have
\begin{align*}
R>R(1-\frac{S_i}{k_i})=\gamma_i\frac{F_i(k_i)}{k_i}\geq \gamma_iF_i'(k_i)
\end{align*} 
where the last inequality follows the fact that $F_i$ is concave.  It means that $R>\gamma_iF_i'(k_i)$. So, we can define $\mu_i,\lambda_i$ by \begin{align*}
1-\frac{F_i'(k_i)}{R}=\mu_i\Big(\frac{F_i'(k_i)}{R}-\gamma_i\Big), \quad 
\lambda_i=R(1+\mu_i).
\end{align*}
Therefore, $(\lambda_i,\mu_i)$ and $(k_i,b_i)$ satisfy conditions $[k]$ and $[b]$ above. It means that $(k_i,b_i)$ is a solution.

\textbf{Case 2}: $\gamma_i F_i(k_i)>Rb_i$.  In this case, we have $\mu_i=0$, and hence $F_i'(k_i) =R$, i.e, $k_i=k^n_i$. It remains to check that this value of $k_i$ satisfies the condition: $\gamma_i F_i(k_i)>Rb_i=R(S_i-k_i)$, i.e., $\gamma_i /R>G_i(k^n_i)$. 

Observe that if $RG_i(k^n_i)<(\geq)\gamma_i $, then $G_i(k^n_i)<(\geq)\gamma_i /R=G_i(k^b_i)$, which implies that $k^n_i<(\geq)k^b_i$. 

The converse is easy. Notice that, in this case, agent borrows (i.e., $b_i > 0$) if and only if $k_i>S$ or equivalently $k^n_i>S$. This means that her wealth is low and/or interest rate is low and/or her productivity is high. 


\end{proof}


Under Assumptions \ref{assum-concave} and \ref{Hi}, the function $\frac{(k-S_i)f_i'(k)}{f_i(k)}$ is strictly increasing in $k$. Therefore, the function $G_i(x)\equiv \frac{(k^n_i(x)-S_i)x}{f_i(k^n_i(x))}$ is strictly decreasing in $x$. Moreover, we can check that $\lim_{x \rightarrow +\infty}G_i(x)=-\infty$, $\lim_{x \rightarrow 0}G_i(x)=\lim_{k \to \infty}\frac{kf_i'(k)}{f_i(k)}$.  By consequence, we obtain the following result.

\begin{lemma}\label{gammaalpha}Let Assumptions \ref{assum-concave} and \ref{Hi}  be satisfied. 
Then, if agent i's borrowing constraint is binding, we must have $\gamma_i\leq \lim_{x\rightarrow \infty}\frac{xF_i'(x)}{F_i(x)}$. By consequence, when $F_i(k)=A_ik^{\alpha_i}$ and $\gamma_i> \alpha_i$, then agent i's borrowing constraint is not binding.
\end{lemma}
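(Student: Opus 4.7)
\textbf{Proof plan for Lemma \ref{gammaalpha}.} The statement is an immediate corollary of the characterization of individual optimum in Lemma \ref{indi-general} combined with Assumption \ref{Hi}, so the proof should just be a short chain of inequalities. I would argue by contrapositive on the characterization rather than trying to trace the Lagrange multipliers directly.

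The plan is as follows. Suppose the borrowing constraint of agent $i$ is binding at the optimum. By Lemma \ref{indi-general}, this is equivalent to
\[
R\,\frac{k^n_i(R/A_i)-S_i}{A_i f_i(k^n_i(R/A_i))}\;\geq\;\gamma_i.
\]
Set $k\equiv k^n_i(R/A_i)$. By Definition \ref{knb}, $k$ satisfies $A_i f_i'(k)=R$, so after substitution the left-hand side rewrites as
\[
\frac{A_i f_i'(k)\,(k-S_i)}{A_i f_i(k)}\;=\;\frac{(k-S_i)\,f_i'(k)}{f_i(k)}\;<\;\frac{k f_i'(k)}{f_i(k)},
\]
using only $S_i>0$ and the positivity of $f_i'$. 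Since $F_i=A_i f_i$, this is the same as $\frac{kF_i'(k)}{F_i(k)}$.

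Now I invoke Assumption \ref{Hi}: $x\mapsto \frac{x f_i'(x)}{f_i(x)}$ is (non-decreasing and hence) bounded above by its limit at infinity. Therefore
\[
\gamma_i\;<\;\frac{kf_i'(k)}{f_i(k)}\;\leq\;\lim_{x\to\infty}\frac{xf_i'(x)}{f_i(x)}\;=\;\lim_{x\to\infty}\frac{xF_i'(x)}{F_i(x)},
\]
which is the desired inequality. Taking the contrapositive gives: if $\gamma_i>\lim_{x\to\infty}\frac{xF_i'(x)}{F_i(x)}$, then agent $i$'s borrowing constraint cannot bind.

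For the Cobb-Douglas corollary, $F_i(k)=A_ik^{\alpha_i}$ gives $\frac{kF_i'(k)}{F_i(k)}=\alpha_i$ identically, so the limit equals $\alpha_i$; the hypothesis $\gamma_i>\alpha_i$ thus rules out a binding constraint. There is no genuine obstacle here: the only thing worth flagging is that the chain uses the strict inequality $k-S_i<k$ together with monotonicity from Assumption \ref{Hi} to lift a pointwise bound into the limiting bound, which is why weak inequality ($\leq$) is the correct conclusion in the general statement even though the pointwise step is strict.
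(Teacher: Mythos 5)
Your proof is correct and follows essentially the same route as the paper: both pass through the characterization in Lemma \ref{indi-general} (binding $\Leftrightarrow$ $R\frac{k^n_i-S_i}{A_if_i(k^n_i)}\geq\gamma_i$), rewrite this quantity as $\frac{(k-S_i)f_i'(k)}{f_i(k)}$ via $A_if_i'(k^n_i)=R$, and bound it by $\lim_{x\to\infty}\frac{xf_i'(x)}{f_i(x)}$ using Assumption \ref{Hi}. Your pointwise bound (dropping $S_i$ and invoking monotonicity of $kf_i'(k)/f_i(k)$ directly) is a slightly cleaner way to get the final inequality than the paper's route via monotonicity of $G_i(x)$ and its limit as $x\to 0$, but the substance is identical.
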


The following result show the interaction between interest rate, credit limit $\gamma_i$ and borrowing constraint.

\begin{lemma} \label{Ridef}Let Assumptions \ref{assum-concave}, \ref{Hi} and \ref{gammaalpha-assum}  be satisfied. We can define $R_i$ the unique value satisfying 
\begin{align}\label{Ri1}
H_i(R_i)\equiv R_i \frac{k^n_i(R_i/A_i)-S_i}{A_if_i(k^n_i(R_i/A_i))}= \gamma_i.
\end{align}
Then, we have that:
\begin{enumerate}
\item Agent $i$'s borrowing constraint is binding if and only if $H_i(R)\geq \gamma_i$ which is equivalent to $R\leq R_i\equiv H_i^{-1}(\gamma_i)$. 

\item $R_i/A_i$ does not depend on $A_i$, and $\lim_{A_i\to \infty}R_i=\infty$, $\lim_{A_i\to 0}R_i=0$. $R_i$  is increasing in productivity $A_i$ but decreasing in $\gamma_i$ and in $S_i$. 

\item We also have $k^n_i(R_i/A_i)=k^b_i(R_i/A_i)$.

\end{enumerate}\end{lemma}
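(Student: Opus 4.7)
\textbf{Proof Proposal for Lemma \ref{Ridef}.}

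The plan is to introduce the change of variable $x=R/A_i$ and study the auxiliary function
\[
G_i(x)\;\equiv\;\phi_i\bigl(k^n_i(x)\bigr),\qquad \phi_i(k)\;\equiv\;\frac{(k-S_i)f_i'(k)}{f_i(k)},
\]
so that $H_i(R)=G_i(R/A_i)$. First I would verify that $\phi_i$ is strictly increasing on $(0,\infty)$ by decomposing
\[
\phi_i(k)\;=\;\frac{kf_i'(k)}{f_i(k)}\;-\;S_i\cdot\frac{f_i'(k)}{f_i(k)};
\]
the first term is increasing by Assumption \ref{Hi}, and the second is decreasing (since $f_i'$ is non-increasing and $f_i$ is strictly increasing by Assumption \ref{assum-concave}), so the combination is strictly increasing. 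Because $k^n_i(x)$ is strictly decreasing in $x$ (from $f_i'(k^n_i(x))=x$ with $f_i$ strictly concave), $G_i$ is strictly decreasing. The limits then follow from Assumption \ref{assum-concave}: as $x\downarrow 0$ we have $k^n_i(x)\uparrow \infty$, and $f_i'(k)/f_i(k)\to 0$, so $G_i(x)\to \lim_{k\to\infty}\tfrac{kf_i'(k)}{f_i(k)}$; as $x\uparrow\infty$ we have $k^n_i(x)\downarrow 0$, and since $f_i'(0)=\infty$, $f_i(0)=0$, the term $-S_i f_i'(k)/f_i(k)$ drives $G_i$ to $-\infty$.

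Combining the strict monotonicity, continuity, and the two limits with Assumption \ref{gammaalpha-assum} (which ensures $\gamma_i$ lies strictly below $\lim_{x\downarrow 0}G_i(x)$), I obtain a unique $x_i^\ast>0$ with $G_i(x_i^\ast)=\gamma_i$, and I set $R_i=A_i x_i^\ast$. Part 1 is then immediate: by Lemma \ref{indi-general} the borrowing constraint binds iff $H_i(R)\ge\gamma_i$, and this is equivalent to $R/A_i\le x_i^\ast$, i.e.\ $R\le R_i$, because $G_i$ is strictly decreasing.

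For Part 2, observe that $x_i^\ast$ depends only on $f_i,S_i,\gamma_i$ but not on $A_i$, so $R_i/A_i=x_i^\ast$ is independent of $A_i$ and $R_i=A_i x_i^\ast$ is linear and strictly increasing in $A_i$, with the stated limits as $A_i\to 0,\infty$. Monotonicity in $\gamma_i$ follows because $G_i$ is strictly decreasing in $x$, so the implicit solution $x_i^\ast$ of $G_i(x_i^\ast)=\gamma_i$ is decreasing in $\gamma_i$. Monotonicity in $S_i$ uses the fact that, for each fixed $x$, $G_i(x)$ is itself strictly decreasing in $S_i$ (only through the $-S_i f_i'(k^n_i(x))/f_i(k^n_i(x))$ term); an increase in $S_i$ shifts the graph of $G_i(\cdot)$ downward, so the intersection with the horizontal line $\gamma_i$ moves leftward, i.e.\ $x_i^\ast$, and hence $R_i$, decreases.

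Part 3 is a direct consequence of the defining equation: $H_i(R_i)=\gamma_i$ rewrites as
\[
R_i\bigl(k^n_i(R_i/A_i)-S_i\bigr)\;=\;\gamma_i A_i f_i\bigl(k^n_i(R_i/A_i)\bigr),
\]
which is exactly the equation characterising $k^b_i\bigl(R_i/(\gamma_i A_i),S_i\bigr)$; uniqueness from Definition \ref{knb} gives $k^n_i(R_i/A_i)=k^b_i(R_i/A_i)$. The main technical obstacle I expect is isolating the correct asymptotic behaviour of $G_i$ near the endpoints (particularly that $f_i'/f_i\to 0$ at infinity, which is where Assumption \ref{assum-concave}'s Inada-type hypotheses become essential); everything else reduces to routine monotonicity and an implicit-function argument.
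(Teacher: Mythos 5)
Your proposal is correct and follows essentially the same route as the paper: the paper likewise observes that $\frac{(k-S_i)f_i'(k)}{f_i(k)}$ is strictly increasing under Assumptions \ref{assum-concave} and \ref{Hi}, composes it with the decreasing map $k^n_i(x)$ to get a strictly decreasing $G_i$ with $\lim_{x\to 0}G_i(x)=\lim_{k\to\infty}\frac{kf_i'(k)}{f_i(k)}$ and $\lim_{x\to\infty}G_i(x)=-\infty$, and then reads off existence, uniqueness, and all the comparative statics exactly as you do. Your write-up actually spells out more of the details (the decomposition of $\phi_i$, the downward shift in $S_i$) that the paper leaves implicit.
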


 The threshold $R_i$ is exogenous. It represents the subjective interest rate of agent below which agent borrows so that her(his) borrowing constraint is binding.  Point 2 of Lemma \ref{Ridef} indicates that the credit constraint of agent $i$ is more likely to bind if the interest rate, her initial wealth and credit limit are low, and/or her productivity is high.

 \begin{remark}\label{cdtr} Under Cobb-Douglas technology, i.e., $F_i(k)=A_ik^{\alpha}$, we can compute that $H_i(R)=\alpha\big(1-\big(\frac{R}{\alpha A_i S_i^{\alpha - 1}}\big)^{\frac{1}{1-\alpha}}\big)$ is decreasing in $R$  and  $H_i(0)=\alpha$. So, if $\alpha_i<\gamma_i$, then borrowing constraint is not binding, whatever the level of interest rate $R$. When $H_i(0)>\gamma_i$, i.e., $\alpha >\gamma_i$, we have $R_i = \alpha A_i S_i^{\alpha - 1}\big(1-\frac{\gamma_i}{\alpha}\big)^{1-\alpha}.$ 

\end{remark}  

\subsubsection*{Proof of Theorem \ref{general1}}  
\textcolor{blue}{To simplify notations, we write $k^n_i(R)$ and $k^b_i(R)$ instead of $k^n_i(\frac{R}{A_i})$ and  $k^b_i(\frac{R}{\gamma_iA_i}, S_i)$}. 
 We also introduce the so-called aggregate capital demand function:
\begin{align*}
B_n(R)\equiv
\begin{cases} 
\sum_{i=1}^nk^n_i(R)+\sum_{i=n+1}^mk^b_i(R)& \text{ if }  n\leq m-1\\
\sum_{i=1}^mk^n_i(R) &\text{ if } n=m.
\end{cases}
\end{align*}
  
\begin{lemma}\label{bn}
$B_n(R_n)>B_{n+1}(R_{n+1})=B_n(R_{n+1})$.
\end{lemma}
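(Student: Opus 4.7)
The claim splits naturally into the equality $B_{n+1}(R_{n+1}) = B_n(R_{n+1})$ and the strict inequality $B_n(R_n) > B_n(R_{n+1})$. My plan is to prove each using one clean observation.

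For the equality, I would invoke point 3 of Lemma \ref{Ridef}, which tells us $k^n_{n+1}(R_{n+1}/A_{n+1}) = k^b_{n+1}(R_{n+1}/(\gamma_{n+1}A_{n+1}), S_{n+1})$. In our shortened notation this reads $k^n_{n+1}(R_{n+1}) = k^b_{n+1}(R_{n+1})$. Substituting this into the definition of $B_n(R_{n+1})$ converts the $(n+1)$-th summand from a $k^b$-term into a $k^n$-term, yielding exactly $B_{n+1}(R_{n+1})$.

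For the strict inequality, the idea is that $B_n$ is strictly decreasing on the relevant range, combined with the hypothesis $R_n < R_{n+1}$. Under Assumption \ref{assum-concave}, the function $k^n_i(R/A_i)$ is strictly decreasing in $R$ (since $f_i'$ is strictly decreasing). Similarly, $k^b_i(R/(\gamma_i A_i), S_i)$ is strictly decreasing in $R$ (this is stated among the properties following Definition \ref{knb} and follows from implicit differentiation of $R(k - S_i) = \gamma_i A_i f_i(k)$ together with the fact that $f_i(k)/k - \gamma_i f_i'(k)/R$ has the right sign near the solution). Therefore each summand in $B_n(R)$ is strictly decreasing in $R$, hence so is $B_n(R)$. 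Since we are assuming $R_1 < R_2 < \cdots < R_m$, we have $R_n < R_{n+1}$ and so $B_n(R_n) > B_n(R_{n+1})$.

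There is no real obstacle here: the only subtle point is invoking the identity $k^n_{n+1}(R_{n+1}) = k^b_{n+1}(R_{n+1})$ at the distinguished rate $R_{n+1}$, and this is precisely what defines $R_{n+1}$ via equation (\ref{Ri}) (agent $n+1$'s unconstrained and credit-binding capital levels coincide exactly when the credit constraint is on the verge of binding). Everything else is monotonicity.
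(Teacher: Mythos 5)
Your proposal is correct and follows essentially the same route as the paper: the strict inequality comes from $R_n<R_{n+1}$ together with the strict monotonicity of $k^n_i$ and $k^b_i$ in $R$, and the equality comes from the identity $k^n_{n+1}(R_{n+1})=k^b_{n+1}(R_{n+1})$, which is exactly how $R_{n+1}$ is defined via (\ref{Ri}). No gaps.
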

\begin{proof}
Indeed, since $R_n<R_{n+1}$, we notice that
\begin{align*}
&B_n(R_n)\equiv
\sum_{i=1}^nk^n_i(R_{n})+\sum_{i=n+1}^mk^b_i(R_n)\\
&>B_n(R_{n+1})=\sum_{i=1}^{n}k^n_i(R_{n+1})+\sum_{i=n+1}^mk^b_i(R_{n+1})=\sum_{i=1}^{n+1}k^n_i(R_{n+1})+\sum_{i=n+2}^mk^b_i(R_{n+1})
\end{align*}
where the last equality follows $k^b_{n+1}(R_{n+1})= k^n_{n+1}(R_{n+1})$. Therefore, $B_n(R_n)>B_{n+1}(R_{n+1})=B_n(R_{n+1})$  $\forall n$.
\end{proof}


We state an intermediate step whose proof is based on Lemma \ref{indi-general} and Lemma \ref{Ridef}.

\begin{lemma}\label{cate-pre} Let assumptions in Theorem \ref{general1} be satisfied. 
Consider an equilibrium $((k_i,b_i)_i,R)$ and an index $n\in \{1,\ldots,m-1\}$.

\begin{enumerate}
\item If $R>R_m$, Lemma \ref{indi-general}  implies that credit constraint of any agent is not binding. So, the equilibrium coincides to that of the economy without credit constraints. Therefore,  we have $R=R^*>R_m$.

\item  If $R>R_n$, then credit constraint of any agent $i\leq n$ is not binding. Hence $k_i=k^n_i(R) <k^n_i(R_n)$ $\forall i\leq n$. Condition $R>R_n$ also implies that $k^b_i(R)<k^b_i(R_n)$. Therefore, we have 
$ \sum_iS_i< B_n(R_n).$

\item If $R\leq R_{n+1}$, then credit constraint of any agent $i\geq n+1$ is binding, and hence $k_i=k^b_i(R)\geq k^b_i(R_{n+1})$ $\forall i\geq n+1$. Moreover, we have $k_i\geq k^n_i(R)\geq k^n_i(R_{n+1})$.
Therefore, we have $ \sum_iS_i\geq B_n(R_{n+1}).$

\end{enumerate}

\end{lemma}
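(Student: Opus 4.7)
The plan is to translate each hypothesis on the equilibrium rate $R$ into a binding/non-binding verdict for each agent via Lemma \ref{Ridef}, substitute the corresponding closed-form expression from Lemma \ref{indi-general}, and then aggregate through market clearing $\sum_i k_i = \sum_i S_i$.

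First I would handle part 1. The ranking $R_1<R_2<\cdots<R_m$ implies that $R>R_m$ forces $R>R_i$ for every $i$. By point 1 of Lemma \ref{Ridef}, agent $i$'s borrowing constraint is binding iff $R\leq R_i$; hence no constraint binds. The Lagrangian argument in Lemma \ref{indi-general} then gives $k_i=k^n_i(R/A_i)$ and $A_if_i'(k_i)=R$ for every $i$, so the first-order conditions coincide with those of the frictionless program (\ref{y-1}). By uniqueness in the frictionless problem, $R=R^*$, and this rate must exceed $R_m$ by hypothesis.

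Next I would establish part 2. Condition $R>R_n$ implies $R>R_i$ for all $i\leq n$, so these constraints do not bind and $k_i=k^n_i(R/A_i)$; strict monotonicity of $k^n_i$ in its argument (Assumption \ref{assum-concave}) gives $k_i<k^n_i(R_n/A_i)$. For $i\geq n+1$, Lemma \ref{indi-general} yields $k_i\leq k^b_i(R/(\gamma_iA_i),S_i)$ (equality when binding, strict inequality $k_i=k^n_i<k^b_i$ otherwise), and strict monotonicity of $k^b_i$ in $R$ gives $k^b_i(R/(\gamma_iA_i),S_i)<k^b_i(R_n/(\gamma_iA_i),S_i)$. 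Summing and invoking market clearing,
\begin{align*}
\sum_i S_i \;=\; \sum_i k_i \;<\; \sum_{i\leq n}k^n_i(R_n/A_i)+\sum_{i\geq n+1}k^b_i(R_n/(\gamma_iA_i),S_i)\;=\;B_n(R_n).
\end{align*}

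For part 3, the argument is symmetric: $R\leq R_{n+1}$ forces all constraints for $i\geq n+1$ to bind, so $k_i=k^b_i(R/(\gamma_iA_i),S_i)\geq k^b_i(R_{n+1}/(\gamma_iA_i),S_i)$; meanwhile, for $i\leq n$, whether or not the constraint binds, Lemma \ref{indi-general} gives $k_i\geq k^n_i(R/A_i)\geq k^n_i(R_{n+1}/A_i)$. Aggregating via market clearing yields $\sum_i S_i\geq B_n(R_{n+1})$. The main (mild) obstacle is the bookkeeping in part 2 for agents $i\geq n+1$ whose status may switch between binding and non-binding; the key observation there is simply that $k^n_i(R/A_i)\leq k^b_i(R/(\gamma_iA_i),S_i)$ whenever the constraint does not bind, which follows from point 2 of Lemma \ref{indi-general}, so the uniform upper bound by $k^b_i$ holds in every case.
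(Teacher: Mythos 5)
Your overall strategy is the one the paper intends: the paper gives no separate proof of this lemma beyond the remark that it follows from Lemmas \ref{indi-general} and \ref{Ridef}, and your parts 1 and 2 are complete and correct — in particular the bookkeeping for agents $i\geq n+1$ in part 2, where you correctly bound $k_i\leq k^b_i(R)$ in both the binding and non-binding cases before invoking monotonicity and market clearing.

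There is, however, one step in part 3 that fails as written: for $i\leq n$ you assert that ``whether or not the constraint binds, Lemma \ref{indi-general} gives $k_i\geq k^n_i(R/A_i)$.'' The hypothesis $R\leq R_{n+1}$ only pins down the status of agents $i\geq n+1$; an agent $i\leq n$ may well have a binding constraint (this happens whenever $R\leq R_i$, which is compatible with an equilibrium since only $R>R_1$ is guaranteed). In that binding case, point 1 of Lemma \ref{indi-general} gives the \emph{opposite} inequality, $k_i=k^b_i(R)\leq k^n_i(R)$, so your chain breaks. The conclusion $k_i\geq k^n_i(R_{n+1})$ is nonetheless true, but it must be reached by a detour through the threshold $R_i$: in the binding case $R\leq R_i$, so monotonicity of $k^b_i$ gives $k_i=k^b_i(R)\geq k^b_i(R_i)$; point 3 of Lemma \ref{Ridef} gives $k^b_i(R_i)=k^n_i(R_i)$; and $R_i\leq R_n<R_{n+1}$ together with monotonicity of $k^n_i$ gives $k^n_i(R_i)\geq k^n_i(R_{n+1})$. (The lemma's own statement contains the same shortcut ``$k_i\geq k^n_i(R)$,'' so you have faithfully reproduced an imprecision of the source; but a self-contained proof needs the chain above.) With that substitution, the aggregation step and the inequality $\sum_iS_i\geq B_n(R_{n+1})$ go through.
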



We now prove Theorem \ref{general1}. Let us consider an equilibrium. Since there is at least one agent whose credit constraint is not binding, we have $R>R_1$.

{\bf Step 1}. Suppose that $R\in (R_n,R_{n+1}]$.  So, credit constraint of any agent $i\geq n+1$ is binding and that of any agent $i\leq n$ is not binding. Hence, the capital demand is
\begin{align}
\sum_ik_i=\sum_{i=1}^nk^n_i(R)+ \sum_{i=n+1}^mk^b_i(R).
\end{align}
Therefore, the equilibrium interest rate is determined by 
\begin{align}
 \sum_{i=1}^nk^n_i(R)+\sum_{i=n+1}^mk^b_i(R)=S\equiv \sum_iS_i.
\end{align}
The left-hand side is decreasing in $r$, and hence this equation has a unique solution.  

Since $R\in (R_n,R_{n+1}]$, we have
\begin{align*}
\sum_{i=1}^nk^n_i(R_n)+\sum_{i=n+1}^mk^b_i(R_{n}) >  \sum_iS_i\geq \sum_{i=1}^nk^n_i(R_{n+1})+\sum_{i=n+1}^mk^b_i(R_{n+1}).
\end{align*}
Conversely, if this condition holds, by using properties of functions $k^b_i,k^n_i$, we can easily prove that $R\in (R_{n},R_{n+1}]$. Indeed, if $R>R_{n+1}$, then point $2$ of Lemma \ref{cate-pre} implies that $S<B_{n+1}(R_{n+1})$. This contradicts to $S\geq B_{n+1}(R_{n+1})$.  If $R\leq R_n$, then point $3$ of Lemma \ref{cate-pre} implies that $S\geq B_{n-1}(R_{n})=B_{n}(R_{n})$. This contradicts to $S< B_{n}(R_{n})$. Therefore, we obtain  $R\in (R_{n},R_{n+1}]$.

{\bf Step 2}. 
We now suppose that $R^*>R_m$. We will prove that credit constraint of any agent is not binding. Suppose that the set $$\mathcal{B}=\{i\in\{1,\ldots,m\}: \text{ agent i's borrowing constraint is binding}\}$$ is not empty. Let $n: 1\leq n\leq m-1$ be the highest element in $\mathcal{B}$, i.e., credit constraint of any agent $i\geq n+1$ is binding while that of any agent $i\leq n$ is not. We have $R\in (R_{n},R_{n+1}]$. So,
 $k^b_i(R)\geq k^b_i(R_{n}+1)>k(R_m)$ and  
$k^n_i(R)\geq k^n_i(R_{n+1})\geq k^n_i(R_m)$.  Hence,  we get that
\begin{align}
 \sum_iS_i= \sum_{i=1}^nk^n_i(R)+ \sum_{i=n+1}^mk^b_i(R)\geq \sum_{i=1}^mk^n_i(R_m).
\end{align}
However, by definition of $R^*$, we have 
\begin{align}
 \sum_iS_i= \sum_{i=1}^mk^n_i(R^*)<\sum_{i=1}^mk^n_i(R_m). 
\end{align}
This is a contradiction. 

{\bf Step 3}. We now prove that $R_n\leq R^*$ $\forall n\leq m-1$. Indeed, in the regime $\mathcal{R}_n$, for any $i\geq n+1$, agent $i$'s credit constraint is binding. Hence, Lemma \ref{indi-general} follows that $k^b_i(R_n)\leq k^n_i(R_n)$ $\forall i\geq n+1$. Consequently, we get  that
\begin{align*}
\sum_{i=1}^mk^n_i(R^*)=S= \sum_{i=1}^nk^n_i(R_n)+\sum_{i=n+1}^mk^b_i(R_n)\leq \sum_{i=1}^mk^n_i(R_n)
\end{align*}
 which implies that $R^*\geq R_n$.

\section{The existence of intertemporal equilibrium}
\label{existencequilibrium}
 The proof is similar to the one in \cite{blvp18}. 
 Notice that we cannot directly use  a method of \cite*{bblvs15} or  \cite*{lvp16} because the financial asset in our model is a short-lived asset with zero supply. 


The idea is that we can bound the individual demand for the financial asset, and so can prove the existence of equilibrium by adapting the  method of \cite*{bblvs15} and  \cite*{lvp16}: (1) we prove the existence of equilibrium for each $T-$ truncated economy $\ee^T$; (2) we show that this sequence of equilibria converges for the product topology to an equilibrium of our economy $\ee$.

We write the maximization problem of agent $i$ with the presence of consumption prices:
\begin{subequations}
\begin{align}
 &\ma_{(c_i,k_i,b_i)} \sum_{t=0}^{\infty}\beta_i^tu_i(c_{i,t})\\
\label{bc-i}\text{subject to: }& p_t(c_{i,t}+k_{i,t})+R_tb_{i,t-1} \leq p_tF_{i,t}(k_{i,t-1}) + b_{i,t}\\
\label{cc-i}&  R_{t+1}b_{i,t}\leq \gamma_ip_{t+1}F_{i,t+1}(k_{i,t}),\\
&c_{i,t}\geq 0, \quad k_{i,t}\geq 0,
\end{align}
\end{subequations}
An intertemporal equilibrium is a list $((c_{i,t},k_{i,t},b_{i,t})_i,p_t,R_t)_{t\geq 0}$ satisfying conditions: (1) given $(p_t,R_t)_{t\geq 0}$, the allocation $(c_{i,t},k_{i,t},b_{i,t})$ is a solution of the above maximization problem, (2) markets clear: $\sum_ib_{i,t}=0$, $\sum_{i}(c_{i,t}+k_{i,t})=\sum_iF_{i,t}(k_{t-1})$  $\forall t\geq 0$, and (3) $p_t,R_t>0$ for any $t\geq 0$. 

We will prove the existence of economy in this setup. Then, from this equilibrium, as  $p_t>0$, we can change the variables ($p_t'=1,b_{i,t}'=\frac{b_{i,t}}{p_t}, R_t'=\frac{R_tp_{t-1}}{p_t})$ to get an equilibrium in the original economy.


\subsection{Existence of equilibrium for  truncated economies}\label{4.8.1}
For each $T\geq 1$, we define $T-$truncated economy ${\ee}^T$ as ${\ee}$ but there are no activities from period $T+1$ to the infinity, i.e.,  $c_{i,t+1}=k_{i,t}=b_{i,t}=0$ for every $i=1,\ldots, m$ and for any $t\geq T$.

We then define the bounded economy ${\ee}_b^T$ as ${\ee}^T$ but consumption level $(c_{i,t})_{t\leq T}$, physical capital $(k_{i,t})_{t\leq T}$, and asset holding $(b_{i,t})_{t\leq T}$ are respectively bounded in the following sets:
\begin{eqnarray*}\mathcal{C}&:=&[-B_c,B_c]^{T+1}, \quad
\mathcal{K}:= [-B_k,B_k]^{T+1}, \quad  \mathcal{B}:= [-B_b,B_b]^{T+1},
\end{eqnarray*}
where 
\begin{align}
B_c,B_k&>\ma_{t\leq T}B_{K,t}; \quad 
B_b=mB_c.
\end{align}

The economy ${\ee}_b^T$ depends on bounds $B_c,B_k, B_b$, so we write ${\ee}_b^T(B_c,B_k, B_b)$.

Let us define 
\begin{align} 
\label{bounds}\mathcal{X}_b&\equiv\mathcal{C} \times    \mathcal{K}   \times  \mathcal{B}, \quad \quad  \mathcal{X}\equiv(\mathcal{X}_b)^{m}\\
\mathcal{P} &\equiv\{z_0=(p_t,R_t)_{t\leq T}: R_{0}=0; \quad  p_{t}, R_{t} \geq 0;\quad p_{t}+R_{t}=1, \forall t\leq T  \}\\
\Phi& \equiv \mathcal{P} \times \mathcal{X}.
\end{align}

An element $z\in \Phi$ is in the form  $z=(z_i)_{i=0}^{m}$ where $z_0=(p_t,R_t)_{t\leq T}, z_i=(c_{i,t}, k_{i,t}, b_{i,t})_{t\leq  T}$ for each $i=1,\ldots,m$.

The following remark is to ensure that the asset volume $(b_{i,t})$ is bounded.
\begin{remark}
If $z\in \Phi$ is an equilibrium for the economy ${\ee}^T$, then, by using the fact that  $p_{t}+R_{t}=1$, we obtain that $b_{i,t}\leq B_c$ for any $i,t$. Indeed, this is true for $t=0$ because 
\begin{align}-b_{i,0}\leq p_{0}F_{i,0}(k_{i,-1})\leq p_{0}B_c=B_c
\end{align}
and then, for any $t\geq 0$, we have 
\begin{align}-b_{i,t}\leq p_{t} F_{i,t}(k_{i,t-1})-R_{t}b_{i,t-1}\leq (p_{t}+R_{t})B_c=B_c
\end{align}
Since $\summ_{i=1}^mb_{i,t}=0$, we get that $b_{i,t}\in [-B_b,B_b]$ for any $i$ and any $t$, where $B_b\equiv mB_c>(m-1)B_c$.
\end{remark}

\begin{proposition} Under Assumptions (1-4), there exists an equilibrium $(p, R, (c_i,k_i,b_i)_{i=1}^m)$, with $p_{t}+R_{t}=1$, $\forall t$, for the economy  ${\ee}_b^T(B_c, B_k, B_b)$. 
This is actually an equilibrium for the economy ${\ee}^T(B_c, B_k, B_b)$
\end{proposition}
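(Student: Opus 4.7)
The plan is to apply a Kakutani--Gale--Debreu--Nikaido style fixed-point argument to a generalized game with $m+1$ players: the $m$ agents and an auctioneer who chooses the price-interest rate vector $(p_t, R_t)_{t \leq T} \in \mathcal P$. Because $\mathcal X_b$ and $\mathcal P$ are nonempty, convex, compact subsets of Euclidean space, the only real work is to set up continuous correspondences with convex compact values, apply Kakutani, and then show that the fixed point is a genuine equilibrium (and not merely one pinned down by the artificial bounds).

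First, for each price $z_0 = (p_t, R_t)_{t \leq T} \in \mathcal P$, I define agent $i$'s budget correspondence $B_i(z_0)$ as the set of $(c_i, k_i, b_i) \in \mathcal X_b$ satisfying $c_{i,t}, k_{i,t} \geq 0$, the budget inequalities $p_t(c_{i,t}+k_{i,t}) + R_t b_{i,t-1} \leq p_t F_{i,t}(k_{i,t-1}) + b_{i,t}$, the borrowing constraint $R_{t+1}b_{i,t} \leq \gamma_i p_{t+1} F_{i,t+1}(k_{i,t})$ for $t \leq T-1$, and the terminal conditions $c_{i,T+1}=k_{i,T}=b_{i,T}=0$ (in the truncated economy). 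This correspondence has nonempty convex compact values (take $c_{i,t}=k_{i,t}=0$, $b_{i,t}=0$ to see nonemptiness). Lower/upper hemicontinuity follows by standard arguments since all constraints involve continuous functions and the interior is nonempty (the zero plan strictly satisfies the inequalities whenever prices are admissible). Berge's maximum theorem then gives that agent $i$'s best-response correspondence $\phi_i(z_0) \equiv \argmax_{(c_i,k_i,b_i) \in B_i(z_0)} \sum_{t=0}^T \beta_i^t u_i(c_{i,t})$ is upper hemicontinuous with nonempty, compact, convex values (convexity because $u_i$ is concave and the constraint set is convex).

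Second, I define the auctioneer's correspondence. Given $z = (z_0, (z_i)_{i=1}^m)$, the auctioneer chooses $z_0' = (p_t', R_t')_{t \leq T} \in \mathcal P$ maximizing the value of excess demand
\[
\sum_{t=0}^T p_t' \Big(\sum_i (c_{i,t}+k_{i,t}) - \sum_i F_{i,t}(k_{i,t-1})\Big) + \sum_{t=0}^T R_t' \sum_i b_{i,t-1}.
\]
This is linear in $z_0'$ on the compact convex simplex-like set $\mathcal P$, so the maximizer correspondence $\phi_0(z)$ is nonempty, convex, compact-valued, and upper hemicontinuous. Kakutani's theorem applied to $\phi_0 \times \prod_i \phi_i$ on $\Phi$ yields a fixed point $z^* = (z_0^*, (z_i^*)_i)$.

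Third, the hard part is verifying that $z^*$ is actually an equilibrium. At the fixed point, each $(c_i^*, k_i^*, b_i^*)$ is optimal given $z_0^*$, so summing budget constraints (which bind by strict monotonicity of $u_i$) yields, for every $t$,
\[
p_t^* \Big(\sum_i (c_{i,t}^*+k_{i,t}^*) - \sum_i F_{i,t}(k_{i,t-1}^*)\Big) + R_t^* \sum_i b_{i,t-1}^* \leq 0,
\]
so the auctioneer's optimum is $\leq 0$, which by the maximization property gives $\sum_i (c_{i,t}^*+k_{i,t}^*) \leq \sum_i F_{i,t}(k_{i,t-1}^*)$ and $\sum_i b_{i,t-1}^* \leq 0$ for all $t$. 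The symmetric inequalities $\sum_i b_{i,t-1}^* \geq 0$ would follow from showing $R_t^* > 0$ and $p_t^* > 0$; positivity of $p_t^*$ comes from $u_i$ being strictly increasing in $c_{i,t}$ (otherwise an agent would demand unboundedly, contradicting the choice of $B_c$ large enough), and positivity of $R_t^*$ is forced because $\sum_i b_{i,t-1}^*=0$ would otherwise fail by the standard no-arbitrage argument combined with the borrowing constraint on some agent. Finally, I must check that the bounds $B_c, B_k, B_b$ do not bind at the fixed point: by choosing $B_c, B_k > \max_{t \leq T} B_{K,t}$ as prescribed, the feasibility bound from Assumption~\ref{assum_utility_finite} ensures individual consumption and capital never exceed $B_{K,t}$ in any feasible (hence optimal) plan, and $B_b = m B_c$ then bounds bond holdings via the remark immediately preceding the proposition. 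Thus the fixed point is an equilibrium of both ${\ee}_b^T$ and, because bounds are slack, of ${\ee}^T(B_c, B_k, B_b)$.

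The main obstacle I anticipate is showing $p_t^* > 0$ and $R_t^* > 0$ simultaneously for all $t$ in a way that remains consistent with the normalization $p_t^* + R_t^* = 1$ and with the borrowing constraint being meaningful. If some $R_t^* = 0$, agents would want to borrow indefinitely, making $b_{i,t-1}^*$ hit the bound $-B_b$; if some $p_t^* = 0$, consumption demand would hit $B_c$. Both cases must be excluded by verifying that Assumption~\ref{assum_utility_finite} together with the borrowing constraint $R_{t+1} b_{i,t} \leq \gamma_i p_{t+1} F_{i,t+1}(k_{i,t})$ makes these corner prices inconsistent with the auctioneer's fixed-point property. This positivity argument, rather than the fixed-point step itself, is where the delicate verification sits.
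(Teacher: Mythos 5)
Your overall architecture is the same as the paper's: a generalized game with the $m$ agents plus an auctioneer maximizing the value of excess demand over the price simplex $\mathcal{P}$, Kakutani on the product correspondence, backward verification of market clearing, positivity of $p_t$ and $R_t$ via the bounds $B_c$ and $B_b$, and a final passage from the bounded to the unbounded truncated economy. So the route is not new; the question is whether the individual steps hold up, and one of them does not as you have written it.

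The step that fails is your justification of continuity of the budget correspondence. You claim the zero plan ``strictly satisfies the inequalities whenever prices are admissible.'' It does not: at any date $t\geq 1$ the budget constraint evaluated at $c_{i,t}=k_{i,t}=b_{i,t}=0$ with $k_{i,t-1}=0$ reads $0\leq p_tF_{i,t}(0)+0=0$, an equality, because $F_{i,t}(0)=0$. So the zero plan is on the boundary of the budget set at every date after the first, and your argument for nonempty interior --- hence for lower hemicontinuity, hence for Berge --- collapses. This is exactly the delicate point in these existence proofs, and the paper handles it by exhibiting a genuinely interior plan: take $k_{i,0}>0$ and $b_{i,0}<0$ (affordable at date $0$ since $k_{i,-1}>0$ and $p_0=1$ give $p_0F_{i,0}(k_{i,-1})>0$), and propagate forward, so that at each subsequent date the available resources $p_tF_{i,t}(k_{i,t-1})-R_tb_{i,t-1}$ are strictly positive because $p_t+R_t=1$ rules out $(p_t,R_t)=(0,0)$. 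The gap is repairable, but the repair is precisely the content of the paper's Lemma on $C_i^T(p,R)\not=\emptyset$, and without it the fixed-point machinery has nothing to bite on.

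Two smaller imprecisions. First, summing the date-$t$ budget constraints gives $p_t^*X_t+R_t^*Z_{t-1}\leq Z_t$ where $Z_t\equiv\sum_ib_{i,t}^*$, not $\leq 0$; the conclusion $X_t\leq 0$, $Z_{t-1}\leq 0$ must be obtained by backward induction starting from $Z_T=0$ at the terminal date, feeding $Z_t\leq 0$ into the date-$t$ inequality before invoking the auctioneer's optimality. Second, ``the bounds are slack, hence the fixed point is an equilibrium of $\mathcal{E}^T$'' is not by itself a proof of optimality over the unbounded constraint set; you need the standard convexity argument (for any feasible deviation $z_i$, the convex combination $\lambda z_i+(1-\lambda)\bar z_i$ lies in $\mathcal{X}_b$ for $\lambda$ small, and concavity of the utility then forces $U_i(c_i)\leq U_i(\bar c_i)$), which is how the paper closes this step.
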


\begin{proof}We firstly define
\begin{align*}B_i^{T}(p,R)&:=\big\{ (c_{i,t}, k_{i,t}, b_{i,t})_{t\leq T}\in \mathcal{X}_b: \text{(a) } k_{i,t}=b_{i,t}=0 \text{ } \forall t\in D_T,\\
\text{(b) }& {p}_{0}(c_{i,0}+k_{i,0})\leq {p}_{0}F_{i,0}(k_{i,-1})+b_{i,0}\\
\text{(c) } & \text{for each } t: 1\leq t(t)\leq T:\\
&R_{t}b_{i,t-1} \leq \gamma_i p_{t} F_{i,t}(k_{i,t-1})\\
&{p}_{t}(c_{i,t}+k_{i,t}) + R_{t} b_{i,t-1}\leq p_{t}F_{i,t}(k_{i,t-1}) + b_{i,t}\big\}.
\end{align*}

We also define $C_i^{T}(p,R)$ as follows.
\begin{align*}C_i^{T}(p,R)&:=\big\{ (c_{i,t}, k_{i,t}, b_{i,t})_{t\leq T}\in \mathcal{X}: \text{(a) } k_{i,t}=b_{i,t}=0 \text{ } \forall t\in D_T,\\
\text{(b) }& {p}_{0}(c_{i,0}+k_{i,0})< {p}_{0}F_{i,0}(k_{i,-1})+b_{i,0}\\
\text{(c) } & \text{for each } t: 1\leq t(t)\leq T:\\
&R_{t}b_{i,t-1} < \gamma_i p_{t} F_{i,t}(k_{i,t-1})\\
&{p}_{t}(c_{i,t}+k_{i,t}) + R_{t} b_{i,t-1}< p_{t}F_{i,t}(k_{i,t-1}) + b_{i,t}\big\}.
\end{align*}

\begin{lemma}\label{4.s1}  $C_i^{T}(p,R)\not=\emptyset$ and $ \bar{C}_i^{T}(p,R)= B_i^{T}(p,R)$.
\end{lemma}

\begin{proof}Since $k_{i,-1}>0$ and $p_{0}=1$, we always have $ p_{0}F_{i,0}(k_{i,-1})>0$. Therefore, we can choose $(c_{i,0}, k_{i,0}, b_{i,0})\in \rr_+^2\times \rr_-$,  and then $(c_{i,t}, k_{i,t}, b_{i,t})\in \rr_+^2\times \rr_-$ such that this plan belongs to $C_i^{T}(p,q,r)$. Note that $p_{t}F_{i,t}(k_{i,t-1})-R_{t} b_{i,t-1}>0$ if $k_{i,t-1}>0, b_{i,t-1}<0$ and $(p_{t}, R_{t})\not=(0,0,0)$.
\end{proof}

\begin{lemma} $C_i^{T}(p,R)$ is lower semi-continuous correspondence on $\mathcal{P}$. $B^T_i(p,R)$ is continuous on $\mathcal{P}$ with compact convex values.
\end{lemma}
\begin{proof}It is clear since $C^T_i(p,R)$ is nonempty and has open graph. 
\end{proof}

We now define correspondences.  First, we define $\varphi_0$ (for additional agent $0$) $: \mathcal{X} \rightarrow 2^{\mathcal{P}}$ by
\begin{align*}
\varphi_0((z_i)_{i=1}^{m}) &:=\argmax_{(p,R)\in \mathcal{P}} \Big\{\summ_{t\leq T}\Big[p_{t}\summ_{i=1}^m\big(c_{i,t}+k_{i,t}-F_{i,t}(k_{i,t-1})\big)\Big]+ \summ_{t\leq T}\Big[R_{t}\summ_{i=1}^mb_{i,t-1}\Big]\Big\}.
\end{align*}

Second, for each $i=1,\ldots, m$, we define $\varphi_i: \mathcal{P}  \rightarrow 2^{\mathcal{X}}$
\begin{eqnarray*}\varphi_i((p,R)) &:=&\argmax_{(c_i, k_i,b_i)\in C^T_i(p,R)} \Big\{  \summ_{t=0}^{T} u_{i,t}(c_{i,t})\Big\}.
\end{eqnarray*}

\begin{lemma}The correspondence $\varphi_i$ is  upper semi-continuous and non-empty, convex, compact valued for each $i=0, 1, \ldots, m+1$.
\end{lemma}
\begin{proof} This is a direct consequence of the Maximum Theorem.
\end{proof}

According to the Kakutani Theorem, there exists $(\bar{p}, \bar{R}, (\bar{c}_i, \bar{k}_i, \bar{b}_i)_{i=1}^m)$ such that 
\begin{align}(\bar{p}, \bar{R}) &\in  \varphi_0( (\bar{c}_i, \bar{k}_i, \bar{b}_i)_{i=1}^m)\\
\label{4.optic}(\bar{c}_i,  \bar{k}_i, \bar{b}_i) &\in \varphi_i((\bar{p},\bar{R})).
\end{align}
Denote, for each $t\geq 0$,
\begin{align*}
\bar{X}_{t}&:=\summ_{i=1}^m(\bar{c}_{i,t}+k_{i,t}-F_{i,t}(\bar{k}_{i,t-1})), 
\quad
\bar{Z}_{t}:=\summ_{i=1}^m\bar{b}_{i,t}.
\end{align*}

Therefore, for every $(p,q,r)\in \mathcal{P}$, we have
\begin{align}
\summ_{t\leq T}(p_{t}-\bar{p}_t)\bar{X}_{t} +\summ_{t\leq T}(R_{t}-\bar{R}_t)\bar{Z}_{t-1}\leq 0
\end{align}
%

Consider date  t, by summing  budget constraints over $i$, we get that
\begin{eqnarray*}\bar{p}_{t}\bar{X}_{t}+\bar{R}_{t}\bar{Z}_{t-1}\leq \bar{Z}_{t}.
\end{eqnarray*}

By consequence, we have, for each $t\leq T$ and for every $(p_{t},R_{t})\geq 0$ with $p_{t}+R_{t}=1$,
\begin{eqnarray*}{p}_{t}\bar{X}_{t}+ R_{t}\bar{Z}_{t-1}\leq \bar{p}_{t}\bar{X}_{t}+ \bar{R}_{t}\bar{Z}_{t-1}\leq \bar{Z}_{t}.
\end{eqnarray*}

Since at date T, we have $\bar{Z}_{T}= 0$. So, $\bar{p}_{T}\bar{X}_{T}+\bar{R}_{T}\bar{Z}_{T-1}\leq 0$. Hence, ${p}_{T}\bar{X}_{T}+ R_{T}\bar{Z}_{T-1}\leq 0$ for any $(p_{T},R_{T})\geq 0$ with $p_{T}+R_{T}=1$. This implies that $\bar{X}_{T}, \bar{Z}_{T-1}\leq 0$. Repeating this argument, we obtain that $\bar{X}_{t},  \bar{Z}_{t}\leq 0   \text{ } \forall t\leq T$ which means that
\begin{align*}
\summ_{i=1}^m (\bar{c}_{i,t}+\bar{k}_{i,t})&\leq \summ_{i=1}^m F_{i,t}(\bar{k}_{i,t}) \\
\summ_{i=1}^m\bar{b}_{i,t}&\leq 0.
\end{align*}

\begin{lemma}\label{4.key}$\bar{p}_{t}>0, \bar{R}_{t}>0$ for any $t\leq T$.
\end{lemma}
\begin{proof} By definition of $B_{K,t}$, we see that  $\summ_{i=1}^m\bar{c}_{i,t}\leq  B_{K,t}<B_c,$ or any $t$. This allows us to prove that $\bar{p}_{t}>0$  for any $t$. Indeed, if $\bar{p}_{t}=0$ then $c_{i,t}=B_c>B_{K,t}$, a contradiction.

If $\bar{R}_{t}=0$, then $\bar{b}_{i,t-1}=-B_a$ for any $i$, which implies that $\summ_{i=1}^m\bar{b}_{i,t-1}< 0$, contradiction. Therefore, $\bar{R}_{t}>0$.

\end{proof}

\begin{lemma}$\bar{X}_{t}=\bar{Z}_{t}=0$.
\end{lemma}
\begin{proof} Using $\bar{p}_{t}\bar{X}_{t}+ \bar{R}_{t}\bar{Z}_{t-1}\leq 0$ and Lemma \ref{4.key}. 
\end{proof}

\begin{lemma} The optimality of $(\bar{c}_i,  \bar{k}_i, \bar{b}_i).$
\end{lemma}
\begin{proof} It is clear since $(\bar{c}_i,  \bar{k}_i, \bar{b}_i) \in \varphi_i((\bar{p},\bar{q}, \bar{r})).$
\end{proof}

We have just proved that $(\bar{p},  \bar{R}, (\bar{c}_i,  \bar{k}_i, \bar{b}_i)_{i=1}^m)$ is an equilibrium for the economy ${\ee}^T_b$. We now prove that this equilibrium for the economy ${\ee}^T(B_c, B_k, B_b)$. The market clearing conditions are obviously satisfied. It remains to prove the optimality of the allocation $\bar{z}_i=(\bar{c}_i,  \bar{k}_i, \bar{b}_i)$. Suppose the contra

Let $z_i\equiv ({c}_i,  {k}_i, {b}_i)$ be in the budget set of the $T-$truncated economy.  Since $(\bar{c}_i,  \bar{k}_i, \bar{b}_i)$ belongs the interior of $\mathcal{X}_b$, there exists $\lambda \in (0,1)$ such that $\lambda z^i +(1-\lambda) \bar{z}_i \in \mathcal{X}_b$. Of course, $\lambda z^i +(1-\lambda) \bar{z}_i$ is in the budget set of the economy ${\ee}^T_b$.  Denote $U_i(c)\equiv \sum_{t\leq T}u_{i,t}(c_{t})$.  

 We have $\lambda U^i (c^i) +(1-\lambda) U^i (\bar{c}_i)\leq U^i (\lambda c^i +(1-\lambda) \bar{c}_i) \leq U^i (\bar{c}_i)$, which implies that $U^i (c^i)\leq U^i (\bar{c}_i)$.
 
\end{proof}


\subsection{Existence of equilibrium for the infinite-horizon economy}\label{4.8.3}


For simplicity of notation, in what follows, we write $F_i$ instead of $F_{i,t}$. 
\begin{proposition}
Under Assumptions (1-4) and 7, there exists an equilibrium for the economy $\tilde{\ee}.$ 
\end{proposition}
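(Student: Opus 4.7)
The plan is to follow the standard truncation-and-diagonal-extraction strategy already outlined after Theorem \ref{prop1}: for each $T\geq 1$ invoke the result of Section \ref{4.8.1} to produce an equilibrium $\bar z^T \equiv \big(\bar p^T_t,\bar R^T_t,(\bar c^T_{i,t},\bar k^T_{i,t},\bar b^T_{i,t})_i\big)_{t\leq T}$ of the $T$-truncated economy, extend it by zero for $t>T$, and then pass to a limit in the product topology. Since $\bar p^T_t+\bar R^T_t=1$ and every component of the allocation is confined to a compact interval (this is what the bounds $B_{K,t}$ and $B_b=mB_c$ in Section \ref{4.8.1} are for), Tychonoff's theorem together with a diagonal argument produces a subsequence along which $\bar z^T\to \bar z$ componentwise for every $t$.

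Next I would verify feasibility of the limit. Market clearing at each fixed $t$ passes to the limit by continuity of finite sums. The budget and borrowing constraints at each $t$ are closed inequalities in finitely many variables, hence preserved at the limit. The positivity of $\bar p_t,\bar R_t$ in the limit is the first delicate point: it was established in the truncated economy (Lemma \ref{4.key}) using $\sum_i \bar c^T_{i,t}\leq B_{K,t}<B_c$ and the fact that an agent would otherwise buy $B_c$ units, and by $\sum_i \bar b^T_{i,t}=0$ forcing $\bar R^T_t>0$. The same bounds hold uniformly in $T$, so the limit $\bar p_t,\bar R_t$ remains strictly positive. After establishing positivity I rescale $p_t\equiv 1$ via the price change described before Section \ref{4.8.1}, which recovers an equilibrium in the original (non-priced) formulation.

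The substantive step, which I expect to be the main obstacle, is the optimality of $(\bar c_i,\bar k_i,\bar b_i)$ in the infinite-horizon problem. Let $(c_i',k_i',b_i')$ be an arbitrary feasible plan for agent $i$ at the limit prices $(\bar p_t,\bar R_t)$. For each $T$ I construct a truncated feasible plan by setting $c'_{i,t}=k'_{i,t}=b'_{i,t}=0$ for $t>T$ and adjusting $c'_{i,T}$ downward to absorb any slack; by Assumption~5 (finite utility) and continuity of $u_i$ the utility of the truncated plan converges to $\sum_{t\geq 0}\beta_i^tu_i(c'_{i,t})$ as $T\to \infty$. Optimality of $\bar z^T$ at the prices $(\bar p^T,\bar R^T)$ gives
\begin{align*}
\sum_{t=0}^T\beta_i^tu_i(\bar c^T_{i,t})\geq \sum_{t=0}^T\beta_i^tu_i(c'_{i,t})-\varepsilon_T
\end{align*}
where $\varepsilon_T\to 0$ comes from the absorption correction. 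Passing to the limit $T\to \infty$, the left side converges to $\sum_{t\geq 0}\beta_i^tu_i(\bar c_{i,t})$ by continuity on each coordinate combined with the dominated-convergence-type bound $\sum_{t\geq 0}\beta_i^tu_i(B_{K,t})<\infty$ from Assumption \ref{assum_utility_finite}; the right side converges to $\sum_{t\geq 0}\beta_i^tu_i(c'_{i,t})$. Hence $\bar z_i$ is optimal.

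The delicate points I will have to treat carefully are: (i) ensuring that the correction term $\varepsilon_T\to 0$, which is handled by the fact that the resources needed to feasibly truncate a given plan are controlled by the uniform bounds $B_{K,t}$ and vanish in discounted utility by Assumption \ref{assum_utility_finite}; (ii) making sure the limiting prices $(\bar p_t,\bar R_t)$ indeed support the truncated plans used in the comparison, which is where using $\bar p^T_t,\bar R^T_t\to \bar p_t,\bar R_t$ on each coordinate is crucial and why the price change to $p_t\equiv 1$ is postponed until after the limit. Once all three properties (market clearing, positivity of prices, optimality) are established for the limit, the rescaling described earlier converts the equilibrium of the auxiliary price-carrying economy into an equilibrium of $\mathcal{E}$, completing the proof.
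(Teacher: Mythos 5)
Your overall strategy coincides with the paper's: equilibria of the $T$-truncated economies, a product-topology limit along a diagonal subsequence, market clearing by continuity, optimality via truncated comparison plans controlled by Assumption \ref{assum_utility_finite}, and a final rescaling of prices. But the two steps you yourself flag as delicate are exactly where the argument as written does not go through. First, strict positivity of the limit prices does not follow from the uniform quantity bounds: those bounds give $\bar p^T_t>0$ and $\bar R^T_t>0$ for every $T$ (Lemma \ref{4.key}), but a pointwise limit of strictly positive numbers can be zero (e.g. $\bar p^T_t=1/T$), and nothing in the quantity bounds excludes this. The paper instead proves optimality of the limit allocation \emph{first} and only then deduces $\bar p_t>0$ (otherwise an agent could costlessly raise consumption) and $\bar R_t>0$ (otherwise unbounded short sales would be feasible and strictly improving); you need to reverse your order of operations accordingly. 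Note the optimality argument does not itself require $\bar p_t>0$ for $t\geq 1$: it only uses $\bar p_0=1$, which holds by the normalization $R_0=0$, $p_0+R_0=1$.

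Second, in the optimality step your inequality $\sum_{t\leq T}\beta_i^tu_i(\bar c^T_{i,t})\geq \sum_{t\leq T}\beta_i^tu_i(c'_{i,t})-\varepsilon_T$ cannot be invoked directly, because the truncated comparison plan is feasible at the \emph{limit} prices $(\bar p,\bar R)$ while $\bar z^T$ is optimal only against plans feasible at $(\bar p^T,\bar R^T)$; your $\varepsilon_T$ accounts for the resource correction at date $T$ but not for this price mismatch, and coordinatewise convergence of prices does not by itself place a weakly feasible plan inside the competitor's budget set. The missing device is the interior approximation the paper uses: approximate the truncated plan by plans in $C_i^T(\bar p,\bar R)$, the budget set with all constraints strict (nonempty, with closure $B_i^T(\bar p,\bar R)$, by Lemma \ref{4.s1} using $k_{i,-1}>0$); since only finitely many strict inequalities are involved, each such plan lies in $C_i^s(\bar p^s,\bar R^s)$ for all $s$ large, and only then can you invoke optimality of $\bar z^s$, send $s\to\infty$ with the tail controlled by $\sum_t\beta_i^tu_i(B_{K,t})<\infty$, and finally pass to the closure in $n$ and let $T\to\infty$. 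With these two repairs your outline matches the paper's proof.
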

\begin{proof}

We present a proof in the spirit of \cite*{lvp16}.

We have shown that there exists an equilibrium, say $\big(\bar{p}^T,  \bar{R}^T, (\bar{c}^T_{i},  \bar{k}^T_{i}, \bar{b}^T_{i})_i \big)$,  for each $T-$horizon truncated economy ${\ee}^T$.  Recall that  $\bar{p}^T_{t}+\bar{R}^T_{t}=1$ for any $t\leq T$.

 It is clear that the sequence $\big(\bar{p}^T,  \bar{r}^T, (\bar{c}^T_{i},  \bar{k}^T_{i}, \bar{b}^T_{i})_i \big)_T$ is bounded for the product topology. Since the set of time is a countable set, we can assume that,  without loss of generality,  
\begin{eqnarray*}&&\big(\bar{p}^T, \bar{R}^T, (\bar{c}^T_{i},  \bar{k}^T_{i}, \bar{b}^T_{i})_i \big))\xrightarrow{T\rightarrow \infty}\big(\bar{p}, \bar{R}, (\bar{c}_{i},  \bar{k}_{i}, \bar{b}_{i})_i \big)
\end{eqnarray*}
for the product topology.

We will prove that $\big(\bar{p}, \bar{R}, (\bar{c}_{i},  \bar{k}_{i}, \bar{b}_{i})_i \big)$ is an equilibrium for the economy ${\ee}.$ The market clearing conditions are trivial. We will prove that all prices are strictly positive and the allocation $(\bar{c}_{i},  \bar{k}_{i}, \bar{b}_{i})$ is optimal.

Let $(c_i, k_i, b_i)$ be a feasible allocation of the problem $P_i(\bar{p}, \bar{R})$. We prove that $U_i(c_i)\leq U_i(\bar{c}_i)$. 
%
Let define $(c_{i,t}', k_{i,t}', b_{i,t}')_{t\leq T}$ as follows:
\begin{align*}
 (c_{i,t}', k_{i,t}', b_{i,t}')&=(c_{i,t}, k_{i,t}, b_{i,t}) \text{ if } t \leq T-1\\
 (c'_{i,t}, k_{i,t}', b_{i,t}')&=(F_{i,t}(k_{i,t-1}), 0,0) \text{ if } t =T.
\end{align*}
%
We see that $(c_{i,t}', k_{i,t}', b_{i,t}')_{t\leq T}$ belongs to $B_i^T(\bar{p}, \bar{R})$.

Since $k_{i,-1}>0$ and $\bar{p}_{0}=1$, we have $\bar{p}_{0}F_{i,0}(k_{i,-1})>0$, and hence  $C_i^T(\bar{p},\bar{R})\not=\emptyset$.  Therefore there exists a sequence $\Big((c_{i,t}^n, k_{i,t}^n, b_{i,t}^n)_{t\leq T}\Big)_{n=0}^{\infty}\in C_i^T(\bar{p}, \bar{R}) $, with $k_{i,T}^n=0$, $b_{i,T}^n=0$, and this sequence converges to $(c_{i,t}', k_{i,t}', b_{i,t}')_{t\leq T}$ when $n$ tends to infinity. We have, for each $t\leq T$,
\begin{align*}
\bar{p}_{t}(c^n_{i,t}+k^n_{i,t})+{b}^n_{i,t}&< \bar{p}_{t}F_{i,t}(k^n_{i,t-1})+R_{t}{b}^n_{i,t-1}\\
f_i\bar{p}_{t}F_{i,t}(k^n_{i,t-1})+\bar{R}_{t}{b}^n_{i,t-1}&>0.
\end{align*}
Since $(\bar{p}^T, \bar{R}^T)$ converges to $(\bar{p}, \bar{R})$, we can chose $s_0$ high enough such that (i) $s_0>T$ and (ii) for every $s\geq s_0$, we have
\begin{align*}
\bar{p}^s_{t}(c^n_{i,t}+k^n_{i,t})+{b}^n_{i,t}&< \bar{p}^s_{t}F_{i,t}(k^n_{i,t-1})+R_{t}^s{b}^n_{i,t-1}\\
f_i\bar{p}^s_{t}F_{i,t}(k^n_{i,t-1})+\bar{R}^s_{t}{b}^n_{i,t-1}&>0.
\end{align*}

For $s\geq s_0$, define 
$(c_{i,t}^n(s), k_{i,t}^n(s), b_{i,t}^n(s))_{t\leq s}$ as follows:
\begin{align}
\text{for $t\leq T$: } &c_{i,t}^n(s)=c_{i,t}^n, \quad k_{i,t}^n(s)=k_{i,t}^n, \quad b_{i,t}^n(s)=b_{i,t}^n\\
\text{for $t= T+1,\ldots, s$: } & c_{i,t}^n(s)= k_{i,t}^n(s)= b_{i,t}^n(s)=0
\end{align}
Condition (ii) implies that $(c_{i,t}^n(s), k_{i,t}^n(s), b_{i,t}^n(s))_{t\leq s}\in C_i^s(\bar{p}^s, \bar{R}^s)$ for $s\geq s_0$. Therefore, by the definition of equilibrium in the T-truncated economy and $u_i(0)=0$, we get 
$$\summ_{t\leq T}\beta_i^tu_i(c^n_{i,t})= \summ_{t\leq s}\beta_i^tu_i({c}^s_{i,t}(s))\leq  \summ_{t\leq s}\beta_i^tu_i(\bar{c}^s_{i,t}).$$
%


Let $s$ tend to infinity, we obtain $\summ_{t\leq T}\beta_i^tu_i(c^n_{i,t})\leq \summ_{t\leq T}\beta_i^tu_i(\bar{c}_{i,t})$ for any $n$ and for any $T$ high enough. 

Let $n$ tend to infinity, we have
$\summ_{t\leq T}\beta_i^tu_i(c'_{i,t})\leq\summ_{t\leq T}\beta_i^tu_i(\bar{c}_{i,t})$ for any $T$. We write clearly this as follows:
$$\summ_{t\leq T-1}\beta_i^tu_i(c_{i,t})+\beta_i^Tu_i(F_{i,T}(k_{T,t-1}))\leq\summ_{t\leq T}\beta_i^tu_i(\bar{c}_{i,t}).$$

Let $T$ tend to infinity and note that $\lim_{T\to \infty}\beta_i^Tu_i(F_{i,T}(k_{i,T-1}))=0$, we get that 
\begin{align*}\summ_{t\geq 0}\beta_i^tu_i(c_{i,t})\leq\summ_{t\geq 0}\beta_i^tu_i(\bar{c}_{i,t}).
\end{align*}
So, we have proved the optimality of $(\bar{c}_i, \bar{k}_i, \bar{b}_i).$
 


 
 Now, we prove that $\bar{p}_{t}>0.$ Indeed, if $\bar{p}_{t}=0$, the agent $i$ can freely improve her consumption to obtain a level of utility, which is higher than  $\sum_{t\geq 0}u_{i,t}(\bar{c}_{i,t})$. This contradicts the optimality of  $(\bar{c}_i, \bar{k}_i, \bar{b}_i).$

 

We have $\bar{R}_{t}$ is strictly positive because otherwise we can choose another allocation such that $b_{i,t-1}=\infty$ and at the date $t$, we have the consumption $c'_{i,t+1}=\infty$, which make the utility of agent $i$ infinity, contradiction. 


\end{proof}

}

{\small

}

\begin{thebibliography}{99}







\bibitem[Aghion, Bergeaud, and Maghin(2019)]{aetal18} Aghion, F.,  Bergeaud, A., Cette, G., Lecat, R.,  Maghin, H., 2019. {The Inverted-U Relationship Between Credit Access and Productivity Growth}. \textit{Economica}, 86, pp. 1-31. 



\bibitem [Aghion, Akcigit, and Howitt(2015)]{aetal15}Aghion, P., U. Akcigit, and P. Howitt, 2015. {Lessons from Schumpeterian growth theory}. \textit{The American Economic Review}, 105, pp. 94-99.

\bibitem[Andrews, Criscuolo and Gal(2015)]{acg15}Andrews, D., Criscuolo, C., and Gal, P. N., 2015. {Frontier Firms, Technology Diffusion and Public
Policy: Micro Evidence from OECD Countries}. \textit{OECD Productivity Working Papers 2}, OECD Publishing.


\bibitem[Allub and Erosa(2016)]{ae16} Allub, L., and Erosa, A., 2016. {Frictions, Occupational Choice and Economic Inequality}. Working paper.

\bibitem[Alvarez and Jermann(2000)]{alvarez00} Alvarez, F.,  Jermann, 
U.J., 2000. {Efficiency, equilibrium, and asset pricing with risk of default}. \textit{Econometrica} 78 (4), pp. 775-797.


\bibitem[Arcand, Berkes, and Panizza(2015)]{abp15}Arcand, J. L., Berkes, E., and Panizza, U., 2015. {Too much finance?}, \textit{Journal of Economic Growth}, 20(2), pp. 105-148. https://doi.org/10.1007/s10887-015-9115-2

\bibitem [Barth, Bryson, Davis and Freeman(2016)]{bbdf16} Barth, E., Bryson, A., Davis, J.C.,  Freeman, R. 2016. {It's Where You Work: Increases in the Dispersion of Earnings across Establishments and Individuals in the United States}. \textit{Journal of Labor Economics}, University of Chicago Press, vol. 34(S2), pages 67-97.

\bibitem[Baqaee and Farhi(2020)]{bf20}Baqaee, D. R. and Farhi, E., 2020. {Productivity and Misallocation in General Equilibrium}, \textit{The Quarterly Journal of Economics}, Volume 135, Issue 1, February 2020, pp. 105–163.

\bibitem[Becker, Bosi, Le Van and Seegmuller(2015)]{bblvs15} Becker, R., Bosi, S.,  Le Van, C.,  Seegmuller, T., 2015. {On existence and bubbles of Ramsey equilibrium with borrowing constraints}. \textit{Economic Theory} 58, 329-353.

\bibitem[Becker, Borissov and Dubey(2015)]{bbd15} Becker R.A., Borissov, K., Dubey R.S.,  2015. {Ramsey equilibrium with liberal borrowing}, \textit{Journal of Mathematical Economics} 61, pp. 296-304.


\bibitem[Becker, Dubey and Mitra(2014)]{bdm14} Becker R.A., Dubey R.S., Mitra T.,  2014. {On Ramsey equilibrium: capital ownership pattern and inefficiency}, \textit{Economic Theory} 55, pp. 565-600.


\bibitem[Becker and Mitra(2012)]{beckermitra12} Becker, R.,  Mitra,  T., 2012. {Efficient Ramsey equilibria}. \textit{Macroeconomic Dynamics} 16, 18-32.


\bibitem[Berlingieri, Blanchenay, and Chiara(2017)]{bbc17} Berlingieri, G.,  Blanchenay, P., and Chiara, C. 2017. \textit{The Great Divergence(s)}. Paris: OECD Publishing.

\bibitem[Bernanke and Gertler(1989)]{bg89} Bernanke, B., and Gertler, M., 1989. {Agency Costs, Net Worth, and Business Fluctuations}, \textit{The American Economic Review}, Vol. 79, No. 1, pp. 14-31. 




\bibitem[Bosi, Le Van, and Pham(2018)]{blvp18} Bosi, S., Le Van, C., Pham, N.-S., 2018. {Intertemporal equilibrium with heterogeneous agents, endogenous dividends and collateral constraints}, \textit{Journal of Mathematical Economics}, vol. 76, pp. 1-20.

\bibitem[Brueckner and Lederman(2018)]{bl18} Brueckner, M., Lederman, D., 2018. {Inequality and economic growth: the role of initial income},\textit{ Journal  of  Economic Growth}, 23, 341-366. 



\bibitem[Brumm, Kubler, and Scheidegger(2017)]{bls17} 
 Brumm J.,  Kubler  F., and  Scheidegger S., 2017.  {Computing Equilibria in Dynamic Stochastic Macro-Models with Heterogeneous Agents}. In \textit{Advances in Economics and Econometrics: Theory and Applications}, Eleventh World Congress.

\bibitem[Brunnermeier, Eisenbach, and Sannikov(2013)]{bes13} 
 Brunnermeier, M. K,  Eisenbach T., and  Sannikov Y., 2013.  {Macroeconomics with Financial Frictions: A Survey}.  In \textit{Advances in Economics and Econometrics}, Tenth World Congress of the Econometric Society. New York: Cambridge University Press. 


\bibitem[Brunnermeier and Sannikov(2014)]{bs14} 
Brunnermeier, M. K., and  Sannikov Y., 2014. {A Macroeconomic Model with a Financial Sector}. \textit{American Economic Review} 104.2, pp. 379-421. 

\bibitem[Bouche, Cette, and Lecat(2021)]{bcl21} Bouche, P, Cette, G., Lecat, R., 2021. {News from the Frontier: Increased Productivity Dispersion across Firms and Factor Reallocation}. \textit{Banque de France working paper series}, No. 846.



\bibitem[Buera, Kaboski, and Shin(2015)]{bks15} 
Buera, F. J., Kaboski, J. P., and  Shin, Y., 2015. {Entrepreneurship and Financial Frictions: A Macro-development Perspective}. \textit{Annual Review of Economics}, 7(1), pp. 409-436. 



\bibitem[Buera and Shin(2013)]{bs13} 
Buera, F. J., and  Shin, Y., 2013. {Financial Frictions and the Persistence of History: A Quantitative Exploration}. \textit{Journal of Political Economy} 121 (2), pp. 221-72.

\bibitem[Carosi, Gori, and Villanacci(2009)]{cgv09} Carosi, L., Gori, M., and Villanacci, A., 2009. {Endogenous restricted participation in general financial equilibrium}. \textit{The Journal of Mathematical Economics}, 45, pp.787-806.


\bibitem[Catherine, Chaney, Huang, Sraer, and  Thesmar(2022)]{cchst17} 
 Catherine, S., Chaney, T., Huang, Z., Sraer, D., and Thesmar, D., 2022.  {Quantifying Reduced-Form Evidence on Collateral Constraints}.\textit{ The Journal of Finance}, vol. 77, pp. 1971-2527
   

\bibitem[de Haan and Sturm(2017)]{dhs17} 
de Haan, J., and Sturm, J. E., 2017. {Finance and income inequality: A review and new evidence}. \textit{European Journal of Political Economy}, 50(April), pp. 171-195. 


\bibitem[Decker, Haltiwanger, Jarmin, and Miranda(2018)]{decker18}  Decker, R.,  Haltiwanger, J.C., Jarmin, S., Miranda, J. 2018. {Changing Business Dynamism and Productivity: Shocks vs. Responsiveness}. No 24236, NBER Working Papers.


\bibitem[Demirguc-Kunt and Levine(2009)]{dl09} 
Demirguc-Kunt, A., and Levine, R., 2009. {Finance and Inequality: Theory and Evidence}. \textit{Annual Review of Financial Economics}, 1(1), pp. 287-318. 
 

\bibitem[Enterprise Surveys(2018)]{es} \textit{Enterprise Surveys}. The World Bank.   \url{http://www.enterprisesurveys.org}.





\bibitem[Elenev, Landvoigt, and Van Nieuwerburgh(2021)]{elv21} Elenev, V., Landvoigt, T., and Van Nieuwerburgh, S., 2021.  {A Macroeconomic Model with Financially Constrained Producers and Intermediaries}. \textit{Econometrica}, vol. 89, May 2021.


\bibitem[Geanakoplos and Zame(2002)]{gz02} Geanakoplos, J., Zame,   W., 2002. {Collateral and the enforcement of intertemporal contracts}. Working paper.

\bibitem[Geanakoplos and Zame(2014)]{gz14} Geanakoplos, J., Zame,   W., 2014. {Collateral equilibrium, I: a basic framework}. \textit{Economic Theory}, Vol. 56, No. 3, pp. 443-492.

\bibitem[Gal(2013)]{gal13} Gal, P. 2013. {Measuring Total Factor Productivity at the Firm Level using OECD-ORBIS}.  \textit{OECD Economics Department Working Papers} No. 1049, OECD, Paris

\bibitem[Goldin,  Koutroumpis, Lafond, and Winkler(2024)]{gklw24} Goldin, I.,  Koutroumpis, P.,  Lafond, F., and Winkler, J., 2022.  {Why is productivity slowing down?}. \textit{Journal of Economic Literature}, vol. 62, no. 1, March 2024.

\bibitem[Gottardi and Kubler(2015)]{gk15} Gottardi, P., Kubler, F., 2015.  {Dynamic competitive equilibrium with complete markets and collateral constraints}. \textit{Review of Economic Studies} 82, pp. 1119-1153.


\bibitem[Gouin‐Bonenfant(2022)]{gb22} Gouin‐Bonenfant, E.  {Productivity Dispersion, Between-Firm Competition, and the Labor Share}.  \textit{Econometrica}, vol. 90, .no 6, Econometric Society, 2022, pp. 2755-2793.


   \bibitem [Granda, Hamann, and Tamayo(2017)]{ght17}   Granda, C.  Hamann, F., and Tamayo, C. E., 2017. {Credit and Saving Constraints in General Equilibrium: Evidence from Survey Data}. \textit{Inter-American Development Bank's working paper series}.
   


\bibitem[Guerrieria and Iacoviello(2017)]{gi17} Guerrieria, L., Iacoviello, M., 2017.  {Collateral constraints and macroeconomic asymmetries}. \textit{Journal of Monetary Economics} 90, pp. 28-49.
    



\bibitem [Iacoviello(2005)]{i05} Iacoviello, M., 2005. {House Prices, Borrowing Constraints, and Monetary Policy in the Business Cycle}. \textit{American Economic Review} 95, pp. 739-64.


\bibitem[Jappelli and Pagano(1994)]{jp94} Jappelli, T. and Pagano, M.,  1994. {Saving, growth, and liquidity constraints}. \textit{The Quarterly Journal of Economics} 109, pp. 83-109. 

\bibitem[Jappelli and Pagano(1999)]{jp99} Jappelli, T. and Pagano, M.,  1999. {The Welfare Effects of Liquidity Constraints}. \textit{Oxford Economic Papers}, 51(51), pp. 410-430. 
   
   

\bibitem[Jovanovic(2014)]{j14} Jovanovic, N., 2014. {Misallocation and Growth}. \textit{American Economic Review}, 104(4), pp. 1149-1171.



\bibitem [Karaivanov and Townsend(2014)]{kt14} Karaivanov, A., Townsend, R. M.,  2014. {Dynamic Financial Constraints: Distinguishing Mechanism Design from Exogenously Incomplete Regimes}. \textit{Econometrica}, Vol. 82, No. 3, pp. 887-959.
		
		
		
\bibitem [Kaplan and Violante(2018)]{kv18} Kaplan, G., and Violante, G. L., 2018. {Microeconomic Heterogeneity and Macroeconomic Shocks}. \textit{Journal of Economic Perspectives}, 32(3), pp. 167-194. 

		
\bibitem[Kehrig(2015)]{Kehrig15} Kehrig, M. 2015. {The Cyclical Nature of the Productivity Distribution}. \textit{Economica}, 86, pp. 1-31. Earlier version: US Census Bureau Center for Economic Studies Paper No. CES-WP-11-15, Available at SSRN: https://ssrn.com/abstract=1854401.

\bibitem [Khan and Thomas(2013)]{kt13} Khan, A. and Thomas, J. K. (2013) {Credit Shocks and Aggregate Fluctuations in an Economy with Production Heterogeneity}. \textit{Journal of Political Economy}, 121(6), pp. 1055-1107. https://doi.org/10.1086/674142

\bibitem [Kiyotaki(1998)]{k98} Kiyotaki, N., 1998. {Credit and business cycles}. \textit{The Japanese Economic Review}, 49(1), pp. 18-35.

\bibitem [Kiyotaki and Moore(1997)]{km97} Kiyotaki, N. and Moore, J., 1997. {Credit cycles}. \textit{Journal of political economy}, 105(2), pp. 211-248.


\bibitem [Klette and Kortum(2004)]{kk04}Klette, T. J., S. Kortum, 2004. {Innovating firms and aggregate innovation}. \textit{Journal of Political Economy}, 112, pp. 986-1018.


\bibitem[Kocherlakota(1992)]{kocherlakota92} Kocherlakota, N. R., 1992.  {Bubbles and constraints on debt accumulation}. \textit{Journal of Economic Theory} 57, pp. 245-256.

\bibitem[Le Van and Pham(2016)]{lvp16} Le Van, C., Pham, N.-S., 2016. {Intertemporal equilibrium with financial asset and physical capital}. \textit{Economic Theory} 62, p. 155-199.
	
\bibitem[Kubler and Schmedders(2003)]{ks03} Kubler, F., and Schmedders, K., 2003. {Stationary equilibria in asset-pricing models with incomplete
markets and collateral}. \textit{Econometrica} 71(6), pp. 1767-1793.

\bibitem[Magill and Quinzii(1994)]{mg94}  Magill, M., and Quinzii, M.,  1994.  {Infinite horizon incomplete markets}. \textit{Econometrica}, 62 (4), pp. 853-80.

\bibitem[Magill and Quinzii(1996)]{mg96}  Magill, M., and Quinzii, M., 1996.    {Theory of Incomplete Markets}. MIT Press.

\bibitem[Magill and Quinzii(2008)]{mg08b}  Magill, M., and Quinzii, M., 2008.   {Incomplete markets}, volume 2, infinite horizon economies.  Edward Elgar Publishing Company.

\bibitem[Magill and Quinzii(2015)]{mg15}  Magill, M., and Quinzii, M., 2015.    {Prices and investment with collateral and default}, \textit{Journal of Economic Dynamics \& Control} 51, p. 111-132.

\bibitem[Malinvaud(1953)]{malinvaud53} Malinvaud, E., 1953. {Capital
accumulation and efficient allocation of resources}. \textit{Econometrica} 21,
pp. 233-268.

\bibitem [Matsuyama(2007)]{mat07}Matsuyama, K., 2007. {Aggregate Implications of Credit Market Imperfections}, in D. Acemoglu, K. Rogoff, and M. Woodford, eds., \textit{NBER Macroeconomics Annual} 2007, Volume 22.


\bibitem [Midrigan and Xu(2014)]{mx14}Midrigan, V., and Xu, D.Y., 2014. {Finance and Misallocation: Evidence from Plant-Level Data}. \textit{American Economic Review}, 104(2): 422-458

\bibitem [Moll(2014)]{moll14}Moll, B., 2014. 
{Productivity Losses from Financial Frictions: Can Self-financing Undo Capital Misallocation?}. \textit{American Economic Review}, 104(10), pp. 3186-3221.



\bibitem[Levine and Warusawitharana(2021)]{lw21}  Levine, O., Warusawitharana, M. 2021. {Finance and productivity growth: Firm-level evidence}. \textit{Journal of Monetary Economics}, Volume 117, 2021, Pages 91-107.



\bibitem[Lian and Ma(2021)]{lm21} Lian, C., Ma, Y. 2021. {Anatomy of Corporate Borrowing Constraints}.  \textit{The Quarterly Journal of Economics}, Volume 136, Issue 1, February 2021, pp. 229–291.


\bibitem[Obiols-Homs(2011)]{homs11} Obiols-Homs, F., 2011. {On borrowing limits and welfare}. \textit{Review of Economic Dynamics}, 14(2), pp. 279-294.

\bibitem[Pham and Pham(2021)]{pp21} Pham, N.-S., and Pham, H.,  2021. {Effects of credit limit on efficiency and welfare in a simple general equilibrium model}. \textit{The International Journal of Economic Theory}, Vol. 17, Issue 4, pp. 446-470.


\bibitem[Quadrini(2011)]{quadrini11} Quadrini, V., 2011. {Financial frictions in macroeconomic fluctuations}. \textit{Economic Quarterly} 97, pp. 209-254.

 \bibitem[Romer(1986)]{romer86} Romer, P.M. (1986). Increasing returns and long-run growth. \emph{Journal of Political Economy}, 94, pp. 1002-1037.

\bibitem[Romer(1990)]{romer90} Romer, P.M. (1990).  Endogenous technological change. \emph{Journal of Political Economy}, 98, pp. S71-S102.


\bibitem[Syverson(2011)]{Syverson11} Solow, R. (1957). {What Determines Productivity?}. \textit{Journal of Economic Literature,} 49(2), 326-365.


\bibitem[Solow(1957)]{solow57} Solow, R. (1957). {Technical change and the aggregate production function}. \textit{Review of Economics and Statistics}, 39 (3), pp. 312–320.

\end{thebibliography}
\end{document}